\providecommand{\U}[1]{\protect\rule{.1in}{.1in}}
\newtheorem{theorem}{Theorem}
\newtheorem{Algorithm}{Algorithm}
\newtheorem{corollary}[theorem]{Corollary}
\newtheorem{definition}{Definition}
\newtheorem{example}{Example}
\newtheorem{proposition}{Proposition}
\newtheorem{remark}{Remark}
\newenvironment{proof}[1][Proof]{\noindent\textbf{#1.} }{\ \rule{0.5em}{0.5em}}
\numberwithin{theorem}{section}
\numberwithin{proposition}{section}
\numberwithin{definition}{section}
\numberwithin{example}{section}
\newcommand{\bra}[1]{\langle#1|}
\newcommand{\ket}[1]{|#1\rangle}
\newcommand{\outerprod}[2]{\vert#1\rangle\!\langle#2 \vert}
\newcommand{\outerproj}[1]{\outerprod{#1}{#1}}
\newcommand{\intCZ}[0]{\int_{0}^{4\pi}}
\def\l@subsubsection#1#2{}
\begin{document}
%\preprint{ }
\title[ ]{Testing symmetry on quantum computers}
\author{Margarite L. LaBorde}
\email{mlabo15@lsu.edu}
\affiliation{Hearne Institute for Theoretical Physics, Department of Physics and Astronomy, and Center for Computation and Technology, Louisiana State University, Baton Rouge, Louisiana 70803, USA}
\orcid{0000-0001-9754-1468}

\author{Soorya Rethinasamy}
\email{sr952@cornell.edu}
\affiliation{School of Applied and Engineering Physics, Cornell University, Ithaca, New
York 14850, USA}
\affiliation{Hearne Institute for Theoretical Physics, Department of Physics and Astronomy, and Center for Computation and Technology, Louisiana State University, Baton Rouge, Louisiana 70803, USA}
\orcid{0000-0002-8849-3748}

\author{Mark M. Wilde}
\email{wilde@cornell.edu}
\affiliation{School of Electrical and Computer Engineering, Cornell University, Ithaca, New
York 14850, USA}
\affiliation{Hearne Institute for Theoretical Physics, Department of Physics and Astronomy, and Center for Computation and Technology, Louisiana State University, Baton Rouge, Louisiana 70803, USA}
\orcid{0000-0002-3916-4462}

\begin{abstract}
Symmetry is a unifying concept in physics. In quantum information and beyond, it is known that quantum states possessing symmetry are not useful for certain information-processing tasks. For example, states that commute with a Hamiltonian realizing a time evolution are not useful for timekeeping during that evolution, and bipartite states that are highly extendible are not strongly entangled and thus not useful for basic tasks like teleportation. Motivated by this perspective, this paper details several quantum algorithms that test the symmetry of quantum states and channels. For the case of testing Bose symmetry of a state, we show that there is a simple and efficient quantum algorithm, while the tests for other kinds of symmetry rely on the aid of a quantum prover. We prove that the acceptance probability of each algorithm is equal to the maximum symmetric fidelity of the state being tested, thus giving a firm operational meaning to these latter resource quantifiers. Special cases of the algorithms test for incoherence or separability of quantum states. We evaluate the performance of these algorithms on choice examples by using the variational approach to quantum algorithms, replacing the quantum prover with a parameterized circuit. We demonstrate this approach for numerous examples using the IBM quantum noiseless and noisy simulators, and we observe that the algorithms perform well in the noiseless case and exhibit noise resilience in the noisy case. We also show that the maximum symmetric fidelities can be calculated by semi-definite programs, which is useful for benchmarking the performance of these algorithms for sufficiently small examples. Finally, we establish various generalizations of the resource theory of asymmetry, with the upshot being that the acceptance probabilities of the algorithms are resource monotones and thus well motivated from the resource-theoretic perspective.

\end{abstract}
%\date{\today}
%\startpage{1}
%\endpage{10}
\maketitle
\tableofcontents

\section{Introduction}

Symmetry plays a fundamental role in physics \cite{FR96,Gross96}. The
evolution of a closed physical system is dictated by a Hamiltonian, which
often possesses symmetry that limits transitions from one state to another in
the form of superselection rules \cite{Wick1952,PhysRev.155.1428}. Permutation symmetry in the extension of a bipartite quantum state indicates a lack of entanglement in that state \cite{W89a,DPS02,DPS04}. This permutation symmetry limits entanglement, which relates to fundamental principles of quantum information like the no-cloning theorem \cite{Park1970,D82,nat1982}\ and entanglement monogamy \cite{T04}. Additionally, the lack of a shared reference frame between two parties implies that a quantum state prepared relative to another party's reference frame respects a certain symmetry and is less useful than one breaking that symmetry \cite{BRS07}. In all of these cases, a state respecting a symmetry is less resourceful than one breaking it. In more recent years, quantum resource theories have been proposed for each of the above scenarios (asymmetry \cite{MS13,MS14}, unextendibility \cite{KDWW19,KDWW21}, and frameness \cite{GS08}) in order to quantify the resourcefulness of quantum states\ (see \cite{CG18}\ for a review). As such, it is useful to be able to test whether a quantum state possesses symmetry and to quantify how much symmetry it possesses. 

In this paper, we show how a quantum computer can test for symmetries of quantum states and channels generated by quantum circuits. In fact, our quantum-computational tests actually quantify how symmetric a state or channel is. Given that asymmetry (i.e., breaking of symmetry) is a useful resource in a wide variety of contexts while being potentially difficult for a classical computer to verify, our tests are helpful in determining how useful a state will be for certain quantum information processing tasks. Additionally, our tests are in the spirit of the larger research program of using quantum computers to understand fundamental quantum-mechanical properties of high-dimensional quantum states, such as symmetry and entanglement, that are out of reach for classical computers. %new sentences added to address point 1 in response
Here, we give explicit algorithmic descriptions of our tests, connect to known applications of interest, and provide a general framework that facilitates new applications and research in this area. We augment these contributions by providing novel resource-theoretic results as well.

We begin our development in Section~\ref{sec:sym-ext-G}\ by introducing a general form of symmetry of quantum states that captures both the extendibility of bipartite states \cite{W89a,DPS02,DPS04}, as well as symmetries of a single quantum system with respect to a group of unitary transformations \cite{MS13,MS14}. This generalization allows for incorporating several kinds of symmetry tests into a single framework. We call this notion $G$-symmetric extendibility, and we discuss two different forms of it.

In Section~\ref{sec:tests-o-sym} we move on to an important contribution of our paper---namely, how a quantum computer can test for and estimate quantifiers of symmetry. These quantifiers are collectively called \textit{maximum symmetric fidelities}, with more particular names given in what follows. We prove that our quantum computational tests of symmetry have acceptance probabilities precisely equal to the various quantifiers. These results endow these resource-theoretic measures with operational meanings and allow us to estimate them to arbitrary precision. Using complexity-theoretic language, we demonstrate that several of these quantum-computational tests of symmetry can be conducted in the form of a quantum interactive proof (QIP) system consisting of two quantum messages exchanged between a verifier and a prover \cite{W09,VW15}. Our results thus generalize previous results in the context of unextendibility and entanglement of bipartite quantum states \cite{HMW13,Hayden:2014:TQI}; additionally, we go on to clarify the relation between our results and previous ones (Section~\ref{sec:specialized-tests}). Simpler forms of the tests can be conducted without the aid of a prover and are thus efficiently computable on a quantum computer.

In Section~\ref{sec:specialized-tests}, we show how the established concepts of $k$-extendibility or $k$-Bose extendibility \cite{W89a,DPS02,DPS04} can be recovered as special cases of our symmetry tests for both bipartite and multipartite states. These examples are particularly interesting as they serve as tests of separability. We also show there how to test for the covariance symmetry of quantum channels and measurements, where the former includes testing the symmetries of Hamiltonian evolution as a special case \cite{LW22}. 

Section~\ref{sec:SDPs-sym-fids} shows that the maximum symmetric fidelities can be calculated by means of semi-definite programs, which is helpful for benchmarking the outputs of the quantum algorithms for sufficiently small circuits. This follows from combining the known semi-definite program for fidelity \cite{Wat13} with the semi-definite constraints corresponding to the symmetry tests. Furthermore, we employ representation theory \cite{S12} to simplify some of the semi-definite programs even further, by making use of the block-diagonal form that results from performing a group twirl on a state.

We follow this in Section~\ref{sec:var-algs} by demonstrating the use of variational quantum algorithms for estimating the maximum symmetric fidelities  for various example groups. (See \cite{CABBEFMMYCC20,bharti2021noisy}\ for reviews of variational quantum algorithms and \cite{RevModPhys.55.725} for a review of the variational principle). In general, this approach is not guaranteed to estimate the maximum symmetric fidelities precisely, as the parameterized circuit used is not able to realize an arbitrarily powerful quantum computation. This approach thus leads only to lower bounds on the maximum symmetric fidelities. However, we find that this heuristic approach performs well for a variety of example groups, including symmetry tests with respect to $\mathbb{Z}_2$, the triangular dihedral group, a collective unitary action, and a collective phase action. In Appendices~\ref{app:cyclic-c-3}--\ref{app:Q8}, we go on to provide further examples for cyclic groups and the quaternion group. We note that a recent work adopted a similar variational approach for estimating the fidelity of quantum states generated by quantum circuits \cite{CSZW20}. It is well known that this latter problem is QSZK-complete \cite{W02} and thus likely difficult for quantum computers to solve in general. It remains an open question to determine how well this variational approach performs generally, beyond the examples considered in this paper. We note that the algorithms defined in this work rely on local measurements alone and, as a consequence of the results of \cite{osti_1826512}, should not suffer from the barren plateau problem in which global cost functions become untrainable. Since we have only conducted simulations of our algorithms for small quantum systems, it remains open to provide evidence that our algorithms will avoid the barren plateau problem for larger systems.

Finally, we review the resource theory of asymmetry \cite{MS13,MS14}. After doing so, we define several generalized resource theories of asymmetry (Section~\ref{sec:res-theories}), including both the resource theory of asymmetry and the resource theory of $k$-unextendibility \cite{KDWW19,KDWW21}\ as special cases. As part of this contribution, we also define resource theories of Bose asymmetry, which to our knowledge have not been considered yet. This development shows that the acceptance probabilities of the aforementioned algorithms, i.e., maximum symmetric fidelities, are resource monotones and thus well-motivated from the resource-theoretic perspective.

In what follows, we proceed in the aforementioned order, and we finally
conclude in Section~\ref{sec:conclusion}\ with a brief summary and a
discussion of future questions.

\section{Notions of symmetry}

\label{sec:sym-ext-G}

We introduce the notions of $G$-symmetric
extendibility and $G$-Bose symmetric extendibility, as generalizations of the
notions of $G$-symmetry \cite[Section~2]{MS13} and extendibility
\cite{W89a,DPS02,DPS04}. Later on in Section~\ref{sec:tests-o-sym}, we devise
quantum algorithms to test for these symmetries.

Let $\rho_{S}$ be a quantum state of system $S$ with corresponding Hilbert space $\mathcal{H}_{S}$. Let $G$ be a finite group, and let $U_{RS}(g)$ be a unitary representation \cite[Section~2]{MS13}\ of the group element $g\in G$, where $R$ indicates another Hilbert space such that $U_{RS}(g)$ acts on the tensor-product Hilbert space $\mathcal{H}_{R} \otimes\mathcal{H}_{S}$. Let $\Pi_{RS}^{G}$ denote the following projection operator:
\begin{equation}
\Pi_{RS}^{G}\coloneqq\frac{1}{\left\vert G\right\vert }\sum_{g\in G}U_{RS}(g).
\end{equation}
Observe that
\begin{equation}\label{eq:unitaries-and-projs}
\Pi_{RS}^{G}=U_{RS}(g)\Pi_{RS}^{G}=\Pi_{RS}^{G}U_{RS}(g),
\end{equation}
for all $g\in G$, which follows from what is called the rearrangement theorem in group theory.

We now define $G$-symmetric extendible and $G$-Bose-symmetric extendible states.

\begin{definition}
[$G$-symmetric extendible]\label{def:g-sym-ext}A state $\rho_{S}$ is
$G$-symmetric extendible if there exists a state $\omega_{RS}$ such that

\begin{enumerate}
\item the state $\omega_{RS}$ is an extension of $\rho_{S}$, i.e.,
\begin{equation}
\operatorname{Tr}_{R}[\omega_{RS}]=\rho_{S}, \label{eq:G-ext-1}
\end{equation}

\item the state $\omega_{RS}$ is $G$-invariant, in the sense that
\begin{equation}
\omega_{RS}=U_{RS}(g)\omega_{RS}U_{RS}(g)^{\dag}\qquad\forall g\in G.
\label{eq:G-ext-2}
\end{equation}

\end{enumerate}
\end{definition}

\begin{definition}
[$G$-Bose symmetric extendible]\label{def:g-bose-sym-ext}A state $\rho_{S}$ is
$G$-Bose symmetric extendible (G-BSE) if there exists a state $\omega_{RS}$ such that

\begin{enumerate}
\item the state $\omega_{RS}$ is an extension of $\rho_{S}$, i.e.,
\begin{equation}
\operatorname{Tr}_{R}[\omega_{RS}]=\rho_{S},
\end{equation}

\item the state $\omega_{RS}$ satisfies
\begin{equation}
\omega_{RS}=\Pi_{RS}^{G}\omega_{RS}\Pi_{RS}^{G}.
\label{eq:bose-G-sym-ext-cond}
\end{equation}

\end{enumerate}
\end{definition}

Note that the condition in \eqref{eq:bose-G-sym-ext-cond} is equivalent to
$\omega_{RS}=\Pi_{RS}^{G}\omega_{RS}$ or $\omega_{RS}=U_{RS}(g)\omega_{RS}$
for all $g\in G$.
Also, observe that $\rho_{S}$ is $G$-symmetric extendible if it is $G$-Bose symmetric extendible, but the opposite implication does not necessarily hold.

We have made no assumptions about the unitary representation used thus far. It is important to mention the case of projective unitary representations, due to their physical relevance in the case of symmetries of density operators. See, e.g., Eqs.~(1.2) and (1.3) of \cite{marvian2012symmetry} for a definition of a projective unitary representation. Restricting to projective unitary representations helps in avoiding trivial representations, and when considering symmetries of density operators, they necessarily arise. Furthermore, when considering example algorithms in later sections, we limit ourselves to faithful representations of the groups involved. In principle, neither faithfulness nor a projective representation are required unless stated otherwise. The choice of representation does matter when considering the symmetry of a state; however, following conventions in existing literature, we describe all symmetries with respect to the group and omit the reliance on the representation in notation.

The notions of symmetry from Definitions~\ref{def:g-sym-ext} and \ref{def:g-bose-sym-ext} generalize both $k$-extendibility of bipartite states and $G$-symmetry of unipartite states, as we discuss below.

\begin{example}
[$k$-extendible]\label{ex:k-ext}Recall that a bipartite state $\rho_{AB}$ is
$k$-extendible \cite{W89a,DPS02,DPS04}\ if there exists an extension state
$\omega_{AB_{1}\cdots B_{k}}$ such that
\begin{equation}
\operatorname{Tr}_{B_{2}\cdots B_{k}}[\omega_{AB_{1}\cdots B_{k}}]=\rho_{AB}
\end{equation}
and
\begin{equation}
\omega_{AB_{1}\cdots B_{k}}=W_{B_{1}\cdots B_{k}}(\pi)\omega_{AB_{1}\cdots
B_{k}}W_{B_{1}\cdots B_{k}}(\pi)^{\dag},
\end{equation}
for all $\pi\in S_{k}$, where each system $B_{1}$, \ldots, $B_{k}$ is
isomorphic to the system $B$ and $W_{B_{1}\cdots B_{k}}(\pi)$ is a unitary
representation of the permutation $\pi\in S_{k}$, with $S_{k}$ the symmetric
group. Then the established notion of $k$-extendibility is a special case of
$G$-symmetric extendibility, in which we set
\begin{align}
S  &  =AB_{1},\label{eq:ident-k-to-g-1}\\
R  &  =B_{2}\cdots B_{k},\\
G  &  =S_{k},\\
U_{RS}(g)  &  =\mathbb{I}_{A}\otimes W_{B_{1}\cdots B_{k}}(\pi).
\label{eq:ident-k-to-g-4}
\end{align}

\end{example}

\begin{example}
[$k$-Bose-extendible]\label{ex:k-bose-ext}A bipartite state $\rho_{AB}$ is
$k$-Bose-extendible if there exists an extension state $\omega_{AB_{1}\cdots
B_{k}}$ such that
\begin{equation}
\operatorname{Tr}_{B_{2}\cdots B_{k}}[\omega_{AB_{1}\cdots B_{k}}]=\rho_{AB}
\end{equation}
and
\begin{equation}
\omega_{AB_{1}\cdots B_{k}}=\Pi_{B_{1}\cdots B_{k}}^{\operatorname{Sym}}
\omega_{AB_{1}\cdots B_{k}}\Pi_{B_{1}\cdots B_{k}}^{\operatorname{Sym}},
\end{equation}
where
\begin{equation}
\Pi_{B_{1}\cdots B_{k}}^{\operatorname{Sym}}\coloneqq\frac{1}{k!}\sum_{\pi\in
S_{k}}W_{B_{1}\cdots B_{k}}(\pi) \label{eq:sym-subspace-proj}
\end{equation}
is the projection onto the symmetric subspace. Thus, $k$-Bose-extendibility is
a special case of $G$-Bose-symmetric extendibility under the identifications
in \eqref{eq:ident-k-to-g-1}--\eqref{eq:ident-k-to-g-4}.
\end{example}

\begin{example}
[$G$-symmetric]\label{ex:usual-symmetry}Let $G$ be a group with projective
unitary representation $\{U_{S}(g)\}_{g\in G}$, and let $\rho_{S}$ be a
quantum state of system $S$. A state $\rho_{S}$ is symmetric with respect to
$G$ \cite{MS13,MS14}\ if
\begin{equation}
\rho_{S}=U_{S}(g)\rho_{S}U_{S}(g)^{\dag}\quad\forall g\in G.
\end{equation}
Thus, the established notion of symmetry of a state $\rho_{S}$ with respect to
a group $G$ is a special case of $G$-symmetric extendibility in which the
system $R$ is trivial.
\end{example}

\begin{example}[$G$-Bose-symmetric]
A state $\rho_{S}$ is Bose-symmetric with respect to $G$ if
\begin{equation}
\rho_{S}=U_{S}(g)\rho_{S}\quad\forall g\in G.
\label{eq:def-G-Bose-sym}
\end{equation}
The condition in \eqref{eq:def-G-Bose-sym} is equivalent to the condition
\begin{equation}
\rho_{S}=\Pi_{S}^{G}\rho_{S}\Pi_{S}^{G},
\end{equation}
where the projector $\Pi_{S}^{G}$ is defined as
\begin{equation}
\label{eq:group_proj_GBS}
\Pi_{S}^{G}\coloneqq \frac{1}{\left\vert G\right\vert }\sum_{g\in G}U_{S}(g).
\end{equation}
Thus, the established notion of Bose symmetry of a state~$\rho_{S}$ with respect to a group $G$ is a special case of $G$-Bose symmetric extendibility in which the system $R$ is trivial.
\label{ex:usual-Bose-symmetry}
\end{example}

Although the concepts of $G$-symmetric extendibility and $G$-Bose-symmetric extendibility, in Definitions~\ref{def:g-sym-ext} and~\ref{def:g-bose-sym-ext}, respectively, are generally different, we can relate them by purifying a $G$-symmetric extendible state to a larger Hilbert space, as stated in Theorem~\ref{thm:Bose-sym-purify} below. The ability to do so plays a critical role in the algorithms proposed in Section~\ref{sec:tests-o-sym}. We give a proof of Theorem~\ref{thm:Bose-sym-purify} in Appendix~\ref{app:Bose-sym-purify}.

\begin{theorem}
\label{thm:Bose-sym-purify}A state $\rho_{S}$ is $G$-symmetric extendible if
and only if there exists a purification $\psi_{RS\hat{R}\hat{S}}^{\rho}$ of
$\rho_{S}$ satisfying the following:
\begin{equation}
|\psi^{\rho}\rangle_{RS\hat{R}\hat{S}}=\left(  U_{RS}(g)\otimes\overline
{U}_{\hat{R}\hat{S}}(g)\right)  |\psi^{\rho}\rangle_{RS\hat{R}\hat{S}}
\quad\forall g\in G, \label{eq:g-sym-pur-cond}
\end{equation}
where the overbar denotes the entrywise complex conjugate. The condition in
\eqref{eq:g-sym-pur-cond} is equivalent to
\begin{equation}
|\psi^{\rho}\rangle_{RS\hat{R}\hat{S}}=\Pi_{RS\hat{R}\hat{S}}^{G}|\psi^{\rho
}\rangle_{RS\hat{R}\hat{S}}, \label{eq:g-sym-pur-cond-proj}
\end{equation}
where
\begin{equation}
\Pi_{RS\hat{R}\hat{S}}^{G}\coloneqq \frac{1}{\left\vert G\right\vert }
\sum_{g\in G}U_{RS}(g)\otimes\overline{U}_{\hat{R}\hat{S}}(g).
\label{eq:projector-ref-unitaries}
\end{equation}

\end{theorem}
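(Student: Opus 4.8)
The plan is to establish the two implications of the biconditional separately by exploiting the canonical (maximally entangled) purification together with the ``transpose trick,'' and then to settle the equivalence of \eqref{eq:g-sym-pur-cond} and \eqref{eq:g-sym-pur-cond-proj} at the end. Throughout, I would fix an orthonormal basis $\{|j\rangle\}_{j}$ of $\mathcal{H}_{RS}$, take $\hat{R}\hat{S}$ to be an isomorphic copy, and set $|\Gamma\rangle_{RS\hat{R}\hat{S}}\coloneqq\sum_{j}|j\rangle_{RS}|j\rangle_{\hat{R}\hat{S}}$, for which the standard identity $(I_{RS}\otimes\overline{U}_{\hat{R}\hat{S}}(g))|\Gamma\rangle_{RS\hat{R}\hat{S}}=(U_{RS}(g)^{\dag}\otimes I_{\hat{R}\hat{S}})|\Gamma\rangle_{RS\hat{R}\hat{S}}$ holds, using $\overline{U}(g)^{T}=U(g)^{\dag}$.

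For the forward direction, I would start from a $G$-invariant extension $\omega_{RS}$ of $\rho_{S}$ obeying \eqref{eq:G-ext-1}--\eqref{eq:G-ext-2} and propose the canonical purification $|\psi^{\rho}\rangle_{RS\hat{R}\hat{S}}\coloneqq(\sqrt{\omega_{RS}}\otimes I_{\hat{R}\hat{S}})|\Gamma\rangle_{RS\hat{R}\hat{S}}$. Tracing out $R\hat{R}\hat{S}$ immediately returns $\rho_{S}$ by \eqref{eq:G-ext-1}, so this is a legitimate purification. To verify \eqref{eq:g-sym-pur-cond}, I would apply $U_{RS}(g)\otimes\overline{U}_{\hat{R}\hat{S}}(g)$ and use the transpose trick to collapse the action of $\overline{U}_{\hat{R}\hat{S}}(g)$ on $\hat{R}\hat{S}$ into $U_{RS}(g)^{\dag}$ on $RS$, producing $(U_{RS}(g)\sqrt{\omega_{RS}}\,U_{RS}(g)^{\dag}\otimes I_{\hat{R}\hat{S}})|\Gamma\rangle_{RS\hat{R}\hat{S}}$. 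The $G$-invariance \eqref{eq:G-ext-2} then passes to the square root, since $U_{RS}(g)\sqrt{\omega_{RS}}\,U_{RS}(g)^{\dag}=\sqrt{U_{RS}(g)\omega_{RS}U_{RS}(g)^{\dag}}=\sqrt{\omega_{RS}}$ by uniqueness of the positive square root, which recovers $|\psi^{\rho}\rangle_{RS\hat{R}\hat{S}}$ and establishes \eqref{eq:g-sym-pur-cond}.

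For the reverse direction, given any purification $|\psi^{\rho}\rangle_{RS\hat{R}\hat{S}}$ of $\rho_{S}$ satisfying \eqref{eq:g-sym-pur-cond}, I would set $\omega_{RS}\coloneqq\operatorname{Tr}_{\hat{R}\hat{S}}[\psi^{\rho}_{RS\hat{R}\hat{S}}]$, so that $\operatorname{Tr}_{R}[\omega_{RS}]=\rho_{S}$ holds automatically. Rearranging \eqref{eq:g-sym-pur-cond} into $(U_{RS}(g)\otimes I_{\hat{R}\hat{S}})|\psi^{\rho}\rangle_{RS\hat{R}\hat{S}}=(I_{RS}\otimes\overline{U}_{\hat{R}\hat{S}}(g)^{\dag})|\psi^{\rho}\rangle_{RS\hat{R}\hat{S}}$, forming the corresponding rank-one projectors, and tracing out $\hat{R}\hat{S}$ (where the conjugate unitaries act and can be cyclically cancelled inside the partial trace) yields $U_{RS}(g)\omega_{RS}U_{RS}(g)^{\dag}=\omega_{RS}$, i.e.\ \eqref{eq:G-ext-2}. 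Thus $\omega_{RS}$ is a $G$-symmetric extension and $\rho_{S}$ is $G$-symmetric extendible.

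Finally, to see that \eqref{eq:g-sym-pur-cond} and \eqref{eq:g-sym-pur-cond-proj} coincide, one direction follows by averaging \eqref{eq:g-sym-pur-cond} over $g\in G$, which directly gives $\Pi^{G}_{RS\hat{R}\hat{S}}|\psi^{\rho}\rangle_{RS\hat{R}\hat{S}}=|\psi^{\rho}\rangle_{RS\hat{R}\hat{S}}$. For the converse, I would observe that $V(g)\coloneqq U_{RS}(g)\otimes\overline{U}_{\hat{R}\hat{S}}(g)$ is a genuine (linear) unitary representation of $G$---any projective cocycle phase in $U$ cancels against its complex conjugate in $\overline{U}$---so that $\Pi^{G}_{RS\hat{R}\hat{S}}=\frac{1}{|G|}\sum_{g}V(g)$ is a bona fide projector with $V(h)\Pi^{G}_{RS\hat{R}\hat{S}}=\Pi^{G}_{RS\hat{R}\hat{S}}$ by the rearrangement theorem, exactly as in \eqref{eq:unitaries-and-projs}; applying $V(h)$ to \eqref{eq:g-sym-pur-cond-proj} then recovers \eqref{eq:g-sym-pur-cond}. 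The main obstacle here is bookkeeping rather than depth: applying the transpose trick in the correct direction so that $\overline{U}$ on $\hat{R}\hat{S}$ turns into $U^{\dag}$ on $RS$, and checking that $V$ is an honest representation so that $\Pi^{G}_{RS\hat{R}\hat{S}}$ is truly a projection even when $U$ is only projective.
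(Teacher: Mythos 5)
Your proposal is correct and follows essentially the same route as the paper's proof: the forward direction builds the canonical purification of a $G$-invariant extension $\omega_{RS}$ (the paper writes it via the spectral decomposition of $\omega_{RS}$, you as $(\sqrt{\omega_{RS}}\otimes I_{\hat{R}\hat{S}})|\Gamma\rangle$ --- the same state), the converse traces out $\hat{R}\hat{S}$ and cancels the conjugate unitaries cyclically, and the equivalence of the two conditions is obtained by group averaging together with \eqref{eq:unitaries-and-projs}. Your verification of invariance via the transpose trick and uniqueness of the positive square root is a slightly more compact packaging of the paper's eigenspace-by-eigenspace argument, and your explicit check that the projective cocycle phases cancel in $U_{RS}(g)\otimes\overline{U}_{\hat{R}\hat{S}}(g)$, so that $\Pi^{G}_{RS\hat{R}\hat{S}}$ is a genuine projection, is a worthwhile detail the paper leaves implicit.
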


\section{Testing symmetry and extendibility on quantum computers}
\label{sec:tests-o-sym}

\begin{table*}
\centering
\begin{tabular}
[c]{c|c|c}\hline\hline
\multicolumn{1}{c}{Test} & \multicolumn{1}{|c|}{Algorithm} &
\multicolumn{1}{c}{Acceptance Probability}\\\hline\hline
$G$-Bose symmetry & \ref{alg:simple} & $\max_{\sigma\in\text{B-Sym}_{G}}F(\rho,\sigma
)$\\\hline
$G$-symmetry & \ref{alg:g-sym-test} & $\max_{\sigma\in\text{Sym}_{G}}F(\rho,\sigma)$\\\hline
$G$-Bose symmetric extendibility & \ref{alg:G-BSE-test} & $\max_{\sigma\in\text{BSE}_{G}}
F(\rho,\sigma)$\\\hline
$G$-symmetric extendibility & \ref{alg:sym-ext} & $\max_{\sigma\in\text{SymExt}_{G}}
F(\rho,\sigma)$\\\hline\hline
\end{tabular}
\caption{Summary of the various symmetry tests proposed in Section~\ref{sec:tests-o-sym} and their acceptance probabilities. For more details, see Theorems~\ref{thm:acc-prob-g-Bose-sym}, \ref{thm:max-acc-prob-g-sym}, \ref{thm:G-BSE-acc-prob}, and \ref{thm:G-SE-acc-prob}.}
\label{tbl:theory-summary}
\end{table*}

We can use a quantum computer to test for $G$-symmetric
extendibility of a quantum state, as well as for other forms of symmetry discussed in the previous section. We assume the following in doing so:

\begin{enumerate}
\item there is a quantum circuit available that prepares a purification
$\psi_{S^{\prime}S}^{\rho}$ of the state $\rho_{S}$,

\item there is an efficient implementation of each of the unitary operators in
the set $\{U_{RS}(g)\}_{g\in G}$,

\item and there is an efficient implementation of each of the unitary
operators in the set $\{\overline{U}_{RS}(g)\}_{g\in G}$.
\end{enumerate}

The first assumption can be made less restrictive by employing the variational, purification-learning procedure from \cite{CSZW20}. That is, given a circuit that prepares the state~$\rho_{S}$, the variational algorithm from \cite{CSZW20} outputs a circuit that approximately prepares a purification of~$\rho_{S}$. We should note that the convergence of the algorithm from \cite{CSZW20} has not been established, and so the first assumption might be necessary for some applications. See also \cite{EBSWSCH22}.

The last assumption can be relaxed by the following reasoning: a standard gate set for approximating arbitrary unitaries in quantum computing consists of the controlled-NOT gate, the Hadamard gate, and the $T$ gate \cite{book2000mikeandike}. The first two gates have only real entries while the $T$ gate is a diagonal $2\times2$ unitary gate with the entries $1$ and $e^{i\pi/4}$. The complex conjugate of this gate is equal to $T^{\dag}$. Thus, if a circuit for $U_{RS}(g)$ is constructed from this standard gate set, then we can generate a circuit for $\overline{U}_{RS}(g)$ by replacing every $T$ gate in the original circuit with $T^{\dag}$.

We now consider various quantum computational tests of symmetry that have increasing complexity. Table~\ref{tbl:theory-summary} summarizes the main theoretical insight of this section, which is that the acceptance probability of each symmetry test can be expressed in terms of the fidelity of the state being tested to a set of symmetric states.

To give insight along the way, we provide an example along with the tests below. In particular, we consider the dihedral group of the triangle, $D_3$, which has order six and is isomorphic to the symmetric group on three elements, the smallest non-abelian group. Recall that dihedral groups are the symmetry groups of regular polygons. 

Our example $D_3$ is generated via a flip $f$ and a rotation $r$: $\langle e, f, r\ \vert\ r^3 = e, f^2 = e, frf = r^{-1} \rangle$. The group thus has six elements $\{e, f, r, r^2, fr, fr^2\}$, where $e$ is the identity element. We will specify elements $r^2, fr, fr^2$ in order to enforce the rules of the group. \\

The group table for this dihedral group is given by
\begin{center}
\begin{tabular}{
>{\centering\arraybackslash}p{0.06\textwidth} | >{\centering\arraybackslash}p{0.03\textwidth} | 
>{\centering\arraybackslash}p{0.03\textwidth} | 
>{\centering\arraybackslash}p{0.03\textwidth} | 
>{\centering\arraybackslash}p{0.03\textwidth} | 
>{\centering\arraybackslash}p{0.03\textwidth} | 
>{\centering\arraybackslash}p{0.03\textwidth}}
 \hline
  Group element & $e$ & $f$ & $r$ & $r^2$ & $fr$ & $fr^2$ \\ 
  \hline
 $e$      & $e$      & $f$      & $r$      & $r^2$  & $fr$     & $fr^2$\\ 
 $f$      & $f$      & $e$      & $fr$     & $fr^2$ & $r$      & $r^2$ \\ 
 $r$      & $r$      & $fr^2$ & $r^2$  & $e$      & $f$      & $fr$    \\ 
 $r^2$  & $r^2$  & $fr$     & $e$      & $r$      & $fr^2$ & $f$     \\ 
 $fr$     & $fr$     & $r^2$  & $fr^2$ & $f$      & $e$      & $r$     \\ 
 $fr^2$ & $fr^2$ & $r$      & $f$      & $fr$     & $r^2$  & $e$     \\ 
 \hline
\end{tabular}
\end{center}

To fully realize $D_3$, we use a two-qubit unitary representation and specify the generators as such: $\{e \rightarrow \mathbb{I}, f \rightarrow \textrm{CNOT}, r \rightarrow \textrm{CNOT} \circ \textrm{SWAP}\}$. A quick check confirms that these generators obey the commutation rules of the group and generate the table above. Throughout the next four sections, we substitute this group into the presented algorithms to demonstrate their construction.

\subsection{Testing \texorpdfstring{$G$}{G}-Bose symmetry}

\label{sec:simple-algorithm}

Let us begin by discussing the simplest version of the problem. Suppose that the state under consideration is pure, so that we can write it as $\psi_{S}\equiv|\psi\rangle\!\langle\psi|_{S}$, and suppose that the $R$ system is trivial. We recover the traditional case of $G$-Bose symmetry mentioned in Example~\ref{ex:usual-Bose-symmetry}. Thus, our goal is to decide if
\begin{equation}
|\psi\rangle_{S}=U_{S}(g)|\psi\rangle_{S}\quad\forall g\in G.
\end{equation}
This condition is equivalent to
\begin{equation}
\label{eq:GBSProjTest}
|\psi\rangle_{S}=\Pi_{S}^{G}|\psi\rangle_{S},
\end{equation}
where
\begin{equation}
\Pi_{S}^{G}\coloneqq\frac{1}{\left\vert G\right\vert }\sum_{g\in G}U_{S}(g),
\label{eq:simple-case-g-proj}
\end{equation}
which is in turn equivalent to
\begin{equation}
\left\Vert \Pi_{S}^{G}|\psi\rangle_{S}\right\Vert _{2}=1.
\label{eq:proj-pure-state}
\end{equation}
The equivalence
\begin{equation}
|\psi\rangle_{S}=\Pi_{S}^{G}|\psi\rangle_{S}\quad\Leftrightarrow
\quad\left\Vert \Pi_{S}^{G}|\psi\rangle_{S}\right\Vert _{2}=1
\label{eq:equiv-norm-eq-1-iff-proj}
\end{equation}
holds from the Pythagorean theorem and the positive definiteness of the norm. Indeed,
\begin{equation}
\left\Vert \Pi_{S}^{G}|\psi\rangle_{S}\right\Vert _{2}  =1 
\quad \Rightarrow  \quad \left\Vert\Pi_{S}^{G}|\psi\rangle_{S}\right\Vert _{2}^2  =1 = \left\Vert|\psi\rangle_{S}\right\Vert _{2}^2
\end{equation}
and since the Pythagorean theorem states that
\begin{equation}
\left\Vert\Pi_{S}^{G}|\psi\rangle_{S}\right\Vert _{2}^2 + \left\Vert (\mathbb{I}_S-\Pi_{S}^{G})|\psi\rangle_{S}\right\Vert _{2}^2 = \left\Vert|\psi\rangle_{S}\right\Vert _{2}^2, 
\end{equation}
we conclude that $\left\Vert (\mathbb{I}_S-\Pi_{S}^{G})|\psi\rangle_{S}\right\Vert _{2} = 0$, which implies that $ (\mathbb{I}_S-\Pi_{S}^{G})|\psi\rangle_{S} = 0$ from the positive definiteness of the norm. This in turn is equivalent to the left-hand side of \eqref{eq:equiv-norm-eq-1-iff-proj}. 
%\begin{equation}
%\left\Vert |\psi\rangle_{S}-\Pi_{S}^{G}|\psi\rangle_{S}\right\Vert _{2}^{2}=0
%\end{equation}
Thus, if we have a method to perform the projection onto $\Pi_{S}^{G}$, then we can decide whether \eqref{eq:proj-pure-state} holds.

There is a simple quantum algorithm to do so. This algorithm was originally proposed in \cite[Chapter~8]{harrow2005applications} under the name of ``generalized phase estimation.'' It proceeds as follows and can be summarized as \textquotedblleft performing the quantum phase estimation algorithm with respect to the unitary representation $\{U_{S}(g)\}_{g\in G}$\textquotedblright:

\begin{Algorithm}
[$G$-Bose symmetry test]\label{alg:simple} The algorithm consists of the following steps:

\begin{enumerate}
\item Prepare an ancillary register $C$ in the state $|0\rangle_{C}$.

\item Act on register $C$ with a quantum Fourier transform.

\item Append the state $|\psi\rangle_{S}$ and perform the following controlled unitary:
\begin{equation}
\sum_{g\in G}|g\rangle\!\langle g|_{C}\otimes U_{S}(g).
\end{equation}

\item Perform an inverse quantum Fourier transform on register~$C$, measure in the basis $\{|g\rangle\!\langle g|_{C}\}_{g\in G}$, and accept if and only if the zero outcome $|0\rangle\!\langle0|_{C}$ occurs.
\end{enumerate}
\end{Algorithm}

\begin{figure}[ptb]
\begin{center}
\includegraphics[
width=\linewidth
]{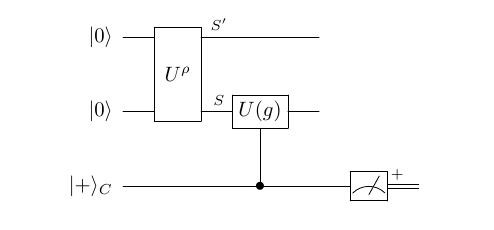}
\end{center}
\caption{Quantum circuit to implement Algorithm~\ref{alg:simple}. The unitary $U^{\rho}$ prepares a purification $\psi_{S^{\prime}S}$ of the state $\rho_{S}$. The final measurement box with the plus-sign to the right of it indicates that the measurement $\{|+\rangle\!\langle +|_C, \mathbb{I}_C - |+\rangle\!\langle +|_C\}$ is performed. (We use this same notation in several forthcoming figures.) Algorithm~\ref{alg:simple} tests whether the state $\rho_{S}$ is $G$-Bose symmetric, as defined in Example~\ref{ex:usual-Bose-symmetry}. Its acceptance probability is equal to $\operatorname{Tr}[\Pi_{S}^{G}\rho_{S}]$, where $\Pi_{S}^{G}$ is defined in \eqref{eq:simple-case-g-proj}.}
\label{fig:simple-case}
\end{figure}

Note that the register $C$ has dimension $\left\vert G\right\vert $. Also, we can write the state $|0\rangle_{C}$ as $|e\rangle_{C}$, where $e$ is the identity element of the group. The result of Step~2 of Algorithm~\ref{alg:simple}\ is to prepare the following uniform superposition state:
\begin{equation}
|+\rangle_{C}\coloneqq \frac{1}{\sqrt{\left\vert G\right\vert }}\sum_{g \in G}|g\rangle_{C}. \label{eq:plus-over-group}
\end{equation}
Although the quantum Fourier transform is specified in Algorithm~\ref{alg:simple}, in fact, any unitary that generates the desired superposition state $|+\rangle_{C}$ can serve as a replacement in Steps 2 and 4 above and oftentimes leads to an improvement in circuit depth. The same is true for all algorithms that follow.

Moving on, the overall state after Step~3 is as follows:
\begin{equation}
\frac{1}{\sqrt{\left\vert G\right\vert }}\sum_{g\in G}|g\rangle_{C} U_{S}(g)|\psi\rangle_{S}.
\end{equation}
The final step of Algorithm~\ref{alg:simple}\ projects the register $C$ onto the state $|+\rangle_{C}$.  According to the aforementioned convention, Algorithm~\ref{alg:simple}\ accepts if the identity element outcome $|e\rangle\!\langle e|_{C}$ occurs. The probability that Algorithm~\ref{alg:simple} accepts is equal to
\begin{align}
&  \left\Vert \left(  \langle+|_{C}\otimes \mathbb{I}_S\right)  \left(  \frac{1}{\sqrt{\left\vert G\right\vert}}\sum_{g\in G}|g\rangle_{C}U_{S} (g)|\psi\rangle_{S}\right) \right\Vert _{2}^{2}\nonumber\\
&  =\left\Vert \frac{1}{\left\vert G\right\vert} \sum_{g\in G} U_{S}(g)|\psi\rangle_{S}\right\Vert_{2}^{2}\label{eq:simple-acc-prob-1}\\
&  =\left\Vert \Pi_{S}^{G}|\psi\rangle_{S}\right\Vert _{2}^{2}.
\label{eq:simple-acc-prob-2}
\end{align}

Figure~\ref{fig:simple-case} depicts this quantum algorithm. Not only does it decide whether the state $|\psi\rangle_{S}$ is symmetric, but it also quantifies how symmetric the state is.  Since the acceptance probability is equal to $\left\Vert \Pi_{S}^{G}|\psi\rangle_{S}\right\Vert _{2}^{2}$, and this quantity is a measure of symmetry (see Theorem~\ref{thm:res-mono-G-Bose-sym}), we can repeat the algorithm a large number of times to estimate the acceptance probability to arbitrary precision.

The same quantum algorithm can decide whether a given mixed state $\rho_{S}$ is $G$-Bose symmetric (see Example~\ref{ex:usual-Bose-symmetry}). Similar to the above, it also can estimate how $G$-Bose symmetric the state $\rho_{S}$ is. To see this, consider that the acceptance probability for a pure state can be rewritten as follows:
\begin{equation}
\left\Vert \Pi_{S}^{G}|\psi\rangle_{S}\right\Vert _{2}^{2}=\operatorname{Tr}[\Pi_{S}^{G}|\psi\rangle\!\langle\psi|_{S}].
\label{eq:acc-prob-bose-test-pure-state-1}
\end{equation}
Then since every mixed state can be written as a probabilistic mixture of pure states, it follows that the acceptance probability of Algorithm~\ref{alg:simple}, when acting on the mixed state $\rho_{S}$, is equal to
\begin{equation}
\operatorname{Tr}[\Pi_{S}^{G}\rho_{S}]. \label{eq:acc-prob-bose-test}
\end{equation}
This acceptance probability is equal to one if and only if $\rho_{S}=\Pi_{S}^{G}\rho_{S}\Pi_{S}^{G}$, and so this test is a faithful test of $G$-Bose symmetry. The equivalence
\begin{equation}
\operatorname{Tr}[\Pi_{S}^{G}\rho_{S}]=1\quad\Leftrightarrow\quad\rho_{S} =\Pi_{S}^{G}\rho_{S}\Pi_{S}^{G} \label{eq:Bose-symmetric-equiv-cond}
\end{equation}
follows as a limiting case of the gentle measurement lemma \cite{itit1999winter,ON07} (see also \cite[Lemma~9.4.1]{Wbook17}):
\begin{equation}
\frac{1}{2}\left\Vert \rho_{S}-\frac{\Pi_{S}^{G}\rho_{S}\Pi_{S}^{G}}{\operatorname{Tr}[\Pi_{S}^{G}\rho_{S}]}\right\Vert_{1}\leq\sqrt{1-\operatorname{Tr}[\Pi_{S}^{G}\rho_{S}]}
\end{equation}
and the positive definiteness of the trace norm. Again, through repetition, we can estimate the acceptance probability $\operatorname{Tr}[\Pi_{S}^{G}\rho_{S}]$ and then employ it as a measure of $G$-Bose symmetry (see
Theorem~\ref{thm:res-mono-G-Bose-sym}).

Interestingly, the acceptance probability of Algorithm~\ref{alg:simple} can be expressed as the \textit{maximum $G$-Bose-symmetric fidelity}, defined for a state $\rho_S$ as
\begin{equation}
\max_{\sigma_{S}\in
\operatorname{B-Sym}_{G}}F(\rho_{S},\sigma_{S}),
\end{equation}
where
\begin{equation}
\operatorname{B-Sym}_{G}\coloneqq \left\{  \sigma_{S}\in\mathcal{D}(\mathcal{H}%
_{S}):\sigma_{S}=\Pi_{S}^{G}\sigma_{S}\Pi_{S}^{G}\right\}  ,
\end{equation}
and the fidelity of quantum states $\omega$ and $\tau$  is defined as \cite{U76}
\begin{equation}
F(\omega,\tau)\coloneqq \left\Vert \sqrt{\omega}\sqrt{\tau}\right\Vert_{1}^{2}.
\end{equation}
We state this claim in Theorem~\ref{thm:acc-prob-g-Bose-sym} below and provide a proof of Theorem~\ref{thm:acc-prob-g-Bose-sym} in Appendix~\ref{app:acc-prob-g-Bose-sym}.  
Thus, Algorithm~\ref{alg:simple} gives an operational meaning to the maximum $G$-Bose-symmetric fidelity in terms of its acceptance probability, and it can be used to estimate this fundamental measure of symmetry.

\begin{theorem}\label{thm:acc-prob-g-Bose-sym}
For a state $\rho_{S}$, the acceptance probability of Algorithm~\ref{alg:simple} is equal to the maximum $G$-Bose
symmetric fidelity. That is,
\begin{equation}
\operatorname{Tr}[\Pi_{S}^{G}\rho_{S}]=\max_{\sigma_{S}\in
\operatorname{B-Sym}_{G}}F(\rho_{S},\sigma_{S}).
\end{equation}
\end{theorem}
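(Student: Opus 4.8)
The plan is to establish the claimed identity by proving two matching inequalities, after first recording the structural fact that $\Pi_S^G$ is an orthogonal projection. Since each $U_S(g)$ is unitary we have $(\Pi_S^G)^\dagger = \frac{1}{|G|}\sum_{g\in G} U_S(g^{-1}) = \Pi_S^G$ (reindexing $g \mapsto g^{-1}$), and the rearrangement identity \eqref{eq:unitaries-and-projs} gives $(\Pi_S^G)^2 = \frac{1}{|G|}\sum_{g\in G} U_S(g)\Pi_S^G = \Pi_S^G$. Consequently a state $\sigma_S$ lies in $\operatorname{B-Sym}_G$, i.e.\ $\sigma_S = \Pi_S^G\sigma_S\Pi_S^G$, if and only if its support is contained in the range of $\Pi_S^G$; in particular $\Pi_S^G\sqrt{\sigma_S} = \sqrt{\sigma_S}$ for every such $\sigma_S$.

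For the upper bound, I would fix any $\sigma_S \in \operatorname{B-Sym}_G$ and write the fidelity as $F(\rho_S,\sigma_S) = \|\sqrt{\rho_S}\sqrt{\sigma_S}\|_1^2 = \|\sqrt{\rho_S}\,\Pi_S^G\sqrt{\sigma_S}\|_1^2$, using $\sqrt{\sigma_S} = \Pi_S^G\sqrt{\sigma_S}$. Applying the Cauchy--Schwarz inequality for the trace norm, $\|AB\|_1 \le \|A\|_2\|B\|_2$, with $A = \sqrt{\rho_S}\,\Pi_S^G$ and $B = \sqrt{\sigma_S}$, and using $\|\sqrt{\sigma_S}\|_2^2 = \operatorname{Tr}[\sigma_S] = 1$ together with $\|\sqrt{\rho_S}\,\Pi_S^G\|_2^2 = \operatorname{Tr}[\Pi_S^G\rho_S\Pi_S^G] = \operatorname{Tr}[\Pi_S^G\rho_S]$ (by Hermiticity, idempotency, and cyclicity of the trace), yields $F(\rho_S,\sigma_S) \le \operatorname{Tr}[\Pi_S^G\rho_S]$. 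Taking the maximum over $\sigma_S \in \operatorname{B-Sym}_G$ gives one direction.

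For the lower bound I would exhibit a single Bose-symmetric state that saturates the bound, via Uhlmann's theorem. Let $|\psi^\rho\rangle_{S'S}$ be any purification of $\rho_S$, set $p \coloneqq \operatorname{Tr}[\Pi_S^G\rho_S]$, and define the unit vector $|\phi\rangle_{S'S} \coloneqq p^{-1/2}(I_{S'}\otimes\Pi_S^G)|\psi^\rho\rangle_{S'S}$, which is normalized because $\langle\psi^\rho|(I_{S'}\otimes\Pi_S^G)|\psi^\rho\rangle = \operatorname{Tr}[\Pi_S^G\rho_S] = p$. Its reduced state $\sigma^*_S \coloneqq \operatorname{Tr}_{S'}[|\phi\rangle\!\langle\phi|_{S'S}] = \Pi_S^G\rho_S\Pi_S^G/p$ lies in $\operatorname{B-Sym}_G$, and $|\phi\rangle_{S'S}$ is by construction a purification of it. Since Uhlmann's theorem expresses $F(\rho_S,\sigma^*_S)$ as the maximal squared overlap over all purifications of the two states, the particular pair $(|\psi^\rho\rangle,|\phi\rangle)$ furnishes the lower bound $F(\rho_S,\sigma^*_S) \ge |\langle\psi^\rho|\phi\rangle|^2 = |p^{-1/2}\langle\psi^\rho|(I_{S'}\otimes\Pi_S^G)|\psi^\rho\rangle|^2 = p$. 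Combining the two inequalities gives the claimed equality.

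I expect the only real subtlety to be the upper bound: one must correctly invoke the trace-norm Cauchy--Schwarz inequality and verify $\|\sqrt{\rho_S}\,\Pi_S^G\|_2^2 = \operatorname{Tr}[\Pi_S^G\rho_S]$, which is exactly where the Hermiticity and idempotency of $\Pi_S^G$ enter. The lower bound is comparatively mechanical once one recognizes that projecting a purification of $\rho_S$ onto the symmetric subspace and renormalizing produces a purification of precisely the projected-and-renormalized state $\Pi_S^G\rho_S\Pi_S^G/p$, whose overlap with the original purification is immediately $\sqrt{p}$.
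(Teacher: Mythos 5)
Your proof is correct, but it takes a genuinely different route from the paper's. The paper converts $\operatorname{Tr}[\Pi_S^G\rho_S]$ into $\left\Vert (I_R\otimes\Pi_S^G)|\psi\rangle_{RS}\right\Vert_2^2$, uses the variational characterization of the Euclidean norm, and then invokes Uhlmann's theorem to rewrite the whole quantity as $\max_{\sigma_S\in\mathcal{D}(\mathcal{H}_S)}F(\rho_S,\Pi_S^G\sigma_S\Pi_S^G)$, after which it shows this unconstrained maximum coincides with the maximum over $\operatorname{B-Sym}_G$ by the same projection-and-renormalization trick you use. You instead prove the two inequalities directly: your upper bound replaces the entire purification/Uhlmann machinery with the Schatten-norm H\"older inequality $\Vert AB\Vert_1\le\Vert A\Vert_2\Vert B\Vert_2$ applied to $\sqrt{\rho_S}\,\Pi_S^G$ and $\sqrt{\sigma_S}$, which is more elementary and makes transparent exactly where Hermiticity and idempotency of $\Pi_S^G$ enter; your lower bound exhibits the explicit optimizer $\sigma_S^\ast=\Pi_S^G\rho_S\Pi_S^G/\operatorname{Tr}[\Pi_S^G\rho_S]$ together with its purification, which the paper's argument leaves implicit. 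The one degenerate case you should flag is $\operatorname{Tr}[\Pi_S^G\rho_S]=0$, where $\sigma_S^\ast$ is undefined; there your upper bound already forces every fidelity in the maximum to vanish, so the equality holds trivially whenever $\operatorname{B-Sym}_G$ is nonempty (the paper's proof skirts the same point). Net effect: your argument is shorter and yields the maximizing state explicitly, while the paper's version is structured to generalize directly to the prover-assisted settings of Theorems~\ref{thm:max-acc-prob-g-sym}--\ref{thm:G-SE-acc-prob}, where the Uhlmann-based template is reused essentially verbatim.
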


\begin{figure}
\begin{center}
\includegraphics[width=0.65\linewidth]{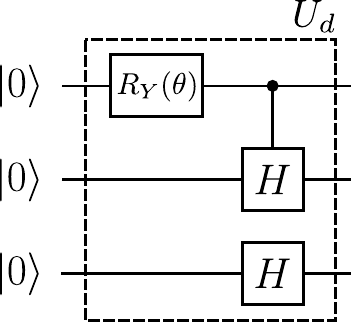}
\end{center}
\caption{Unitary $U_d$, with $\theta = 2 \arctan \!\left(\frac{1}{\sqrt{2}}\right)$, generates the equal superposition of six elements from \eqref{eq:Dihedral_Superposition}. Note that the controlled-Hadamard is controlled on the qubit being in the state zero.}
\label{fig:Dihedral_Superposition}
\end{figure}

\begin{example}
\label{ex:dih-gr-d3}
In the example of the dihedral group $D_3$, the $\ket{+}_C$ state is a uniform superposition of six elements. We use three qubits and the unitary $U_d$ shown in Figure~\ref{fig:Dihedral_Superposition} to generate an equal superposition of six elements:
\begin{multline}
    \label{eq:Dihedral_Superposition}
    U_d\ket{000} = \frac{1}{\sqrt{6}} (\ket{000} + \ket{001} + \ket{010} + \\ \ket{011} + \ket{100} + \ket{101}).
\end{multline}
These control register states need to be mapped to group elements to be meaningful; thus, we employ the mapping $\{\ket{000} \rightarrow e, \ket{001} \rightarrow fr^2, \ket{010} \rightarrow fr, \ket{011} \rightarrow r, \ket{100} \rightarrow f, \ket{101} \rightarrow r^2\}$ for our circuit constructions. The circuit to test for $D_3$-symmetry is shown in Figure~\ref{fig:Dihedral_GBS_Circuit}.
\end{example}

\begin{figure}
\begin{center}
\includegraphics[width=\linewidth]{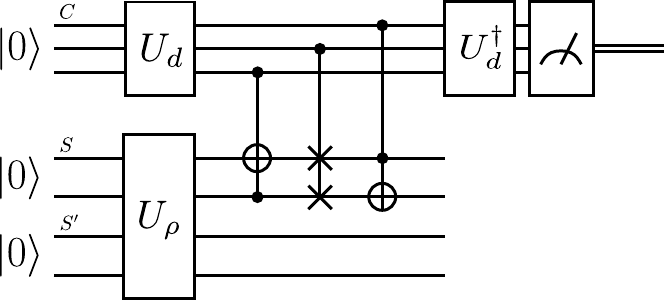}
\end{center}
\caption{Quantum circuit implementing Algorithm~\ref{alg:simple} to test $G$-Bose symmetry for $D_3$. Compared to Figure~\ref{fig:simple-case}, the systems $S$ and $S'$ are two qubits each, $C$ consists of three qubits, and $\ket{+}_C$ is defined as $U_d\ket{000}$.}
\label{fig:Dihedral_GBS_Circuit}
\end{figure}

\subsection{Testing \texorpdfstring{$G$}{G}-symmetry}

\label{sec:G-sym-single-S-triv-R} We now discuss how to modify Algorithm~\ref{alg:simple}\ to one that decides whether a state $\rho_{S}$ is $G$-symmetric (see Example~\ref{ex:usual-symmetry}), i.e., if
\begin{equation}
\rho_{S}=U_{S}(g)\rho_{S}U_{S}(g)^{\dag}\quad\forall g\in G\, .
\label{eq:rho-symmetric-single-sys}
\end{equation}
We also prove that the acceptance probability of the modified algorithm (Algorithm~\ref{alg:g-sym-test} below) is equal to the \textit{maximum }$G$\textit{-symmetric fidelity}, defined as
\begin{equation}
\max_{\sigma\in\text{Sym}_{G}}F(\rho_{S},\sigma_{S}), \label{eq:fid-of-asym}
\end{equation}
where
\begin{multline}
\operatorname{Sym}_{G}\coloneqq \\
\left\{ \sigma_{S}\in \mathcal{D} (\mathcal{H}_{S}):\sigma_{S} = U_{S}(g)\sigma_{S}U_{S}(g)^{\dag}\ \forall g\in G\right\}  ,
\end{multline}
and $\mathcal{D}(\mathcal{H}_{S})$ denotes the set of density operators acting on the Hilbert space $\mathcal{H}_{S}$.
Thus, Algorithm~\ref{alg:g-sym-test} gives an operational meaning to the maximum $G$-symmetric fidelity in terms of its acceptance probability, and it can be used to estimate this fundamental measure of symmetry.

In the modified approach, we suppose that the quantum computer (now called the verifier) is equipped with access to a ``quantum prover''---an agent who can perform arbitrarily powerful quantum computations. We suppose that the quantum computer is allowed to exchange two quantum messages with the prover. The resulting class of problems that can be solved using this approach is abbreviated QIP(2), for quantum interactive proofs with two quantum messages exchanged \cite{W09,VW15}, and we note here that computational problems related to entanglement of bipartite states \cite{HMW13,Hayden:2014:TQI}\ and recoverability of tripartite states \cite{PhysRevA.94.022310}\ were previously shown to be decidable in QIP(2). These latter problems were proven to be QSZK-hard, and it remains an open question to determine their precise computational complexity.

Let $|\psi\rangle_{S^{\prime}S}$ be a purification of the state $\rho_{S}$, and suppose that the verifier has access to a circuit $U^{\rho}$\ that prepares this purification of $\rho_{S}$.

\begin{Algorithm}
[$G$-symmetry test]\label{alg:g-sym-test}The algorithm consists of the following steps:

\begin{enumerate}
\item The verifier uses the circuit $U^{\rho}$ to prepare the state $|\psi\rangle_{S^{\prime}S}$.

\item The verifier transmits the purifying system $S^{\prime}$ to the prover.

\item The prover appends an ancillary register $E$ in the state $|0\rangle_{E}$ and performs a unitary $V_{S^{\prime}E\rightarrow\hat{S}E^{\prime}}$.

\item The prover sends the system $\hat{S}$ back to the verifier.

\item The verifier prepares a register $C$ in the state $|0\rangle_{C}$.

\item The verifier acts on register $C$ with a quantum Fourier transform.

\item The verifier performs the following controlled unitary:
\begin{equation}
\sum_{g\in G}|g\rangle\!\langle g|_{C}\otimes U_{S}(g)\otimes\overline{U}_{\hat{S}}(g).
\end{equation}

\item The verifier performs an inverse quantum Fourier transform on register~$C$, measures in the basis $\{|g\rangle\!\langle g|_{C}\}_{g\in G}$, and accepts if and only if the zero outcome $|0\rangle\!\langle0|_{C}$ occurs.
\end{enumerate}
\end{Algorithm}

Figure~\ref{fig:case-2} depicts this quantum algorithm. The overall state after Step~3 of Algorithm~\ref{alg:g-sym-test}\ is
\begin{equation}
V_{S^{\prime}E\rightarrow\hat{S}E^{\prime}}|\psi\rangle_{S^{\prime}S} |0\rangle_{E}.
\end{equation}
The result of Step~6 is to prepare the uniform superposition state $|+\rangle_{C}$, which is defined in \eqref{eq:plus-over-group}. After Step~7, the overall state is
\begin{equation}
\frac{1}{\sqrt{\left\vert G\right\vert }}\sum_{g\in G}|g\rangle_{C}\left(U_{S}(g) \otimes \overline{U}_{\hat{S}}(g)\right)  V_{S^{\prime}E \rightarrow \hat{S}E^{\prime}}|\psi\rangle_{S^{\prime}S}|0\rangle_{E}.
\end{equation}

For a fixed unitary $V_{S^{\prime}E\rightarrow\hat{S}E^{\prime}}$, the
probability of accepting, by following the same reasoning in \eqref{eq:simple-acc-prob-1}--\eqref{eq:simple-acc-prob-2}, is equal to
\begin{equation}
\left\Vert \Pi_{S\hat{S}}^{G}V_{S^{\prime}E\rightarrow\hat{S}E^{\prime}}
|\psi\rangle_{S^{\prime}S}|0\rangle_{E}\right\Vert _{2}^{2},
\end{equation}
where
\begin{equation}
\Pi_{S\hat{S}}^{G}\coloneqq \frac{1}{\left\vert G\right\vert }\sum_{g\in
G}U_{S}(g)\otimes\overline{U}_{\hat{S}}(g).
\end{equation}
Since the goal of the prover in a quantum interactive proof is to convince the verifier to accept \cite{W09,VW15}, the prover optimizes over every unitary $V_{S^{\prime}E\rightarrow\hat{S}E^{\prime}}$ and the acceptance probability of Algorithm~\ref{alg:g-sym-test}\ is given by
\begin{equation}
\max_{V_{S^{\prime}E\rightarrow\hat{S}E^{\prime}}}\left\Vert \Pi_{S\hat{S} }^{G}V_{S^{\prime}E\rightarrow\hat{S}E^{\prime}}|\psi\rangle_{S^{\prime} S}|0\rangle_{E}\right\Vert _{2}^{2}.
\end{equation}

\begin{figure}[ptb]
\begin{center}
\includegraphics[
width=3.4in
]{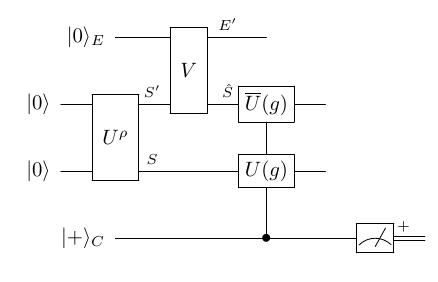}
\end{center}
\caption{Quantum circuit to implement Algorithm~\ref{alg:g-sym-test}. The
unitary $U^{\rho}$ prepares a purification $\psi_{S^{\prime}S}$ of the state
$\rho_{S}$. Algorithm~\ref{alg:g-sym-test} tests whether the state $\rho_{S}$
is $G$-symmetric, as defined in Example~\ref{ex:usual-symmetry}. Its
acceptance probability is equal to the maximum $G$-symmetric fidelity, as
defined in \eqref{eq:fid-of-asym}.}
\label{fig:case-2}
\end{figure}

The main idea behind Algorithm~\ref{alg:g-sym-test} is that if the state~$\rho_{S}$ possesses the symmetry in \eqref{eq:rho-symmetric-single-sys}, then
Theorem~\ref{thm:Bose-sym-purify} (with trivial reference system $R$)
guarantees the existence of a purification $\phi_{S\hat{S}}$ of $\rho_{S}$
such that
\begin{equation}
|\phi\rangle_{S\hat{S}}=\Pi_{S\hat{S}}^{G}|\phi\rangle_{S\hat{S}}.
\label{eq:proj-cond-purification}
\end{equation}
Since all purifications of a quantum state are related by a unitary acting on
the purifying system (see, e.g., \cite{Wbook17}), the prover is able to
apply a unitary taking the purification $|\psi\rangle_{S^{\prime}S}$ to the
purification $|\phi\rangle_{S\hat{S}}$. After the prover sends back the system
$\hat{S}$, the verifier then performs a quantum-computational test to determine if the condition in \eqref{eq:proj-cond-purification} holds. A discussion on how to choose the size of register $E$ can be found in Section~\ref{sec:var-algs}.
%added E system discussion here?
%A consequence of this fact is that although in principle the prover can maximize their action over both the $S^{\prime}$ and $E$ subsystems, it suffices to optimize over unitaries on $S^{\prime}$ and efffectively take $E$ to be a one-dimensional trivial system. This is illustrated in the proof in Appendix~\ref{app:max-acc-prob-g-sym}. {\color{red} NOT SURE ABOUT THIS PART}.

We now formally state the claim made just after \eqref{eq:rho-symmetric-single-sys}. See Appendix~\ref{app:max-acc-prob-g-sym} for a proof of Theorem~\ref{thm:max-acc-prob-g-sym}.

\begin{theorem}
\label{thm:max-acc-prob-g-sym}The acceptance probability of
Algorithm~\ref{alg:g-sym-test}\ is equal to the maximum $G$-symmetric fidelity
in \eqref{eq:fid-of-asym}, i.e.,
\begin{multline}
\max_{V_{S^{\prime}E\rightarrow\hat{S}E^{\prime}}}\left\Vert \Pi_{S\hat{S}
}^{G}V_{S^{\prime}E\rightarrow\hat{S}E^{\prime}}|\psi\rangle_{S^{\prime}
S}|0\rangle_{E}\right\Vert _{2}^{2}\\
=\max_{\sigma_{S}\in\operatorname{Sym}_{G}}F(\rho_{S},\sigma_{S}).
\end{multline}

\end{theorem}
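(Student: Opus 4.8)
The plan is to establish the two inequalities separately, using Uhlmann's theorem for the fidelity \cite{U76} together with Theorem~\ref{thm:Bose-sym-purify} (specialized to trivial $R$) as the two main tools. Throughout, I fix the prover's action: for a unitary $V_{S^{\prime}E\rightarrow\hat{S}E^{\prime}}$, write $|\theta\rangle_{S\hat{S}E^{\prime}}\coloneqq V_{S^{\prime}E\rightarrow\hat{S}E^{\prime}}|\psi\rangle_{S^{\prime}S}|0\rangle_{E}$. Since $V$ acts only on the purifying systems of $|\psi\rangle_{S^{\prime}S}$, the vector $|\theta\rangle$ is again a purification of $\rho_{S}$, and because $\Pi_{S\hat{S}}^{G}$ is a projection the acceptance probability is $p(V)=\langle\theta|(\Pi_{S\hat{S}}^{G}\otimes I_{E^{\prime}})|\theta\rangle=\Vert(\Pi_{S\hat{S}}^{G}\otimes I_{E^{\prime}})|\theta\rangle\Vert_{2}^{2}$.

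For the bound $\leq$, the key structural observation is that projecting onto the $G$-invariant subspace forces a symmetric reduced state. Setting $|\xi\rangle\coloneqq(\Pi_{S\hat{S}}^{G}\otimes I_{E^{\prime}})|\theta\rangle$ with $\Vert|\xi\rangle\Vert_{2}^{2}=p(V)$, invariance gives $(U_{S}(g)\otimes I)|\xi\rangle=(I\otimes\overline{U}_{\hat{S}}(g)^{\dag}\otimes I)|\xi\rangle$ for all $g$; tracing out $\hat{S}E^{\prime}$ and using that the partial trace is unchanged by a unitary on the traced-out systems then shows that $\sigma_{S}\coloneqq\operatorname{Tr}_{\hat{S}E^{\prime}}[|\xi\rangle\!\langle\xi|]/p(V)$ satisfies $U_{S}(g)\sigma_{S}U_{S}(g)^{\dag}=\sigma_{S}$, i.e.\ $\sigma_{S}\in\operatorname{Sym}_{G}$. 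The normalized vector $|\xi\rangle/\sqrt{p(V)}$ is a purification of $\sigma_{S}$ on $S\hat{S}E^{\prime}$, and its overlap with $|\theta\rangle$ equals $\langle\theta|\xi\rangle/\sqrt{p(V)}=\sqrt{p(V)}$. Uhlmann's theorem applied to the purifications $|\theta\rangle$ and $|\xi\rangle/\sqrt{p(V)}$ of $\rho_{S}$ and $\sigma_{S}$ then yields $p(V)\leq F(\rho_{S},\sigma_{S})\leq\max_{\sigma\in\operatorname{Sym}_{G}}F(\rho_{S},\sigma_{S})$. Since this holds for every $V$, the same bound holds after maximizing over $V$.

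For the reverse bound $\geq$, I start from an arbitrary $\sigma_{S}\in\operatorname{Sym}_{G}$. By Theorem~\ref{thm:Bose-sym-purify} (trivial $R$) there is a purification $|\phi\rangle_{S\hat{S}}$ of $\sigma_{S}$ with $\Pi_{S\hat{S}}^{G}|\phi\rangle_{S\hat{S}}=|\phi\rangle_{S\hat{S}}$, so that $|\phi\rangle_{S\hat{S}}|0\rangle_{E^{\prime}}$ is a unit vector in the image of $\Pi_{S\hat{S}}^{G}\otimes I_{E^{\prime}}$ and is itself a purification of $\sigma_{S}$. For any purification $|\theta\rangle$ of $\rho_{S}$, projecting onto a subspace containing $|\phi\rangle|0\rangle_{E^{\prime}}$ gives $p(V)=\Vert(\Pi_{S\hat{S}}^{G}\otimes I_{E^{\prime}})|\theta\rangle\Vert_{2}^{2}\geq|\langle\phi|_{S\hat{S}}\langle0|_{E^{\prime}}|\theta\rangle|^{2}$ (Cauchy--Schwarz, using $P|\phi\rangle|0\rangle=|\phi\rangle|0\rangle$). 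Because the prover can realize any purification of $\rho_{S}$ on $S\hat{S}E^{\prime}$ when $E$ and $E^{\prime}$ are taken large enough, I maximize the right-hand side over $|\theta\rangle$; Uhlmann's theorem identifies this maximum with $F(\rho_{S},\sigma_{S})$. Hence $\max_{V}p(V)\geq F(\rho_{S},\sigma_{S})$ for every symmetric $\sigma_{S}$, and optimizing over $\sigma_{S}$ completes the argument.

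I expect the main obstacle to be the structural claim in the $\leq$ direction---namely that the reduced state of the projected vector is genuinely $G$-symmetric---since it relies on the precise interplay between the invariance identity carrying the complex conjugate $\overline{U}_{\hat{S}}(g)$ and the cyclicity of the partial trace. The remaining delicate point is dimensional bookkeeping: one must verify that the ancillas $E$ and $E^{\prime}$ are large enough for the prover's isometry to realize the Uhlmann-optimal purification, so that the inequality in the $\geq$ direction is actually saturated. The two appeals to Uhlmann's theorem (as an upper bound for generic purifications and as an attained maximum for optimized ones) and the invocation of Theorem~\ref{thm:Bose-sym-purify} are otherwise routine once these two points are handled.
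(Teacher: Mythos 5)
Your proof is correct and follows essentially the same route as the paper's: both directions rest on Uhlmann's theorem, on Theorem~\ref{thm:Bose-sym-purify} with trivial $R$ for the $\geq$ bound, and on the identical computation (invariance of $\Pi_{S\hat{S}}^{G}$ under $U_{S}(g)\otimes\overline{U}_{\hat{S}}(g)$ together with cyclicity of the partial trace) showing that the reduced state of the projected vector is $G$-symmetric. The only cosmetic difference is that the paper routes the argument through the intermediate quantity $\max_{\sigma_{S\hat{S}^{\prime}}}F(\rho_{S},\operatorname{Tr}_{\hat{S}}[\Pi_{S\hat{S}}^{G}\sigma_{S\hat{S}^{\prime}}\Pi_{S\hat{S}}^{G}])$ before normalizing, whereas you normalize the projected vector first and compare purifications directly; the two points you flag as delicate are handled in the paper exactly as you anticipate.
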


\begin{example}
For the triangular dihedral group example (see Example~\ref{ex:dih-gr-d3}), we use the same unitary $U_d$ as in \eqref{eq:Dihedral_Superposition} to prepare the superposition $\ket{+}_C$ and the same mapping of control states to group elements. The circuit to test for $G$-symmetry is shown in Figure~\ref{fig:Dihedral_GS_Circuit}.
\end{example}

\begin{figure}
\begin{center}
\includegraphics[width=\linewidth]{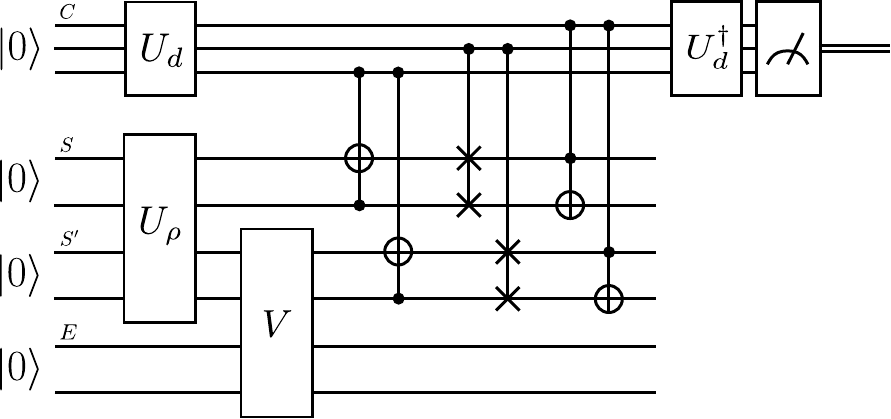}
\end{center}
\caption{Quantum circuit implementing Algorithm~\ref{alg:g-sym-test} to test $G$-symmetry in the case that the group $G$ is the triangular dihedral group. Compared to Figure~\ref{fig:case-2}, the systems $S$ and $S'$ are two qubits each, $C$ consists of three qubits, and $\ket{+}_C$ is defined as $U_d\ket{000}$. Both the SWAP and CNOT gates have no imaginary entries, and thus they are equal to their own complex conjugates.}
\label{fig:Dihedral_GS_Circuit}
\end{figure}

\begin{remark}[Testing incoherence]
\label{rem:incoherence}

We note here that testing the incoherence of a quantum state, in the sense of \cite{BCP14,SAP17}, is a special case of testing $G$-symmetry. To see this, we can pick $G$ to be the cyclic group over $d$ elements with unitary representation $\{Z(z)\}_z$, where $Z(z)$ is the generalized Pauli phase-shift unitary, defined as
\begin{equation}
    Z(z) \coloneqq \sum_{j=0}^{d-1} e^{2 \pi i j z /d} |j\rangle\!\langle j|.
\end{equation}
A state is symmetric with respect to this group if the condition in \eqref{eq:rho-symmetric-single-sys} holds. This condition is equivalent to the following one:
\begin{equation}
    \rho_S = \frac{1}{|G|} \sum_{g \in G} U_S(g) \rho_S U_S(g)^\dag.
    \label{eq:coh-sym-cond}
\end{equation}
For the choice mentioned above, the condition in \eqref{eq:coh-sym-cond} holds if and only if the state $\rho_S$ is diagonal in the incoherent basis, i.e., if it can be written as $\rho_S = \sum_j p(j) |j\rangle\!\langle j|$, where $p(j)$ is a probability distribution. Thus, Algorithm~\ref{alg:g-sym-test} can be used to test the incoherence of quantum states.
\end{remark}

\subsection{Testing \texorpdfstring{$G$}{G}-Bose symmetric extendibility}

\label{sec:G-Bose-sym-ext-test}We now describe an algorithm for testing
$G$-Bose symmetric extendibility of a quantum state $\rho_{S}$, as defined in
Definition~\ref{def:g-bose-sym-ext}. The algorithm bears some similarities
with Algorithms~\ref{alg:simple} and \ref{alg:g-sym-test}. Like
Algorithm~\ref{alg:g-sym-test}, it involves an interaction between a verifier
and a prover. We prove that its acceptance probability is equal to the maximum
$G$-BSE\ fidelity:
\begin{equation}
\max_{\sigma_{S}\in\operatorname*{BSE}_{G}}F(\rho_{S},\sigma_{S}),
\label{eq:fid-bose-asym-3}
\end{equation}
where BSE$_{G}$ is the set of $G$-Bose symmetric extendible states:
\begin{multline}
\text{BSE}_{G}\coloneqq\\
\left\{\begin{array}[c]{c}
\sigma_{S}:\exists\ \omega_{RS}\in\mathcal{D}(\mathcal{H}_{RS}
), \operatorname{Tr}_R[\omega_{RS}]=\sigma_S, \\ 
\omega_{RS}=U_{RS}(g)\omega_{RS},\ \forall g\in G
\end{array}
\right\}  .
\label{eq:G-BSE-states-set}
\end{multline}
Thus, the algorithm endows the maximum $G$-BSE fidelity with an operational
meaning. Note that the condition $\omega_{RS}=U_{RS}(g)\omega_{RS}$ for all $
g\in G$ is equivalent to
\begin{equation}
\omega_{RS}=\Pi_{RS}^{G}\omega_{RS}\Pi_{RS}^{G},
\end{equation}
where
\begin{equation}
\Pi_{RS}^{G}\coloneqq\frac{1}{\left\vert G\right\vert }\sum_{g\in G}U_{RS}(g).
\label{eq:Pi_RS-proj-again}
\end{equation}

The algorithm is similar to Algorithm~\ref{alg:g-sym-test}, but we
list it here for completeness. Let $|\psi\rangle_{S^{\prime}S}$ be a
purification of the state $\rho_{S}$, and suppose that the circuit $U^{\rho}$
prepares this purification of $\rho_{S}$.

\begin{Algorithm}
[$G$-BSE test]\label{alg:G-BSE-test}The algorithm proceeds as follows:

\begin{enumerate}
\item The verifier uses the circuit provided to prepare the state
$|\psi\rangle_{S^{\prime}S}$.

\item The verifier transmits the purifying system $S^{\prime}$ to the prover.

\item The prover appends an ancillary register $E$ in the state $|0\rangle
_{E}$ and performs a unitary $V_{S^{\prime}E\rightarrow RE^{\prime}}$.

\item The prover sends the system $R$ back to the verifier.

\item The verifier prepares a register $C$ in the state $|0\rangle_{C}$.

\item The verifier acts on register $C$ with a quantum Fourier transform.

\item The verifier performs the following controlled unitary:
\begin{equation}
\sum_{g\in G}|g\rangle\!\langle g|_{C}\otimes U_{RS}(g),
\end{equation}

\item The verifier performs an inverse quantum Fourier transform on
register~$C$, measures in the basis $\{|g\rangle\!\langle g|_{C}\}_{g\in G}$,
and accepts if and only if the zero outcome $|0\rangle\!\langle0|_{C}$ occurs.
\end{enumerate}
\end{Algorithm}

\begin{figure}[ptb]
\begin{center}
\includegraphics[
width=\linewidth
]{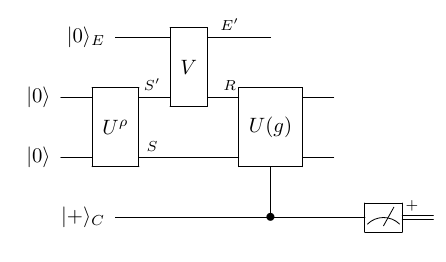}
\end{center}
\caption{Quantum circuit to implement Algorithm~\ref{alg:G-BSE-test}. The
unitary $U^{\rho}$ prepares a purification $\psi_{S^{\prime}S}$ of the state
$\rho_{S}$. Algorithm~\ref{alg:G-BSE-test} tests whether the state $\rho_{S}$
is $G$-Bose symmetric extendible, as defined in
Definition~\ref{def:g-bose-sym-ext}. Its acceptance probability is equal to
the maximum $G$-BSE fidelity, as defined in~\eqref{eq:fid-bose-asym-3}.}
\label{fig:case-3}

\end{figure}

Figure~\ref{fig:case-3} depicts this quantum algorithm. The overall state
after Step~3 is
\begin{equation}
V_{S^{\prime}E\rightarrow RE^{\prime}}|\psi\rangle_{S^{\prime}S}|0\rangle_{E}.
\end{equation}
Step~6 prepares the uniform superposition state $|+\rangle_{C}$, which is
defined in \eqref{eq:plus-over-group}. After Step~7, the overall state is
\begin{equation}
\frac{1}{\sqrt{\left\vert G\right\vert }}\sum_{g\in G}|g\rangle_{C}
U_{RS}(g)  V_{S^{\prime}E\rightarrow RE^{\prime}}|\psi\rangle
_{S^{\prime}S}|0\rangle_{E}.
\end{equation}
The last step can be understood as the verifier projecting the register $C$
onto the state $|+\rangle_{C}$.

The probability of accepting, following the
same reasoning as before,
is equal to
\begin{equation}
\left\Vert \Pi_{RS}^{G}V_{S^{\prime}E\rightarrow RE^{\prime}}|\psi
\rangle_{S^{\prime}S}|0\rangle_{E}\right\Vert _{2}^{2},
\end{equation}
where $\Pi_{RS}^{G}$ is defined in \eqref{eq:Pi_RS-proj-again}. As before, the
goal of the prover in a quantum interactive proof is to convince the verifier
to accept \cite{W09,VW15}, and so the prover optimizes over every unitary
$V_{S^{\prime}E\rightarrow\hat{S}E^{\prime}}$. The acceptance probability of
Algorithm~\ref{alg:G-BSE-test} is then given by
\begin{equation}
\max_{V_{S^{\prime}E\rightarrow RE^{\prime}}}\left\Vert \Pi_{RS}
^{G}V_{S^{\prime}E\rightarrow RE^{\prime}}|\psi\rangle_{S^{\prime}S}
|0\rangle_{E}\right\Vert _{2}^{2}.
\end{equation}

Our proof of the following theorem is similar to the proof given for
Theorem~\ref{thm:max-acc-prob-g-sym}; for completeness, we provide a proof
in Appendix~\ref{app:proof-thm-g-bse}.

\begin{theorem}
\label{thm:G-BSE-acc-prob}The maximum acceptance probability of
Algorithm~\ref{alg:G-BSE-test} is equal to the maximum $G$-BSE\ fidelity in
\eqref{eq:fid-bose-asym-3}, i.e.,
\begin{multline}
\max_{V_{S^{\prime}E\rightarrow RE^{\prime}}}\left\Vert \Pi_{RS}
^{G}V_{S^{\prime}E\rightarrow RE^{\prime}}|\psi\rangle_{S^{\prime}S}
|0\rangle_{E}\right\Vert _{2}^{2}\\
=\max_{\sigma_{S}\in\operatorname*{BSE}_{G}}F(\rho_{S},\sigma_{S}),
\end{multline}
where the set $\operatorname*{BSE}_{G}$ is defined in \eqref{eq:G-BSE-states-set}.
\end{theorem}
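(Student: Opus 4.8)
The plan is to prove the two inequalities separately, exploiting the fact that $\Pi_{RS}^{G}$ is an orthogonal projection and then invoking Uhlmann's theorem to pass between overlaps of purifications and fidelities; the argument parallels that of Theorem~\ref{thm:max-acc-prob-g-sym}. Note first that $\Pi_{RS}^{G}$ is Hermitian (reindexing $\sum_{g}U_{RS}(g)^{\dag}=\sum_{g}U_{RS}(g^{-1})$ recovers the same sum) and idempotent by \eqref{eq:unitaries-and-projs}, so it is an orthogonal projection.

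The first step is to rewrite the left-hand side as a maximization over extensions of $\rho_{S}$. Setting $|\phi\rangle_{RE^{\prime}S}=V_{S^{\prime}E\rightarrow RE^{\prime}}|\psi\rangle_{S^{\prime}S}|0\rangle_{E}$, the vector $|\phi\rangle$ is always a purification of $\rho_{S}$ since $V$ does not touch $S$; moreover, as $V$ ranges over all unitaries (with $E$ taken large enough), $|\phi\rangle$ ranges over all purifications of $\rho_{S}$ on $RE^{\prime}S$, because purifications on fixed systems are related by unitaries on the purifying register. Because $\Pi_{RS}^{G}$ acts only on $RS$,
\begin{equation}
\Vert \Pi_{RS}^{G}|\phi\rangle\Vert_{2}^{2}=\langle\phi|(\Pi_{RS}^{G}\otimes I_{E^{\prime}})|\phi\rangle=\operatorname{Tr}[\Pi_{RS}^{G}\,\omega_{RS}],
\end{equation}
where $\omega_{RS}=\operatorname{Tr}_{E^{\prime}}[|\phi\rangle\!\langle\phi|]$ is an extension of $\rho_{S}$. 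Hence the acceptance probability equals $\max_{\omega_{RS}}\operatorname{Tr}[\Pi_{RS}^{G}\omega_{RS}]$ over all extensions $\omega_{RS}$ of $\rho_{S}$.

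For the inequality ``$\leq$,'' I would take an optimal extension $\omega_{RS}$ with $p:=\operatorname{Tr}[\Pi_{RS}^{G}\omega_{RS}]$ and a purification $|\phi\rangle_{RSE^{\prime}}$ of it, then set $|\xi\rangle:=(\Pi_{RS}^{G}\otimes I_{E^{\prime}})|\phi\rangle$. Since $\Pi_{RS}^{G}$ is a projection, $\Vert|\xi\rangle\Vert_{2}^{2}=p$, and the reduced state of $|\xi\rangle/\sqrt{p}$ on $RS$ is $\Pi_{RS}^{G}\omega_{RS}\Pi_{RS}^{G}/p$, which is manifestly $G$-Bose symmetric, so its marginal $\sigma_{S}$ lies in $\operatorname{BSE}_{G}$ as defined in \eqref{eq:G-BSE-states-set}. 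Using $\langle\phi|\xi\rangle=\operatorname{Tr}[\Pi_{RS}^{G}\omega_{RS}]=p$, Uhlmann's theorem (both $|\phi\rangle$ and $|\xi\rangle/\sqrt{p}$ purify $\rho_{S}$ and $\sigma_{S}$ respectively on the same systems) gives $F(\rho_{S},\sigma_{S})\geq|\langle\phi|\xi\rangle/\sqrt{p}|^{2}=p$, so the right-hand side dominates. For the reverse inequality ``$\geq$,'' I would take an optimal $\sigma_{S}\in\operatorname{BSE}_{G}$ with $G$-Bose-symmetric extension $\omega_{RS}=\Pi_{RS}^{G}\omega_{RS}\Pi_{RS}^{G}$; any purification $|\phi^{\omega}\rangle_{RSE^{\prime}}$ of it lies in the range of $P:=\Pi_{RS}^{G}\otimes I_{E^{\prime}}$ and purifies $\sigma_{S}$. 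By Uhlmann's theorem there is a purification $|\phi\rangle$ of $\rho_{S}$ on the same systems (hence realizable by the prover) with $|\langle\phi|\phi^{\omega}\rangle|^{2}=F(\rho_{S},\sigma_{S})$, and since $P|\phi^{\omega}\rangle=|\phi^{\omega}\rangle$, Cauchy--Schwarz yields $|\langle\phi|\phi^{\omega}\rangle|=|\langle\phi|P|\phi^{\omega}\rangle|\leq\Vert P|\phi\rangle\Vert_{2}$, so $\langle\phi|P|\phi\rangle\geq F(\rho_{S},\sigma_{S})$.

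The step I expect to require the most care is the dimension bookkeeping in the first reduction: one must verify that the prover's ancilla $E$ (and thus $E^{\prime}$) and the reference $R$ are chosen large enough that every extension of $\rho_{S}$ on $RS$ is realized by some purification $|\phi\rangle_{RSE^{\prime}}$ attainable via a unitary $V_{S^{\prime}E\rightarrow RE^{\prime}}$, and conversely. Once the maximization over $V$ is faithfully converted into a maximization over extensions with the purifying systems correctly matched, the two Uhlmann/Cauchy--Schwarz arguments are routine and close the equality.
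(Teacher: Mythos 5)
Your proof is correct, but it takes a route that differs from the paper's in a meaningful way. The paper first applies the variational formula for the Euclidean norm together with Uhlmann's theorem to convert the acceptance probability into $\max_{\sigma_{RS}}F(\rho_{S},\operatorname{Tr}_{R}[\Pi_{RS}^{G}\sigma_{RS}\Pi_{RS}^{G}])$, where $\sigma_{RS}$ is an \emph{arbitrary} state arising from the Uhlmann test vector, and then closes the gap to $\max_{\sigma_{S}\in\operatorname{BSE}_{G}}F(\rho_{S},\sigma_{S})$ by the renormalization $\sigma_{RS}'=\Pi_{RS}^{G}\sigma_{RS}\Pi_{RS}^{G}/p$ and the scaling property $F(\rho,p\tau)=pF(\rho,\tau)\leq F(\rho,\tau)$. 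You instead exploit the fact that $V$ does not act on $S$ to identify the acceptance probability directly with $\max_{\omega_{RS}}\operatorname{Tr}[\Pi_{RS}^{G}\omega_{RS}]$ over extensions $\omega_{RS}$ of $\rho_{S}$ --- a cleaner and more operational intermediate quantity, and a natural generalization of the prover-free identity $\operatorname{Tr}[\Pi_{S}^{G}\rho_{S}]$ from Theorem~\ref{thm:acc-prob-g-Bose-sym} --- and then establish the two inequalities separately, using Uhlmann's theorem as a lower bound in one direction and the combination of Uhlmann with Cauchy--Schwarz (via $P|\phi^{\omega}\rangle=|\phi^{\omega}\rangle$ for purifications supported in the range of the projector) in the other. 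Both arguments rely on the same two ingredients (Uhlmann's theorem and idempotence of $\Pi_{RS}^{G}$), but yours makes the role of the prover's extension explicit and yields the trace-overlap characterization as a byproduct, while the paper's is slightly shorter because a single application of Uhlmann handles both directions at once. The dimension bookkeeping you flag (ensuring every extension of $\rho_{S}$ on $RS$ is reachable via some $V$ with sufficiently large ancillae) is handled equally implicitly in the paper, so no gap arises there.
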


\begin{example}
For the triangular dihedral group example (see Example~\ref{ex:dih-gr-d3}), we use the same unitary $U_d$ to prepare the superposition $\ket{+}_C$ and the same mapping of control states to group elements. The circuit to test for $G$-Bose symmetric extendibility is shown in Figure~\ref{fig:Dihedral_GBSE_Circuit}.
\end{example}

\begin{figure}
\begin{center}
\includegraphics[width=\linewidth]{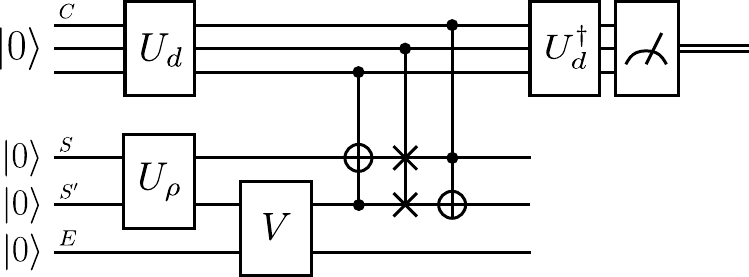}
\end{center}
\caption{Quantum circuit implementing Algorithm~\ref{alg:G-BSE-test} to test $G$-Bose symmetric extendibility for the triangular dihedral group. Compared to Figure~\ref{fig:case-3}, the systems $S$ and $S'$ are one qubit each, $C$ consists of three qubits, and $\ket{+}_C$ is defined as $U_d\ket{000}$.}
\label{fig:Dihedral_GBSE_Circuit}
\end{figure}

\subsection{Testing \texorpdfstring{$G$}{G}-symmetric extendibility}

\label{sec:test-g-sym-ext}The final algorithm that we introduce tests whether a
state $\rho_{S}$ is $G$-symmetric extendible (recall
Definition~\ref{def:g-sym-ext}). Similar to the algorithms in the previous
sections, not only does it decide whether $\rho_{S}$ is $G$-symmetric
extendible, but it also quantifies how similar it is to a state in the set of $G$-symmetric extendible states.  
The acceptance probability is equal
to the \textit{maximum }$G$\textit{-symmetric extendible fidelity}:
\begin{equation}
\max_{\sigma_{S}\in\operatorname{SymExt}_{G}}F(\rho_{S},\sigma_{S}
),\label{eq:max-g-sym-ext-fid}
\end{equation}
where
\begin{multline}
\operatorname{SymExt}_{G}\coloneqq
\\
\left\{
\begin{array}
[c]{c}
\sigma_{S}:\exists\ \omega_{RS}\in\mathcal{D}(\mathcal{H}_{RS}),\operatorname{Tr}_R[\omega_{RS}]=\sigma_S, \\
\omega_{RS}=U_{RS}(g)\omega_{RS}U_{RS}(g)^{\dag}\ \forall g\in G
\end{array}
\right\}  .\label{eq:g-sym-ext-set}
\end{multline}
We again operate in the model of quantum interactive proofs, in which a
verifier interacts with a prover.

We list the algorithm below for completeness, noting its similarity to the previous algorithms. Let $|\psi\rangle_{S^{\prime}S}$ be a
purification of the state $\rho_{S}$, and suppose that the circuit $U^{\rho}$
prepares this purification of $\rho_{S}$.

\begin{Algorithm}
[$G$-SE test]\label{alg:sym-ext}The algorithm proceeds as follows:

\begin{enumerate}
\item The verifier uses the circuit $U^{\rho}$ to prepare the state
$|\psi\rangle_{S^{\prime}S}$, which is a purification of the state $\rho_{S}$.

\item The verifier transmits the purifying system $S^{\prime}$ to the prover.

\item The prover appends an ancillary register $E$ in the state $|0\rangle
_{E}$ and performs a unitary $V_{S^{\prime}E\rightarrow R\hat{R}\hat
{S}E^{\prime}}$.

\item The prover sends the systems $R\hat{R}\hat{S}$ back to the verifier.

\item The verifier prepares a register $C$ in the state $|0\rangle_{C}$.

\item The verifier acts on register $C$ with a quantum Fourier transform.

\item The verifier performs the following controlled unitary:
\begin{equation}
\sum_{g\in G}|g\rangle\!\langle g|_{C}\otimes U_{RS}(g)\otimes\overline
{U}_{\hat{R}\hat{S}}(g),
\end{equation}

\item The verifier performs an inverse quantum Fourier transform on
register~$C$, measures in the basis $\{|g\rangle\!\langle g|_{C}\}_{g\in G}$,
and accepts if and only if the zero outcome $|0\rangle\!\langle0|_{C}$ occurs.
\end{enumerate}
\end{Algorithm}

Figure~\ref{fig:case-4} depicts this quantum algorithm. After Step~3, the
overall state is
\begin{equation}
V_{S^{\prime}E\rightarrow R\hat{R}\hat{S}E^{\prime}}|\psi\rangle_{S^{\prime}
S}|0\rangle_{E}.
\end{equation}
Step~5 prepares the uniform superposition state $|+\rangle_{C}$, which is
defined in \eqref{eq:plus-over-group}. After Step~7, the overall state is
\begin{equation}
\frac{1}{\sqrt{\left\vert G\right\vert }}\sum_{g\in G}|g\rangle_{C}\left(
U_{RS}(g)\otimes\overline{U}_{\hat{R}\hat{S}}(g)\right)  V|\psi\rangle
_{S^{\prime}S}|0\rangle_{E},
\end{equation}
where $V\equiv V_{S^{\prime}E\rightarrow R\hat{R}\hat{S}E^{\prime}}$. The last
step can be understood as the verifier projecting the register $C$ onto the
state $|+\rangle_{C}$.

The probability of accepting is equal to
\begin{equation}
\left\Vert \Pi_{RS\hat{R}\hat{S}}^{G} V_{S^{\prime}E\rightarrow R\hat{R}\hat
{S}E^{\prime}}|\psi\rangle_{S^{\prime}S}|0\rangle_{E}\right\Vert _{2}^{2},
\end{equation}
where $\Pi_{RS\hat{R}\hat{S}}^{G}$ is defined in
\eqref{eq:projector-ref-unitaries}. As before, the prover optimizes over every unitary $V_{S^{\prime}E\rightarrow R\hat{R}\hat
{S}E^{\prime}}$. The acceptance probability of
Algorithm~\ref{alg:sym-ext} is then given by
\begin{equation}
\left\Vert \Pi_{RS\hat{R}\hat{S}}^{G}V_{S^{\prime}E\rightarrow R\hat{R}\hat
{S}E^{\prime}}|\psi\rangle_{S^{\prime}S}|0\rangle_{E}\right\Vert _{2}^{2}.
\end{equation}

\begin{figure}[ptb]
\begin{center}
\includegraphics[
width=\linewidth
]{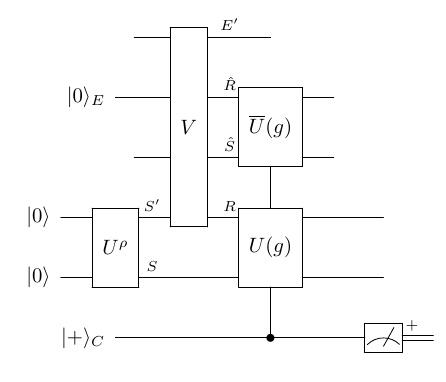}
\end{center}
\caption{Quantum circuit to implement Algorithm~\ref{alg:sym-ext}. The unitary $U^{\rho}$ prepares a purification $\psi_{S^{\prime}S}$ of the state $\rho_{S}$. Algorithm~\ref{alg:sym-ext} tests whether the state $\rho_{S}$ is
$G$-symmetric extendible, as defined in Definition~\ref{def:g-sym-ext}. Its
acceptance probability is equal to the maximum $G$-symmetric extendible
fidelity, as defined in~\eqref{eq:max-g-sym-ext-fid}.}
\label{fig:case-4}
\end{figure}

Our proof of the following theorem is similar to the proof given for
Theorem~\ref{thm:max-acc-prob-g-sym}. For completeness, we provide our proof in Appendix~\ref{app:proof-thm-g-se}.

\begin{theorem}
\label{thm:G-SE-acc-prob}The maximum acceptance probability of
Algorithm~\ref{alg:sym-ext} is equal to the maximum $G$-symmetric extendible
fidelity in \eqref{eq:max-g-sym-ext-fid}, i.e.,
\begin{multline}
\max_{V_{S^{\prime}E\rightarrow R\hat{R}\hat{S}E^{\prime}}}\left\Vert
\Pi_{RS\hat{R}\hat{S}}^{G}V_{S^{\prime}E\rightarrow R\hat{R}\hat{S}E^{\prime}
}|\psi\rangle_{S^{\prime}S}|0\rangle_{E}\right\Vert _{2}^{2}\\
=\max_{\sigma_{S}\in\operatorname*{SymExt}_{G}}F(\rho_{S},\sigma_{S}),
\end{multline}
where the set $\operatorname*{SymExt}_{G}$ is defined in \eqref{eq:g-sym-ext-set}.
\end{theorem}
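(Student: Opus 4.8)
The plan is to prove the two inequalities separately, following the template of Theorem~\ref{thm:max-acc-prob-g-sym} but now carrying the full reference systems $R$ and $\hat{R}$ through the argument. The two main ingredients are Uhlmann's theorem (fidelity as the maximal overlap of purifications) and Theorem~\ref{thm:Bose-sym-purify}, which converts membership in $\operatorname{SymExt}_G$ into the existence of a purification lying in the range of $\Pi_{RS\hat{R}\hat{S}}^{G}$. Throughout I would use two properties of $\Pi_{RS\hat{R}\hat{S}}^{G}$: that the map $g\mapsto U_{RS}(g)\otimes\overline{U}_{\hat{R}\hat{S}}(g)$ is a unitary representation, so that $\Pi_{RS\hat{R}\hat{S}}^{G}$ is an orthogonal projection onto the invariant subspace, and that it absorbs the symmetry unitaries, i.e.\ $\Pi_{RS\hat{R}\hat{S}}^{G}=(U_{RS}(g)\otimes\overline{U}_{\hat{R}\hat{S}}(g))\Pi_{RS\hat{R}\hat{S}}^{G}$ for every $g$, in analogy with \eqref{eq:unitaries-and-projs}.

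For the inequality ``$\geq$'', I would fix $\sigma_{S}\in\operatorname{SymExt}_{G}$ and invoke Theorem~\ref{thm:Bose-sym-purify} to obtain a purification $|\psi^{\sigma}\rangle_{RS\hat{R}\hat{S}}$ of $\sigma_{S}$ satisfying $\Pi_{RS\hat{R}\hat{S}}^{G}|\psi^{\sigma}\rangle=|\psi^{\sigma}\rangle$. Since $|\psi\rangle_{S^{\prime}S}$ and $|\psi^{\sigma}\rangle_{RS\hat{R}\hat{S}}$ purify $\rho_{S}$ and $\sigma_{S}$ on system $S$, Uhlmann's theorem provides an isometry $W_{S^{\prime}\rightarrow R\hat{R}\hat{S}}$, realizable as a prover unitary $V_{S^{\prime}E\rightarrow R\hat{R}\hat{S}E^{\prime}}$ acting on the ancilla-padded input, such that $|\phi\rangle=V|\psi\rangle_{S^{\prime}S}|0\rangle_{E}$ takes the form $|\psi^{\prime}\rangle_{RS\hat{R}\hat{S}}\otimes|0\rangle_{E^{\prime}}$ with $|\langle\psi^{\sigma}|\psi^{\prime}\rangle|^{2}=F(\rho_{S},\sigma_{S})$. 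Because $\Pi_{RS\hat{R}\hat{S}}^{G}$ acts trivially on $E^{\prime}$ and fixes $|\psi^{\sigma}\rangle$, a Cauchy--Schwarz estimate with respect to the positive-semidefinite form $\langle\cdot|\Pi_{RS\hat{R}\hat{S}}^{G}|\cdot\rangle$ gives $\|\Pi_{RS\hat{R}\hat{S}}^{G}|\phi\rangle\|_{2}^{2}=\|\Pi_{RS\hat{R}\hat{S}}^{G}|\psi^{\prime}\rangle\|_{2}^{2}\geq|\langle\psi^{\sigma}|\Pi_{RS\hat{R}\hat{S}}^{G}|\psi^{\prime}\rangle|^{2}=|\langle\psi^{\sigma}|\psi^{\prime}\rangle|^{2}=F(\rho_{S},\sigma_{S})$. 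Maximizing over $V$ and then over $\sigma_{S}$ yields the bound.

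For the inequality ``$\leq$'', I would take an arbitrary prover unitary $V$, set $|\phi\rangle=V|\psi\rangle_{S^{\prime}S}|0\rangle_{E}$ (a purification of $\rho_{S}$, since $V$ does not touch $S$), and let $p=\|\Pi_{RS\hat{R}\hat{S}}^{G}|\phi\rangle\|_{2}^{2}$ be the acceptance probability. Defining the normalized projected vector $|\xi\rangle=\Pi_{RS\hat{R}\hat{S}}^{G}|\phi\rangle/\sqrt{p}$, the absorption property forces $(U_{RS}(g)\otimes\overline{U}_{\hat{R}\hat{S}}(g))|\xi\rangle=|\xi\rangle$ for all $g$. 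I would then set $\omega_{RS}=\operatorname{Tr}_{\hat{R}\hat{S}E^{\prime}}[|\xi\rangle\!\langle\xi|]$ with marginal $\sigma_{S}=\operatorname{Tr}_{R}[\omega_{RS}]$, and verify $G$-invariance of $\omega_{RS}$ by rewriting the invariance relation as $(U_{RS}(g)\otimes I)|\xi\rangle=(I\otimes\overline{U}_{\hat{R}\hat{S}}(g)^{\dag})|\xi\rangle$, so that conjugating $\omega_{RS}$ by $U_{RS}(g)$ turns into conjugation by $\overline{U}_{\hat{R}\hat{S}}(g)^{\dag}$ on the traced-out systems, which is cycled away under the partial trace; this shows $\sigma_{S}\in\operatorname{SymExt}_{G}$. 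Finally, since $|\phi\rangle$ and $|\xi\rangle$ are purifications of $\rho_{S}$ and $\sigma_{S}$ on a common system, Uhlmann's theorem gives $p=|\langle\xi|\phi\rangle|^{2}\leq F(\rho_{S},\sigma_{S})\leq\max_{\sigma_{S}\in\operatorname{SymExt}_{G}}F(\rho_{S},\sigma_{S})$.

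I expect the main obstacle to be the ``$\leq$'' direction, specifically the verification that the marginal $\sigma_{S}$ extracted from the optimal prover strategy genuinely lies in $\operatorname{SymExt}_{G}$: one must exhibit an explicit $G$-invariant extension $\omega_{RS}$ and confirm $\omega_{RS}=U_{RS}(g)\omega_{RS}U_{RS}(g)^{\dag}$, which hinges on the conjugate-representation identity converting the action of $U_{RS}(g)$ on the kept systems into that of $\overline{U}_{\hat{R}\hat{S}}(g)^{\dag}$ on the discarded ones. A secondary technical point is the bookkeeping of Hilbert-space dimensions so that the Uhlmann isometry is realizable as a genuine prover unitary on $S^{\prime}E$; this is handled by padding the ancilla $E$ and the output register $E^{\prime}$ to sufficient dimension.
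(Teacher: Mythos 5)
Your proposal is correct and follows essentially the same route as the paper's proof in Appendix~\ref{app:proof-thm-g-se}: Uhlmann's theorem together with Theorem~\ref{thm:Bose-sym-purify} for the ``$\geq$'' direction, and, for the ``$\leq$'' direction, normalizing the projected vector, using the absorption identity $\Pi_{RS\hat{R}\hat{S}}^{G}=(U_{RS}(g)\otimes\overline{U}_{\hat{R}\hat{S}}(g))\Pi_{RS\hat{R}\hat{S}}^{G}$ and cyclicity of the partial trace to exhibit a $G$-invariant extension $\omega_{RS}$. The only nitpick is that the Uhlmann unitary need not leave $E^{\prime}$ in a product state $|0\rangle_{E^{\prime}}$, but this is immaterial since the Cauchy--Schwarz step can be run directly against $|\psi^{\sigma}\rangle_{RS\hat{R}\hat{S}}\otimes|0\rangle_{E^{\prime}}$, which $\Pi_{RS\hat{R}\hat{S}}^{G}\otimes I_{E^{\prime}}$ fixes.
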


\begin{example}
For the triangular dihedral group example (see Example~\ref{ex:dih-gr-d3}), we use the same unitary $U_d$ to prepare the superposition $\ket{+}_C$ and the same mapping of control states to group elements. The circuit to test for $G$-symmetric extendibility is shown in Figure~\ref{fig:Dihedral_GSE_Circuit}.
\end{example}

\begin{figure}
\begin{center}
\includegraphics[width=\linewidth]{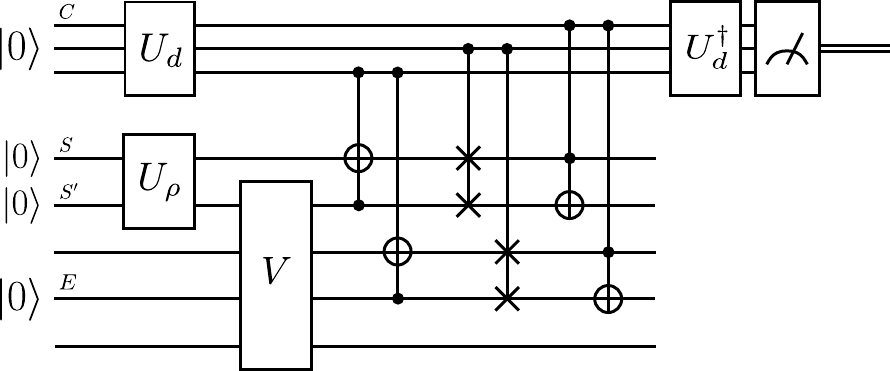}
\end{center}
\caption{Quantum circuit implementing Algorithm~\ref{alg:sym-ext} to test $G$-symmetric extendibility in the case that  the group $G$ is the triangular dihedral group. Compared to Figure~\ref{fig:case-4}, the systems $S$ and $S'$ are one qubit each, $C$ consists of three qubits, and $\ket{+}_C$ is defined as $U_d\ket{000}$. Both the SWAP and CNOT gates have no imaginary entries and thus are equal to their own complex conjugates.}
\label{fig:Dihedral_GSE_Circuit}
\end{figure}

\begin{remark}[Extensions to compact groups]
    Throughout our paper we have focused on discrete, finite groups; however, these notions of symmetry and the algorithms presented above in principle may be extended to continuous groups as well, permitting certain conditions hold. We leave a detailed investigation of this topic for future work and only discuss this extension briefly here. In particular, our algorithms can be generalized to any compact Lie group represented on a finite-dimensional quantum system. The primary limitation in cases of compact groups is realizing the following projection \cite{harrow2013church}
\begin{equation}
    \Pi^G \coloneqq  \int_{g \in G} d \mu(g) \ U(g)   \, ,
\end{equation}
where $U(g)$ is the unitary representation of $g$ and $\mu(g)$ is the Haar measure for the group. It follows from Caratheodory's theorem that there exists a probability mass function $\{p(g)\}_{g \in G'}$, where $G'$ is a finite set, such that the following equality holds:
\begin{equation}
    \Pi^{G} = \sum_{g \in G'} p(g) U(g).
\end{equation}
As such, since our algorithms ultimately realize this projection for the case in which $p(g)$ is uniform, they can be generalized in the following way. For concreteness, we consider the following generalization of Algorithm~\ref{alg:simple}, but we note that our other algorithms can be generalized similarly:
\begin{enumerate}
    \item Prepare an ancillary register $C$ in the state
    \begin{equation}
        |\varphi_p\rangle_C \coloneqq \sum_{g \in G'}\sqrt{p(g)} | g\rangle.
    \end{equation}
    \item Append the state $|\psi\rangle_S$ and perform the following controlled unitary:
    \begin{equation}
        \sum_{g \in G'} |g\rangle\!\langle g|_C \otimes U_S(g).
    \end{equation}
    \item Perform the measurement $\{|\varphi_p\rangle\!\langle \varphi_p |_C, \mathbb{I}_C - |\varphi_p\rangle\!\langle \varphi_p |_C\}$ on the register $C$, and accept if and only if the outcome $|\varphi_p\rangle\!\langle \varphi_p |_C$ occurs.
\end{enumerate}
Following similar calculations given in \eqref{eq:plus-over-group}--\eqref{eq:acc-prob-bose-test-pure-state-1}, we conclude that the acceptance probability of this algorithm is equal to $\operatorname{Tr}[\Pi^{G} |\psi\rangle\!\langle \psi|_S]$.

Although this abstract presentation of the generalized algorithm seems straightforward, there are some key questions to address before realizing it in practice. What is the probability mass function $\{p(g)\}_{g \in G'}$ that results from applying Caratheodory's theorem? This theorem only guarantees the existence of such a probability mass function, but it does not construct it. Once the probability mass function is known, is the state $|\varphi_p\rangle_C$ efficiently preparable? Addressing these two questions would lead to an efficient algorithm for estimating $\operatorname{Tr}[\Pi^{G} |\psi\rangle\!\langle \psi|_S]$.
%{\color{red}NEED TO REVISE} Thus, whenever there exists a finite group measure $\mu(g)$, the projector can be effectively realized by similar methods to those we have employed. Such measures exist for all finite groups and all compact Lie groups. In that case, finding an adequate measure is sufficient to implement the algorithm.
When the group representation permits a $t$-design \cite{roy2009unitary}, then it is straightforward to realize the algorithm, and we consider some examples in Sections~\ref{sec:collective-U} and \ref{sec:collective-z}. In general, addressing these questions may not be trivial; the topic of $t$-designs is addressed in a large body of work \cite{scott2008tomography,roy2009unitary,gross2007designs} beyond the scope considered here.
\end{remark}

\section{Tests of \texorpdfstring{$k$}{k}-extendibility of states and covariance symmetry of channels}

\label{sec:specialized-tests}

The theory developed in Section~\ref{sec:tests-o-sym}\ is rather general. In
the forthcoming subsections, we apply it to test for extendibility of bipartite and multipartite quantum states and to test for covariance symmetry of quantum channels and measurements. Later on in Section~\ref{sec:var-algs}, we consider many other example of groups and symmetry tests and simulate the performance of Algorithms~\ref{alg:simple}--\ref{alg:sym-ext}. 

\subsection{Separability test for pure bipartite states}

We illustrate the $G$-Bose symmetry test from Section~\ref{sec:simple-algorithm}\ on a case of interest:\ deciding whether a pure bipartite state is entangled. This problem is known to be BQP-complete \cite{GHMW15},\ and one can decide it by means of the SWAP test as considered in \cite{HM10}. The SWAP test as a quantum computational method of quantifying entanglement has been further studied in recent work \cite{FKS21,BGCC21}.

Let $\psi_{AB}$ be a pure bipartite state, and let $\psi_{AB}^{\otimes k}$
denote $k$ copies of it. Then we can consider the permutation unitaries
$W_{B_{1}\cdots B_{k}}(\pi)$ from Example~\ref{ex:k-ext}. This example is
a special case of $G$-Bose symmetry with the identifications
\begin{align}
S  &  \leftrightarrow A_{1}B_{1}\cdots A_{k}B_{k},\\
U_{S}(g)  &  \leftrightarrow \mathbb{I}_{A_{1}\cdots A_{k}}\otimes W_{B_{1}\cdots
B_{k}}(\pi).
\end{align}
The acceptance probability of Algorithm~\ref{alg:simple}\ is equal to
\begin{equation}
\operatorname{Tr}[\Pi_{B_{1}\cdots B_{k}}^{\text{Sym}}\rho_{B}^{\otimes k}],
\end{equation}
where the projection $\Pi_{B_{1}\cdots B_{k}}^{\text{Sym}}$ onto the symmetric
subspace is defined in \eqref{eq:sym-subspace-proj} and $\rho_{B}
\coloneqq \operatorname{Tr}_{A}[\psi_{AB}]$. We note that there is an efficient quantum algorithm to implement this test \cite[Section~4]{BBD+97}, which amounts to an instance of the abstract formulation in Algorithm~\ref{alg:simple}. For $k=2$, this reduces to the
well-known SWAP\ test with acceptance probability
\begin{equation}
p_{\text{acc}}^{(2)}\coloneqq \frac{1}{2}\left(  1+\operatorname{Tr}[\rho
_{B}^{2}]\right)  .
\end{equation}
For $k=3$, the acceptance probability is
\begin{equation}
p_{\text{acc}}^{(3)}\coloneqq \frac{1}{6}\left(  1+3\operatorname{Tr}[\rho
_{B}^{2}]+2\operatorname{Tr}[\rho_{B}^{3}]\right)  .
\end{equation}
For $k=4$, the acceptance probability is
\begin{multline}
p_{\text{acc}}^{(4)}\coloneqq
\frac{1}{24}\Big(  1+6\operatorname{Tr}[\rho_{B}^{2}]+3\left(
\operatorname{Tr}[\rho_{B}^{2}]\right)^{2} \\
+
8\operatorname{Tr}[\rho_{B}
^{3}]+6\operatorname{Tr}[\rho_{B}^{4}]\Big)  .
\end{multline}
We conclude that
\begin{equation}
p_{\text{acc}}^{(2)}\geq p_{\text{acc}}^{(3)}\geq p_{\text{acc}}^{(4)},
\label{eq:acc-prob-decreasing}
\end{equation}
because $\operatorname{Tr}[\rho^{k}]=\sum_{j}\lambda_{j}^{k}$, where the
eigenvalues of $\rho$ are $\{\lambda_{j}\}_{j}$, and for all $x,y\in\left[
0,1\right]  $,
\begin{align}
& \frac{1}{2}\left(  x+x^{2}\right) \notag \\
&  \geq\frac{1}{6}\left(  x+3x^{2}
+2x^{3}\right) \\
&  \geq\frac{1}{24}\left(  x+6x^{2}+3x^{2}y+8x^{3}+6x^{4}\right)  .
\end{align}
The inequalities in \eqref{eq:acc-prob-decreasing}\ imply that the tests
become more difficult to pass as $k$ increases. In a previous version of our paper \cite{LW21}, we speculated that this trend of decreasing acceptance probability continues as $k$ increases. Indeed, this was subsequently shown to be true in \cite{bradshaw2022cycle}.

We can interpret these findings in two different ways. For each $k$, the rejection probability $1-p_{\text{acc}}^{(k)}$ can be understood as an entanglement measure for pure states, similar to how the linear entropy $1-\operatorname{Tr}[\rho_{B}^{2}]$ 
is interpreted as an entanglement measure. Indeed, these quantities are non-increasing under local operations and classical communication that take pure states to pure states, as every R\'{e}nyi entropy (defined as $\frac{1}{1-\alpha}\log\operatorname{Tr}[\rho_{B}^{\alpha}]$ for $\alpha \in(0,1)\cup(1,\infty)$)
is an entanglement measure for pure states \cite{HHHH09}. Another interpretation is that, if using these tests to decide if a given pure state is product or entangled, a decision can be determined with fewer repetitions of the basic test by using tests with higher values of $k$.

\subsection{Separability test for pure multipartite states}

We can generalize the test from the previous section to one for pure
multipartite entanglement. Let $\psi_{A_{1}\cdots A_{m}}$ be a multipartite pure state, and let $\psi_{A_{1}\cdots A_{m}}^{\otimes k}$ denote $k$ copies of it. For $i\in\left\{  1,\ldots,m\right\}  $ and $\pi_{i}\in S_{k}$, let $W_{A_{i,1}\cdots A_{i,k}}(\pi_{i})$ denote a permutation unitary, where $i$ is an index for the $i$th party, and the notation $A_{i,j}$ for $j\in\left\{ 1,\ldots,k\right\}  $ indicates the $j$th system of the $i$th party. This example is a special case of $G$-Bose symmetry with the identifications:
\begin{align}
S &  \leftrightarrow A_{1,1}\cdots A_{1,k}\cdots A_{m,1}\cdots A_{m,k}
,\label{eq:pure-multipartite-idents-1}\\
U_{S}(g) &  \leftrightarrow\bigotimes\limits_{i=1}^{m}W_{A_{i,1}\cdots
A_{i,k}}(\pi_{i}), \\
G &  \leftrightarrow\overset{m\text{ times}}{\overbrace{S_{k}\times
\cdots\times S_{k}}},\\
g &  \leftrightarrow(\pi_{1},\ldots,\pi_{m}
),\label{eq:pure-multipartite-idents-4}
\end{align}
where $\times$ denotes the direct product of groups. The $G$-Bose symmetry test from Section~\ref{sec:simple-algorithm}\ has the following acceptance probability in this case:
\begin{equation}
\operatorname{Tr}\!\left[  \bigotimes\limits_{i=1}^{m}\Pi_{A_{i,1}\cdots
A_{i,k}}^{\operatorname{Sym}}\psi_{A_{1}\cdots A_{m}}^{\otimes k}\right]  .
\end{equation}
Note that one can again use the circuit from \cite[Section~4]{BBD+97} to implement this test.
For $k=2$, this test is known to be a test of multipartite pure-state entanglement \cite{HM10}, which has been considered in more recent works \cite{FKS21,BGCC21}. As far as we aware, the test proposed above, for larger values of $k$, has not been considered previously. Presumably, as was the case for the bipartite entanglement test mentioned above, the multipartite test is such that it becomes easier to detect an entangled state as $k$ increases. We leave its detailed analysis for future work.

\subsection{\texorpdfstring{$k$}{k}-Bose extendibility test for bipartite states}

\label{sec:k-bose-ext-test}We now demonstrate how the test for $G$-Bose
symmetric extendibility from Section~\ref{sec:G-Bose-sym-ext-test}\ can
realize a test for $k$-Bose extendibility of a bipartite state. Since every
separable state is $k$-Bose extendible, this test is then indirectly a test
for separability. To see this in detail, recall that a bipartite state
$\sigma_{AB}$ is separable if it can be written as a convex combination of
pure product states \cite{HHHH09,KW20book}:
\begin{equation}
\sigma_{AB}=\sum_{x}p_{X}(x)\psi_{A}^{x}\otimes\phi_{B}^{x},
\end{equation}
where $p_{X}$ is a probability distribution and $\{\psi_{A}^{x}\}_{x}$ and
$\{\phi_{B}^{x}\}_{x}$ are sets of pure states. A $k$-Bose extension for this
state is as follows:
\begin{equation}
\omega_{AB_{1}\cdots B_{k}}=\sum_{x}p_{X}(x)\psi_{A}^{x}\otimes\phi_{B_{1}
}^{x}\otimes\cdots\otimes\phi_{B_{k}}^{x}.
\end{equation}
By making the identifications discussed in Example~\ref{ex:k-bose-ext}, it
follows from Theorem~\ref{thm:G-BSE-acc-prob}\ that the test from
Section~\ref{sec:G-Bose-sym-ext-test} is a test for $k$-Bose extendibility.
For an input state $\rho_{AB}$, the acceptance probability of
Algorithm~\ref{alg:G-BSE-test}\ is equal to the maximum $k$-Bose extendible
fidelity
\begin{equation}
\max_{\omega_{AB}\in k\text{-BE}}F(\rho_{AB},\omega_{AB}),
\end{equation}
where $k$-BE denotes the set of $k$-Bose extendible states, as defined in
Example~\ref{ex:k-bose-ext}.

This test for $k$-Bose extendibility was proposed in
\cite{HMW13,Hayden:2014:TQI}\ for understanding the computational complexity
of the circuit separability problem. In that work, it was not mentioned that
the test employed is a test for $k$-Bose extendibility; instead, it was
suggested to be a test for $k$-extendibility. Thus, our observation here (also
made earlier by \cite{Marvian13}) is that the test proposed in
\cite{HMW13,Hayden:2014:TQI} is actually a test for $k$-Bose extendibility,
and we consider in the next section a true test for $k$-extendibility. The
main results of \cite{HMW13,Hayden:2014:TQI} were the computational
complexity of the circuit version of the separability problem, and so the precise
kind of test used was not particularly important there.

\subsection{\texorpdfstring{$k$}{k}-Extendibility test for bipartite states}

In this section, we discuss how the test for $G$-symmetric extendibility from
Section~\ref{sec:test-g-sym-ext}\ can realize a test for $k$-extendibility of
a bipartite state. Due to the known connections between $k$-extendibility
and separability \cite{CKMR08,BCY11,BCY11a,BH12}, this test is an indirect test for separability of a bipartite state. Since every separable state is $k$-Bose extendible, as discussed in Section~\ref{sec:k-bose-ext-test}, and every $k$-Bose extendible state is $k$-extendible, it follows that every separable state is $k$-extendible.

By making the identifications discussed in Example~\ref{ex:k-ext}, it follows
from Theorem~\ref{thm:G-SE-acc-prob}\ that the test from
Section~\ref{sec:test-g-sym-ext} is a test for $k$-extendibility. For an
input state $\rho_{AB}$, the acceptance probability of
Algorithm~\ref{alg:sym-ext}\ is equal to the maximum $k$-extendible fidelity
\begin{equation}
\max_{\omega_{AB}\in k\text{-E}}F(\rho_{AB},\omega_{AB}),
\end{equation}
where $k$-E denotes the set of $k$-extendible states, as defined in
Example~\ref{ex:k-ext}.

As far as we are aware, this quantum computational test for $k$-extendibility
is original to this paper, however inspired by the approach from
\cite{HMW13,Hayden:2014:TQI}. It was argued in \cite{HMW13,Hayden:2014:TQI}
that the acceptance probability of the test there is bounded from above by the
maximum $k$-extendible fidelity, which is consistent with the fact that the
set of $k$-Bose extendible states is contained in the set of $k$-extendible
states and our observation here that the acceptance probability of the test
in \cite{HMW13,Hayden:2014:TQI} is equal to the maximum $k$-Bose extendible fidelity.

\subsection{Extendibility tests for multipartite states}

We discuss briefly how the tests from
Sections~\ref{sec:G-Bose-sym-ext-test}\ and \ref{sec:test-g-sym-ext} apply to
the multipartite case, using identifications similar to those in \eqref{eq:pure-multipartite-idents-1}--\eqref{eq:pure-multipartite-idents-4}.

First, let us recall the definition of multipartite extendibility \cite{DPS05}. Let
$\sigma_{A_{1}\cdots A_{m}}$ be a multipartite state. Such a state is
$(k_{1},\ldots,k_{m})$-extendible if there exists a state $\omega
_{A_{1,1}\cdots A_{1,k_{1}}\cdots A_{m,1}\cdots A_{m,k_{m}}}$ such that
\begin{multline}
\sigma_{A_{1}\cdots A_{m}}=\label{eq:multipartite-ext-cond}\\
\operatorname{Tr}_{A_{1,2}\cdots A_{1,k_{1}}\cdots A_{m,2}\cdots A_{m,k_{m}}
}[\omega_{A_{1,1}\cdots A_{1,k_{1}}\cdots A_{m,1}\cdots A_{m,k_{m}}}]
\end{multline}
and
\begin{multline}
\omega_{A_{1,1}\cdots A_{1,k_{1}}\cdots A_{m,1}\cdots A_{m,k_{m}}} = \\
W_{_{A_{1,1}\cdots A_{1,k_{1}}\cdots A_{m,1}\cdots A_{m,k_{m}}}}
^{\mathbf{\pi}}\omega_{A_{1,1}\cdots A_{1,k_{1}}\cdots A_{m,1}\cdots
A_{m,k_{m}}}\times\\
(W_{_{A_{1,1}\cdots A_{1,k_{1}}\cdots A_{m,1}\cdots A_{m,k_{m}}}}
^{\mathbf{\pi}})^{\dag},
\end{multline}
for all $\mathbf{\pi}$, where $\mathbf{\pi}=(\pi_{1},\ldots,\pi_{m})\in
S_{k_{1}}\times\cdots\times S_{k_{m}}$ and
\begin{equation}
W_{_{A_{1,1}\cdots A_{1,k_{1}}\cdots A_{m,1}\cdots A_{m,k_{m}}}}^{\mathbf{\pi
}}\coloneqq \bigotimes\limits_{i=1}^{m}W_{A_{i,1}\cdots A_{i,k_{i}}}^{\pi_{i}}.
\end{equation}
A multipartite state is $(k_{1},\ldots,k_{m})$-Bose extendible if there exists
a state $\omega_{A_{1,1}\cdots A_{1,k_{1}}\cdots A_{m,1}\cdots A_{m,k_{m}}}$
such that \eqref{eq:multipartite-ext-cond} holds and
\begin{multline}
\omega_{A_{1,1}\cdots A_{1,k_{1}}\cdots A_{m,1}\cdots A_{m,k_{m}}}=\\
\Pi_{A_{1,1}\cdots A_{1,k_{1}}\cdots A_{m,1}\cdots A_{m,k_{m}}}\omega
_{A_{1,1}\cdots A_{1,k_{1}}\cdots A_{m,1}\cdots A_{m,k_{m}}}\\
\times\Pi_{A_{1,1}\cdots A_{1,k_{1}}\cdots A_{m,1}\cdots A_{m,k_{m}}},
\end{multline}
where
\begin{align}
\Pi_{A_{1,1}\cdots A_{1,k_{1}}\cdots A_{m,1}\cdots A_{m,k_{m}}}  &
\coloneqq \bigotimes\limits_{i=1}^{m}\Pi_{A_{i,1}\cdots A_{i,k_{i}}}
^{\operatorname{Sym}},\\
\Pi_{A_{i,1}\cdots A_{i,k_{i}}}^{\operatorname{Sym}}  & \coloneqq \frac{1}{k_{i}!}
\sum_{\pi_{i}\in S_{k_{i}}}W_{A_{i,1}\cdots A_{i,k_{i}}}^{\pi_{i}}.
\end{align}

By making the identifications
\begin{align}
S &  \leftrightarrow A_{1,1}\cdots A_{m,1},\\
R &  \leftrightarrow A_{1,2}\cdots A_{1,k_{1}}\cdots A_{m,2}\cdots A_{m,k_{m}
} , \\
U_{RS}(g) &  \leftrightarrow\bigotimes\limits_{i=1}^{m}W_{A_{i,1}\cdots
A_{i,k_{i}}}(\pi_{i}) , \\
G &  \leftrightarrow S_{k_{1}}\times\cdots\times S_{k_{m}},\\
g &  \leftrightarrow(\pi_{1},\ldots,\pi_{m}),
\end{align}
it follows that Algorithm~\ref{alg:G-BSE-test} is a test for multipartite
$(k_{1},\ldots,k_{m})$-Bose extendibility of a state $\rho_{A_{1}\cdots A_{m}}$, with acceptance probability equal to
\begin{equation}
\max_{\omega_{A_{1}\cdots A_{m}}\in(k_{1},\ldots,k_{m})\text{-BE}}
F(\rho_{A_{1}\cdots A_{m}},\omega_{A_{1}\cdots A_{m}}),
\end{equation}
and Algorithm~\ref{alg:sym-ext} is a test for multipartite $(k_{1}
,\ldots,k_{m})$-extendibility of a state $\rho_{A_{1}\cdots A_{m}}$, with
acceptance probability equal to
\begin{equation}
\max_{\omega_{A_{1}\cdots A_{m}}\in(k_{1},\ldots,k_{m})\text{-E}}F(\rho
_{A_{1}\cdots A_{m}},\omega_{A_{1}\cdots A_{m}}),
\end{equation}
where $(k_{1},\ldots,k_{m})$-BE and $(k_{1},\ldots,k_{m})$-E denote the sets
of $(k_{1},\ldots,k_{m})$-Bose extendible and $(k_{1},\ldots,k_{m}
)$-extendible states, respectively.

\subsection{Testing covariance symmetry of a quantum channel}

\label{sec:cov-sym-ch-test}

We can also use the test from Algorithm~\ref{alg:g-sym-test} to test for
covariance symmetry of a quantum channel. Before stating it, let us recall the
notion of a covariant channel \cite{Hol02}. Let $G$ be a group, and let $\{U_{A}(g)\}_{g\in
G}$ and $\{V_{B}(g)\}_{g\in G}$ denote projective unitary representations of~$G$. A channel $\mathcal{N}_{A\rightarrow B}$ is covariant if the following
$G$-covariance symmetry condition holds
\begin{equation}
\mathcal{N}_{A\rightarrow B}\circ\mathcal{U}_{A}(g)=\mathcal{V}_{B}
(g)\circ\mathcal{N}_{A\rightarrow B}\qquad\forall g\in G,\label{eq:ch-cov-def}
\end{equation}
where the unitary channels $\mathcal{U}_{A}(g)$ and $\mathcal{V}_{B}(g)$ are
respectively defined from $U_{A}(g)$ and $V_{B}(g)$ as
\begin{align}\label{eq:uandv}
\mathcal{U}_{A}(g)(\omega_{A}) &  \coloneqq U_{A}(g)\omega_{A}U_{A}(g)^{\dag
},\\
\mathcal{V}_{B}(g)(\tau_{B}) &  \coloneqq V_{B}(g)\tau_{B}V_{B}(g)^{\dag}.
\end{align}

It is well known that a channel is covariant in the sense above if and only if
its Choi state is invariant in the following sense \cite[Eq.~(59)]{CDP09}:
\begin{equation} 
\Phi_{RB}^{\mathcal{N}}=(\overline{\mathcal{U}}_{R}(g)\otimes\mathcal{V}
_{B}(g))(\Phi_{RB}^{\mathcal{N}})\quad\forall g\in G,
\label{eq:choi-cond-cov-ch}
\end{equation}
where
\begin{equation}
\overline{\mathcal{U}}_{R}(g)(\omega_{R})\coloneqq\overline{U}_{R}
(g)\omega_{R}U_{R}(g)^{T},
\end{equation}
and the superscript $T$ indicates the transpose. Also, the Choi state $\Phi_{RB}^{\mathcal{N}}$ is
defined as
\begin{align}
\Phi_{RB}^{\mathcal{N}}  & \coloneqq \mathcal{N}_{A\rightarrow B}(\Phi_{RA}),\\
\Phi_{RA}  & \coloneqq \frac{1}{\left\vert A\right\vert }\sum_{i,j}|i\rangle\!\langle
j|_{R}\otimes|i\rangle\!\langle j|_{A}.
\end{align}

Suppose then that a circuit is available that generates the channel
$\mathcal{N}_{A\rightarrow B}$. Similar to the first assumption in
Section~\ref{sec:tests-o-sym}, we suppose that the circuit realizes a unitary
channel $\mathcal{W}_{AE^{\prime}\rightarrow BE}$\ that extends the original
channel, in the sense that
\begin{equation}
\mathcal{N}_{A\rightarrow B}(\omega_{A})=(\operatorname{Tr}_{E}\circ
\mathcal{W}_{AE^{\prime}\rightarrow BE})(\omega_{A}\otimes|0\rangle\!\langle0|_{E^{\prime}}).
\end{equation}
Then to decide whether the channel is covariant, we send in one share of a
maximally entangled state to the unitary extension channel, such that the
overall state is
\begin{equation}
\mathcal{W}_{AE^{\prime}\rightarrow BE}(\Phi_{RA}\otimes|0\rangle\!\langle0|_{E^{\prime}}).
\end{equation}
Now making the identifications
\begin{align}
E  & \leftrightarrow S^{\prime},\\
RB  & \leftrightarrow S,\\
\overline{U}_{R}(g)\otimes V_{B}(g)  & \leftrightarrow U_{S}(g),
\end{align}
we apply Algorithm~\ref{alg:g-sym-test}, and as a consequence of
Theorem~\ref{thm:max-acc-prob-g-sym}, the acceptance probability is equal to
\begin{equation}
\max_{\sigma_{RB}\in\operatorname{Sym}_G}F(\Phi_{RB}^{\mathcal{N}}
,\sigma_{RB}),
\end{equation}
where
\begin{multline}
\operatorname{Sym}_G\coloneqq \\
\left\{
\begin{array}
[c]{c}
\sigma_{RB}\in\mathcal{D}(\mathcal{H}_{RB}):\\
\sigma_{RB}=(\overline{\mathcal{U}}_{R}(g)\otimes\mathcal{V}_{B}
(g))(\sigma_{RB})\ \forall g\in G
\end{array}
\right\}  .
\end{multline}
Thus, the test accepts with probability equal to one if and only if the channel is covariant in the sense of~\eqref{eq:ch-cov-def}.

We note here that a special kind of channel is a unitary channel induced by Hamiltonian evolution (i.e., $\mathcal{N}(\cdot) = e^{-iHt} (\cdot) e^{iHt}$, where $H$ is the Hamiltonian and $t$ is the evolution time). This special case was considered in \cite{LW22}, in which channel symmetry tests were employed as Hamiltonian symmetry tests.

\subsection{Testing covariance symmetry of a quantum measurement}

Recall that a quantum measurement is described by a positive operator-valued
measure (POVM), which is a set $\Lambda\coloneqq \{\Lambda^{x}\}_{x\in\mathcal{X}}$ of
positive semi-definite operators such that $\sum_{x\in\mathcal{X}}\Lambda
^{x}=\mathbb{I}$. From this set, we can define a quantum measurement channel as
follows:%
\begin{equation}
\mathcal{M}_{S\rightarrow X}(\rho_{S})\coloneqq \sum_{x\in\mathcal{X}}%
\operatorname{Tr}[\Lambda_{S}^{x}\rho_{S}]|x\rangle\!\langle x|_{X}%
,\label{eq:povm-channel}%
\end{equation}
where $\{|x\rangle_{X}\}_{x\in\mathcal{X}}$ is an orthonormal basis.

A POVM\ is $G$-symmetric (also called group covariant) if there exists a
projective unitary representation $\left\{  U(g)\right\}  _{g\in G}$ of a
group $G$ such that  
\begin{equation}
U(g)^{\dag}\Lambda^{x}U(g)\in\Lambda \quad \forall g\in G,\, 
x\in\mathcal{X} .
\label{eq:def-G-sym-povm}
\end{equation}
$G$-symmetric POVMs have been studied extensively in the literature \cite{D78,H11book,CdVT03,DJR05}, and they  arise in many applications, having to do with state
discrimination \cite{KGDdS15} and estimation \cite{CD04}. It is thus of interest to
determine whether a given POVM\ is $G$-symmetric.

Connecting to the previous section, a measurement channel $\mathcal{M}_{S\rightarrow X}$ is
$G$-symmetric if there exist projective unitary representations $\left\{
U(g)\right\}  _{g\in G}$ and $\left\{  W(g)\right\}  _{g\in G}$ such that%
\begin{equation}
\mathcal{M}_{S\rightarrow X}\circ\mathcal{U}(g)=\mathcal{W}(g)\circ
\mathcal{M}_{S\rightarrow X}\quad\forall g\in G.\label{eq:cov-sym-meas-ch}%
\end{equation}
Plugging into \eqref{eq:povm-channel}, the condition in
\eqref{eq:cov-sym-meas-ch} becomes%
\begin{multline}
\sum_{x\in\mathcal{X}}\operatorname{Tr}[U(g)^{\dag}\Lambda^{x}U(g)\rho
_{S}]|x\rangle\!\langle x|_{X} \\
=\sum_{x\in\mathcal{X}}\operatorname{Tr}%
[\Lambda_{S}^{x}\rho_{S}]W(g)|x\rangle\!\langle x|_{X}W(g)^{\dag} \quad\forall g\in G .
\end{multline}
Since the output system $X$ is classical, it is sensible to restrict the
unitary $W(g)$ to be a shift operator that realizes a permutation $\pi_{g}$ of
the classical letter $x$, so that we can write%
\begin{align}
& \sum_{x\in\mathcal{X}}\operatorname{Tr}[U(g)^{\dag}\Lambda^{x}U(g)\rho
_{S}]|x\rangle\!\langle x|_{X}\nonumber\\
& =\sum_{x\in\mathcal{X}}\operatorname{Tr}[\Lambda_{S}^{x}\rho_{S}]|\pi
_{g}(x)\rangle\!\langle\pi_{g}(x)|_{X}\\
& =\sum_{x\in\mathcal{X}}\operatorname{Tr}[\Lambda_{S}^{\pi_{g}^{-1}(x)}%
\rho_{S}]|x\rangle\!\langle x|_{X}.
\end{align}
Since this equation holds for every input state $\rho$, we conclude that the
following condition holds for a $G$-symmetric measurement channel:%
\begin{equation}
U(g)^{\dag}\Lambda^{x}U(g)=\Lambda_{S}^{\pi_{g}^{-1}(x)}\quad\forall g\in
G,\ x\in\mathcal{X},
\label{eq:cov-povm-conseq}
\end{equation}
coinciding with the definition given in \eqref{eq:def-G-sym-povm}.

As a consequence of the connection between \eqref{eq:cov-sym-meas-ch} and the definition in \eqref{eq:def-G-sym-povm}, we can use the methods from the previous section to test whether a
POVM\ is $G$-symmetric. Recall that the Choi state of a measurement channel
has the following form (see, e.g., \cite[Eq. (3.2.162)]{KW20book}):%
\begin{equation}
\Phi_{RX}^{\mathcal{M}}\coloneqq \mathcal{M}_{S\rightarrow X}(\Phi_{RS})=\frac
{1}{\left\vert \mathcal{X}\right\vert }\sum_{x\in\mathcal{X}}\left(
\Lambda_{R}^{x}\right)  ^{T}\otimes|x\rangle\!\langle x|_{X}%
.\label{eq:choi-state-meas-ch}%
\end{equation}
By appealing to \eqref{eq:choi-cond-cov-ch}, \eqref{eq:cov-sym-meas-ch}, and \eqref{eq:cov-povm-conseq}, it follows that a POVM is $G$-symmetric if
and only if its Choi state is $G$-symmetric in the following sense:%
\begin{equation}
(\overline{\mathcal{U}}_{R}(g)\otimes\mathcal{W}_{X}(g))(\Phi_{RX}%
^{\mathcal{M}})=\Phi_{RX}^{\mathcal{M}}\quad\forall g\in G,
\end{equation}
or equivalently, if%
\begin{multline}
\frac{1}{\left\vert \mathcal{X}\right\vert }\sum_{x\in\mathcal{X}}%
\left[  \mathcal{U}_{R}(g)(\Lambda_{R}^{x})  \right]^{T}\otimes
|\pi_{g}(x)\rangle\!\langle\pi_{g}(x)|_{X}\\
=\frac{1}{\left\vert \mathcal{X}\right\vert }\sum_{x\in\mathcal{X}}\left(
\Lambda_{R}^{x}\right)  ^{T}\otimes|x\rangle\!\langle x|_{X}\qquad\forall g\in
G.
\end{multline}

One method for performing a measurement  on a quantum
system $S$ is to employ a unitary circuit $U_{SX}$ acting on the system $S$
and a probe system $X$ prepared in the state $|0\rangle\!\langle0|_{X}$ (see, e.g.,  \cite[Figure~3.1]{KW20book}). This is
then followed by a projective measurement $\{|x\rangle\!\langle x|_{X}%
\}_{x\in\mathcal{X}}$ in the standard basis of the probe system~$X$. To
realize this process in a fully unitary manner, we can attach two probe
systems $X$ and $X^{\prime}$ to the system~$S$, prepared in the state
$|0\rangle\!\langle0|_{X}\otimes|0\rangle\!\langle0|_{X^{\prime}}$, perform the
unitary $U_{SX}$, followed by generalized controlled-NOT gates from $X$ to
$X^{\prime}$. If we send in one share $S$\ of a maximally entangled state
$\Phi_{RS}$, the resulting state is%
\begin{equation}
C_{XX^{\prime}}U_{SX}\left(  \Phi_{RS}\otimes|0\rangle\!\langle0|_{X}%
\otimes|0\rangle\!\langle0|_{X^{\prime}}\right)  U_{SX}^{\dag}C_{XX^{\prime}%
}^{\dag},
\end{equation}
where $C_{XX^{\prime}}$ denotes the generalized CNOT gate, defined through
$C_{XX^{\prime}}|x\rangle_{X}|0\rangle_{X^{\prime}}=|x\rangle_{X}%
|x\rangle_{X^{\prime}}$. Tracing over systems $S$ and $X^{\prime}$, the
resulting state is the Choi state of the measurement channel, as given in
\eqref{eq:choi-state-meas-ch}. Thus, by making the identifications%
\begin{align}
SX^{\prime}  & \leftrightarrow S^{\prime},\\
RX  & \leftrightarrow S,\\
\overline{\mathcal{U}}_{R}(g)\otimes\mathcal{W}_{X}(g)  & \leftrightarrow
U_{S}(g),
\end{align}
we apply Algorithm~\ref{alg:g-sym-test}, and as a consequence of Theorem~\ref{thm:max-acc-prob-g-sym}, the
acceptance probability is equal to%
\begin{equation}
\max_{\sigma_{RX}\in\text{Sym}_{G}}F(\Phi_{RX}^{\mathcal{M}},\sigma_{RX}),
\end{equation}
where
\begin{multline}
\text{Sym}_{G}\coloneqq \\
\left\{
\begin{array}
[c]{c}%
\sigma_{RX}\in\mathcal{D}(\mathcal{H}_{RX}):\\
\sigma_{RX}=(\overline{\mathcal{U}}_{R}(g)\otimes\mathcal{W}_{X}%
(g))(\sigma_{RX})\ \forall g\in G
\end{array}
\right\}  .
\end{multline}
Thus, the test accepts with probability equal to one if and only if the
POVM is $G$-symmetric, as defined in \eqref{eq:def-G-sym-povm}. Finally, we remark that it suffices to restrict the optimization over $\sigma_{RX}$ to be over quantum-classical states of the form $\sigma_{RX} = \sum_{x\in\mathcal{X}} \tilde{\sigma}_R^x \otimes |x\rangle \! \langle x|_X$, where each $\tilde{\sigma}_R^x$ is positive semi-definite and $\sum_{x \in \mathcal{X}} \operatorname{Tr}[\tilde{\sigma}_R^x]=1$. This follows because the Choi state $\Phi_{RX}^{\mathcal{M}}$ is quantum-classical (and thus invariant under such a dephasing), and the fidelity does not decrease under the action of a completely-dephasing channel on the classical system~$X$. It thus suffices to optimize over quantum-classical $\sigma_{RX}$ satisfying
\begin{multline}
\sum_{x\in\mathcal{X}} \tilde{\sigma}_R^x \otimes |x\rangle \! \langle x|_X = \\
\sum_{x\in\mathcal{X}} \overline{\mathcal{U}}_R(g)(\tilde{\sigma}_R^x) \otimes |\pi_g(x)\rangle \! \langle \pi_g(x)|_X \, ,
\end{multline}
for all $g \in G$, or equivalently, $\tilde{\sigma}_R^{\pi_g(x)} = \overline{\mathcal{U}}_R(g)(\tilde{\sigma}_R^x)$ for all $x \in \mathcal{X}$ and $g\in G$.

\section{Semi-definite programs for maximum symmetric fidelities}

\label{sec:SDPs-sym-fids}

In this section, we note that the acceptance
probabilities of Algorithms~\ref{alg:simple}--\ref{alg:sym-ext}\ can be
computed by means of semi-definite programming (see \cite{BV04,Wat18,KW20book}
for reviews). This is useful for comparing the true values of the acceptance probabilities of Algorithms~\ref{alg:simple}--\ref{alg:sym-ext} to estimates formed from executing them
on near-term quantum computers; however, this semi-definite programming approach
only works well in practice if the circuit $U^{\rho}$ acts on a small number of qubits. This limitation holds because the semi-definite programs (SDPs) run in a time polynomial in the dimension of the states involved, but the dimension of a state grows exponentially with the number of qubits involved.

We note that the fact that the acceptance probabilities of
Algorithms~\ref{alg:simple}--\ref{alg:sym-ext} can be computed by semi-definite programming follows from a more general fact that the acceptance probability of a QIP(2)\ algorithm can be computed in this manner \cite{W09,VW15}; however, it is helpful to have the explicit form
of the SDPs available.

We now list the SDPs for the acceptance probabilities of
Algorithms~\ref{alg:simple}--\ref{alg:sym-ext}. To begin with, let us note that the acceptance probability of Algorithm~\ref{alg:simple} is equal to
$\operatorname{Tr}[\Pi_{S}^{G}\rho_{S}]$, and so there is no need for an optimization. This quantity can be calculated directly if the projection matrix $\Pi_{S}^{G}$ and the density matrix $\rho_{S}$ are available. Alternatively, one could employ an optimization as given below.
Let us first note that the root fidelity of states $\omega$ and $\tau$ can be
calculated by the following SDP \cite{Wat13}:
\begin{equation}\label{eq:SDP-fidelity}
\sqrt{F}(\omega,\tau)=\max_{X\in\mathcal{L}(\mathcal{H})}\left\{
\operatorname{Tr}[\operatorname{Re}[X]]:
\begin{bmatrix}
\omega & X^{\dag}\\
X & \tau
\end{bmatrix}
\geq0\right\}, 
\end{equation}
where $\mathcal{L}(\mathcal{H})$ is the space of linear operators acting on the Hilbert space $\mathcal{H}$. Each of the sets $\text{B-Sym}_{G}$,  $\operatorname*{Sym}_{G}$, $\operatorname*{BSE}_{G}$, and $\operatorname*{SymExt}_{G}$ are specified by semi-definite constraints. Thus, combining the optimization in \eqref{eq:SDP-fidelity}\ with
various constraints, we find that the acceptance probabilities of Algorithms~\ref{alg:simple}--\ref{alg:sym-ext} can be
calculated by using the following SDPs, respectively:
\begin{multline}
\label{eq:SDP-rootfid-GBS}
\max_{\sigma_{S}\in\text{B-Sym}_{G}}\sqrt{F}(\rho_{S},\sigma_{S})\\
=\max_{\substack{X\in\mathcal{L}(\mathcal{H}_{S}),\\\sigma_{S}\geq0}}\left\{
\begin{array}
[c]{c}
\operatorname{Tr}[\operatorname{Re}[X]]:\\
\begin{bmatrix}
\rho_{S} & X^{\dag}\\
X & \sigma_{S}
\end{bmatrix}
\geq0,\\
\operatorname{Tr}[\sigma_{S}]=1,\\
\sigma_{S}=\Pi^G_{S}\sigma_{S}\Pi^G_{S}
\end{array}
\right\}  ,
\end{multline}
\begin{multline}
\label{eq:SDP-rootfid-GS}
\max_{\sigma_{S}\in\operatorname*{Sym}_{G}}\sqrt{F}(\rho_{S},\sigma_{S})\\
=\max_{\substack{X\in\mathcal{L}(\mathcal{H}_{S}),\\\sigma_{S}\geq0}}\left\{
\begin{array}
[c]{c}
\operatorname{Tr}[\operatorname{Re}[X]]:\\
\begin{bmatrix}
\rho_{S} & X^{\dag}\\
X & \sigma_{S}
\end{bmatrix}
\geq0,\\
\operatorname{Tr}[\sigma_{S}]=1,\\
\sigma_{S}=U_{S}(g)\sigma_{S}U_{S}(g)^{\dag}\ \forall g\in G
\end{array}
\right\}  ,
\end{multline}
\begin{multline}
\label{eq:SDP-rootfid-GBSE}
\max_{\sigma_{S}\in\operatorname*{BSE}_{G}}\sqrt{F}(\rho_{S},\sigma_{S})\\
=\max_{\substack{X\in\mathcal{L}(\mathcal{H}_{S}),\\\omega_{RS}\geq0}}\left\{
\begin{array}
[c]{c}
\operatorname{Tr}[\operatorname{Re}[X]]:\\
\begin{bmatrix}
\rho_{S} & X^{\dag}\\
X & \operatorname{Tr}_{R}[\omega_{RS}]
\end{bmatrix}
\geq0,\\
\operatorname{Tr}[\omega_{RS}]=1,\\
\omega_{RS}=\Pi_{RS}^{G}\omega_{RS}\Pi_{RS}^{G}
\end{array}
\right\}  ,
\end{multline}
\begin{multline}
\label{eq:SDP-rootfid-GSE}
\max_{\sigma_{S}\in\operatorname*{SymExt}_{G}}\sqrt{F}(\rho_{S},\sigma_{S}) = \\
\max_{\substack{X\in\mathcal{L}(\mathcal{H}_{S}),\\\omega_{RS}\geq0}}\left\{
\begin{array}
[c]{c}
\operatorname{Tr}[\operatorname{Re}[X]]:\\
\begin{bmatrix}
\rho_{S} & X^{\dag}\\
X & \operatorname{Tr}_{R}[\omega_{RS}]
\end{bmatrix}
\geq0,\\
\operatorname{Tr}[\omega_{RS}]=1,\\
\omega_{RS}=U_{RS}(g)\omega_{RS}U_{RS}(g)^{\dag}\ \forall g\in G
\end{array}
\right\}  .
\end{multline}

We note here that the complexity of the SDPs in \eqref{eq:SDP-rootfid-GS} and \eqref{eq:SDP-rootfid-GSE} can be greatly simplified by employing basic concepts from representation theory (i.e., Schur's lemma). See \cite{S12} for background on representation theory and Propositions~4.2.2 and 4.2.3 therein for Schur's lemma. Focusing on the SDP in 
\eqref{eq:SDP-rootfid-GS}, it is well known that there exists a unitary $W$ that block diagonalizes every unitary in the set $\{U(g)\}_{g\in G}$, as follows:
\begin{equation}
    U(g) = W\left(\bigoplus_\lambda \mathbb{I}_{m_\lambda} \otimes U_\lambda(g)  \right) W^\dag,
    \label{eq:unitary-decomp-group}
\end{equation}
where the variable $\lambda$ labels an irreducible representation (irrep) of $U(g)$, the matrix $\mathbb{I}_{m_\lambda}$ is an identity matrix of dimension $m_\lambda$, and the unitary $U_\lambda(g)$ is an irrep of $U(g)$ with multiplicity $m_\lambda$. This same unitary $W$
induces a direct-sum decomposition (called isotypic decomposition) of the Hilbert space $\mathcal{H}$ for $\rho_S$ and $\sigma_S$ as follows:
\begin{align}
    W^\dag \mathcal{H} & = \bigoplus_\lambda \mathcal{H}_\lambda,\\
    \mathcal{H}_\lambda & \coloneqq   \mathbb{C}^{m_\lambda} \otimes \mathcal{K}_\lambda,
\end{align}
where $\mathcal{H}_\lambda$ is the space on which $\mathbb{I}_{m_\lambda} \otimes U_\lambda(g)  $ acts and  $\mathcal{K}_\lambda$ is the factor on which $U_\lambda(g)$ acts. 
 Noting that the condition 
\begin{equation}
\sigma_{S}=U_{S}(g)\sigma_{S}U_{S}(g)^{\dag} \qquad \forall g\in G
\end{equation}
 is equivalent to 
\begin{equation}
\sigma_S = \mathcal{T}_G(\sigma_S)    ,
\label{eq:symmetrized-states}
\end{equation}
where the group twirl channel is defined as
\begin{equation}
    \mathcal{T}_G(\cdot) \coloneqq \frac{1}{|G|}\sum_{g \in G}U_S(g) (\cdot)U_S(g)^\dag ,
\end{equation}
 it then follows from \eqref{eq:unitary-decomp-group} and Schur's lemma that the twirl channel $\mathcal{T}_G$ has the following form (see page~8 of \cite{BRS07}):
\begin{equation}
    \mathcal{T}_G(\cdot) = \mathcal{W} \circ \left( \sum_\lambda (\operatorname{id}_{m_\lambda} \otimes \mathcal{D}_\lambda) \circ \mathcal{P}_\lambda\right)\circ \mathcal{W}^\dag ,
\end{equation}
where $\mathcal{W}(\cdot) \coloneqq W(\cdot)W^\dag$, the map $\mathcal{P}_\lambda$  projects onto $\mathcal{H}_\lambda$ (i.e., $\mathcal{P}_\lambda(\cdot) \coloneqq \Pi_\lambda (\cdot) \Pi_\lambda$, with $\Pi_\lambda$ the projection onto $\mathcal{H}_\lambda$), the map $\operatorname{id}_{m_\lambda}$ denotes the identity channel acting on the multiplicity space, and $\mathcal{D}_\lambda$ denotes a completely depolarizing channel with the action $\mathcal{D}_\lambda(\cdot) \coloneqq \operatorname{Tr}[\cdot] \pi_\lambda$, with $\pi_\lambda \coloneqq \mathbb{I}_{d_\lambda} / d_\lambda$ and $d_\lambda$ the dimension of $\mathcal{K}_\lambda$. The effect of the twirl $\mathcal{T}_G$ on a general input $\sigma$ is then
\begin{equation}
    \mathcal{T}_G(\sigma) = W\left(\bigoplus_\lambda \operatorname{Tr}_2[\Pi_\lambda W^\dag \sigma W \Pi_\lambda] \otimes \pi_\lambda\right ) W^\dag.
\end{equation}
It then follows that every state satisfying \eqref{eq:symmetrized-states} has the following form:
\begin{equation}
    \sigma_S = W\left( \bigoplus_\lambda  \tilde{\sigma}_\lambda  \otimes \pi_\lambda\right)W^\dag,
\end{equation}
where $\{\tilde{\sigma}_\lambda\}_\lambda$ is a set of positive semi-definite operators such that  $\sum_\lambda \operatorname{Tr}[\tilde{\sigma}_\lambda] = 1$. Thus, when performing the optimization in \eqref{eq:SDP-rootfid-GS}, it suffices to find the diagonalizing unitary $W$ for the representation $\{U(g)\}_{g\in G}$ (for which an algorithm is known \cite[Section~9.2.5]{AL12}) and then optimize over the set $\{\tilde{\sigma}_\lambda\}_\lambda$, thus greatly reducing the space over which the optimization needs to be conducted. This kind of reduction was recently exploited in \cite{FST22}, and a Matlab toolbox was provided in \cite{RMB21}. We note that we can employ similar reasoning to simplify the optimization in \eqref{eq:SDP-rootfid-GSE}.

It also follows from Schur's lemma that the group projection $\Pi^G_S$ has the following form \cite[Eqs.~(1)--(2)]{Cub18}:
\begin{align}
\Pi^G_S & = W\left( \bigoplus_\lambda \delta_{\lambda, \lambda_t} \mathbb{I}_{m_\lambda}  \otimes \mathbb{I}_{d_\lambda} \right)W^\dag, \\
&  = W \Pi_{\lambda_t} W^{\dag},
\end{align}
where $\lambda_t$ is the irrep for the trivial representation of $\{U_S(g)\}_{g \in G}$. Noting that $d_{\lambda_t} =1$ for this irrep, it follows that $\Pi_{\lambda_t}$ acts as $\mathbb{I}_{m_\lambda}$ on this subspace. Thus, in the optimization in \eqref{eq:SDP-rootfid-GBS}, it follows that every state $\sigma_S$ satisfying $\sigma_S = \Pi^G_S\sigma_S\Pi^G_S$ has the following form:
\begin{equation}
W\sigma_{\lambda_t} W^\dag,
\end{equation}
where $\sigma_{\lambda_t}$ is a state with support only in the space $\mathcal{H}_{\lambda_t}$, i.e., satisfying $\sigma_{\lambda_t} = \Pi_{\lambda_t}\sigma_{\lambda_t}\Pi_{\lambda_t}$. In this way, we can simplify the optimization task in \eqref{eq:SDP-rootfid-GBS}. We finally note that we can employ similar reasoning to simplify the optimization in \eqref{eq:SDP-rootfid-GSE}.

%The complexity of the twirl operation from \eqref{eq:SDP-rootfid-GS} and \eqref{eq:SDP-rootfid-GSE} can reduced by decomposing the Hilbert space \cite{BRS07}.

\section{Variational algorithms for testing symmetry}

\label{sec:var-algs}

Having established that the  acceptance probabilities can be computed by SDPs for circuits on a sufficiently small number of qubits, we now propose variational quantum algorithms (VQA) for use on quantum computers as a proof-of-concept implementation of these tests (see \cite{CABBEFMMYCC20,bharti2021noisy}\ for reviews of variational quantum algorithms). These algorithms make use of variational machine learning techniques to mimic the action of the prover in Algorithms~\ref{alg:g-sym-test}--\ref{alg:sym-ext}; however, these techniques are in general limited in terms of their capabilities and thus do not fully satisfy the all-powerful nature of the prover called for in quantum interactive proofs. Note also that training a VQA has been shown to be NP-hard \cite{qvanp}; nonetheless, implementing such methods on near-term quantum devices gives a rough lower bound on the symmetry measures of interest. In the future, more advanced techniques could be substituted into the prover's position in an equivalent manner to improve on these lower-bound estimates. We present here a series of examples and show the circuit diagrams and VQA performance for these tests. To demonstrate the wide-ranging applicability of these algorithms, we have performed symmetry tests for a variety of groups. We present a subset of them now and defer the rest of them to Appendices~\ref{app:cyclic-c-3} through \ref{app:Q8} in the interest of space. 

For the algorithms discussed in this section, all code was implemented in Python using Qiskit (a Python package used for quantum computing with IBM Quantum). For each algorithm, the noiseless variant was implemented using the IBM Quantum noiseless simulator. For the noisy versions, we use the noise model from the IBM-Jakarta quantum computer and conduct a noisy simulation. We find that the algorithms behave well in both scenarios, and for VQA tests, our results converge in a reasonable number of layers, typically less than five. 
In the noisy simulations, the algorithms converge well, and the parameters obtained exhibit a noise resilience as put forward in \cite{Sharma_2020}; that is, the relevant quantity can be accurately estimated by inputting the parameters learned from the noisy simulator into the noiseless simulator. Note that some sections show only a noiseless simulation; for these cases, the noisy simulation requires a noise model of a larger quantum system than is currently available to us.

As with many VQAs, it is necessary in these simulations to endeavor to avoid the barren plateau problem, in which global cost functions become untrainable. The algorithms specified in Section~\ref{sec:tests-o-sym} rely solely on local measurements alone in the regime in which the number of data qubits is much larger than the number of control qubits and thus should not suffer from this issue in this regime \cite{osti_1826512}. Furthermore, all VQAs utilized herein employ the SPSA optimization technique discussed in \cite{spall1998overview}, which aims to prevent local minima problems. Indeed, our simulations did not run into either issue for any of the results discussed. However, we have only considered simulations of small quantum systems; it remains open to provide evidence that our algorithms will avoid the barren plateau problem for larger systems.

%new
Lastly, consider that many of the algorithms in Section~\ref{sec:tests-o-sym} allow the prover access to an environmental system, labelled $E$. A natural question is how best to choose the dimension of this system. 
In general, we find that the $E$ system must be sufficiently large so as to match the input and output qubits, making the entire process unitary. For example, in $G$-symmetry tests,  the dimension of the $E$ system must be sufficiently large to provide a purification of the test state (recall Figure~\ref{fig:case-2}); for instance, if the state under test is a two-qubit state with a three-qubit purification, then $E$ must necessarily provide the remaining qubit to get from the initial three-qubit purification to the four-qubit purification being tested. By construction, the purification of a state under test is always provided to the prover and is not considered part of the environmental system. For all simulations, we have taken the dimension of $E$ to be the minimal viable dimension.

In what follows, we consider several groups and their unitary representations and test states for $G$-Bose symmetry, $G$-symmetry, $G$-Bose symmetric extendibility, and $G$-symmetric extendibility. We also test for two- and three-extendibility.

\subsection{\texorpdfstring{$\mathbb{Z}_2$}{Z2} Group}

\begin{figure*}[t!]
\begin{center}
\includegraphics[width=\linewidth]{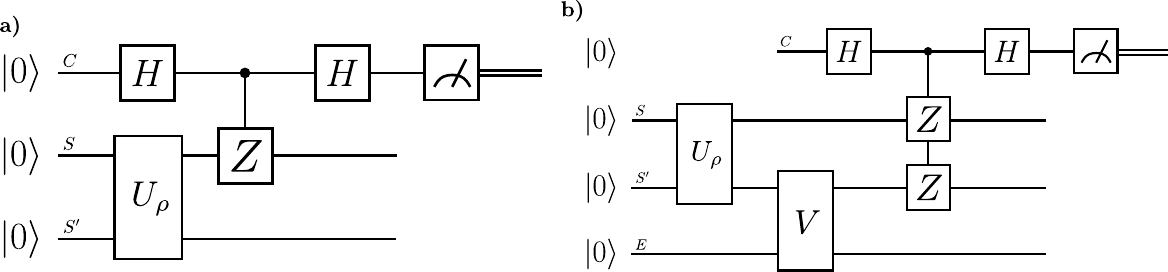}
\end{center}
\caption{Symmetry tests for the $\mathbb{Z}_2$ group: a) $G$-Bose symmetry and b) $G$-symmetry.}
\label{fig:Z2_Circuits}
\end{figure*}

In order to test membership in $\operatorname{Sym}_G$, a group with an established unitary representation is needed. One somewhat trivial, albeit easily testable, example is the group generated by the identity and the Pauli~$Z$ gate. The group table for the $\mathbb{Z}_2$ group is given by

\begin{center}
\begin{tabular}{>{\centering\arraybackslash}p{0.14\textwidth} | >{\centering\arraybackslash}p{0.05\textwidth} |  >{\centering\arraybackslash}p{0.05\textwidth}}
 \hline
  Group element & $e$ & $g$ \\ 
  \hline
 $e$ & $e$ & $g$ \\ 
 $g$ & $g$ & $e$ \\ 
 \hline
\end{tabular}
\end{center}

\noindent where $e$ denotes the identity element. The $\mathbb{Z}_2$ group has a simple one-qubit unitary representation $\{e \rightarrow \mathbb{I}, g \rightarrow Z\}$. Since $\mathbb{Z}_2$ has two elements, the $\ket{+}_C$ state is a uniform superposition of two elements. Thus, we use one qubit and the Hadamard gate to generate the necessary state:
\begin{equation}
    H\ket{0} = \frac{1}{\sqrt{2}}\left(\ket{0} + \ket{1}\right).
\end{equation}
The control register states need to be mapped to group elements. We employ the mapping $\{\ket{0} \rightarrow e, \ket{1} \rightarrow g\}$ for our circuit constructions.

\subsubsection{\texorpdfstring{$G$}{G}-Bose symmetry}
We begin with a test for Bose symmetry, which in this case is a test whether the state is equal to $|0\rangle\!\langle 0|$, because the group projector $\Pi^{\mathbb{Z}_2}_S = (\mathbb{I}+Z)/2 = |0\rangle\!\langle 0|$. Calculation by hand or classical computation can easily verify whether a state is Bose symmetric with respect to $\mathbb{I}$ and $Z$. Additionally, this simple gate set can be easily implemented on existing quantum computers. 

Figure~\ref{fig:Z2_Circuits}a) shows the circuit that tests for this $G$-Bose symmetry. Table~\ref{tab:Z2_GBS} shows the results for various input states. The true fidelity value is calculated using \eqref{eq:acc-prob-bose-test}, where $\Pi^G_S$ is defined in \eqref{eq:group_proj_GBS}.

\begin{table}[h]
\centering
\vspace{.05in}
\begin{tabular}{
>{\centering\arraybackslash}p{0.08\textwidth} | >{\centering\arraybackslash}p{0.08\textwidth} | 
>{\centering\arraybackslash}p{0.08\textwidth} | 
>{\centering\arraybackslash}p{0.08\textwidth} 
}
\hline
\textrm{State} & 
\textrm{True Fidelity} &
\textrm{Noiseless} &
\textrm{Noisy}\\
\hline\hline 
$\outerproj{0}$ & 1   & 1.0 & 0.9998 \\
$\outerproj{1}$ & 0   & 0.0 & 0.0013\\ 
$\outerproj{+}$ & 0.5 & 0.5 & 0.5002\\ 
$\mathbb{I}/2 $ & 0.5 & 0.5 & 0.5092\\ 
\hline
\end{tabular}
\caption{Results of $Z_2$-Bose symmetry tests.}
\label{tab:Z2_GBS}
\end{table}

\subsubsection{\texorpdfstring{$G$}{G}-symmetry}

We now consider a simple test for $G$-symmetry. As mentioned in Remark~\ref{rem:incoherence}, this is also a test for incoherence of the input state, i.e., to determine if it is diagonal in the computational basis. In the circuit depicted in Figure~\ref{fig:Z2_Circuits}b), a parameterized circuit substitutes the role of an all-powerful prover. 

A circuit that tests for $G$-symmetry is shown in Figure~\ref{fig:Z2_Circuits}b). As this circuit involves variational parameters, an example of the training process is shown in Figure~\ref{fig:Z2_GS_Training}. Table~\ref{tab:Z2_GS} shows the final results after training for various input states. The true fidelity is calculated using the semi-definite program given in \eqref{eq:SDP-rootfid-GS} and is used as a comparison point.

\begin{figure}
\begin{center}
\includegraphics[width=\linewidth]{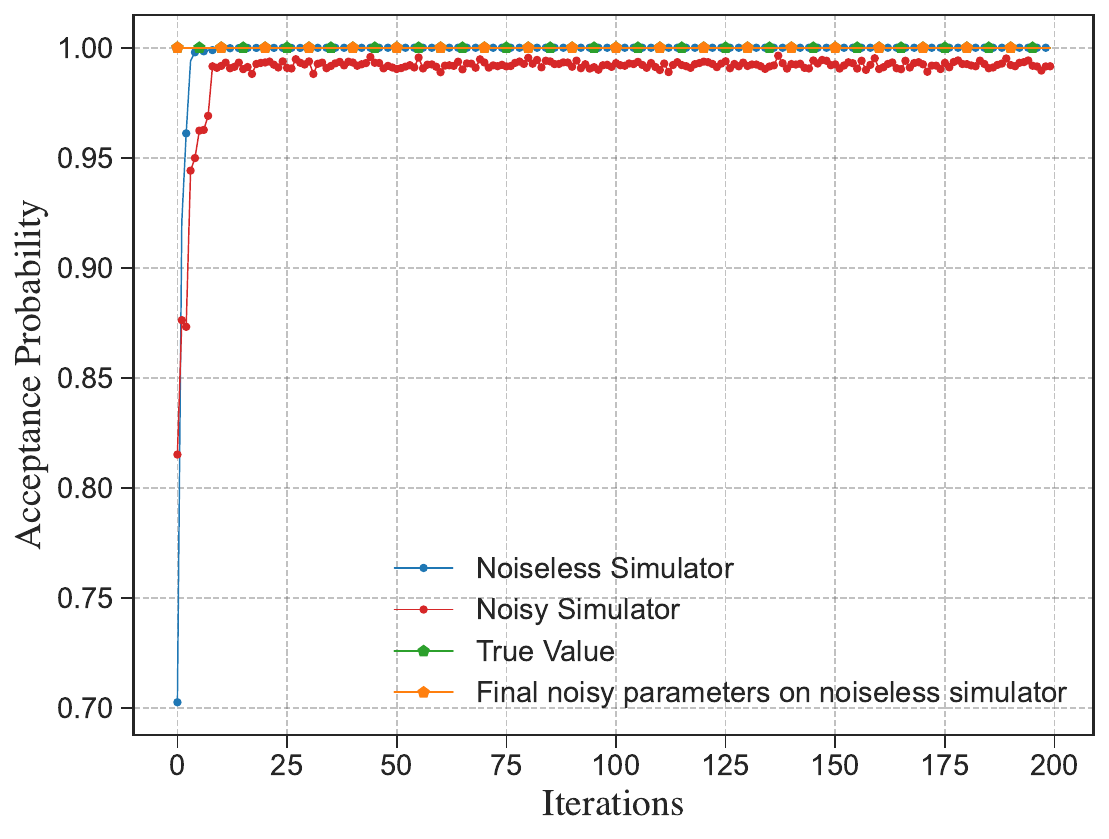}
\end{center}
\caption{Example of the training process for testing $\mathbb{Z}_2$-symmetry of $\rho = \mathbb{I}/2$. We see that the training exhibits a noise resilience.}
\label{fig:Z2_GS_Training}
\end{figure}

\begin{table}[h]
\vspace{.05in}
\centering
\begin{tabular}{
>{\centering\arraybackslash}p{0.05\textwidth} | >{\centering\arraybackslash}p{0.07\textwidth} | 
>{\centering\arraybackslash}p{0.08\textwidth} | 
>{\centering\arraybackslash}p{0.06\textwidth} |
>{\centering\arraybackslash}p{0.08\textwidth}}
\hline
\textrm{State} & 
\textrm{True Fidelity} &
\textrm{Noiseless} &
\textrm{Noisy} & 
\textrm{Noise Resilient}
\\
\hline\hline 
$\outerproj{0}$ & 1   & 0.9999 & 0.9987 & 0.9999 \\
$\outerproj{1}$ & 1   & 1.0 & 1.0 & 0.9999\\ 
$\outerproj{+}$ & 0.5 & 0.5 & 0.5087 & 0.5\\ 
$\mathbb{I}/2 $ & 1 & 0.9999 & 0.9932 & 0.9999\\ 
\hline
\end{tabular}
\caption{Results of $Z_2$-symmetry tests.}
\label{tab:Z2_GS}
\end{table}

\subsection{Triangular dihedral group \texorpdfstring{$D_3$}{D3}}

\subsubsection{\texorpdfstring{$G$}{G}-Bose symmetry}

Throughout Section~\ref{sec:tests-o-sym}, we have used the dihedral group of the equilateral triangle, abbreviated as $D_3$, as an example, and we continue to do so now. As a reminder, this group is generated by a flip of order two and a rotation of order three (denoted respectively by $f$ and $r$). Then the group is specified as $D_3 = \{e,f,r,r^2,fr,fr^2\}$ where $e$ is the identity element. General dihedral groups have previously been studied as non-abelian groups for which a quantum  algorithm to find a hidden subgroup is available \cite{kuperberg2005subexponential}. 

In the introduction of Section~\ref{sec:tests-o-sym}, we provided a faithful, projective unitary representation of this group given by letting $U(f)=\textrm{CNOT}$, $U(r)= \textrm{CNOT} \cdot \textrm{SWAP}$, and $U(e)=\mathbb{I}_4$. Figure~\ref{fig:Dihedral_GBS_Circuit} shows the circuit needed to test for $G$-Bose symmetry. Note that we do not generate the control register using a quantum Fourier transform; as the resultant control state is still equivalent to $\ket{+}_C = \frac{1}{\sqrt{6}} \sum_{g \in D_3} \ket{g}$, this simplification suffices for our calculations. Table~\ref{tab:Dihedral_GBS} shows the results for various input states. The true fidelity value is calculated using \eqref{eq:acc-prob-bose-test}, where $\Pi^G_S$ is defined in~\eqref{eq:group_proj_GBS}.

\begin{table}[h]
\centering
\vspace{.05in}
\begin{tabular}{
>{\centering\arraybackslash}p{0.07\textwidth} | >{\centering\arraybackslash}p{0.08\textwidth} | 
>{\centering\arraybackslash}p{0.08\textwidth} | 
>{\centering\arraybackslash}p{0.08\textwidth}}
\hline
\textrm{State} & 
\textrm{True Fidelity} &
\textrm{Noiseless} &
\textrm{Noisy}\\
\hline\hline 
$\outerproj{00}$ & 1 & 1.0000 & 0.9998 \\
$\rho$ & 1 & 0.9999 & 0.8756\\ 
$\Phi^+$ & 0.6666 & 0.6666 & 0.5864\\
$\pi^{\otimes 2}$ & 0.5 & 0.5000 & 0.4716\\
\hline
\end{tabular}
\caption{Results of $D_3$-Bose symmetry tests. The state $\rho$ is defined as $\outerproj{\psi}$ where $\ket{\psi} = \frac{1}{\sqrt{3}}(\ket{01}+\ket{10}+\ket{11})$.}
\label{tab:Dihedral_GBS}
\end{table}

\subsubsection{\texorpdfstring{$G$}{G}-symmetry}

As with $\mathbb{Z}_2$, moving to $G$-symmetry requires the addition of a prover. This alteration was already depicted in Figure~\ref{fig:Dihedral_GS_Circuit}. The prover is replaced for practical purposes with a parameterized circuit involving variational parameters, and the training process is shown in Figure~\ref{fig:Dihedral_GS_Training}. Table~\ref{tab:Dihedral_GS} shows the final results after training for various input states. The true fidelity is calculated using the semi-definite program given in~\eqref{eq:SDP-rootfid-GS}.

\begin{figure}
\begin{center}
\includegraphics[width=\linewidth]{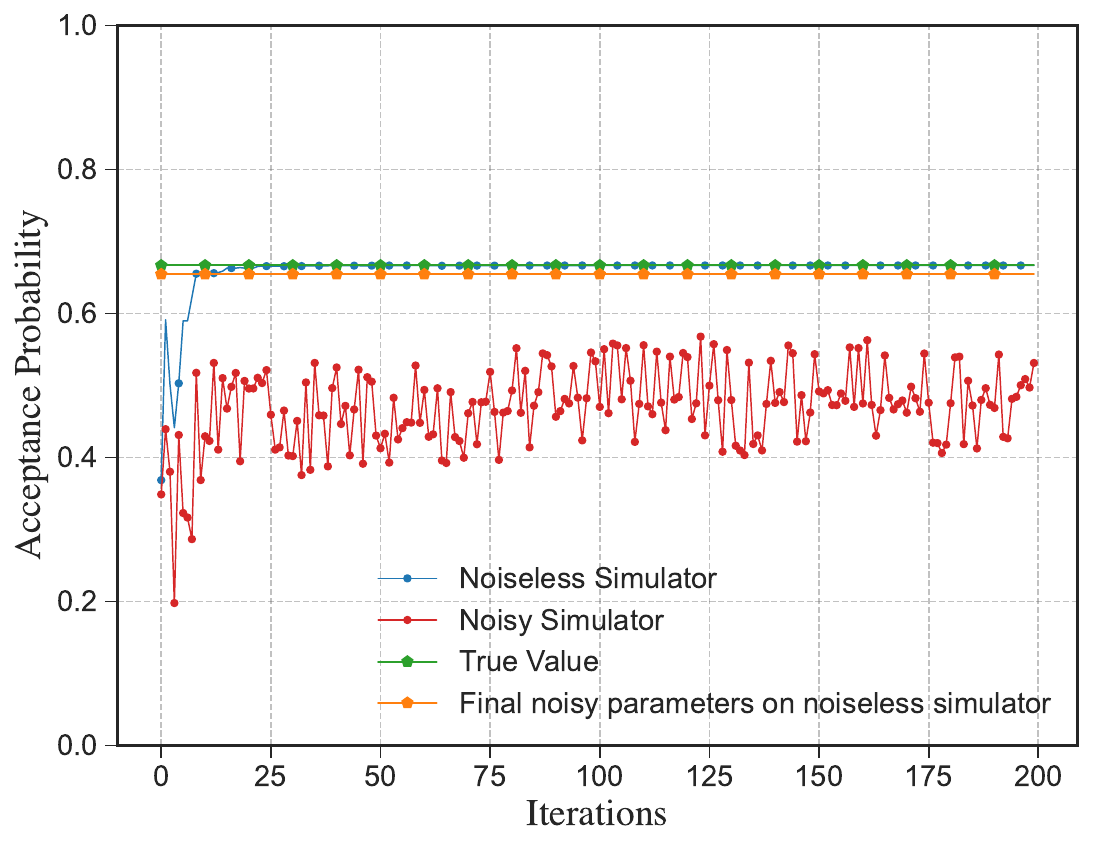}
\end{center}
\caption{Example of the training process for testing $D_3$-symmetry of $\Phi^+$. We see that the training exhibits a noise resilience.}
\label{fig:Dihedral_GS_Training}
\end{figure}

\begin{table}[h]
\centering
\vspace{.05in}
\begin{tabular}{
>{\centering\arraybackslash}p{0.06\textwidth} | >{\centering\arraybackslash}p{0.07\textwidth} | 
>{\centering\arraybackslash}p{0.07\textwidth} | 
>{\centering\arraybackslash}p{0.07\textwidth} |
>{\centering\arraybackslash}p{0.08\textwidth}}
\hline
\textrm{State} & 
\textrm{True Fidelity} &
\textrm{Noiseless} &
\textrm{Noisy} & 
\textrm{Noise Resilient}
\\
\hline\hline 
$\outerproj{00}$ & 1.0000 & 0.9999 & 0.9987 & 0.9999 \\
$\rho$ & 1.0000 & 0.9999 & 0.6564 & 0.9425 \\ 
$\Phi^+$ & 0.6666 & 0.6666 & 0.5330 & 0.6415 \\
$\pi^{\otimes 2} $ & 1.0000 & 0.9989 & 0.5189 & 0.8712\\ 
\hline
\end{tabular}
\caption{Results of $D_3$-symmetry tests. The state $\rho$ is defined as $\outerproj{\psi}$ where $\ket{\psi} = \frac{1}{\sqrt{3}}(\ket{01}+\ket{10}+\ket{11})$.}
\label{tab:Dihedral_GS}
\end{table}

\subsubsection{\texorpdfstring{$G$}{G}-Bose symmetric extendibility}

\begin{figure}
\begin{center}
\includegraphics[width=\linewidth]{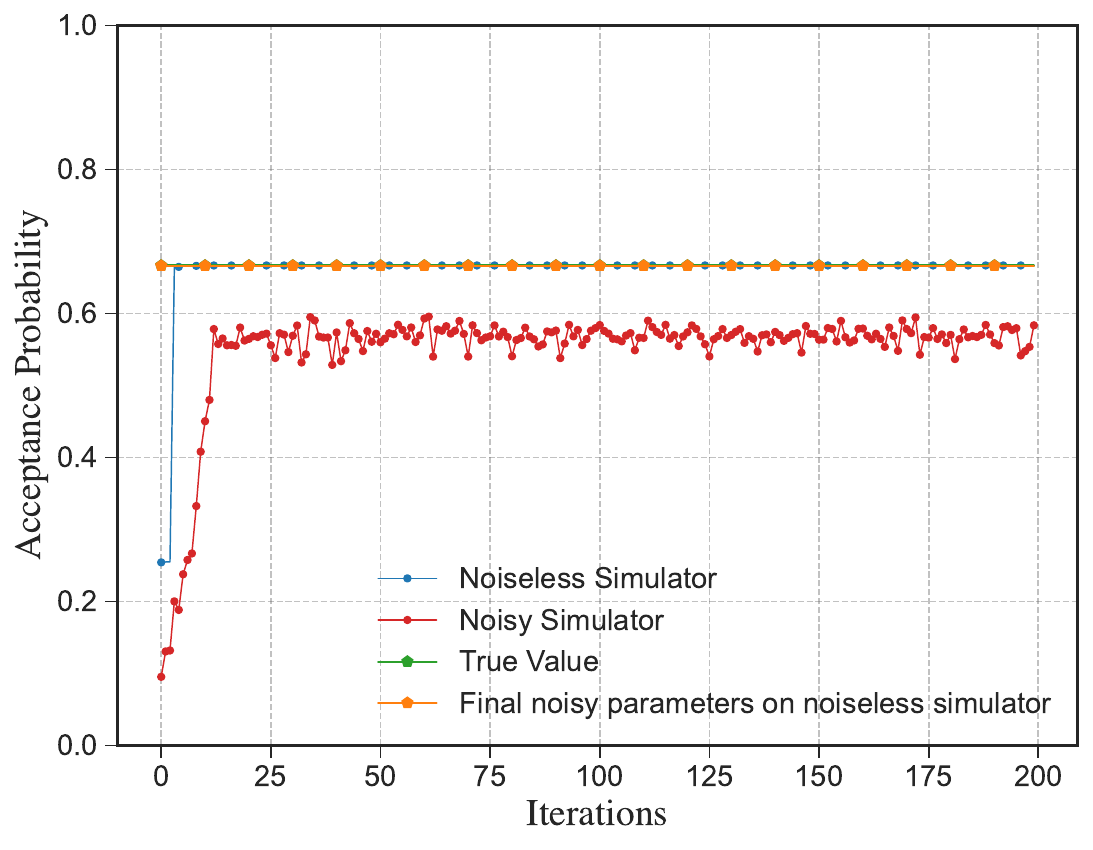}
\end{center}
\caption{Example of the training process for testing $D_3$-Bose symmetric extendibility of $\outerproj{1}$. We see that the training exhibits a noise resilience.}
\label{fig:Dihedral_GBSE_Training}
\end{figure}

A circuit that tests for $G$-Bose symmetric extendibility was originally shown in Figure~\ref{fig:Dihedral_GBSE_Circuit} as the example circuit construction. Now, we show how that construction behaves under a parameterized circuit substitution of the prover. Again, we give an example of the training behavior of the algorithm in Figure~\ref{fig:Dihedral_GBSE_Training}. We also provide Table~\ref{tab:Dihedral_GBSE}, which shows the final results after training for various input states. The true fidelity is calculated using the semi-definite program given in \eqref{eq:SDP-rootfid-GBSE}.

\begin{table}
\centering
\vspace{.05in}
\begin{tabular}{
>{\centering\arraybackslash}p{0.06\textwidth} | >{\centering\arraybackslash}p{0.07\textwidth} | 
>{\centering\arraybackslash}p{0.07\textwidth} | 
>{\centering\arraybackslash}p{0.05\textwidth} |
>{\centering\arraybackslash}p{0.08\textwidth}}
\hline
\textrm{State} & 
\textrm{True Fidelity} &
\textrm{Noiseless} &
\textrm{Noisy} & 
\textrm{Noise Resilient}
\\
\hline\hline 
$\outerproj{0}$ & 1.0000 & 1.0000 & 0.8758 & 0.9988 \\
$\outerproj{1} $ & 0.6670 & 0.6667 & 0.5834 & 0.6663 \\
$\pi$ & 1.0000 & 1.0000 & 0.8255 & 0.9995 \\
$\begin{bmatrix} \frac{1}{3} & \frac{1}{3}\\ \frac{1}{3} & \frac{2}{3} \end{bmatrix}$ & 1.0000 & 0.9999 & 0.6564 & 0.9425 \vspace{0.35cm}\\
\hline
\end{tabular}
\caption{Results of $D_3$-Bose symmetric extendibility tests.}
\label{tab:Dihedral_GBSE}
\end{table}

\subsubsection{\texorpdfstring{$G$}{G}-symmetric extendibility}

Finally, we address the circuit in Figure~\ref{fig:Dihedral_GSE_Circuit}, which gives a test for $G$-symmetric extendibility. This final circuit has the prover performing two actions at once---both finding the correct purification as in the case of $G$-symmetry and creating the correct extension as in $G$-Bose symmetric extendibility tests. Once again, the prover is replaced with a parameterized circuit, and an example of the training process is shown in Figure~\ref{fig:Dihedral_GSE_Training}. Table~\ref{tab:Dihedral_GSE} shows the final results after training for various input states. The true fidelity is calculated using the semi-definite program given in \eqref{eq:SDP-rootfid-GSE}.

\begin{figure}
\begin{center}
\includegraphics[width=\linewidth]{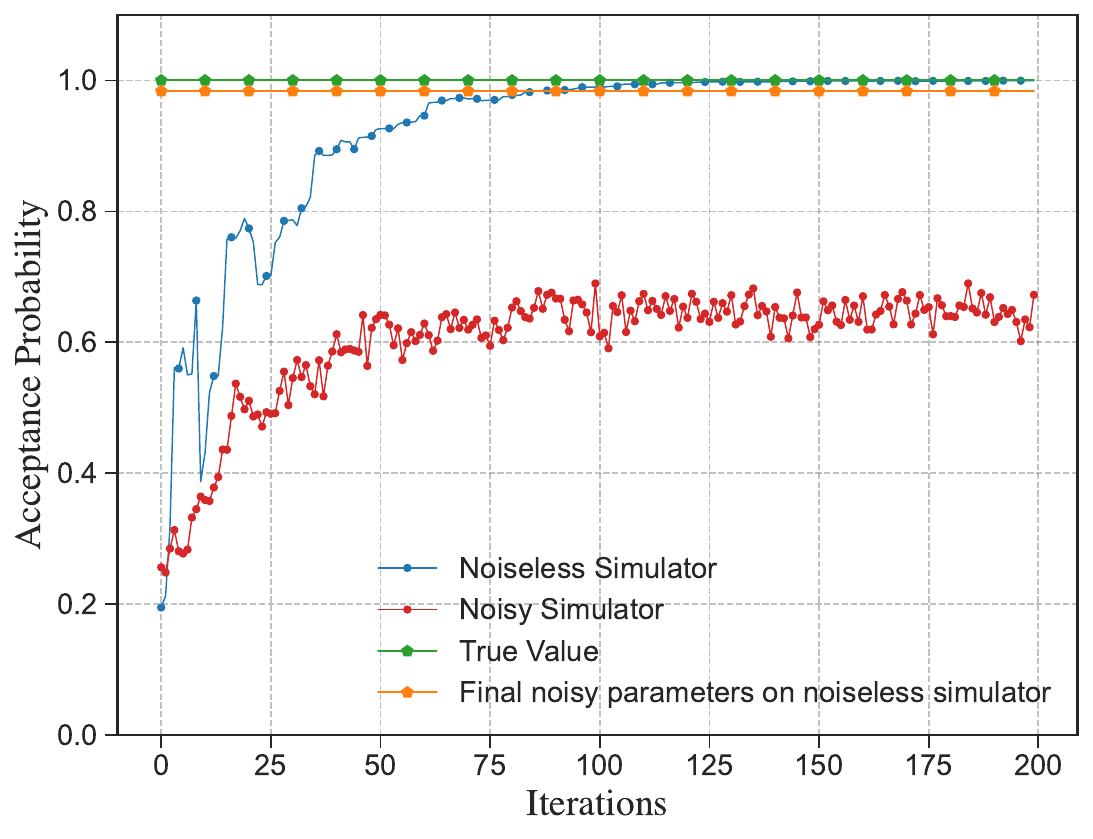}
\end{center}
\caption{Example of the training process for testing $D_3$-symmetric extendibility of $\outerproj{0}$. We see that the training exhibits a noise resilience.}
\label{fig:Dihedral_GSE_Training}
\end{figure}

\begin{table}
\centering
\vspace{.05in}
\begin{tabular}{
>{\centering\arraybackslash}p{0.04\textwidth} | >{\centering\arraybackslash}p{0.07\textwidth} | 
>{\centering\arraybackslash}p{0.08\textwidth} | 
>{\centering\arraybackslash}p{0.07\textwidth} |
>{\centering\arraybackslash}p{0.08\textwidth}}
\hline
\textrm{State} & 
\textrm{True Fidelity} &
\textrm{Noiseless} &
\textrm{Noisy} & 
\textrm{Noise Resilient}
\\
\hline\hline 
$\outerproj{0}$ & 1.0000 & 0.9998 & 0.6725 & 0.9835 \\
$\outerproj{1}$ & 0.6666 & 0.6641 & 0.4476 & 0.6497 \\
$\pi$ & 1.0000 & 0.9988 & 0.6901 & 0.9764 \\
$\rho$ & 0.9714 & 0.9662 & 0.5593 & 0.8789 \\
\hline
\end{tabular}
\caption{Results of $D_3$-symmetric extendibility tests. The state $\rho$ is defined as $\begin{bmatrix} 0.5 & -0.354i\\ 0.354i & 0.5 \end{bmatrix}$.}
\label{tab:Dihedral_GSE}
\end{table}

\subsection{Collective \texorpdfstring{$U$}{U} group}

\label{sec:collective-U}

Given an $n$-qudit state $\rho$, we wish to test if it is symmetric with respect to the following group:
\begin{equation}
    G_U \coloneqq \{U^{\otimes n}\}_{U \in \operatorname{SU}(d)}.
\end{equation}
This is an example of a continuous group symmetry; however, we will be able to draw upon the particular properties of this projector to realize each symmetry test nonetheless.

\subsubsection{\texorpdfstring{$G$}{G}-Bose symmetry}

A state that is $G_U$-Bose symmetric satisfies the condition given in \eqref{eq:Bose-symmetric-equiv-cond}, where 
\begin{equation}
    \Pi_U^{(n)} \coloneqq \int dU \ U^{\otimes n} ,
\end{equation}
with $dU$ being the Haar measure for the group $\operatorname{SU}(d)$.

In what follows, we focus on two-qubit states. A simple calculation shows that for $n=2$ and $d=2$, the singlet state $\ket{\Psi^-} \coloneqq \frac{1}{\sqrt{2}}\left(\ket{01} - \ket{10}\right)$, is the only $G_U$-Bose symmetric state. In other words, 
\begin{equation}
    \Pi^{(2)}_U = \outerproj{\Psi^-}.
\end{equation}
Thus, testing for $G_U$-Bose symmetry is equivalent to testing if the state is the singlet state.

To test a symmetry of this form, we rewrite the projector in terms of a set  $\{U_i\}_{i=1}^N$ of unitaries satisfying
\begin{equation}
    \Pi^{(2)}_U = \frac{1}{N}\sum\limits_{i=1}^N U_i.
    \label{eq:Bose-proj-2-design}
\end{equation}
While there exist multiple choices for the set  $\{U_i\}_{i=1}^N$, we pick a set that is compatible with all of the symmetry tests that we perform in the forthcoming subsections. Our choice $\{U_i\}_{i=1}^N$ is given in \cite[Appendix~A]{bennett96mixed} and is composed of products of bilateral rotations $B_{x}$, $B_{y}$, and $B_{z}$, where
\begin{equation}
    B_a \coloneqq R_a(-\pi/2) \otimes R_a(-\pi/2)\, ,
\end{equation}
and $R_a$ is the following rotation gate about the $a$ axis:
\begin{align}
    R_a(\theta) & \coloneqq e^{-i \theta \sigma_a /2} \\
    & = \cos(\theta/2)\mathbb{I} - i\sin(\theta/2)\sigma_a.
\end{align}
(Note the different convention that we take here, as compared to \cite{bennett96mixed}, when defining bilateral rotations.)
Specifically, the set $\{U_i\}_i$ is given by
\begin{multline}
\label{eq:CU-2-design}
    \{U_i\}_i = \{\mathbb{I},\ B_xB_x,\ B_yB_y,\ B_zB_z,\ B_xB_y,\ B_yB_z,\\  B_zB_x,\    B_yB_x,\ B_xB_yB_xB_y,\ \\
    B_yB_zB_yB_z, \ B_zB_xB_zB_x,\ B_yB_xB_yB_x \}.
\end{multline}

The set $\{U_i\}_i$ forms a group isomorphic to the alternating group $A_4$, which is defined as the set of even permutations on four objects. Furthermore, $A_4$ can be written as a product of a Klein group on four objects $K_4 = \{e, a=(12)(34), b=(13)(24), c=(14)(23)\}$ and the cyclic group $C_3 = \{e, g=(123), h=(132)\}$. In other words,
\begin{equation}
    A_4 = K_4 \times C_3.
\end{equation}

The Klein group $K_4$ can be mapped as $\{e \rightarrow \mathbb{I}, a \rightarrow B_xB_x, b \rightarrow B_yB_y, c \rightarrow B_zB_z\}$. Similarly, the cyclic group can be mapped as $\{e \rightarrow \mathbb{I}, g \rightarrow B_xB_y, h \rightarrow B_yB_x\}$. We use this to design our control register and corresponding mapping there. Since we have 12 elements, the $\ket{+}_C$ state is a uniform superposition of 12 elements. However, the aforementioned decomposition allows us to split the control register into two sets, one controlling the $K_4$ group and another controlling the $C_3$ group. We use a unary encoding for both subgroups, leading to a five-qubit control register. The specific mapping and group assignment are as follows: 
\begin{center}
\begin{tabular}{
>{\centering\arraybackslash}p{0.12\textwidth} | >{\centering\arraybackslash}p{0.12\textwidth} | 
>{\centering\arraybackslash}p{0.12\textwidth}}
 \hline
  Control State & Group Element & Unitary Representation \\ 
  \hline
 00 000 &  $e$ & $\mathbb{I}$ \\ 
 00 100 &  $c$ & $B_zB_z$ \\ 
 00 010 &  $b$ & $B_yB_y$ \\ 
 00 001 &  $a$ & $B_xB_x$ \\
 
 01 000 &  $g$ & $B_xB_y$ \\ 
 01 100 & $gc$ & $B_yB_z$ \\ 
 01 010 & $gb$ & $B_zB_x$ \\ 
 01 001 & $ga$ & $B_yB_xB_yB_x$ \\
 
 10 000 &  $h$ & $B_yB_x$ \\ 
 10 100 & $hc$ & $B_yB_zB_yB_z$ \\ 
 10 010 & $hb$ & $B_xB_yB_xB_y$ \\ 
 10 001 & $ha$ & $B_zB_xB_zB_x$ \\
 \hline
\end{tabular}
\end{center}

To generate an equal superposition of the 12 basis elements, we use the unitary $U_W$ depicted in Figure~\ref{fig:CollectiveU_Superposition}. With this construction settled, we can now test for symmetry with respect to this collective $U$ group.

\begin{figure}
\begin{center}
\includegraphics[width=0.65\linewidth]{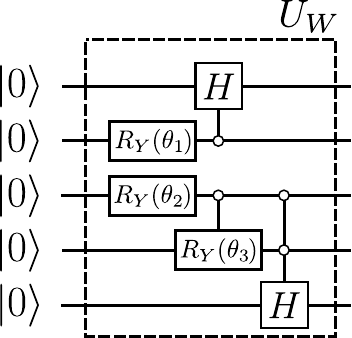}
\end{center}
\caption{Unitary $U_W$, with $\theta_1 = \theta_3 = 2 \arctan \left(\frac{1}{\sqrt{2}}\right)$ and $\theta_2 = \pi/3$, generates the equal superposition of 12 elements given. The circuit acting on the top two qubits generates the state $(\ket{00}+\ket{01}+\ket{10})/\sqrt{3}$, and the circuit acting on the bottom three qubits generates the state $(\ket{000}+\ket{001}+\ket{010}+\ket{100})/\sqrt{4}$.}
\label{fig:CollectiveU_Superposition}
\end{figure}

\begin{figure*}[t!]
\begin{center}
\includegraphics[width=\linewidth]{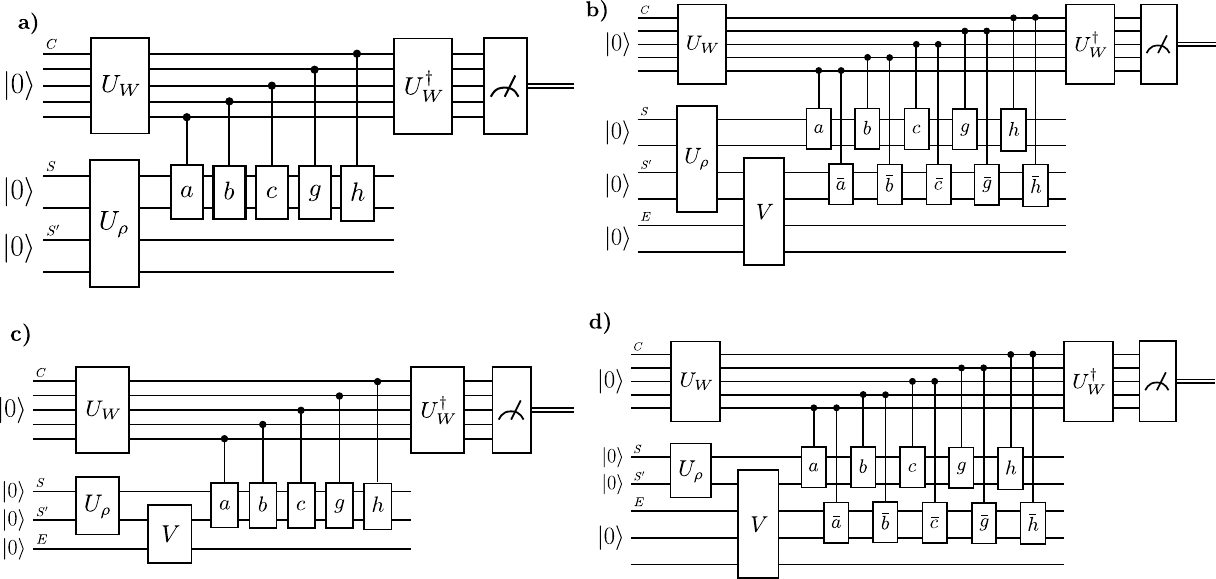}
\end{center}
\caption{Symmetry tests for the collective-$U$ group: a) $G$-Bose symmetry, b) $G$-symmetry, c) $G$-Bose symmetric extendible, and d) $G$-symmetric extendible. }
\label{fig:CU_Circuits}
\end{figure*}

Figure~\ref{fig:CU_Circuits}a) depicts the circuit that tests for $G$-Bose symmetry. Table~\ref{tab:CollectiveU_GBS} shows the results for various input states. The true fidelity value is calculated using \eqref{eq:acc-prob-bose-test}, where $\Pi^G_S$ is defined in \eqref{eq:group_proj_GBS}.

\begin{table}[h]
\centering
\vspace{.05in}
\begin{tabular}{
>{\centering\arraybackslash}p{0.08\textwidth} | >{\centering\arraybackslash}p{0.08\textwidth} | 
>{\centering\arraybackslash}p{0.08\textwidth} | 
>{\centering\arraybackslash}p{0.08\textwidth} 
}
\hline
\textrm{State} & 
\textrm{True Fidelity} &
\textrm{Noiseless} &
\textrm{Noisy}\\
\hline\hline 
$\outerproj{00}$ & 0  & 0.0000 & 0.0459 \\
$\rho$ & 0.6667 & 0.6667 & 0.2661\\ 
$\Psi^+$ & 0 & 0.0000 & 0.0389\\ 
$\Psi^- $ & 1.0 & 1.0000 & 0.3517\\ 
\hline
\end{tabular}
\caption{Results of collective $U$-Bose symmetry tests. The state $\rho$ is defined as $\outerproj{\psi}$ where $\ket{\psi} = \frac{1}{\sqrt{3}}\left(\ket{00} - \ket{01} + \ket{10}\right)$.}
\label{tab:CollectiveU_GBS}
\end{table}

\subsubsection{\texorpdfstring{$G$}{G}-symmetry}

An $n$-qudit state $\rho$ that is $G_U$-symmetric satisfies the following condition:
\begin{equation}
    \rho = \int dU \ U^{\otimes n} \rho (U^\dagger)^{\otimes n} ,
\end{equation}
where $dU$ is the Haar measure for the group $\operatorname{SU}(d)$. States that satisfy this condition for $n=2$ are called Werner states \cite{Werner89}, i.e.,
\begin{equation}
    \rho = \int dU\ (U \otimes U)\,\rho\,(U \otimes U)^\dagger .
\end{equation}
As shown in \cite{bennett96mixed}, for $n=2$ and $d=2$, the continuum of rotations in the symmetry test can be replaced by a discrete sum (a two-design), as follows:
\begin{equation}
    \bar{\rho} = \frac{1}{N}\sum\limits_{i=1}^{N} U_i \rho U^\dagger_i,
\end{equation}
where $\{U_i\}_{i=1}^N$ is the set defined in \eqref{eq:CU-2-design}. A circuit that tests for $G$-symmetry is shown in Figure~\ref{fig:CU_Circuits}b). It involves variational parameters, and an example of the training process is shown in Figure~\ref{fig:CollectiveU_GS_Training}. Note that, as this construction requires many qubits, only noiseless simulations results could be obtained. These results may be easily extended as access to higher-qubit machines becomes more readily available, allowing for noisy simulations of more complex systems. Table~\ref{tab:CollectiveU_GS} shows the final results after training for various input states. The true fidelity is calculated using the semi-definite program given in \eqref{eq:SDP-rootfid-GS}. 

\begin{figure}
\begin{center}
\includegraphics[width=\linewidth]{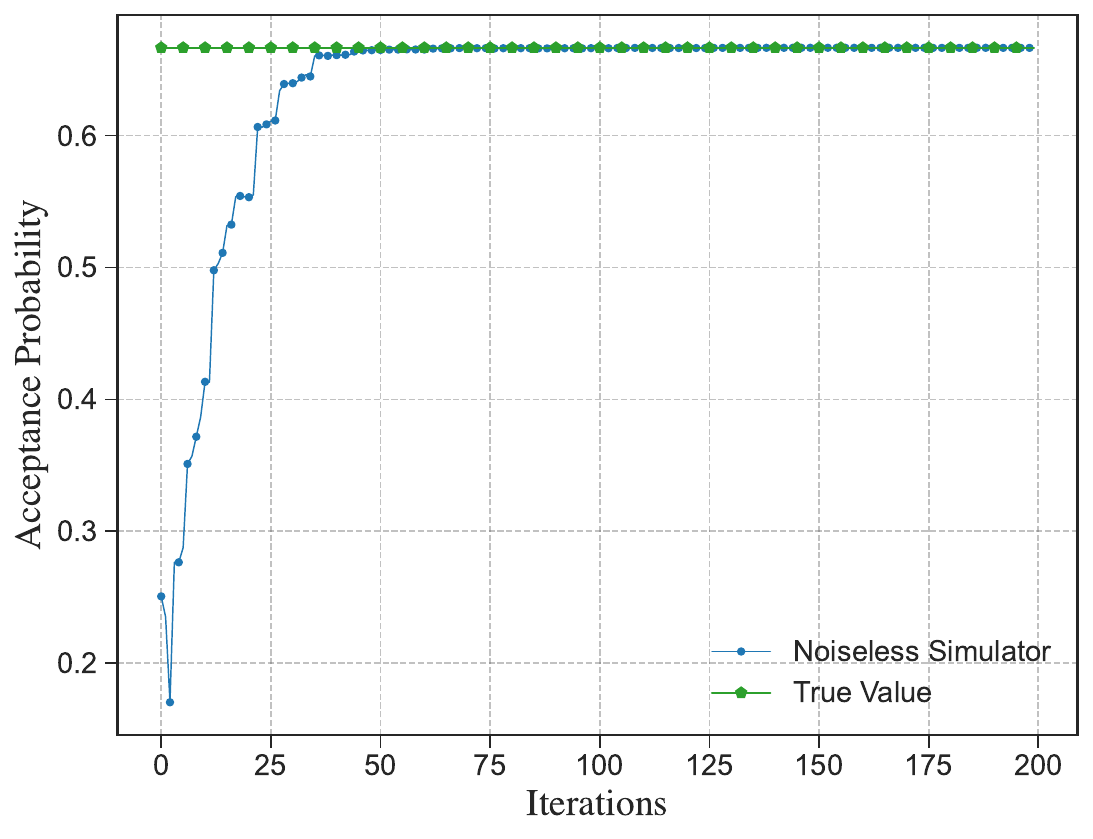}
\end{center}
\caption{Example of the training process for testing collective $U$-symmetry of $\rho = \outerproj{\psi}$ where $\ket{\psi} = \frac{1}{\sqrt{3}}(\ket{00} - \ket{01} + \ket{10})$.}
\label{fig:CollectiveU_GS_Training}
\end{figure}

\begin{table}[h]
\centering
\vspace{.05in}
\begin{tabular}{
>{\centering\arraybackslash}p{0.08\textwidth} | >{\centering\arraybackslash}p{0.08\textwidth} | 
>{\centering\arraybackslash}p{0.08\textwidth}}
\hline
\textrm{State} & 
\textrm{True Fidelity} &
\textrm{Noiseless}
\\
\hline\hline 
$\outerproj{10}$ & 0.5000 & 0.4997 \\
$\rho$ & 0.6667 & 0.6666\\
$\Psi^+$ & 0.3333 & 0.3332 \\
$\pi^{\otimes 2}$ & 1.0000 & 0.9988\\
\hline
\end{tabular}
\caption{Results of collective $U$-symmetry tests. The state $\rho$ is defined as $\outerproj{\psi}$ where $\ket{\psi} = \frac{1}{\sqrt{3}}(\ket{00} - \ket{01} + \ket{10})$.}
\label{tab:CollectiveU_GS}
\end{table}

We note here that the $G_U$-symmetry test would be unaffected by redefining the integral over all unitaries $U \in \operatorname{U}(2)$ without the restriction to $\operatorname{SU}(2)$. However, the projector for the $G_U$-Bose symmetry test would be as follows in that case:
\begin{equation}
    \Pi_U = \int_{U \in \operatorname{U}(2)} dU \ U \otimes U = 0, 
\end{equation}
making the test trivial. Thus, in the previous section, we chose to restrict the group to $\operatorname{SU}(2)$ unitaries.

\subsubsection{\texorpdfstring{$G$}{G}-Bose symmetric extendibility}

A circuit that tests for $G$-Bose symmetric extendibility is shown in Figure~\ref{fig:CU_Circuits}c). It involves variational parameters, and an example of the training process is shown in Figure~\ref{fig:CollectiveU_GBSE_Training}. Table~\ref{tab:CollectiveU_GBSE} shows the final results after training for various input states. The true fidelity is calculated using the semi-definite program given in \eqref{eq:SDP-rootfid-GBSE}.

\begin{figure}[h!]
\begin{center}
\includegraphics[width=\linewidth]{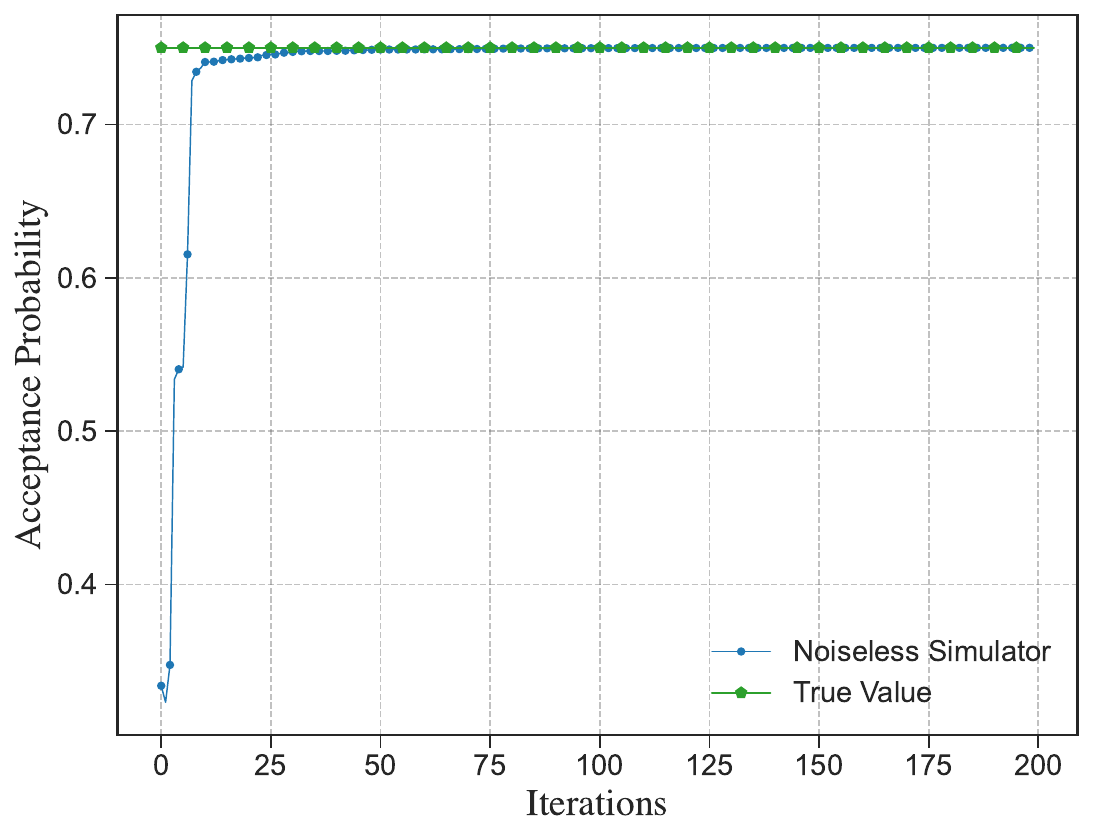}
\end{center}
\caption{Example of the training process for testing collective $U$-Bose symmetric extendibility of the state $\begin{bmatrix} 0.93 & 0\\ 0 & 0.07 \end{bmatrix}$.}
\label{fig:CollectiveU_GBSE_Training}
\end{figure}

\begin{table}[h]
\centering
\vspace{.05in}
\begin{tabular}{
>{\centering\arraybackslash}p{0.12\textwidth} | >{\centering\arraybackslash}p{0.08\textwidth} | 
>{\centering\arraybackslash}p{0.08\textwidth}}
\hline
\textrm{State} & 
\textrm{True Fidelity} &
\textrm{Noiseless}
\\
\hline\hline 
$\outerproj{1}$ & 0.5000 & 0.5000\\
$\pi$ & 1.0000 & 0.9998 \\ 
$\begin{bmatrix} 0.93 & 0\\ 0 & 0.07 \end{bmatrix}$ & 0.7500 & 0.7499 \vspace{0.35cm}\\
\hline
\end{tabular}
\caption{Results of collective $U$-BSE tests.}
\label{tab:CollectiveU_GBSE}
\end{table}

\subsubsection{\texorpdfstring{$G$}{G}-symmetric extendibility}

\begin{table}
\centering
\vspace{.05in}
\begin{tabular}{
>{\centering\arraybackslash}p{0.12\textwidth} | >{\centering\arraybackslash}p{0.12\textwidth} | 
>{\centering\arraybackslash}p{0.08\textwidth}}
\hline
\textrm{State} & 
\textrm{True Fidelity} &
\textrm{Noiseless}
\\
\hline\hline 
$\outerproj{0}$ & 0.5000 & 0.4995 \\
$\pi$ & 1.0000 & 0.9996 \\
$\begin{bmatrix} 0.95 & 0\\ 0 & 0.05 \end{bmatrix}$ & 0.7169 & 0.7095 \vspace{0.35cm}\\
\hline
\end{tabular}
\caption{Results of collective $U$-symmetric extendibility tests.}
\label{tab:CollectiveU_GSE}
\end{table}

A circuit that tests for $G$-symmetric extendibility is shown in Figure~\ref{fig:CU_Circuits}d). It involves variational parameters, and an example of the training process is shown in Figure~\ref{fig:CollectiveU_GSE_Training}. Table~\ref{tab:CollectiveU_GSE} shows the final results after training for various input states. The true fidelity is calculated using the semi-definite program given in \eqref{eq:SDP-rootfid-GSE}.

These group symmetry tests have applications in the identification and verification of Werner states, as discussed above. Current limitations include access to higher qubit machines, but also the noisiness of these machines. Our VQA results converge well in the noiseless case, but it is likely that noise will only become a bigger problem as the circuit size scales up, unless adequately addressed.

\begin{figure}
\begin{center}
\includegraphics[width=\linewidth]{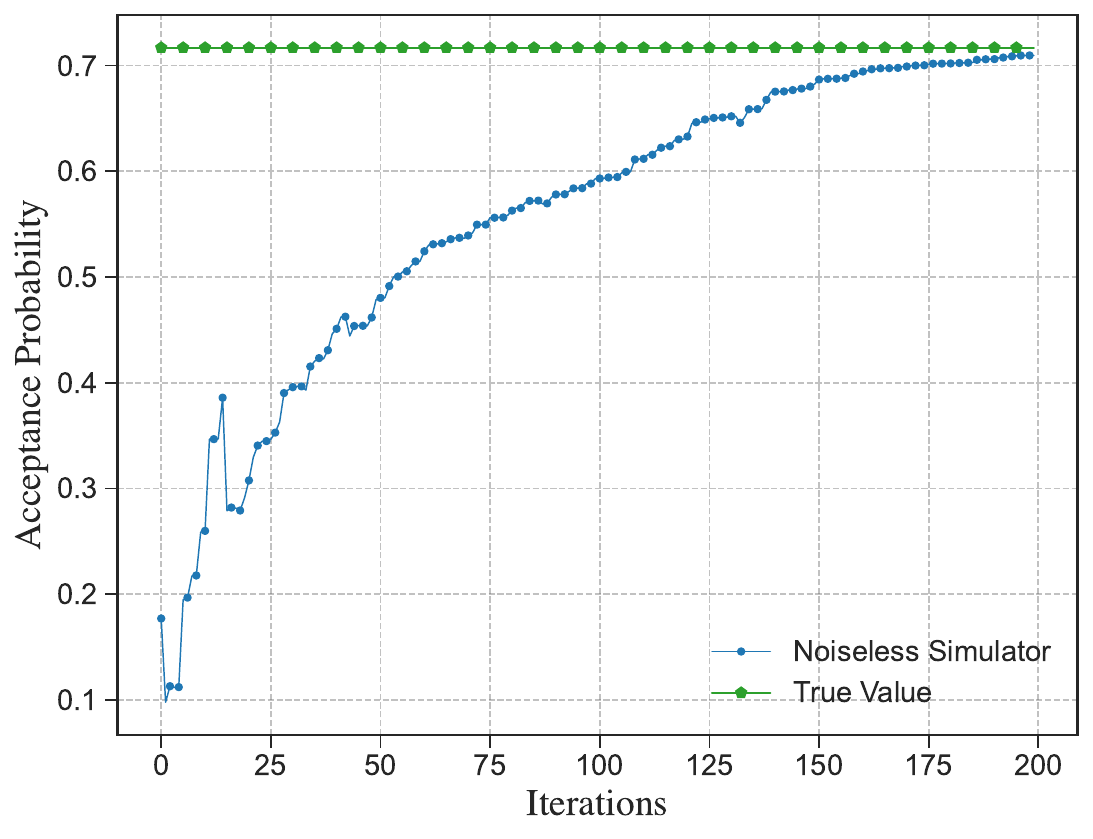}
\end{center}
\caption{Example of the training process for testing collective $U$-symmetric extendibility of the state $\begin{bmatrix} 0.95 & 0\\ 0 & 0.05 \end{bmatrix}$.}
\label{fig:CollectiveU_GSE_Training}
\end{figure}

\subsection{Collective phase group}

\label{sec:collective-z}

Given an $n$-qubit state $\rho$, we wish to test if the state is symmetric with respect to the following collective phase group:
\begin{equation}
    G_z \coloneqq \{R_z(\phi)^{\otimes n}\}_{\phi \in [0, 4\pi)},
\end{equation}
where we recall that $R_z(\phi) \coloneqq \exp(-i\phi\sigma_z/2)$.
The interval for $\phi$ is $[0, 4\pi)$ to ensure that $G_z$ is a group. This is a consequence of $\operatorname{SU}(2)$ double covering $\operatorname{SO}(3)$, implying that $R_z(4\pi) = \mathbb{I}$. Additionally, the Haar measure for the group of unitaries $\{R_z(\phi)\}_{\phi \in [0, 4\pi)}$ is given by
\begin{equation}
    dU = \frac{d\phi}{4\pi}.
\end{equation}

\subsubsection{\texorpdfstring{$G$}{G}-Bose symmetry}

A state that is $G_z$-Bose symmetric satisfies the condition given in \eqref{eq:Bose-symmetric-equiv-cond}, where 
\begin{equation}
    \Pi_z^{(n)} \coloneqq \frac{1}{4\pi}\intCZ R_z(\phi)^{\otimes n} \,d\phi . 
\end{equation}
Expressing $R_z(\phi)$ in the computational basis, 
\begin{equation}
    R_z(\phi) = \operatorname{Diag}\left\{\exp\!\left(-\frac{i\phi}{2} \right), \exp\!\left(\frac{i\phi}{2} \right)\right\}.
\end{equation}
Similarly, expressing $R_z(\phi)^{\otimes 2}$ in the computational basis, 
\begin{equation}
    R_z(\phi)^{\otimes 2} = \operatorname{Diag}\left\{\exp\left(-i\phi\right), 1, 1, \exp\left(i\phi\right)\right\}.
\end{equation}
Generalizing to the case of $n$ qubits, observe that the number of zeros in a bit-string $x$ is $n-H(x)$ and the number of ones is $H(x)$, where $H(x)$ is the Hamming weight of $x$. For example, $H(6) = 2$ since $6_{10} \equiv 110_2$. Each zero contributes a phase of $-\phi/2$ for a total of $-(n-H(x))\phi/2$, and each one contributes a phase of $\phi/2$,  for a total of $H(x) \phi / 2$. Then the overall total for the bit-string $x$ is 
\begin{equation}
-(n-H(x))\phi/2 + H(x) \phi / 2 = (2H(x) - n)\phi/2.   
\end{equation}
This implies that 
\begin{equation}
    R_z(\phi)^{\otimes n} = \operatorname{Diag}\left\{ \exp \left[ \left( \frac{2H(x)-n}{2} \right) i\phi \right]_{x=0}^{2^n-1} \right\},
\end{equation}
where $H(x)$ is the Hamming weight of $x$ written in binary.

Performing the integral, we note that for $a \in \mathbb{Z} \setminus \{ 0 \}$, 
\begin{equation}
    \intCZ \exp \!\left(\frac{a}{2} i\phi \right) \,d\phi = 0.
\end{equation}
Thus, only terms satisfying $H(x) = n/2$ survive the integral. Observe then that $\Pi_{z}^{(n)} = 0$ for all odd $n$. Thus, it follows that
\begin{equation}
\Pi_{z}^{(n)} =
    \begin{cases}
        P_{k} & \text{if } n = 2k\\
        0 & \text{otherwise},
    \end{cases}
\end{equation}
where $P_k$ is defined as the projector onto the subspace of computational basis elements with Hamming weight~$k$. As an example, for $n=2$, 
\begin{equation}
    \Pi_z^{(2)} = P_1 = \outerproj{01} + \outerproj{10}.
\end{equation}

To test a symmetry of this form, we rewrite the projector in terms of unitaries. We construct a set of unitaries $U_y$ such that 
\begin{equation}
    \Pi_z^{(n)} = \frac{1}{n+1} \sum_{y=0}^{n} U_y.
\end{equation}
We use a construction similar to the form given in  \cite[Eq.~(2.59)]{tomamichel2015quantum}. Define a unitary representation $\{U_y\}_{y=0}^{n}$ as
\begin{equation}
    \label{eq:CZ-U-rep}
    U_y \coloneqq  \sum\limits_{x=0}^n \exp \left[ \frac{\pi i}{n+1}\left(2y-n\right)\left(2x-n\right) \right] P_x.
\end{equation}
Observe that $U_y^\dagger U_y = \mathbb{I}$. Furthermore, we see that 
\begin{equation}
    \sum\limits_{y=0}^n U_y = \sum\limits_{x=0}^n \sum\limits_{y=0}^n \exp \left[ \frac{\pi i}{n+1}\left(2y-n\right)\left(2x-n\right) \right] P_x.
\end{equation}
Consider that for integer $c \neq 0$,
\begin{align}
    & \sum\limits_{y=0}^n \exp \!\left( \frac{\pi i}{n+1} c \left(2y-n \right) \right) \nonumber \\
    &= \exp \!\left( \frac{-\pi i cn}{n+1} \right)\frac{1-\exp(2\pi i c)}{1 - \exp\left( \frac{2\pi i c}{n+1} \right)}   \\
    &= 0.
\label{eq:CZ-geom-series}
\end{align}
Thus, only terms satisfying $2x = n$ survive the summation. Therefore,
\begin{align}
    \frac{1}{n+1} \sum\limits_{y=0}^n U_y &= \sum\limits_{x=0}^n \delta_{2x, n} P_x \\
    &= \begin{cases}
            P_{k} & \text{if } n = 2k\\
            0 & \text{otherwise}
    \end{cases}\\
    &= \Pi_z^{(n)}.
\end{align}

Thus, testing $G$-Bose symmetry with respect to $G_z = \{R_z(\phi)^{\otimes n}\}_{\phi \in [0, 4\pi)}$ is equivalent to testing $G$-Bose symmetry with respect to $\{U_y\}_{y=0}^n$. To summarize, testing if a $n$-qubit state is $G_z$-Bose symmetric is equivalent to testing if it belongs to the subspace of Hamming weight $n = 2k$. As an aside, we note that a generalization of our method allows for performing a projection onto constant-Hamming-weight subspaces, which is useful in tasks like entanglement concentration \cite{Wbook17}. See also \cite{KM01} for alternative circuit constructions for performing measurements of Hamming weight.

In what follows, we test the symmetry for an example, with $n=2$. 
From the definition, we see that 
\begin{align}
    U_0 &= \exp \!\left( -\frac{2\pi i}{3} \right) P_0 + P_1 + \exp \!\left( \frac{2\pi i}{3}\right) P_2, \\
    U_1 &= \mathbb{I} ,\\
    U_2 &= \exp \!\left( \frac{2\pi i}{3}\right) P_0 + P_1 + \exp \!\left(-\frac{2\pi i}{3}\right) P_2 \nonumber \\
    &= U_0^2 .
\end{align}
Thus, the set of unitaries forms a unitary representation of the cyclic group $C_3$. The group table can be seen in Appendix~\ref{app:cyclic-c-3}, where $\{\ket{00} \rightarrow U_1, \ket{01} \rightarrow U_0, \ket{11} \rightarrow U_2\}$.  Expanding terms, we see that 
\begin{equation}
\label{eq:CZ-U_0}
    U_0 = \left(R_z\!\left(\frac{2\pi}{3}\right)\right)^{\otimes 2}.
\end{equation}
Furthermore, since $U_2 = U^2_0$, 
\begin{equation}
    U_2 = \left(R_z\!\left(-\frac{2\pi}{3}\right)\right)^{\otimes 2}.
\end{equation}
Since we have three elements, the $\ket{+}_C$ state is a uniform superposition of three elements. We use two qubits and the unitary $U_3$ used to generate the following superposition, as shown in Figure~\ref{fig:U3_Superposition}:
\begin{equation}
    \label{eq:U3_Superposition}
    U_3\ket{00} = \frac{1}{\sqrt{3}} (\ket{00} + \ket{01} + \ket{11}).
\end{equation}

\begin{figure}
\begin{center}
\includegraphics[width=0.65\linewidth]{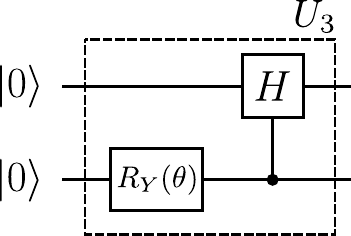}
\end{center}
\caption{Unitary $U_3$, with $\theta = 2 \arctan \left(\sqrt{2}\right)$, generates the equal superposition of three elements from \eqref{eq:U3_Superposition}.}
\label{fig:U3_Superposition}
\end{figure}

Figure~\ref{fig:CZ_Circuits}a) depicts the circuit that tests for $G$-Bose symmetry. Table~\ref{tab:CollectiveZ_GBS} shows the results for various input states. The true fidelity value is calculated using \eqref{eq:acc-prob-bose-test}, where $\Pi^G_S$ is defined in \eqref{eq:group_proj_GBS}.

\begin{table}[h]
\centering
\vspace{.05in}
\begin{tabular}{
>{\centering\arraybackslash}p{0.12\textwidth} | 
>{\centering\arraybackslash}p{0.08\textwidth} | 
>{\centering\arraybackslash}p{0.08\textwidth} | 
>{\centering\arraybackslash}p{0.08\textwidth} 
}
\hline
\textrm{State} & 
\textrm{True Fidelity} &
\textrm{Noiseless} &
\textrm{Noisy}\\
\hline\hline 
$\outerproj{00}$ & 0.0 & 0.0000 & 0.0220 \\
$\rho$ & 1.0 & 1.0000 & 0.9170\\
$\outerproj{0} \otimes \outerproj{+}$ & 0.5 & 0.5000 & 0.4877\\ 
$\pi^{\otimes 2} $ & 0.5 & 0.5000 & 0.4661\\ 
\hline
\end{tabular}
\caption{Results of collective-phase-Bose symmetry tests. The state $\rho$ is defined as $\outerproj{\psi}$ where $\ket{\psi} = \frac{1}{\sqrt{2}}(\ket{01} + \ket{10})$.}
\label{tab:CollectiveZ_GBS}
\end{table}

\begin{figure*}
\begin{center}
\includegraphics[width=\linewidth]{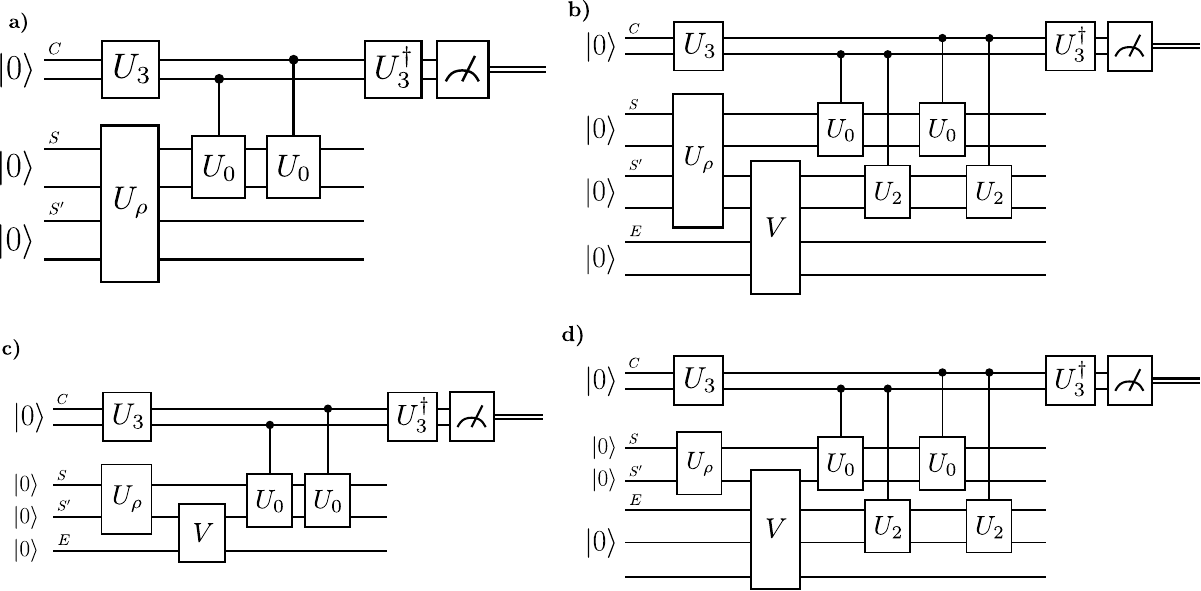}
\end{center}
\caption{Symmetry tests for the collective phase group: a) $G$-Bose symmetry, b) $G$-symmetry, c) $G$-Bose symmetric extendibility, and d) G-symmetric extendibility. The unitary $U_0$ is defined in \eqref{eq:CZ-U_0}. Note that $U_2 = U_0^\dagger$.}
\label{fig:CZ_Circuits}
\end{figure*}

\subsubsection{\texorpdfstring{$G$}{G}-symmetry}

A state that is $G_z$-symmetric satisfies the following condition:
\begin{equation}
    \rho = \mathcal{C}_z^{(n)}(\rho) ,
\end{equation}
where the collective dephasing channel $\mathcal{C}_z^{(n)}$ is defined as
\begin{equation}
    \mathcal{C}_z^{(n)}(\omega) \coloneqq \frac{1}{4\pi} \int_{0}^{4\pi} d\phi \ R_z(\phi)^{\otimes n} \omega R^\dagger_z(\phi)^{\otimes n}.
\end{equation}
Using the fact that
\begin{equation}
    R_z(\phi) = \operatorname{exp}(-i\phi \sigma_z / 2 ),
\end{equation}
we see that 
\begin{equation}
    R_z(\phi) \ket{a}\!\bra{b} R^\dagger_z(\phi) = e^{i\phi(a-b)} \ket{a}\!\bra{b}, 
\end{equation}
for $a, b \in \{0, 1\}$. Thus, for a general $n$-qubit state $\rho$, expanded in the computational basis as
\begin{equation}
    \rho = \sum_{x_1, \ldots, x_n, y_1, \ldots, y_n} \rho_{x_1, \ldots, x_n, y_1, \ldots, y_n} \ket{x_1\cdots x_n}\!\bra{y_1\cdots y_n},
\end{equation}
it follows that 
\begin{multline}
    \mathcal{C}_z^{(n)}(\rho) =  \sum_{x_1, \ldots, x_n, y_1, \ldots, y_n}
    \delta\!\left(\sum_i x_i , \sum_j y_j \right) \times \\ 
    \rho_{x_1,\ldots, x_n, y_1,\ldots, y_n} \ket{x_1\cdots x_n}\!\bra{y_1\cdots y_n}.
\end{multline}
Since $\sum_i x_i = H(x)$, it follows that
\begin{equation}
    \mathcal{C}_z^{(n)}(\rho) = \sum\limits_{k=0}^{n} P_k \rho P_k,
\end{equation}
where, as before, $P_k$ is the projector onto the subspace of Hamming weight $k$. For the case of $n=2$, we get the following projectors
\begin{align}
    P_0 &= \outerprod{00}{00} , \\
    P_1 &= \outerprod{01}{01} + \outerprod{10}{10} , \\
    P_2 &= \outerprod{11}{11}.
\end{align}

To test a symmetry of this form, we can rewrite the channel in terms of a set  $\{U_y\}_y$ of unitaries satisfying
\begin{equation}
    \mathcal{C}_z^{(n)}(\rho) = \frac{1}{n+1} \sum\limits_{y=0}^{n} U_y \rho U^\dagger_y.
\end{equation}
We now prove that the unitaries $\{U_y\}_{y=0}^n$ from \eqref{eq:CZ-U-rep} satisfy this condition:
\begin{align}
    &\frac{1}{n+1} \sum\limits_{y=0}^n U_y \rho U^\dagger_y \nonumber \\
    &= \frac{1}{n+1} \sum\limits_{\substack{x, x', \\ y=0}}^n \exp \!\left[ \frac{\pi i}{n+1}\left(2y-n\right)2 \left(x-x'\right) \right] P_x \rho P_{x'} \nonumber \\
    &= \frac{1}{n+1} \sum\limits_{x, x'=0}^n (n+1) \delta_{x, x'} P_x \rho P_{x'}  \\
    &= \sum\limits_{x=0}^{n} P_x \rho P_x,
\end{align}
where the third equality follows from the reasoning in~\eqref{eq:CZ-geom-series}.

Thus, similar to the $G$-Bose symmetry tests, testing $G$-symmetry with respect to $G_z = \{R_z(\phi)^{\otimes n}\}_{\phi \in [0, 4\pi)}$ is equivalent to testing $G$-symmetry with respect to $\{U_y\}_{y=0}^n$. To summarize, testing if an $n$-qubit state is $G_z$-symmetric is equivalent to testing if it belongs to a subspace of fixed Hamming weight. In this work, we test the symmetry for $n=2$. 

A circuit that tests for $G$-symmetry is shown in Figure~\ref{fig:CZ_Circuits}b). It involves variational parameters, and an example of the training process is shown in Figure~\ref{fig:CollectiveZ_GS_Training}. Table~\ref{tab:CollectiveZ_GS} shows the final results after training for various input states. The true fidelity is calculated using the semi-definite program given in \eqref{eq:SDP-rootfid-GS}.

\begin{figure}
\begin{center}
\includegraphics[width=\linewidth]{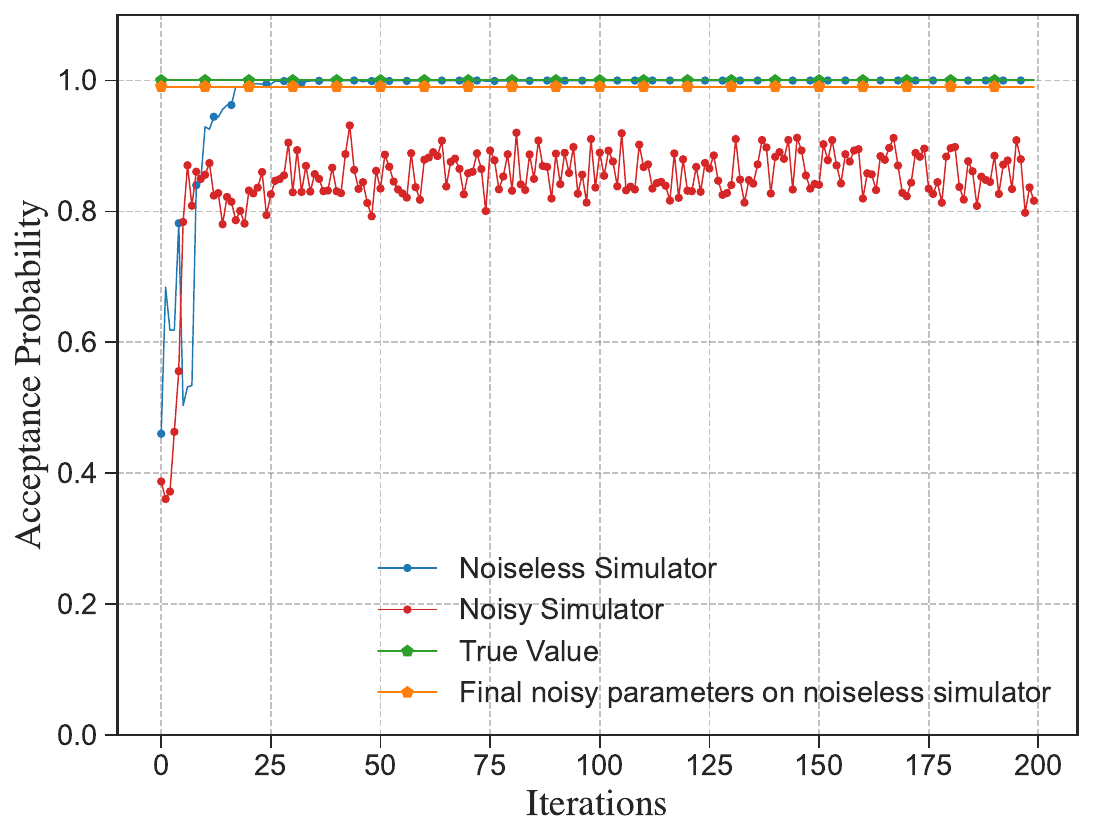}
\end{center}
\caption{Example of the training process for testing collective-phase-symmetry of $\rho = \outerproj{\Psi^+}$, where $\ket{\Psi^+} = \frac{1}{\sqrt{2}}(\ket{01} + \ket{10})$.}
\label{fig:CollectiveZ_GS_Training}
\end{figure}

\begin{table}[h]
\centering
\vspace{.05in}
\begin{tabular}{
>{\centering\arraybackslash}p{0.06\textwidth} | 
>{\centering\arraybackslash}p{0.07\textwidth} | 
>{\centering\arraybackslash}p{0.07\textwidth} | 
>{\centering\arraybackslash}p{0.06\textwidth} |
>{\centering\arraybackslash}p{0.08\textwidth}}
\hline
\textrm{State} & 
\textrm{True Fidelity} &
\textrm{Noiseless} &
\textrm{Noisy} & 
\textrm{Noise Resilient}
\\
\hline\hline 
$\outerproj{00}$ & 1.0000 & 0.9999 & 0.8380 & 0.9928 \\
$\rho$ & 1.0000 & 1.0000 & 0.8162 & 0.9906 \\
$\tau$ & 0.5001 & 0.5000 & 0.4630 & 0.4990 \\
$\pi^{\otimes 2}$ & 1.0000 & 0.9998 & 0.8417 & 0.9934 \\  
\hline
\end{tabular}
\caption{Results of collective-phase-symmetry tests. The state $\rho$ is defined as $\outerproj{\Psi^+}$ where $\ket{\Psi^+} = \frac{1}{\sqrt{2}}(\ket{01} + \ket{10})$. The state $\tau$ is defined as $\outerproj{\Phi^+}$ where $\ket{\Phi^+} = \frac{1}{\sqrt{2}}(\ket{00}) + \ket{11})$.}
\label{tab:CollectiveZ_GS}
\end{table}

\subsubsection{\texorpdfstring{$G$}{G}-Bose symmetric extendibility}

A circuit that tests for $G$-Bose symmetric extendibility is shown in Figure~\ref{fig:CZ_Circuits}c). It involves variational parameters, and an example of the training process is shown in Figure~\ref{fig:CollectiveZ_GBSE_Training}. Table~\ref{tab:CollectiveZ_GBSE} shows the final results after training for various input states. The true fidelity is calculated using the semi-definite program given in \eqref{eq:SDP-rootfid-GBSE}.

\begin{figure}
\begin{center}
\includegraphics[width=\linewidth]{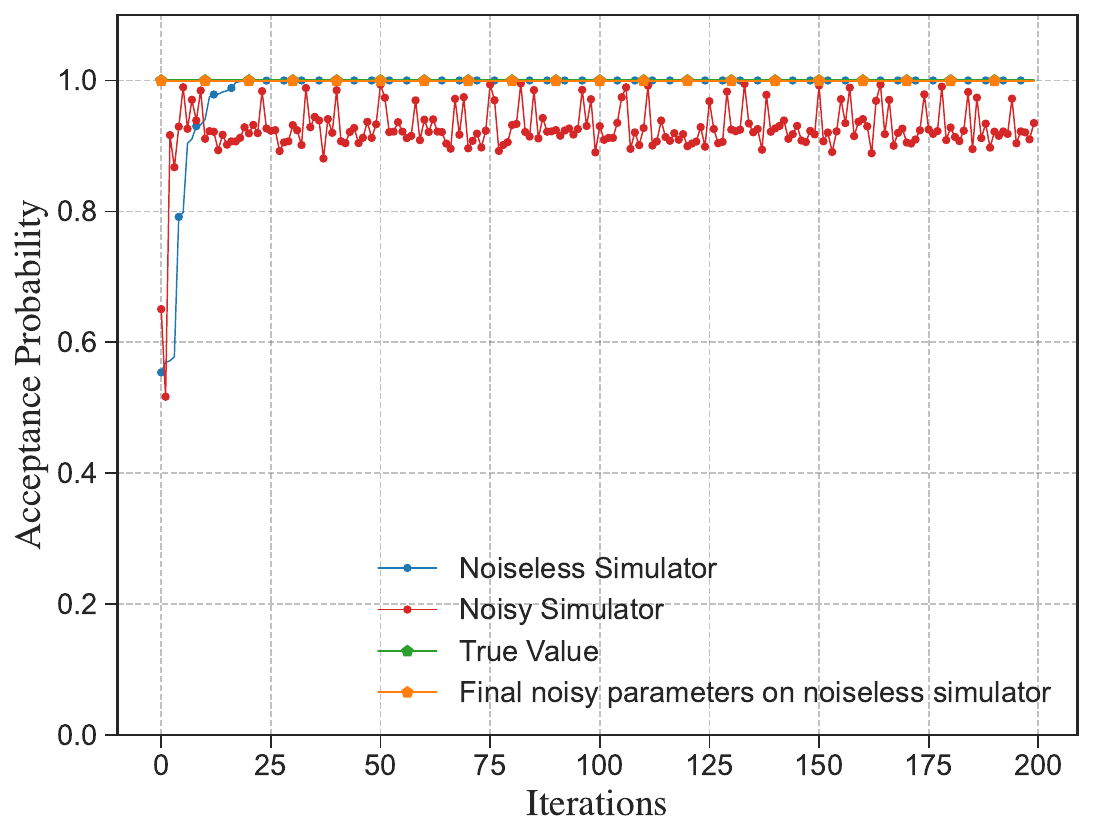}
\end{center}
\caption{Example of the training process for testing collective-phase-Bose symmetric extendibility of $\sfrac{3}{4}\outerproj{0} + \sfrac{1}{4} \outerproj{1}$. We see that the training exhibits a noise resilience.}
\label{fig:CollectiveZ_GBSE_Training}
\end{figure}

\begin{table}[h]
\centering
\vspace{.05in}
\begin{tabular}{
>{\centering\arraybackslash}p{0.05\textwidth} | 
>{\centering\arraybackslash}p{0.07\textwidth} | 
>{\centering\arraybackslash}p{0.07\textwidth} | 
>{\centering\arraybackslash}p{0.07\textwidth} |
>{\centering\arraybackslash}p{0.08\textwidth}}
\hline
\textrm{State} & 
\textrm{True Fidelity} &
\textrm{Noiseless} &
\textrm{Noisy} & 
\textrm{Noise Resilient}
\\
\hline\hline 
$\outerproj{0}$ & 1.0000 & 1.0000 & 0.9783 & 0.9980 \\
$\sigma$ & 1.0000 & 1.0000 & 0.9349 & 0.9993 \\
$\outerproj{-}$ & 0.5002 & 0.5000 & 0.4464 & 0.5000 \\
$\rho$ & 0.9330 & 0.9330 & 0.9208 & 0.9328 \\
\hline
\end{tabular}
\caption{Results of collective-phase-Bose symmetric extendibility tests. The state $\sigma$ is defined as $\sfrac{3}{4}\outerproj{0} + \sfrac{1}{4} \outerproj{1}$. The state $\rho$ is defined as  $\begin{bmatrix} 0.93 & 0.25\\ 0.25 & 0.07 \end{bmatrix}$.}
\label{tab:CollectiveZ_GBSE}
\end{table}
 
\begin{table}[h]
\centering
\vspace{.05in}
\begin{tabular}{
>{\centering\arraybackslash}p{0.05\textwidth} | 
>{\centering\arraybackslash}p{0.07\textwidth} | 
>{\centering\arraybackslash}p{0.07\textwidth} | 
>{\centering\arraybackslash}p{0.07\textwidth} |
>{\centering\arraybackslash}p{0.08\textwidth}}
\hline
\textrm{State} & 
\textrm{True Fidelity} &
\textrm{Noiseless} &
\textrm{Noisy} & 
\textrm{Noise Resilient}
\\
\hline\hline 
$\outerproj{0}$ & 1.0000 & 0.9960 & 0.8632 & 0.9988 \\
$\outerproj{+}$ & 0.5000 & 0.5000 & 0.4580 & 0.4997  \\
$\rho$ & 0.7500 & 0.7494 & 0.6577 & 0.7484\\
\hline
\end{tabular}
\caption{Results of collective-phase-symmetric extendibility tests. The state $\rho$ is defined as $\begin{bmatrix} 0.75 & 0.43\\ 0.43 & 0.25 \end{bmatrix}$.}
\label{tab:CollectiveZ_GSE}
\end{table}

\subsubsection{\texorpdfstring{$G$}{G}-symmetric extendibility}

A circuit that tests for $G$-symmetric extendibility is shown in Figure~\ref{fig:CZ_Circuits}d). It involves variational parameters, and an example of the training process is shown in Figure~\ref{fig:CollectiveZ_GSE_Training}. Table~\ref{tab:CollectiveZ_GSE} shows the final results after training for various input states. The true fidelity is calculated using the semi-definite program given in \eqref{eq:SDP-rootfid-GSE}.

\begin{figure}[h!]
\begin{center}
\includegraphics[width=\linewidth]{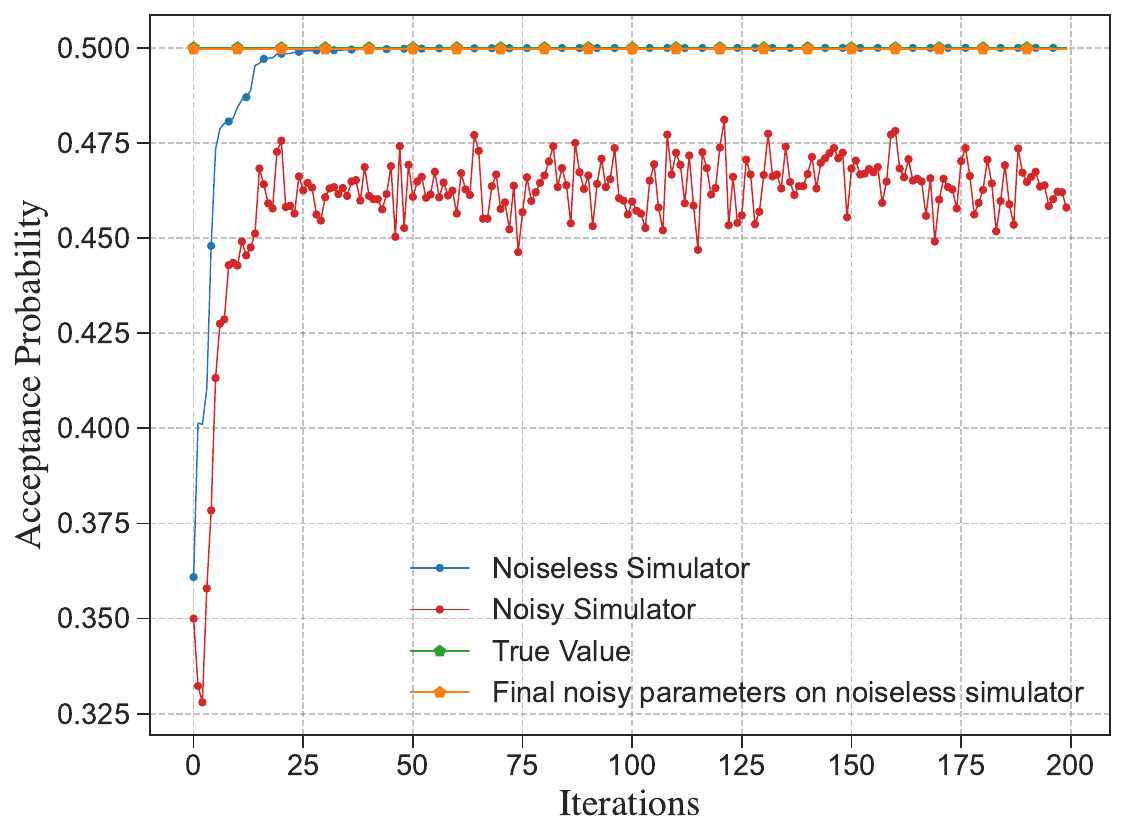}
\end{center}
\caption{Example of the training process for testing collective-phase-symmetric extendibility of $\outerproj{+}$.}
\label{fig:CollectiveZ_GSE_Training}
\end{figure}

\subsection{\texorpdfstring{$k$}{k}-Extendibility and \texorpdfstring{$k$}{k}-Bose extendibility}

As seen in Examples~\ref{ex:k-ext} and \ref{ex:k-bose-ext}, $k$-extendibility and $k$-Bose extendibility are special cases of $G$-symmetric extendibility and $G$-Bose symmetric extendibility, respectively. In this section, we look at the cases of two and three extending subsystems. 

As seen in \eqref{eq:ident-k-to-g-1}--\eqref{eq:ident-k-to-g-4}, $U_{RS}(g) = \mathbb{I}_A \otimes W_{B_1 \cdots B_k}(\pi)$, where $W_{B_1 \cdots B_k}(\pi)$ is a unitary representation of the symmetric group $S_k$. Thus, given a unitary representation of $S_k$, we can test for the required symmetries.

The $S_2$ group has two elements, and the group table is given by
\begin{center}
\begin{tabular}{
>{\centering\arraybackslash}p{0.14\textwidth} | >{\centering\arraybackslash}p{0.03\textwidth} | 
>{\centering\arraybackslash}p{0.03\textwidth}}
\hline
    Group element & $e$ & $a$ \\ 
\hline
    $e$ & $e$ & $a$ \\
    $a$ & $a$ & $e$ \\
\hline
\end{tabular}
\end{center}

The standard representation of $S_2$ translates easily to a two-qubit unitary representation with $\{e \rightarrow \mathbb{I}, a \rightarrow F\}$, where $F$ is the SWAP gate. In fact, throughout this section, we will consider unitary representations corresponding to system permutations in a direct correspondence with the standard representations of $S_k$. Using this definition, let $U_{RS}(e) = \mathbb{I}_{A} \otimes \mathbb{I}_{B_1B_2}$ and $U_{RS}(a) = \mathbb{I}_{A} \otimes F_{B_1B_2}$. Since we have two elements, the $\ket{+}_C$ state is a uniform superposition of two elements. We thus use one qubit and the Hadamard gate to generate the necessary state:
\begin{equation}
    H\ket{0} = \frac{1}{\sqrt{2}}\left(\ket{0} + \ket{1}\right).
\end{equation}
The control register states need to be mapped to group elements; for this, we employ the mapping $\{\ket{0} \rightarrow e, \ket{1} \rightarrow a\}$ for our circuit constructions. 

Similarly, the $S_3$ group has six elements and the group table is given by
\begin{center}
\begin{tabular}{
>{\centering\arraybackslash}p{0.14\textwidth} | >{\centering\arraybackslash}p{0.02\textwidth} |
>{\centering\arraybackslash}p{0.02\textwidth} |
>{\centering\arraybackslash}p{0.02\textwidth} |
>{\centering\arraybackslash}p{0.02\textwidth} |
>{\centering\arraybackslash}p{0.02\textwidth} |
>{\centering\arraybackslash}p{0.02\textwidth}}
\hline
    Group element & $e$ & $a$ & $b$ & $c$ & $d$& $f$\\ 
\hline
   $e$ & $e$ & $a$ & $b$ & $c$ & $d$& $f$\\
    $a$ & $a$ & $e$ & $d$& $f$ & $b$ & $c$\\
    $b$ & $b$ & $f$ & $e$ & $d$& $c$ & $a$\\
    $c$ & $c$ & $d$& $f$ & $e$ & $a$ & $b$\\
    $d$ & $d$& $c$ & $a$ & $b$ & $f$ & $e$ \\
    $f$ & $f$ & $b$ & $c$ & $a$ & $e$ & $d$\\
    
\hline
\end{tabular}
\end{center}

The $S_3$ group has a three-qubit unitary representation $\{e \rightarrow \mathbb{I}, a \rightarrow F_{23}, b \rightarrow F_{13}, c \rightarrow F_{12}, d \rightarrow F_{12}F_{23}, f \rightarrow F_{13}F_{23}\}$, where $F_{ij}$ is the SWAP gate between qubits $i$ and $j$. Since we have six elements, the $\ket{+}_C$ state is a uniform superposition of six elements. We use three qubits and the same unitary $U_d$ used to generate the superposition for the triangular dihedral group, as shown in Figure~\ref{fig:Dihedral_Superposition}, to generate an equal superposition of six elements,
\begin{multline}
    \label{eq:S3_Superposition}
    U_d\ket{000} = \frac{1}{\sqrt{6}} (\ket{000} + \ket{001} + \ket{010} + \\
    \ket{011} + \ket{100} + \ket{101}).
\end{multline}
The control register states need to be mapped to group elements, and we do so via the mapping $\{\ket{000} \rightarrow e, \ket{001} \rightarrow a, \ket{010} \rightarrow b, \ket{011} \rightarrow f, \ket{100} \rightarrow c, \ket{101} \rightarrow d\}$.

\begin{figure*}[t!]
\begin{center}
\includegraphics[width=\linewidth]{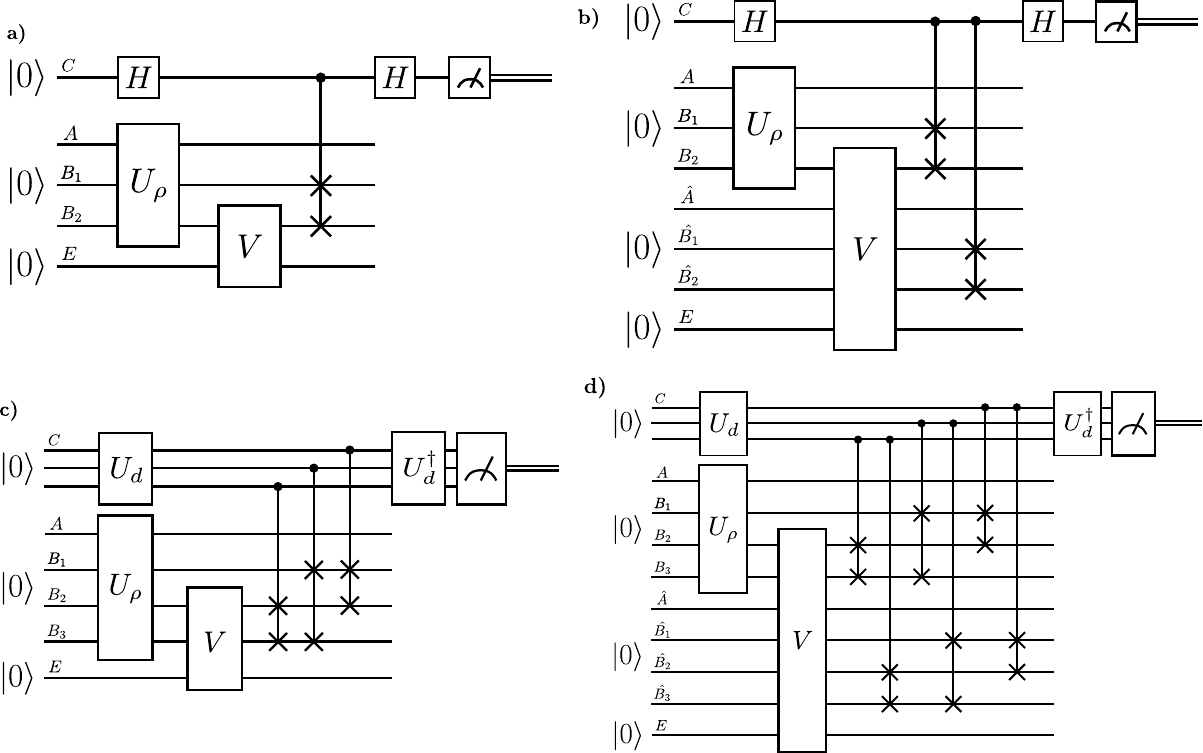}
\end{center}
\caption{Tests for extendibility: a) two-Bose extendibility, b) two-extendibility, c) three-Bose extendibility, and d) three-extendibility.}
\label{fig:Ext_Circuits}
\end{figure*}

\subsubsection{Two-Bose extendibility}

A circuit that tests for two-Bose extendibility is shown in Figure~\ref{fig:Ext_Circuits}a). It involves variational parameters, and an example of the training process is shown in Figure~\ref{fig:S2_GBSE_Training}. Table~\ref{tab:S2_GBSE} shows the final results after training for various input states. The true fidelity is calculated using the semi-definite program given in \eqref{eq:SDP-rootfid-GBSE}.

\begin{figure}[h!]
\begin{center}
\includegraphics[width=\linewidth]{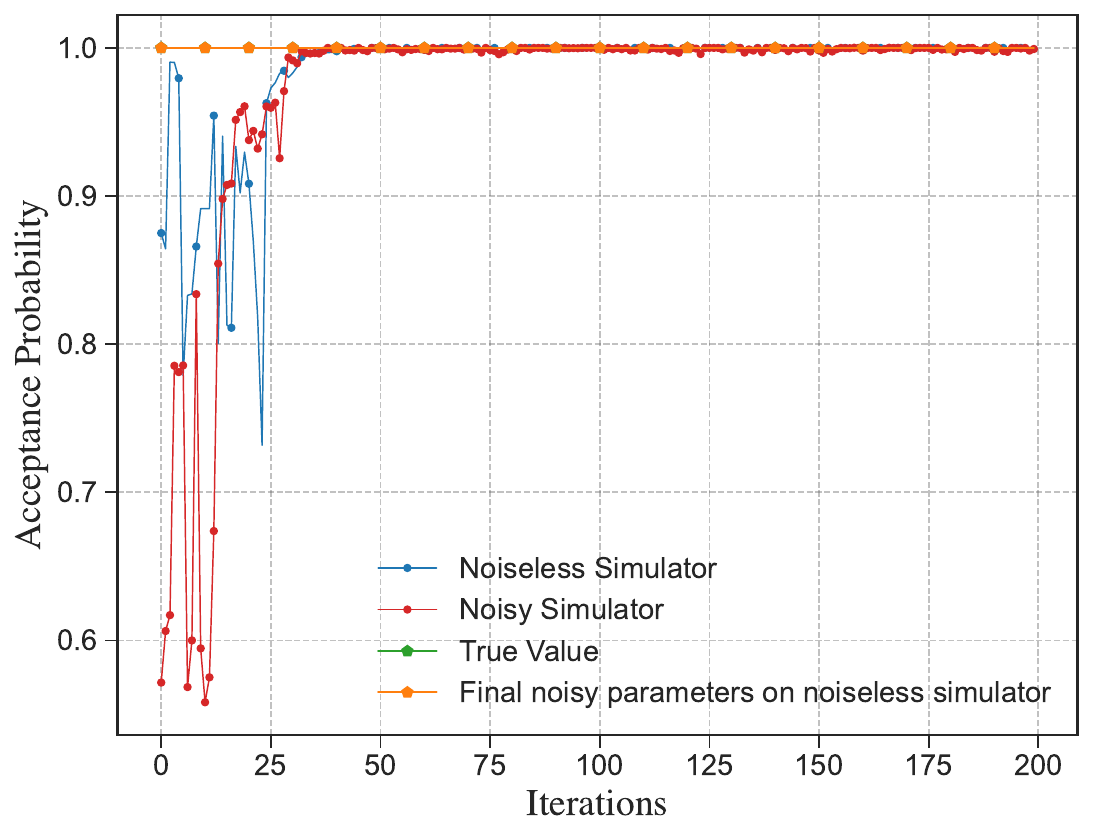}
\end{center}
\caption{Example of the training process for testing two-Bose extendibility of $\rho = \sfrac{3}{4}\outerprod{00}{00} + \sfrac{1}{4}\outerprod{11}{11}$. We see that the training exhibits a noise resilience.}
\label{fig:S2_GBSE_Training}
\end{figure}

\begin{table}[h]
\centering
\vspace{.05in}
\begin{tabular}{
>{\centering\arraybackslash}p{0.06\textwidth} | >{\centering\arraybackslash}p{0.07\textwidth} | 
>{\centering\arraybackslash}p{0.07\textwidth} | 
>{\centering\arraybackslash}p{0.05\textwidth} |
>{\centering\arraybackslash}p{0.08\textwidth}}
\hline
\textrm{State} & 
\textrm{True Fidelity} &
\textrm{Noiseless} &
\textrm{Noisy} & 
\textrm{Noise Resilient}
\\
\hline\hline 
$\outerproj{00}$ & 1.0000 & 1.0000 & 0.9544 & 0.9995 \\
$\rho$ & 1.0000 & 1.0000 & 0.9584 & 0.9995\\
$\Psi^+$ & 0.7500 & 0.7500 & 0.7256 & 0.7500 \\
\hline
\end{tabular}
\caption{Results of $S_2$-Bose symmetric extendibility tests. The state $\rho$ is defined as $\sfrac{3}{4}\outerprod{00}{00} + \sfrac{1}{4}\outerprod{11}{11}$.}
\label{tab:S2_GBSE}
\end{table}

\subsubsection{Two-Extendibility}

Similar to the non-extended cases, it is simpler to test if a state exhibits $G$-BSE---or, in this case, if the state is $k$-Bose-symmetric extendible---than to test if it is symmetric extendible. This is reflected in Figure~\ref{fig:Ext_Circuits}b), which shows a test for 2-BSE. The circuit involves variational parameters, and an example of the training process is shown in Figure~\ref{fig:S2_GSE_Training}. Table~\ref{tab:S2_GSE} shows the final results after training for various input states. The true fidelity is calculated using the semi-definite program given in~\eqref{eq:SDP-rootfid-GSE}.

\begin{figure}[h]
\begin{center}
\includegraphics[width=\linewidth]{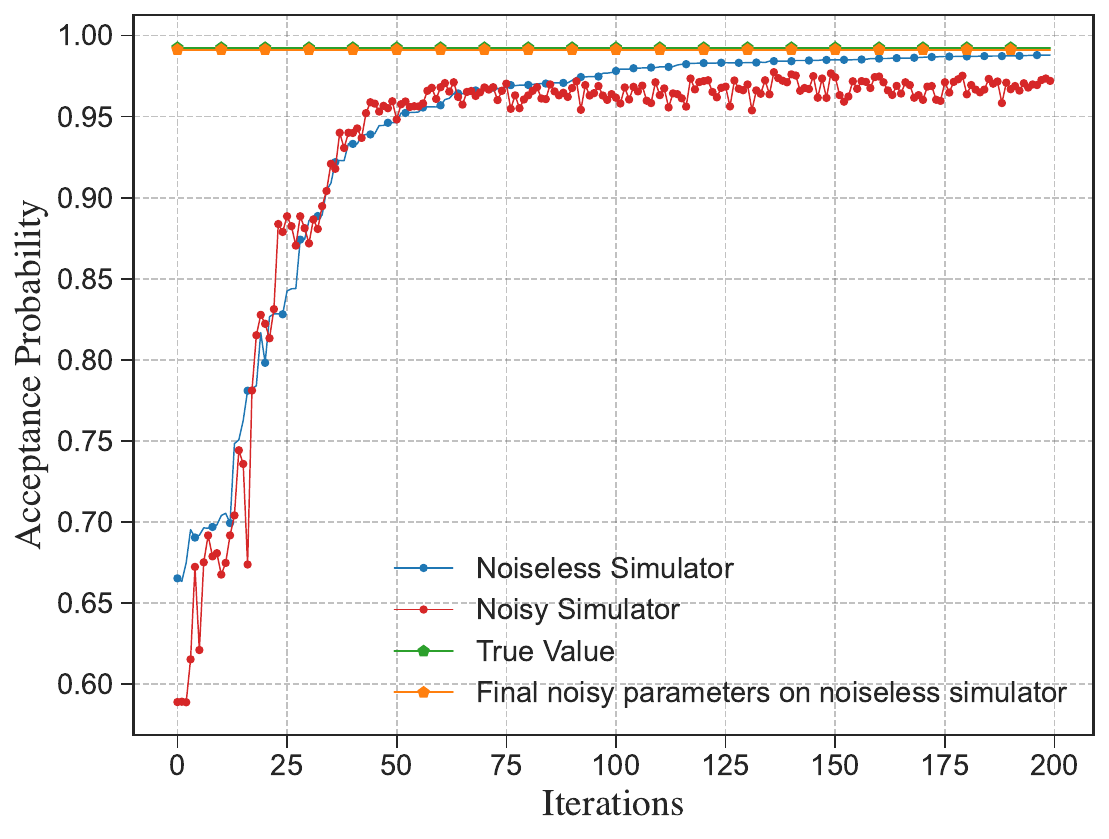}
\end{center}
\caption{Example of the training process for testing two-extendibility of $\rho = \outerproj{\psi}$, where $\ket{\psi} = \frac{1}{\sqrt{2}}\ket{11} + \frac{1}{\sqrt{6}}(\ket{00} + \ket{01} + \ket{10})$. We see that the training exhibits a noise resilience.}
\label{fig:S2_GSE_Training}
\end{figure}

\begin{table}[h]
\centering
\vspace{.05in}
\begin{tabular}{
>{\centering\arraybackslash}p{0.06\textwidth} | 
>{\centering\arraybackslash}p{0.07\textwidth} | 
>{\centering\arraybackslash}p{0.07\textwidth} | 
>{\centering\arraybackslash}p{0.06\textwidth} |
>{\centering\arraybackslash}p{0.08\textwidth}}
\hline
\textrm{State} & 
\textrm{True Fidelity} &
\textrm{Noiseless} &
\textrm{Noisy} & 
\textrm{Noise Resilient}
\\
\hline\hline 
$\outerproj{00}$ & 1.0000 & 0.9991 & 0.9267 & 0.9960 \\
$\rho$ & 0.9925 & 0.9901 & 0.9720 & 0.9913 \\
$\Psi^+$ & 0.7506 & 0.7498 & 0.6959 & 0.7480 \\
\hline
\end{tabular}
\caption{Results of $S_2$-symmetric extendibility tests. The state $\rho$ is defined as $\outerproj{\psi}$ where $\ket{\psi} = \frac{1}{\sqrt{2}}\ket{11} + \frac{1}{\sqrt{6}}(\ket{00} + \ket{01} + \ket{10})$. The reduced state of $\rho$ has eigenvalues $\frac{1}{6}\left(3+\sqrt{5+2\sqrt{3}}\right)\approx 0.985$ and $\frac{1}{6}\left(3-\sqrt{5+2\sqrt{3}}\right)\approx 0.015$. It is thus not so entangled, and we expect its two-extendible fidelity to be close to one.}
\label{tab:S2_GSE}
\end{table}

\subsubsection{Three-Bose Extendibility}

A circuit that tests for three-Bose extendibility is shown in Figure~\ref{fig:Ext_Circuits}c). It involves variational parameters, and an example of the training process is shown in Figure~\ref{fig:S3_GBSE_Training}. Table~\ref{tab:S3_GBSE} shows the final results after training for various input states. The true fidelity is calculated using the semi-definite program given in~\eqref{eq:SDP-rootfid-GBSE}.

\begin{figure}[h!]
\begin{center}
\includegraphics[width=\linewidth]{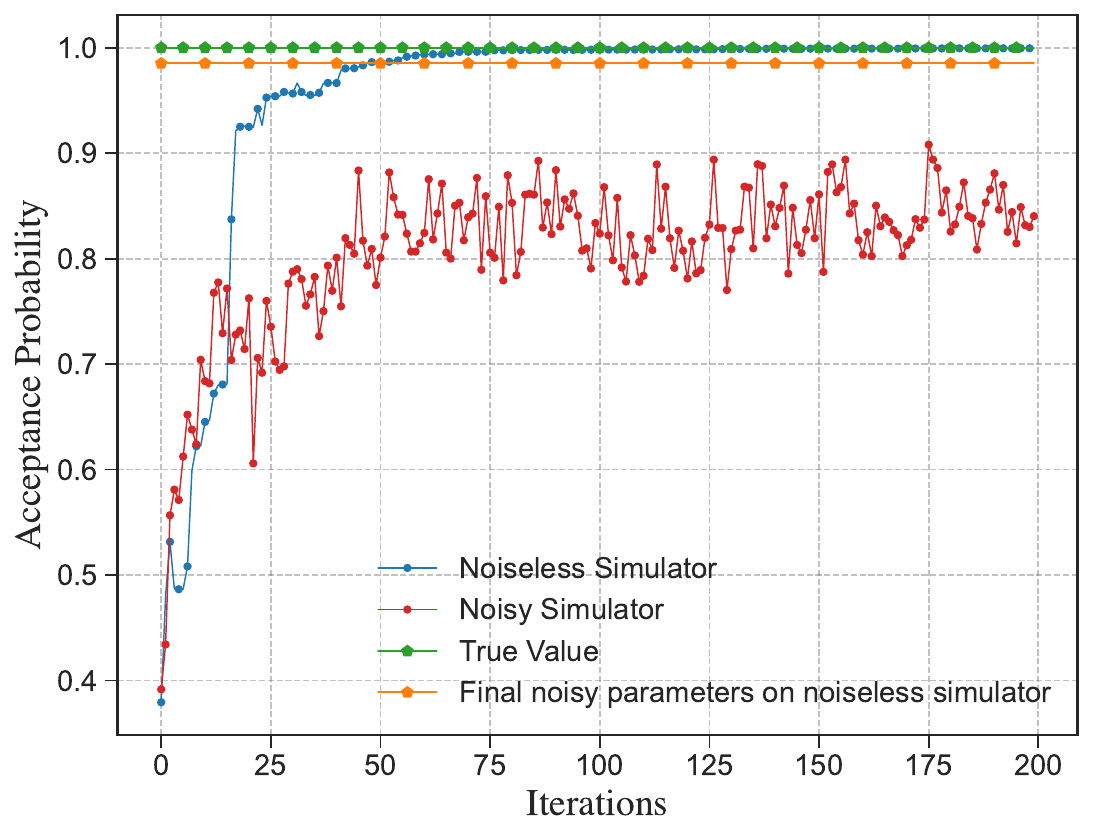}
\end{center}
\caption{Example of the training process for testing three-Bose extendibility of $\rho = \sfrac{3}{4}\outerprod{00}{00} + \sfrac{1}{4}\outerprod{11}{11}$. We see that the training exhibits a noise resilience.}
\label{fig:S3_GBSE_Training}
\end{figure}

\begin{table}[h]
\centering
\vspace{.05in}
\begin{tabular}{
>{\centering\arraybackslash}p{0.06\textwidth} | >{\centering\arraybackslash}p{0.07\textwidth} | 
>{\centering\arraybackslash}p{0.08\textwidth} | 
>{\centering\arraybackslash}p{0.06\textwidth} |
>{\centering\arraybackslash}p{0.08\textwidth}}
\hline
\textrm{State} & 
\textrm{True Fidelity} &
\textrm{Noiseless} &
\textrm{Noisy} & 
\textrm{Noise Resilient}
\\
\hline\hline 
$\outerproj{00}$ & 1.0000 & 0.9999 & 0.8644 & 0.9982 \\
$\rho$ & 1.0000 & 0.9994 & 0.8403 & 0.9851 \\
$\Psi^+$ & 0.6675 & 0.6667 & 0.5666 & 0.6666 \\
\hline
\end{tabular}
\caption{Results of $S_3$-Bose symmetric extendibility tests. The state $\rho$ is defined as $\sfrac{3}{4}\outerprod{00}{00} + \sfrac{1}{4}\outerprod{11}{11}$.}
\label{tab:S3_GBSE}
\end{table}

\subsubsection{Three-Extendibility}

A circuit that tests for three-extendibility is shown in Figure~\ref{fig:Ext_Circuits}d). It involves variational parameters, and an example of the training process is shown in Figure~\ref{fig:S3_GSE_Training}. Table~\ref{tab:S3_GSE} shows the final results after training for various input states. The true fidelity is calculated using the semi-definite program given in \eqref{eq:SDP-rootfid-GSE}.

\begin{figure}
\begin{center}
\includegraphics[width=\linewidth]{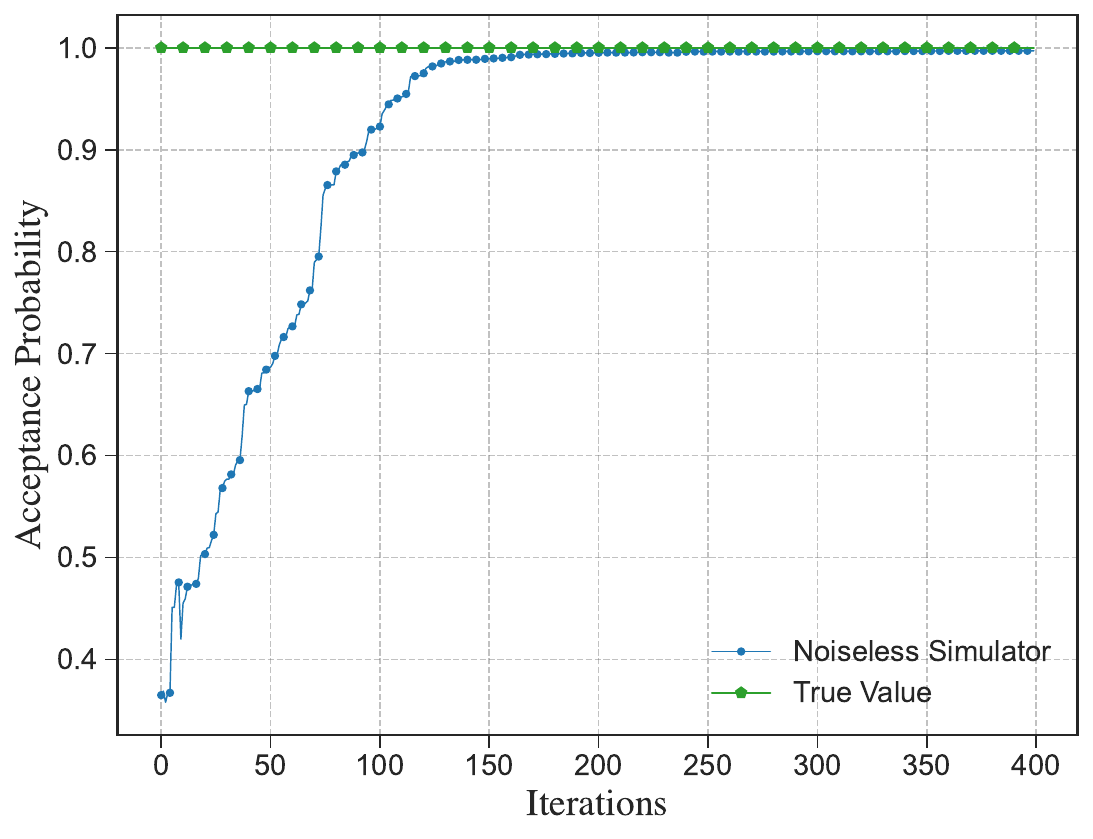}
\end{center}
\caption{Example of the training process for testing three-extendibility of $\outerproj{00}$.}
\label{fig:S3_GSE_Training}
\end{figure}

\begin{table}
\centering
\vspace{.05in}
\begin{tabular}{
>{\centering\arraybackslash}p{0.08\textwidth} | >{\centering\arraybackslash}p{0.08\textwidth} | 
>{\centering\arraybackslash}p{0.08\textwidth}}
\hline
\textrm{State} & 
\textrm{True Fidelity} &
\textrm{Noiseless}
\\
\hline\hline 
$\outerproj{00}$ & 1.0000 & 0.9970 \\
$\rho$ & 1.0000 & 0.9988 \\
$\Psi^+$ & 0.6670 & 0.6650 \\
\hline
\end{tabular}
\caption{Results of $S_3$-symmetric extendibility tests. Here, $\rho = \sfrac{3}{4}\outerprod{00}{00} + \sfrac{1}{4}\outerprod{11}{11}$.}
\label{tab:S3_GSE}
\end{table}

For all of the above cases, we see that results achieved via parameterized circuit substitutions for the prover demonstrate noise resilience, and thus give some confidence for practical applications. In this final case, we have shown explicitly how our algorithm allows for tests of $k$-extendibility and related quantities. While only small systems are considered here, this is a limitation of current hardware more so than of the algorithm itself. Indeed, it would be interesting to observe the performance of this algorithm on higher fidelity machines with more qubits, which could possibly be achievable in the near future.

\section{Resource theories}

\label{sec:res-theories}

In this section, we prove that the various maximum symmetric fidelities proposed in Section~\ref{sec:tests-o-sym} are proper resource-theoretic monotones, in the sense reviewed in~\cite{CG18}. Thus, they are indeed measures of symmetry as claimed.

To begin with, let us recall the basics of a resource theory (see \cite[Definition~1]{CG18}). Intuitively, one can delineate a resource theory by first specifying a restricted set of free channels, which are understood as allowed operations. In a resource theory, one of the basic questions is to determine whether it is possible to transition from a source state to a target state by means of only these free channels. Furthermore, once the set of free channels is fixed, the free states are also set, because a free state can be understood as a particular kind of free channel in which the input system to the channel is a trivial system.

More formally, let $\mathcal{F}$ be a mapping that assigns a unique set of quantum channels to any arbitrary input and output systems $A$ and $B$, respectively. We require that $\mathcal{F}$ include the identity channel ($\mathcal{F}(A\rightarrow A) = \operatorname{id}_A$) and  that, for any three physical systems $A$, $B$, and $C$, any two maps $\mathcal{N}_{A\rightarrow B} \in \mathcal{F}(A \rightarrow B)$ and $\mathcal{M}_{B\rightarrow C}\in\mathcal{F}(B \rightarrow C)$ have the transitive property
\begin{equation}\label{eq:trans-prop}
 \mathcal{M}_{B\rightarrow C}\circ\mathcal{N}_{A\rightarrow B} \in\mathcal{F}(A\rightarrow C)\, .
\end{equation}
If $\mathcal{F}$ obeys above criteria, then the mapping $\mathcal{F}$ defines the resource theory. 
The set $\mathcal{F}(\mathbb{C}\rightarrow A)$ defines the set of free states---that is, channels from the trivial space ($\mathbb{C}$) to system $A$ are quantum states. The set $\mathcal{F}(A \rightarrow B)$ defines the set of free channels from system $A$ to system~$B$. 

\subsection{Resource theory of asymmetry}
\label{sec:rt-asym}
The resource theory of asymmetry is well established by now \cite{MS13}, but to the best of our knowledge, the resource theory of Bose asymmetry has not been defined yet. Let us begin by recalling the resource theory of asymmetry. Afterwards, we establish the resource theory of Bose asymmetry as well as two other generalizations involving unextendibility, which are in turn generalizations of the resource theory of unextendibility proposed in \cite{KDWW19,KDWW21}.

Let $G$ be a group, and let $\{U_{A}(g)\}_{g\in G}$ and $\{V_{B}(g)\}_{g\in G}$ denote projective unitary representations of $G$. A channel $\mathcal{N}
_{A\rightarrow B}$ is a free channel in the resource theory of asymmetry if the following $G$-covariance symmetry condition holds
\begin{equation}
\label{eq:G-cov-channel-def}
\mathcal{N}_{A\rightarrow  B} \circ \mathcal{U}_{A}(g)=\mathcal{V}_{B} (g) \circ\mathcal{N}_{A\rightarrow B}\qquad\forall g\in G,
\end{equation}
where the unitary channels $\mathcal{U}_{A}(g)$ and $\mathcal{V}_{B}(g)$ are respectively defined from $U_{A}(g)$ and $V_{B}(g)$ as in \eqref{eq:uandv}. It then follows that a state $\sigma_{A}$ is free in this resource theory if it is $G$-symmetric, i.e., 
\begin{equation}
\sigma_{A}=\mathcal{U}_{A}(g)(\sigma_{A})\qquad\forall g\in G,
\end{equation}
with a similar definition for the $B$ system; furthermore, the free channels take free states to free states \cite{MS13}, in the sense that $\mathcal{N}_{A\rightarrow B}(\sigma_{A})$ is a free state if $\mathcal{N}_{A\rightarrow B}$ is a free channel and $\sigma_{A}$ is a free state.

For $\mathcal{N}_{A\rightarrow B}$ a free channel satisfying \eqref{eq:G-cov-channel-def}, the maximum $G$-symmetric fidelity is a resource monotone in the following sense:
\begin{equation}
\max_{\sigma_{A}\in\operatorname*{Sym}_{G}}F(\rho_{A},\sigma_{A}) \leq \max_{\sigma_{B}\in\operatorname*{Sym}_{G}}F(\mathcal{N}_{A\rightarrow B} (\rho_{A}),\sigma_{B}).
\end{equation}
This follows from the facts that the fidelity does not decrease under the action of a quantum channel and the free channels take free states to free states.

\subsection{Resource theory of Bose asymmetry}

Now we define the resource theory of Bose asymmetry and prove that the acceptance probability $\operatorname{Tr}[\Pi_{A}^{G}\rho_{A}]$ of Algorithm~\ref{alg:simple} is a resource monotone in this resource theory. This demonstrates that $\operatorname{Tr}[\Pi_{A}^{G}\rho_{A}]$ is a legitimate quantifier of Bose symmetry of a state.

Following the same notation as in Section~\ref{sec:rt-asym}, recall that a state $\sigma_{A}$ is Bose symmetric if the following condition holds
\begin{equation}
\sigma_{A}=\Pi_{A}^{G}\sigma_{A}\Pi_{A}^{G},
\end{equation}
where $\Pi_{A}^{G}$ is given by \eqref{eq:simple-case-g-proj}. Similarly, a state $\tau_{B}$ is Bose symmetric if it obeys the same conditions but for the projector $\Pi_{B}^{G}$ specified by $\{V_B(g)\}_{g\in G}$. These are the free states in the resource theory of Bose asymmetry.

To define the resource theory, we need to specify the free channels.

\begin{definition}
[Bose symmetric channel]We define a channel $\mathcal{N}_{A\rightarrow B}$ to be a Bose symmetric channel (i.e., free channel) if the following condition holds
\begin{equation}
\left(  \mathcal{N}_{A\rightarrow B}\right)  ^{\dag}(\Pi_{B}^{G})\geq\Pi_{A}^{G}, \label{eq:Bose-sym-channel}
\end{equation}
where $\left(  \mathcal{N}_{A\rightarrow B}\right)  ^{\dag}$ is the Hilbert--Schmidt adjoint of $\mathcal{N}_{A\rightarrow B}$
\cite{Wbook17,KW20book}.
\end{definition}

\begin{proposition}
Bose symmetric channels include the identity channel and they obey the transitive property in \eqref{eq:trans-prop}. Additionally, Bose symmetric states are a special case of Bose symmetric channels when the input space is trivial.
\end{proposition}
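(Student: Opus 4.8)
The plan is to verify the three assertions in turn, each of which reduces to an elementary property of the Hilbert--Schmidt adjoint together with the fact that $\Pi_{S}^{G}$ is an orthogonal projection. I would first record the latter fact once and for all: using the rearrangement identity \eqref{eq:unitaries-and-projs} one checks directly that $(\Pi_{S}^{G})^{2}=\Pi_{S}^{G}$ and $(\Pi_{S}^{G})^{\dag}=\Pi_{S}^{G}$, so that $\Pi_{S}^{G}$ is an orthogonal projection and in particular $0\leq\Pi_{S}^{G}\leq I$.

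For the identity channel (where $B=A$), I would note that its Hilbert--Schmidt adjoint is again $\operatorname{id}_{A}$, so the defining inequality \eqref{eq:Bose-sym-channel} reads $\Pi_{A}^{G}\geq\Pi_{A}^{G}$, which holds with equality. For the transitive property, suppose that $\mathcal{N}_{A\rightarrow B}$ and $\mathcal{M}_{B\rightarrow C}$ are both Bose symmetric. Using the composition rule for adjoints, $(\mathcal{M}_{B\rightarrow C}\circ\mathcal{N}_{A\rightarrow B})^{\dag}=\mathcal{N}_{A\rightarrow B}^{\dag}\circ\mathcal{M}_{B\rightarrow C}^{\dag}$, I would write
\begin{equation}
(\mathcal{M}_{B\rightarrow C}\circ\mathcal{N}_{A\rightarrow B})^{\dag}(\Pi_{C}^{G})=\mathcal{N}_{A\rightarrow B}^{\dag}\!\left(\mathcal{M}_{B\rightarrow C}^{\dag}(\Pi_{C}^{G})\right).
\end{equation}
By hypothesis $\mathcal{M}_{B\rightarrow C}^{\dag}(\Pi_{C}^{G})\geq\Pi_{B}^{G}$. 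The key structural point is that the adjoint of a completely positive map is completely positive, hence positive, and a positive map is monotone with respect to the semidefinite order; applying $\mathcal{N}_{A\rightarrow B}^{\dag}$ to both sides therefore gives $\mathcal{N}_{A\rightarrow B}^{\dag}(\mathcal{M}_{B\rightarrow C}^{\dag}(\Pi_{C}^{G}))\geq\mathcal{N}_{A\rightarrow B}^{\dag}(\Pi_{B}^{G})\geq\Pi_{A}^{G}$, the last inequality being the assumption that $\mathcal{N}_{A\rightarrow B}$ is Bose symmetric. This is precisely the statement that the composition is a free channel, establishing \eqref{eq:trans-prop}.

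For the third claim, I would specialize to a trivial input space. A channel $\mathcal{N}_{\mathbb{C}\rightarrow B}$ is a state preparation $\mathcal{N}(c)=c\,\sigma_{B}$, the representation on the trivial system is the scalar $1$ so that $\Pi_{\mathbb{C}}^{G}=1$, and a short computation shows $\mathcal{N}^{\dag}(Y)=\operatorname{Tr}[\sigma_{B}Y]$. Hence the defining condition \eqref{eq:Bose-sym-channel} becomes $\operatorname{Tr}[\Pi_{B}^{G}\sigma_{B}]\geq1$. Since $\Pi_{B}^{G}\leq I$ and $\sigma_{B}$ is a state, we always have $\operatorname{Tr}[\Pi_{B}^{G}\sigma_{B}]\leq1$, so the inequality forces $\operatorname{Tr}[\Pi_{B}^{G}\sigma_{B}]=1$, which by the equivalence \eqref{eq:Bose-symmetric-equiv-cond} is exactly the condition that $\sigma_{B}$ is Bose symmetric.

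None of the three steps is genuinely deep. The only point requiring care is the third, where one must identify $\Pi_{\mathbb{C}}^{G}=1$, compute the adjoint of a preparation channel, and invoke both the projection bound $\Pi_{B}^{G}\leq I$ and the previously established equivalence \eqref{eq:Bose-symmetric-equiv-cond} to pin the trace to $1$ and recover the state-level definition. The conceptual engine behind the transitivity argument is the monotonicity of the (completely) positive adjoint map under the semidefinite order, and I would make sure to state that property explicitly since it is what makes the chain of inequalities valid.
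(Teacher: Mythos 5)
Your proof is correct and follows essentially the same route as the paper's: the identity-channel case is immediate from self-adjointness, transitivity follows from the composition rule for adjoints together with positivity of the adjoint of a completely positive map, and the trivial-input case reduces to $\operatorname{Tr}[\Pi_{B}^{G}\sigma_{B}]\geq 1$ combined with the trivial upper bound and the equivalence \eqref{eq:Bose-symmetric-equiv-cond}. The only cosmetic difference is that the paper computes the adjoint of the preparation channel via an explicit Kraus decomposition, whereas you state it directly.
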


\begin{proof}
When the input and output systems are the same, as well as the unitary representations, it follows that $\Pi_{B}^{G}=\Pi_{A}^{G}$. Since the identity channel is its own adjoint, we then conclude that \eqref{eq:Bose-sym-channel} holds for the identity channel.

Suppose that $\mathcal{N}_{A\rightarrow B}$ is a quantum channel that obeys the condition in \eqref{eq:Bose-sym-channel}. Let $\{W_{C}(g)\}_{g\in G}$ be a projective unitary representation of $G$, and suppose that $\mathcal{M}_{B\rightarrow C}$ is a Bose symmetric channel satisfying
\begin{equation}
\left(  \mathcal{M}_{B\rightarrow C}\right)  ^{\dag}(\Pi_{C}^{G})\geq\Pi_{B}^{G},
\end{equation}
where $\Pi_{C}^{G}\coloneqq \frac{1}{\left\vert G \right\vert }\sum_{g\in G}W_{C}(g)$. Consider that
\begin{align}
&  \left(  \mathcal{M}_{B\rightarrow C}\circ\mathcal{N}_{A \rightarrow B}\right)  ^{\dag}(\Pi_{C}^{G})\nonumber\\
&  =(\mathcal{N}_{A\rightarrow B})^{\dag}\left[  \left(  \mathcal{M}_{B\rightarrow C} \right)  ^{\dag} (\Pi_{C}^{G})\right] \\
&  \geq(\mathcal{N}_{A\rightarrow B})^{\dag}\left[  \Pi_{B}^{G}\right] \\
&  \geq\Pi_{A}^{G}.
\end{align}
The first equality follows by exploiting the identity $\left(  \mathcal{M}_{B\rightarrow C} \circ \mathcal{N}_{A\rightarrow B}\right)  ^{\dag}=(\mathcal{N}_{A\rightarrow B})^{\dag} \circ \left(  \mathcal{M}_{B\rightarrow C}\right)  ^{\dag}$ for adjoints. The first inequality follows from the assumption that $\mathcal{M}_{B\rightarrow C}$ is a Bose symmetric channel and from the fact that $\mathcal{N}_{A\rightarrow B}$ is completely positive, so that $(\mathcal{N}_{A\rightarrow B})^{\dag}$ is also. We thus conclude that $\mathcal{M}_{B\rightarrow C} \circ \mathcal{N}_{A\rightarrow B}$ is a Bose symmetric channel, so that the transitive property in \eqref{eq:trans-prop} holds.

Finally, suppose that the input system $A$ of a Bose symmetric channel $\mathcal{N}_{A\rightarrow B}$ is trivial. Then each group element $g$ is trivially represented by the number one. It follows that $\Pi_{A}^{G}=1$. Then the channel $\mathcal{N}_{A\rightarrow B}$ is really just a state $\omega_{B}$ \cite{Wbook17} with a spectral decomposition $\omega_{B}=\sum_{x}p(x)|x\rangle\!\langle x|_{B}$; furthermore, the associated Kraus operators are given by $\{\sqrt{p(x)}|x\rangle_{B}\}_{x}$. Then the condition
\begin{equation}
\left(  \mathcal{N}_{A\rightarrow B}\right)  ^{\dag}(\Pi_{B}^{G})\geq\Pi_{A}^{G}
\end{equation}
reduces to
\begin{equation}
\sum_{x}p(x)\langle x|_{B}\Pi_{B}^{G}|x\rangle_{B}\geq1,
\end{equation}
which is the same as
\begin{equation}
\operatorname{Tr}[\Pi_{B}^{G}\omega_{B}]\geq1.
\end{equation}
Since $\omega_{B}$ is a state and $\Pi_{B}^{G}$ is a projection, it follows that $\operatorname{Tr}[\Pi_{B}^{G}\omega_{B}]\leq1$. Combining these inequalities, we conclude that $\operatorname{Tr}[\Pi_{B}^{G}\omega_{B}]=1$. Finally, we apply \eqref{eq:Bose-symmetric-equiv-cond} to conclude that $\omega_{B}$ is a Bose symmetric state.
\end{proof}

\begin{theorem}
\label{thm:G-Bose-sym-states-preserved}
Suppose that a quantum channel $\mathcal{N}_{A\rightarrow B}$ obeys the condition in \eqref{eq:Bose-sym-channel}. Let $\sigma_{A}$ be a Bose symmetric state. Then $\mathcal{N}_{A\rightarrow B}(\sigma_{A})$ is a Bose symmetric state.
\end{theorem}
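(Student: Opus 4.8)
The plan is to reduce the claim to the scalar characterization of Bose symmetry recorded in \eqref{eq:Bose-symmetric-equiv-cond}: a state $\tau$ is Bose symmetric if and only if $\operatorname{Tr}[\Pi^{G}\tau]=1$. Since $\mathcal{N}_{A\rightarrow B}$ is a channel, $\mathcal{N}_{A\rightarrow B}(\sigma_{A})$ is automatically a density operator, so it suffices to establish the single equation $\operatorname{Tr}[\Pi_{B}^{G}\mathcal{N}_{A\rightarrow B}(\sigma_{A})]=1$. I would prove this by sandwiching the quantity between the bounds $1$ and $1$.

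First I would push the projector across the channel using the Hilbert--Schmidt adjoint, writing
\begin{equation}
\operatorname{Tr}[\Pi_{B}^{G}\mathcal{N}_{A\rightarrow B}(\sigma_{A})]=\operatorname{Tr}[(\mathcal{N}_{A\rightarrow B})^{\dag}(\Pi_{B}^{G})\,\sigma_{A}].
\end{equation}
Then I would invoke the free-channel condition \eqref{eq:Bose-sym-channel}, namely $(\mathcal{N}_{A\rightarrow B})^{\dag}(\Pi_{B}^{G})\geq\Pi_{A}^{G}$, together with $\sigma_{A}\geq0$. Because $\sigma_{A}$ is positive semi-definite, testing the operator inequality against $\sigma_{A}$ preserves it at the level of traces, i.e.\ $\operatorname{Tr}[(M-N)\sigma_{A}]\geq0$ whenever $M\geq N$, which yields the lower bound $\operatorname{Tr}[(\mathcal{N}_{A\rightarrow B})^{\dag}(\Pi_{B}^{G})\sigma_{A}]\geq\operatorname{Tr}[\Pi_{A}^{G}\sigma_{A}]$. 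Since $\sigma_{A}$ is assumed Bose symmetric, \eqref{eq:Bose-symmetric-equiv-cond} gives $\operatorname{Tr}[\Pi_{A}^{G}\sigma_{A}]=1$, so the whole expression is at least $1$.

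For the matching upper bound, I would use only that $\mathcal{N}_{A\rightarrow B}(\sigma_{A})$ is a state and that $\Pi_{B}^{G}$ is a projection: the expectation of a projection in any state lies in $[0,1]$, so $\operatorname{Tr}[\Pi_{B}^{G}\mathcal{N}_{A\rightarrow B}(\sigma_{A})]\leq1$. Combining the two bounds forces equality, and one final application of \eqref{eq:Bose-symmetric-equiv-cond} (now in the direction $\operatorname{Tr}[\Pi_{B}^{G}\tau]=1\Rightarrow\tau=\Pi_{B}^{G}\tau\Pi_{B}^{G}$) concludes that $\mathcal{N}_{A\rightarrow B}(\sigma_{A})$ is Bose symmetric.

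There is no genuine obstacle here; the argument is a short sandwich. The only points requiring a little care are bookkeeping rather than difficulty: justifying the trace inequality $M\geq N,\ \sigma\geq0\Rightarrow\operatorname{Tr}[M\sigma]\geq\operatorname{Tr}[N\sigma]$ (which follows from positivity of $\operatorname{Tr}[(M-N)\sigma]$), correctly citing \eqref{eq:Bose-symmetric-equiv-cond} in both directions, and noting that a channel is trace preserving and completely positive so that $\mathcal{N}_{A\rightarrow B}(\sigma_{A})$ is indeed a valid state to which the projection bound applies.
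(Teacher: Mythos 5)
Your proposal is correct and follows essentially the same route as the paper's proof: rewrite $\operatorname{Tr}[\Pi_{B}^{G}\mathcal{N}_{A\rightarrow B}(\sigma_{A})]$ via the adjoint, lower-bound it by $\operatorname{Tr}[\Pi_{A}^{G}\sigma_{A}]=1$ using \eqref{eq:Bose-sym-channel}, upper-bound it by $1$ since $\Pi_{B}^{G}$ is a projection, and invoke \eqref{eq:Bose-symmetric-equiv-cond} in both directions. The sandwich argument and all supporting steps match the paper exactly.
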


\begin{proof}
Recall from \eqref{eq:Bose-symmetric-equiv-cond}\ that a state $\sigma_{A}$ is Bose symmetric if and only if $\operatorname{Tr}[\Pi_{A}^{G}\sigma_{A}]=1$. Then consider that
\begin{align}
1  &  \geq\operatorname{Tr}[\Pi_{B}^{G}\mathcal{N}_{A\rightarrow B}(\sigma
_{A})]\\
&  =\operatorname{Tr}[\left(  \mathcal{N}_{A\rightarrow B}\right)  ^{\dag} (\Pi_{B}^{G})\sigma_{A}]\\
&  \geq\operatorname{Tr}[\Pi_{A}^{G}\sigma_{A}]\\
&  =1.
\end{align}
It follows that $\operatorname{Tr}[\Pi_{B}^{G}\mathcal{N}_{A\rightarrow B}(\sigma_{A})]=1$, and, by applying \eqref{eq:Bose-symmetric-equiv-cond} again, that $\mathcal{N}_{A\rightarrow B}(\sigma_{A})$ is Bose symmetric.
\end{proof}

\medskip
By essentially the same proof, it follows that the measure $\operatorname{Tr}[\Pi_{A}^{G}\rho_{A}]$ from \eqref{eq:acc-prob-bose-test}\ is non-decreasing under the action of a Bose symmetric channel $\mathcal{N}_{A\rightarrow B}$. Thus, the acceptance probability $\operatorname{Tr}[\Pi_{A}^{G}\rho_{A}]$ of a Bose symmetry test is a resource monotone in the resource theory of Bose asymmetry.

\begin{theorem}
\label{thm:res-mono-G-Bose-sym}Let $\rho_{A}$ be a state, and let $\mathcal{N}_{A\rightarrow B}$ be a Bose symmetric channel. Then $\operatorname{Tr}[\Pi_{A}^{G}\rho_{A}]$ is a resource monotone in the following sense:
\begin{equation}
\operatorname{Tr}[\Pi_{B}^{G}\mathcal{N}_{A\rightarrow B}(\rho_{A})]\geq\operatorname{Tr}[\Pi_{A}^{G}\rho_{A}].
\end{equation}

\end{theorem}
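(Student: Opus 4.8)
The plan is to mirror the argument used for Theorem~\ref{thm:G-Bose-sym-states-preserved}, exactly as the remark preceding the statement suggests. The only structural change is that the Bose symmetric state $\sigma_A$ there (which satisfies $\operatorname{Tr}[\Pi_A^G\sigma_A]=1$) is replaced by an arbitrary input state $\rho_A$, and the final step that pins the value to one is simply dropped, leaving a one-sided inequality rather than a chain of equalities.

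First I would push the output projector $\Pi_B^G$ back through the channel using the defining relation of the Hilbert--Schmidt adjoint, obtaining
\begin{equation}
\operatorname{Tr}[\Pi_{B}^{G}\mathcal{N}_{A\rightarrow B}(\rho_{A})]
=\operatorname{Tr}[\left(\mathcal{N}_{A\rightarrow B}\right)^{\dag}(\Pi_{B}^{G})\,\rho_{A}].
\end{equation}
Next I would invoke the defining inequality of a Bose symmetric channel from \eqref{eq:Bose-sym-channel}, namely $\left(\mathcal{N}_{A\rightarrow B}\right)^{\dag}(\Pi_{B}^{G})\geq\Pi_{A}^{G}$, and integrate it against $\rho_A$ to conclude
\begin{equation}
\operatorname{Tr}[\left(\mathcal{N}_{A\rightarrow B}\right)^{\dag}(\Pi_{B}^{G})\,\rho_{A}]
\geq\operatorname{Tr}[\Pi_{A}^{G}\rho_{A}],
\end{equation}
which chains together to give the claimed monotonicity.

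The single step that warrants explicit justification---and is the nearest thing to an obstacle, although it is entirely routine---is the passage from the operator inequality $X\geq Y$ to the scalar inequality $\operatorname{Tr}[X\rho_{A}]\geq\operatorname{Tr}[Y\rho_{A}]$. This holds because $X-Y\geq 0$ and $\rho_{A}\geq 0$ force $\operatorname{Tr}[(X-Y)\rho_{A}]\geq 0$, using the standard fact that the trace of a product of two positive semi-definite operators is non-negative. Note that, unlike in Theorem~\ref{thm:G-Bose-sym-states-preserved}, I would not need the upper bound $\operatorname{Tr}[\Pi_B^G\mathcal{N}_{A\to B}(\rho_A)]\leq 1$ nor the equivalence in \eqref{eq:Bose-symmetric-equiv-cond}; those were only required there to promote the inequality to an equality and thereby certify that the output state is itself Bose symmetric. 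Here the bare inequality is exactly the monotone property sought, so the proof terminates at the chain above.
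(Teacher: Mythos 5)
Your proof is correct and is essentially identical to the paper's own argument: rewrite $\operatorname{Tr}[\Pi_{B}^{G}\mathcal{N}_{A\rightarrow B}(\rho_{A})]$ via the Hilbert--Schmidt adjoint and then apply the defining operator inequality \eqref{eq:Bose-sym-channel} against the positive semi-definite $\rho_A$. The paper also notes an alternative route via Theorems~\ref{thm:acc-prob-g-Bose-sym} and \ref{thm:G-Bose-sym-states-preserved} together with data processing for fidelity, but its primary proof is exactly yours.
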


\begin{proof}
Consider that
\begin{align}
\operatorname{Tr}[\Pi_{B}^{G}\mathcal{N}_{A\rightarrow B}(\rho_{A})]  &
=\operatorname{Tr}[\left(  \mathcal{N}_{A\rightarrow B}\right)  ^{\dag}
(\Pi_{B}^{G})\rho_{A}]\\
&  \geq\operatorname{Tr}[\Pi_{A}^{G}\rho_{A}],
\end{align}
which follows from \eqref{eq:Bose-sym-channel}.

Alternatively, this follows from Theorem~\ref{thm:acc-prob-g-Bose-sym}, Theorem~\ref{thm:G-Bose-sym-states-preserved}, and the data-processing inequality for fidelity under quantum channels.
\end{proof}

\medskip
Throughout this section, we have adopted the perspective that Bose symmetric channels are defined by the condition in \eqref{eq:Bose-sym-channel}. It then follows as a consequence that $\operatorname{Tr}[\Pi_{A}^{G}\rho_{A}]$ is a resource monotone. We can adopt a different perspective and conclude consistency between them. Let us instead suppose that $\operatorname{Tr}[\Pi_{A}^{G}\rho_{A}]$ is non-decreasing under the action of a free channel $\mathcal{N}_{A\rightarrow B}$. That is, suppose that the following inequality holds for every state~$\rho_{A}$:
\begin{equation}
\operatorname{Tr}[\Pi_{B}^{G}\mathcal{N}_{A\rightarrow B}(\rho_{A})]\geq\operatorname{Tr}[\Pi_{A}^{G}\rho_{A}].
\end{equation}
Then by rewriting this inequality as
\begin{equation}
\operatorname{Tr}[(\left(  \mathcal{N}_{A\rightarrow B}\right)  ^{\dag} (\Pi_{B}^{G})-\Pi_{A}^{G})\rho_{A}]\geq0\quad\forall\rho_{A} \in\mathcal{D} (\mathcal{H}_{A}),
\end{equation}
we conclude that $\left(  \mathcal{N}_{A\rightarrow B}\right)  ^{\dag}(\Pi_{B}^{G})-\Pi_{A}^{G}$ is a positive semi-definite operator, which is equivalent to the condition in \eqref{eq:Bose-sym-channel}. Thus, $\mathcal{N}_{A\rightarrow B}$ is a Bose symmetric channel if and only if $\operatorname{Tr}[\Pi_{A}^{G}\rho_{A}]$ is a resource monotone.

\subsection{Resource theory of asymmetric unextendibility}

\label{sec:rt-asym-unext}We now propose a resource theory that generalizes that proposed in \cite{KDWW19,KDWW21}, just as the set of $G$-symmetric extendible states generalizes the set of $k$-extendible states (recall Example~\ref{ex:k-ext}). One of the main ideas is to use the notion of channel extension introduced in \cite{KDWW19,KDWW21}; additionally, this resource theory allows us to conclude that the acceptance probability of Algorithm~\ref{alg:sym-ext} (i.e., the maximum $G$-symmetric extendible fidelity) is a resource monotone and thus well motivated in this sense.

Let $G$ be a group, and let $\{U_{RS}(g)\}_{g\in G}$ and $\{V_{R^{\prime}S^{\prime}}(g)\}_{g\in G}$ be projective unitary representations of $G$ acting on $\mathcal{H}_{R}\otimes\mathcal{H}_{S}$ and $\mathcal{H}_{R^{\prime}}\otimes \mathcal{H}_{S^{\prime}}$, respectively.

\begin{definition}[$G$-symmetric extendible channel]\label{def:GSE-channels}
A channel $\mathcal{N}_{S\rightarrow S^{\prime}}$ is $G$-symmetric extendible if there exists a bipartite channel $\mathcal{M}_{RS\rightarrow R^{\prime}S^{\prime}}$ such that

\begin{enumerate}
\item $\mathcal{M}_{RS\rightarrow R^{\prime}S^{\prime}}$ is a channel extension of $\mathcal{N}_{S\rightarrow S^{\prime}}$:
\begin{equation}
\operatorname{Tr}_{R^{\prime}}\circ\mathcal{M}_{RS\rightarrow R^{\prime}S^{\prime}}=\mathcal{N}_{S\rightarrow S^{\prime}}\circ\operatorname{Tr}_{R},
\label{eq:channel-ext}
\end{equation}

\item $\mathcal{M}_{RS\rightarrow R^{\prime}S^{\prime}}$ is covariant with respect to $\{U_{RS}(g)\}_{g\in G}$ and $\{V_{R^{\prime}S^{\prime}}(g)\}_{g\in
G}$:
\begin{equation}
\mathcal{M}_{RS\rightarrow R^{\prime} S^{\prime}} \circ \mathcal{U}_{RS}(g) = \mathcal{V}_{R^{\prime} S^{\prime}}(g) \circ \mathcal{M}_{RS\rightarrow R^{\prime}S^{\prime}}
\label{eq:covariance-G-sym-ext}
\end{equation}
for all $g\in G$,
where $\mathcal{U}_{RS}(g)(\cdot)$ and $\mathcal{V}_{R^{\prime}S^{\prime}}(g)(\cdot)$ are defined similarly to \eqref{eq:uandv}.
\end{enumerate}
\end{definition}

The condition in \eqref{eq:channel-ext} implies that the extension channel $\mathcal{M}_{RS\rightarrow R^{\prime}S^{\prime}}$ is non-signaling from $R$ to $S^{\prime}$ \cite{BGNP01,ESW02,PHHH06}, in the sense that
\begin{equation}
\operatorname{Tr}_{R^{\prime}} \circ \mathcal{M}_{RS\rightarrow R^{\prime} S^{\prime}} = \operatorname{Tr}_{R^{\prime}} \circ \mathcal{M}_{RS\rightarrow R^{\prime}S^{\prime}} \circ \mathcal{R}_{R}^{\pi}, 
\label{eq:non-sig-1}
\end{equation}
where $\mathcal{R}_{R}^{\pi}(\cdot)\coloneqq \operatorname{Tr}[\cdot]\pi_{R}$ is a replacer channel that traces out its input and replaces it with the maximally mixed state $\pi_{R}$. This follows because
\begin{align}
\operatorname{Tr}_{R^{\prime}}\circ\mathcal{M}_{RS\rightarrow R^{\prime} S^{\prime}} \circ \mathcal{R}_{R}^{\pi}  &  = \mathcal{N}_{S\rightarrow S^{\prime}}\circ\operatorname{Tr}_{R}\circ\mathcal{R}_{R}^{\pi}\\
&  =\mathcal{N}_{S\rightarrow S^{\prime}}\circ\operatorname{Tr}_{R}\\
&  =\operatorname{Tr}_{R^{\prime}}\circ\mathcal{M}_{RS\rightarrow R^{\prime}S^{\prime}}, 
\label{eq:non-sig-4}
\end{align}
where we have exploited the identity in \eqref{eq:channel-ext} in the first and last lines, and in the second line used the fact that $\operatorname{Tr}_{R}\circ\mathcal{R}_{R}^{\pi}=\operatorname{Tr}_{R}$.

Definition~\ref{def:GSE-channels}\ leads to a consistent resource theory of $G$-asymmetric unextendibility, in the sense that the free states are $G$-symmetric extendible states and the output of a $G$-symmetric extendible channel acting on a $G$-symmetric extendible state is a $G$-symmetric extendible state.

\begin{proposition}
\label{prop:g-sym-ext-ch-triv-input-state}A $G$-symmetric extendible channel $\mathcal{N}_{S\rightarrow S^{\prime}}$\ with trivial input system is a $G$-symmetric extendible state.
\end{proposition}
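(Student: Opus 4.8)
The plan is to identify a channel with trivial input with the state it prepares, and then to produce an explicit $G$-invariant extension of that state by feeding a $G$-invariant input into the extension channel $\mathcal{M}$ guaranteed by Definition~\ref{def:GSE-channels}. First I would record what trivial input means: when $\mathcal{H}_{S}=\mathbb{C}$, the channel $\mathcal{N}_{S\rightarrow S^{\prime}}$ is a state preparation outputting a fixed state $\sigma_{S^{\prime}}$, the representation $\{U_{RS}(g)\}_{g\in G}$ collapses to a representation $\{U_{R}(g)\}_{g\in G}$ on $\mathcal{H}_{R}$ alone, and the extension channel becomes $\mathcal{M}_{R\rightarrow R^{\prime}S^{\prime}}$. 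The goal is then to show that $\sigma_{S^{\prime}}$ is $G$-symmetric extendible in the sense of Definition~\ref{def:g-sym-ext}, with reference system $R^{\prime}$ and representation $\{V_{R^{\prime}S^{\prime}}(g)\}_{g\in G}$; that is, to exhibit a state $\omega_{R^{\prime}S^{\prime}}$ satisfying $\operatorname{Tr}_{R^{\prime}}[\omega_{R^{\prime}S^{\prime}}]=\sigma_{S^{\prime}}$ together with $\omega_{R^{\prime}S^{\prime}}=\mathcal{V}_{R^{\prime}S^{\prime}}(g)(\omega_{R^{\prime}S^{\prime}})$ for all $g\in G$.

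The construction is to set $\omega_{R^{\prime}S^{\prime}}\coloneqq\mathcal{M}_{R\rightarrow R^{\prime}S^{\prime}}(\pi_{R})$, feeding the maximally mixed state $\pi_{R}$ into the extension channel. This is a natural choice because $\pi_{R}$ is manifestly $G$-invariant: conjugation by any unitary fixes the maximally mixed state, so $\mathcal{U}_{R}(g)(\pi_{R})=\pi_{R}$ for every $g\in G$. I would then verify the two defining conditions of Definition~\ref{def:g-sym-ext}. The extension condition follows directly from the channel-extension property \eqref{eq:channel-ext}, which with trivial $S$ reads $\operatorname{Tr}_{R^{\prime}}\circ\mathcal{M}_{R\rightarrow R^{\prime}S^{\prime}}=\mathcal{N}_{S\rightarrow S^{\prime}}\circ\operatorname{Tr}_{R}$ and hence gives $\operatorname{Tr}_{R^{\prime}}[\mathcal{M}(\pi_{R})]=\sigma_{S^{\prime}}$. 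The $G$-invariance follows from the covariance condition \eqref{eq:covariance-G-sym-ext}: applying $\mathcal{V}_{R^{\prime}S^{\prime}}(g)$ and commuting it through $\mathcal{M}$ converts it into $\mathcal{M}\circ\mathcal{U}_{R}(g)$ acting on $\pi_{R}$, and since $\pi_{R}$ is $G$-invariant this returns $\mathcal{M}(\pi_{R})=\omega_{R^{\prime}S^{\prime}}$. Concretely, $\mathcal{V}_{R^{\prime}S^{\prime}}(g)(\omega_{R^{\prime}S^{\prime}})=(\mathcal{V}_{R^{\prime}S^{\prime}}(g)\circ\mathcal{M})(\pi_{R})=(\mathcal{M}\circ\mathcal{U}_{R}(g))(\pi_{R})=\mathcal{M}(\pi_{R})=\omega_{R^{\prime}S^{\prime}}$.

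The argument is short, so I do not expect a serious analytic obstacle; the only care needed is bookkeeping. One must track that the extendibility of the output state is witnessed on the output reference system $R^{\prime}$ under the output representation $\{V_{R^{\prime}S^{\prime}}(g)\}_{g\in G}$, rather than on the input data, and one must confirm that $\pi_{R}$ is a legitimate $G$-invariant input (indeed, more generally, the group twirl $\frac{1}{|G|}\sum_{g\in G}\mathcal{U}_{R}(g)(\xi_{R})$ of any input $\xi_{R}$ would serve). Both points are routine, so I expect the whole proof to reduce to the two verifications displayed above.
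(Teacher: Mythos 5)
Your proof is correct, but it produces the witness state by a different mechanism than the paper does. The paper's proof simply takes the reference system $R$ of the extension channel to be trivial as well, so that $\mathcal{M}_{RS\rightarrow R^{\prime}S^{\prime}}$ is itself a state $\omega_{R^{\prime}S^{\prime}}$ extending $\rho_{S^{\prime}}$, and the covariance condition \eqref{eq:covariance-G-sym-ext} collapses directly to $\omega_{R^{\prime}S^{\prime}}=\mathcal{V}_{R^{\prime}S^{\prime}}(g)(\omega_{R^{\prime}S^{\prime}})$ for all $g\in G$. You instead keep $R$ as given and evaluate $\mathcal{M}_{R\rightarrow R^{\prime}S^{\prime}}$ on the maximally mixed state $\pi_{R}$, using $\mathcal{U}_{R}(g)(\pi_{R})=\pi_{R}$ together with covariance to get $G$-invariance of $\omega_{R^{\prime}S^{\prime}}=\mathcal{M}(\pi_{R})$, and the channel-extension condition \eqref{eq:channel-ext} to get $\operatorname{Tr}_{R^{\prime}}[\omega_{R^{\prime}S^{\prime}}]=\sigma_{S^{\prime}}$. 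Both verifications are sound. What your version buys is that it does not require any freedom to re-choose $R$: since the representations $\{U_{RS}(g)\}_{g\in G}$ are fixed in advance in Definition~\ref{def:GSE-channels}, the paper's step ``we can choose the input system $R$ to be trivial'' is a small liberty, whereas your argument works verbatim for whatever $R$ the extension channel comes with (and, as you note, any group-twirled input would do in place of $\pi_{R}$). What the paper's version buys is brevity: with $R$ trivial there is nothing to feed in, and the extension state is read off immediately.
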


\begin{proof}
If the input system $S$ of $\mathcal{N}_{S\rightarrow S^{\prime}}$ is trivial, then it follows that $\mathcal{N}_{S\rightarrow S^{\prime}}$ is a state (call it $\rho_{S^{\prime}}$); furthermore, we can choose the input system $R$ of the extension channel $\mathcal{M}_{RS\rightarrow R^{\prime}S^{\prime}}$ to be trivial, in which case $\mathcal{M}_{RS\rightarrow R^{\prime}S^{\prime}}$ is a state (call it $\omega_{R^{\prime}S^{\prime}}$) that extends $\rho_{S^{\prime}}$. The condition in \eqref{eq:covariance-G-sym-ext}\ then collapses to $\omega_{R^{\prime}S^{\prime}}=\mathcal{V}_{R^{\prime}S^{\prime}} (g)(\omega_{R^{\prime}S^{\prime}})$ for all $g\in G$. It follows by Definition~\ref{def:g-bose-sym-ext} that $\rho_{S^{\prime}}$ is a $G$-symmetric extendible state.
\end{proof}

\begin{proposition}
\label{prop:golden-rule-G-SE}Let $\mathcal{N}_{S\rightarrow S^{\prime}}$ be a $G$-symmetric extendible channel, and let $\rho_{S}$ be a $G$-symmetric extendible state. Then $\mathcal{N}_{S\rightarrow S^{\prime}}(\rho_{S})$ is a $G$-symmetric extendible state.
\end{proposition}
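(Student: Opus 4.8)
The plan is to build an explicit $G$-symmetric extension of the output state directly from the two witnesses supplied by the hypotheses. Since $\rho_{S}$ is $G$-symmetric extendible, Definition~\ref{def:g-sym-ext} furnishes a state $\omega_{RS}$ with $\operatorname{Tr}_{R}[\omega_{RS}]=\rho_{S}$ and $\mathcal{U}_{RS}(g)(\omega_{RS})=\omega_{RS}$ for all $g\in G$. Since $\mathcal{N}_{S\rightarrow S^{\prime}}$ is a $G$-symmetric extendible channel, Definition~\ref{def:GSE-channels} furnishes a bipartite channel $\mathcal{M}_{RS\rightarrow R^{\prime}S^{\prime}}$ obeying the channel-extension identity \eqref{eq:channel-ext} and the covariance relation \eqref{eq:covariance-G-sym-ext}. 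My candidate extension of $\mathcal{N}_{S\rightarrow S^{\prime}}(\rho_{S})$ is then $\omega_{R^{\prime}S^{\prime}}^{\prime}\coloneqq\mathcal{M}_{RS\rightarrow R^{\prime}S^{\prime}}(\omega_{RS})$, which is automatically a legitimate state because $\mathcal{M}$ is a channel and $\omega_{RS}$ is a state.

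I then verify the two defining properties of Definition~\ref{def:g-sym-ext} for $\omega_{R^{\prime}S^{\prime}}^{\prime}$. For the marginal condition, I would trace out $R^{\prime}$ and feed in the channel-extension identity \eqref{eq:channel-ext} directly:
\begin{align}
\operatorname{Tr}_{R^{\prime}}[\omega_{R^{\prime}S^{\prime}}^{\prime}]
&=(\operatorname{Tr}_{R^{\prime}}\circ\mathcal{M}_{RS\rightarrow R^{\prime}S^{\prime}})(\omega_{RS})\\
&=(\mathcal{N}_{S\rightarrow S^{\prime}}\circ\operatorname{Tr}_{R})(\omega_{RS})\\
&=\mathcal{N}_{S\rightarrow S^{\prime}}(\rho_{S}),
\end{align}
where the last step uses $\operatorname{Tr}_{R}[\omega_{RS}]=\rho_{S}$. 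For the invariance condition, I would push $\mathcal{V}_{R^{\prime}S^{\prime}}(g)$ through $\mathcal{M}$ using the covariance relation \eqref{eq:covariance-G-sym-ext} and then invoke the $G$-invariance of $\omega_{RS}$: namely $\mathcal{V}_{R^{\prime}S^{\prime}}(g)(\omega_{R^{\prime}S^{\prime}}^{\prime})=\mathcal{M}_{RS\rightarrow R^{\prime}S^{\prime}}(\mathcal{U}_{RS}(g)(\omega_{RS}))=\mathcal{M}_{RS\rightarrow R^{\prime}S^{\prime}}(\omega_{RS})=\omega_{R^{\prime}S^{\prime}}^{\prime}$ for every $g\in G$. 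Together these two checks show that $\omega_{R^{\prime}S^{\prime}}^{\prime}$ witnesses the $G$-symmetric extendibility of $\mathcal{N}_{S\rightarrow S^{\prime}}(\rho_{S})$.

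This argument is essentially the standard ``golden rule'' that free channels map free states to free states, so I do not anticipate a genuine obstacle; the only points requiring care are bookkeeping ones. I would make sure the system labels line up so that the invariance of the output is expressed with respect to the representation $\{V_{R^{\prime}S^{\prime}}(g)\}_{g\in G}$ on the output systems (rather than the input representation), and I would confirm that the covariance relation is applied in the correct order, i.e.\ as $\mathcal{V}_{R^{\prime}S^{\prime}}(g)\circ\mathcal{M}_{RS\rightarrow R^{\prime}S^{\prime}}=\mathcal{M}_{RS\rightarrow R^{\prime}S^{\prime}}\circ\mathcal{U}_{RS}(g)$. No further structural facts (such as the non-signaling consequence \eqref{eq:non-sig-1}) are needed for this particular claim.
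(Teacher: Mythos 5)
Your proposal is correct and follows essentially the same route as the paper's proof: construct the candidate extension $\mathcal{M}_{RS\rightarrow R^{\prime}S^{\prime}}(\omega_{RS})$, verify the marginal condition via \eqref{eq:channel-ext}, and verify $G$-invariance by combining the covariance relation \eqref{eq:covariance-G-sym-ext} with the invariance \eqref{eq:G-ext-2} of $\omega_{RS}$. Your closing bookkeeping remarks (output representation, order of composition in the covariance relation) are exactly the right points of care, and your observation that the non-signaling consequence is not needed here is also consistent with the paper.
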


\begin{proof}
Since $\rho_{S}$ is a $G$-symmetric extendible state, by Definition~\ref{def:g-sym-ext}, there exists an extension state $\omega_{RS}$
satisfying the conditions stated there. Since $\mathcal{N}_{S\rightarrow S^{\prime}}$ is a $G$-symmetric extendible channel, by
Definition~\ref{def:GSE-channels}, there exists an extension channel $\mathcal{M}_{RS\rightarrow R^{\prime}S^{\prime}}$ satisfying the conditions stated there. It follows that $\mathcal{M}_{RS\rightarrow R^{\prime}S^{\prime}}(\omega_{RS})$ is an extension of $\mathcal{N}_{S\rightarrow S^{\prime}}(\rho_{S})$ as
\begin{align}
\operatorname{Tr}_{R^{\prime}}[\mathcal{M}_{RS\rightarrow R^{\prime}S^{\prime}} (\omega_{RS})]  &  =\mathcal{N}_{S\rightarrow S^{\prime}}(\operatorname{Tr}_{R}[\omega_{RS}])\\
&  =\mathcal{N}_{S\rightarrow S^{\prime}}(\rho_{S}),
\end{align}
where the first equality follows from \eqref{eq:channel-ext}. Also, consider
that the following holds for all $g\in G$:
\begin{align}
&  (\mathcal{V}_{R^{\prime}S^{\prime}}(g)\circ\mathcal{M}_{RS\rightarrow R^{\prime}S^{\prime}})(\omega_{RS})\nonumber\\
&  =(\mathcal{M}_{RS\rightarrow R^{\prime}S^{\prime}}\circ\mathcal{U}_{RS}(g))(\omega_{RS})\\
&  =\mathcal{M}_{RS\rightarrow R^{\prime} S^{\prime}}(\omega_{RS}),
\end{align}
where the first equality follows from \eqref{eq:covariance-G-sym-ext} and the second from \eqref{eq:G-ext-2}.
\end{proof}

\medskip
As a consequence of Proposition~\ref{prop:golden-rule-G-SE}\ and the data-processing inequality for fidelity, the maximum $G$-symmetric extendible fidelity is a resource monotone.

\begin{corollary}
Let $\rho_{S}$ be a state, and let $\mathcal{N}_{S\rightarrow S^{\prime}}$ be a $G$-symmetric extendible channel. Then the maximum $G$-symmetric extendible fidelity is a resource monotone,
\begin{multline}
\max_{\sigma_{S}\in\operatorname*{SymExt}_{G}}F(\rho_{S},\sigma_{S})\\
\leq \max_{\sigma_{S^{\prime}}\in\operatorname*{SymExt}_{G}}F(\mathcal{N}_{S\rightarrow S^{\prime}}(\rho_{S}),\sigma_{S^{\prime}}).
\end{multline}

\end{corollary}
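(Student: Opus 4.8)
The plan is to combine the two facts flagged in the sentence preceding the corollary: the ``golden rule'' of Proposition~\ref{prop:golden-rule-G-SE}, which guarantees that a $G$-symmetric extendible channel maps free states to free states, together with the data-processing inequality for fidelity under quantum channels. The argument is the standard resource-monotone chain, so no new input is needed beyond these two ingredients.

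First I would let $\sigma_{S}^{\star}$ denote a maximizer attaining $\max_{\sigma_{S}\in\operatorname*{SymExt}_{G}}F(\rho_{S},\sigma_{S})$; such a maximizer exists because $\operatorname*{SymExt}_{G}$ is compact (it is cut out of the compact set of density operators by the finitely many linear constraints in \eqref{eq:g-sym-ext-set}) and the fidelity is continuous. Applying data processing, namely $F(\omega,\tau)\leq F(\mathcal{N}(\omega),\mathcal{N}(\tau))$, to the channel $\mathcal{N}_{S\rightarrow S^{\prime}}$ gives
\begin{equation}
F(\rho_{S},\sigma_{S}^{\star})\leq F(\mathcal{N}_{S\rightarrow S^{\prime}}(\rho_{S}),\mathcal{N}_{S\rightarrow S^{\prime}}(\sigma_{S}^{\star})).
\end{equation}

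Next I would invoke Proposition~\ref{prop:golden-rule-G-SE}: since $\sigma_{S}^{\star}\in\operatorname*{SymExt}_{G}$ and $\mathcal{N}_{S\rightarrow S^{\prime}}$ is a $G$-symmetric extendible channel, the output $\mathcal{N}_{S\rightarrow S^{\prime}}(\sigma_{S}^{\star})$ again lies in $\operatorname*{SymExt}_{G}$. It is therefore a feasible point for the optimization on the right-hand side of the claim, so
\begin{equation}
F(\mathcal{N}_{S\rightarrow S^{\prime}}(\rho_{S}),\mathcal{N}_{S\rightarrow S^{\prime}}(\sigma_{S}^{\star}))\leq\max_{\sigma_{S^{\prime}}\in\operatorname*{SymExt}_{G}}F(\mathcal{N}_{S\rightarrow S^{\prime}}(\rho_{S}),\sigma_{S^{\prime}}).
\end{equation}
Chaining the two displays and recalling that $F(\rho_{S},\sigma_{S}^{\star})=\max_{\sigma_{S}\in\operatorname*{SymExt}_{G}}F(\rho_{S},\sigma_{S})$ yields the asserted monotonicity.

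There is no genuine obstacle here; the conceptual content is entirely absorbed into Proposition~\ref{prop:golden-rule-G-SE}. The only points requiring a moment's care are getting the direction of the data-processing inequality right and the bookkeeping that the optimizer of the source problem is transported by $\mathcal{N}_{S\rightarrow S^{\prime}}$ into the feasible set of the target problem—both immediate once the golden rule is available. Should one prefer to sidestep the existence of an exact maximizer, the identical chain runs with an $\varepsilon$-near-optimal $\sigma_{S}^{\star}$ followed by the limit $\varepsilon\to 0$.
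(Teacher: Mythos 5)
Your argument is exactly the paper's intended one: the corollary is stated as an immediate consequence of Proposition~\ref{prop:golden-rule-G-SE} (free channels preserve $\operatorname*{SymExt}_{G}$) combined with the data-processing inequality for fidelity, which is precisely the chain you write down. The proposal is correct, and the remark about using an $\varepsilon$-near-optimal $\sigma_{S}^{\star}$ is a fine (if unnecessary) extra precaution.
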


\begin{example}
[$k$-unextendibility]The resource theory of $k$-unextendibility, proposed in \cite{KDWW19,KDWW21}, is a special case of the resource theory of $G$-asymmetric unextendibility. To see this, recall that a bipartite channel $\mathcal{N}_{AB\rightarrow A^{\prime}B^{\prime}}$ is $k$-extendible if there exists an extension channel $\mathcal{M}_{AB_{1}\cdots B_{k}\rightarrow A^{\prime} B_{1}^{\prime} \cdots B_{k}^{\prime}}$ satisfying
\begin{multline}
\operatorname{Tr}_{B_{2}^{\prime}\cdots B_{k}^{\prime}}\circ\mathcal{M}_{AB_{1}\cdots B_{k}\rightarrow A^{\prime}B_{1}^{\prime}\cdots B_{k}^{\prime} } \\ 
=\mathcal{N}_{AB\rightarrow A^{\prime}B^{\prime}}\circ\operatorname{Tr}_{B_{2}\cdots B_{k}}
\end{multline}
and
\begin{multline}
\mathcal{W}_{B_{1}^{\prime}\cdots B_{k}^{\prime}}^{\pi}\circ\mathcal{M}_{AB_{1}\cdots B_{k}\rightarrow A^{\prime}B_{1}^{\prime}\cdots B_{k}^{\prime}
}\\
=\mathcal{M}_{AB_{1}\cdots B_{k}\rightarrow A^{\prime}B_{1}^{\prime}\cdots B_{k}^{\prime}}\circ\mathcal{W}_{B_{1}\cdots B_{k}}^{\pi},
\end{multline}
for all $\pi \in S_k$, where $\mathcal{W}_{B_{1}\cdots B_{k}}^{\pi}$ and $\mathcal{W}_{B_{1}^{\prime}\cdots B_{k}^{\prime}}^{\pi}$ are unitary permutation channels. 
Thus, by setting
\begin{align}
S  &  =AB,\\
R  &  =B_{2}\cdots B_{k},\\
S^{\prime}  &  =A^{\prime}B^{\prime},\\
R^{\prime}  &  =B_{2}^{\prime}\cdots B_{k}^{\prime},\\
U_{RS}(g)  &  =\mathbb{I}_{A}\otimes W_{B_{1}\cdots B_{k}}(\pi),\\
V_{R^{\prime}S^{\prime}}(g)  &  =\mathbb{I}_{A^{\prime}}\otimes W_{B_{1}^{\prime}\cdots B_{k}^{\prime}}(\pi),
\end{align}
we see that a $k$-extendible channel is a special case of a $G$-symmetric extendible channel.
\end{example}

\subsection{Resource theory of Bose asymmetric unextendibility}

We finally consider the resource theory of Bose asymmetric unextendibility, with the goal being similar to that of the previous sections; we want to justify the acceptance probability of Algorithm~\ref{alg:G-BSE-test} (i.e., the maximum $G$-BSE fidelity) as a resource monotone. At the same time, we establish a novel resource theory that could have further applications in quantum information.

Let $G$, $\{U_{RS}(g)\}_{g\in G}$, and $\{V_{R^{\prime}S^{\prime}}(g)\}_{g\in G}$ be defined the same way as in Section~\ref{sec:rt-asym-unext}.

\begin{definition}
[$G$-BSE channel]\label{def:G-BSE-channels}A channel $\mathcal{N}_{S\rightarrow S^{\prime}}$ is $G$-Bose symmetric extendible ($G$-BSE) if there exists a bipartite channel $\mathcal{M}_{RS\rightarrow R^{\prime}S^{\prime}}$ such that

\begin{enumerate}
\item $\mathcal{M}_{RS\rightarrow R^{\prime}S^{\prime}}$ is a channel extension of $\mathcal{N}_{S\rightarrow S^{\prime}}$:
\begin{equation}
\operatorname{Tr}_{R^{\prime}}\circ\mathcal{M}_{RS\rightarrow R^{\prime}S^{\prime}}=\mathcal{N}_{S\rightarrow S^{\prime}}\circ\operatorname{Tr}_{R},
\label{eq:channel-ext-bose}
\end{equation}

\item $\mathcal{M}_{RS\rightarrow R^{\prime}S^{\prime}}$ is Bose symmetric:
\begin{equation}
(\mathcal{M}_{RS\rightarrow R^{\prime}S^{\prime}})^{\dag} (\Pi_{R^{\prime}S^{\prime}}^{G})\geq\Pi_{RS}^{G}, \label{eq:BSE-bose-condition}
\end{equation}
where $\Pi_{RS}^{G}$ and $\Pi_{R^{\prime}S^{\prime}}^{G}$ are defined as in \eqref{eq:Pi_RS-proj-again} as sums over $U_{RS}(g)$ and $V_{R^{\prime}S^{\prime}}(g)$ respectively.

\end{enumerate}
\end{definition}

As discussed in \eqref{eq:non-sig-1}--\eqref{eq:non-sig-4}, the condition in \eqref{eq:channel-ext-bose} can be understood as imposing a no-signaling constraint, from $R$ to $S^{\prime}$.

With the same line of reasoning given in the proof of Proposition~\ref{prop:g-sym-ext-ch-triv-input-state}, we conclude the following:

\begin{proposition}
A $G$-BSE channel $\mathcal{N}_{S\rightarrow S^{\prime}}$\ with trivial input system is a $G$-BSE state.
\end{proposition}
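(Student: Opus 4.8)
The plan is to mirror the two arguments already given in this section: the trivial-input reduction in the proof of Proposition~\ref{prop:g-sym-ext-ch-triv-input-state}, together with the trivial-input computation appearing in the proof of the Bose symmetric channel proposition (the one establishing $\operatorname{Tr}[\Pi_{B}^{G}\omega_{B}]=1$). First I would note that a channel with trivial input system $S$ is simply a state preparation, so $\mathcal{N}_{S\rightarrow S^{\prime}}$ is a state, which I call $\rho_{S^{\prime}}$. Choosing the input system $R$ of the guaranteed extension channel $\mathcal{M}_{RS\rightarrow R^{\prime}S^{\prime}}$ to be trivial as well makes $\mathcal{M}$ a state $\omega_{R^{\prime}S^{\prime}}$, and the channel-extension identity in \eqref{eq:channel-ext-bose} then collapses to the state-extension condition $\operatorname{Tr}_{R^{\prime}}[\omega_{R^{\prime}S^{\prime}}]=\rho_{S^{\prime}}$.

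Next I would translate the Bose-symmetry condition \eqref{eq:BSE-bose-condition} for the extension channel into the corresponding condition on the state $\omega_{R^{\prime}S^{\prime}}$. With $R$ and $S$ trivial, each group element is represented by the scalar $1$, so $\Pi_{RS}^{G}=1$. Writing the spectral decomposition $\omega_{R^{\prime}S^{\prime}}=\sum_{x}p(x)|x\rangle\!\langle x|_{R^{\prime}S^{\prime}}$ identifies the Kraus operators of the state-preparation map $\mathcal{M}$ as $\{\sqrt{p(x)}\,|x\rangle_{R^{\prime}S^{\prime}}\}_{x}$, whence $(\mathcal{M}_{RS\rightarrow R^{\prime}S^{\prime}})^{\dag}(\Pi_{R^{\prime}S^{\prime}}^{G})=\sum_{x}p(x)\langle x|_{R^{\prime}S^{\prime}}\Pi_{R^{\prime}S^{\prime}}^{G}|x\rangle_{R^{\prime}S^{\prime}}=\operatorname{Tr}[\Pi_{R^{\prime}S^{\prime}}^{G}\omega_{R^{\prime}S^{\prime}}]$. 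Thus the operator inequality \eqref{eq:BSE-bose-condition} reduces to the scalar inequality $\operatorname{Tr}[\Pi_{R^{\prime}S^{\prime}}^{G}\omega_{R^{\prime}S^{\prime}}]\geq 1$.

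Then I would close the argument exactly as before: since $\Pi_{R^{\prime}S^{\prime}}^{G}$ is a projection and $\omega_{R^{\prime}S^{\prime}}$ a state, the reverse bound $\operatorname{Tr}[\Pi_{R^{\prime}S^{\prime}}^{G}\omega_{R^{\prime}S^{\prime}}]\leq 1$ holds automatically, so the two bounds force $\operatorname{Tr}[\Pi_{R^{\prime}S^{\prime}}^{G}\omega_{R^{\prime}S^{\prime}}]=1$. Applying the equivalence in \eqref{eq:Bose-symmetric-equiv-cond} yields $\omega_{R^{\prime}S^{\prime}}=\Pi_{R^{\prime}S^{\prime}}^{G}\omega_{R^{\prime}S^{\prime}}\Pi_{R^{\prime}S^{\prime}}^{G}$. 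Combined with the extension property $\operatorname{Tr}_{R^{\prime}}[\omega_{R^{\prime}S^{\prime}}]=\rho_{S^{\prime}}$, this exhibits $\omega_{R^{\prime}S^{\prime}}$ as a Bose-symmetric extension of $\rho_{S^{\prime}}$, so by Definition~\ref{def:g-bose-sym-ext} the state $\rho_{S^{\prime}}$ is $G$-BSE.

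I do not anticipate a genuine obstacle, since the result is a direct amalgamation of two computations already carried out in the excerpt. The only step demanding care is the passage from the operator inequality \eqref{eq:BSE-bose-condition} to the scalar trace inequality, which hinges on correctly computing the Hilbert--Schmidt adjoint of the state-preparation channel $\mathcal{M}$ and on the collapse $\Pi_{RS}^{G}=1$; both are handled precisely as in the trivial-input case of the Bose symmetric channel proposition, so the verification is routine.
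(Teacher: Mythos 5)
Your proof is correct and follows exactly the route the paper intends: it combines the trivial-input reduction from Proposition~\ref{prop:g-sym-ext-ch-triv-input-state} (taking $R$ trivial so that the extension channel collapses to an extension state $\omega_{R^{\prime}S^{\prime}}$ of $\rho_{S^{\prime}}$) with the adjoint computation already carried out for trivial-input Bose symmetric channels, yielding $\operatorname{Tr}[\Pi_{R^{\prime}S^{\prime}}^{G}\omega_{R^{\prime}S^{\prime}}]=1$ and hence, via \eqref{eq:Bose-symmetric-equiv-cond}, the condition of Definition~\ref{def:g-bose-sym-ext}. The paper only sketches this by reference, and your write-up fills in the same steps faithfully.
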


The following proposition demonstrates that the resource theory delineated by Definition~\ref{def:G-BSE-channels} is indeed a consistent resource theory.

\begin{proposition}
\label{prop:golden-rule-G-BSE}Let $\mathcal{N}_{S\rightarrow S^{\prime}}$ be a $G$-BSE channel, and let $\rho_{S}$ be a $G$-BSE state. Then $\mathcal{N}_{S\rightarrow S^{\prime}}(\rho_{S})$ is a $G$-BSE state.
\end{proposition}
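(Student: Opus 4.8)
The plan is to mirror the structure of the proof of Proposition~\ref{prop:golden-rule-G-SE}, but to replace the covariance argument by one based on the operator inequality \eqref{eq:BSE-bose-condition} together with the scalar trace characterization of Bose symmetry in \eqref{eq:Bose-symmetric-equiv-cond}. First I would unpack the two hypotheses. Since $\rho_{S}$ is a $G$-BSE state, Definition~\ref{def:g-bose-sym-ext} furnishes an extension state $\omega_{RS}$ with $\operatorname{Tr}_{R}[\omega_{RS}]=\rho_{S}$ and $\omega_{RS}=\Pi_{RS}^{G}\omega_{RS}\Pi_{RS}^{G}$; equivalently, by \eqref{eq:Bose-symmetric-equiv-cond}, $\operatorname{Tr}[\Pi_{RS}^{G}\omega_{RS}]=1$. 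Since $\mathcal{N}_{S\rightarrow S^{\prime}}$ is a $G$-BSE channel, Definition~\ref{def:G-BSE-channels} furnishes a bipartite extension channel $\mathcal{M}_{RS\rightarrow R^{\prime}S^{\prime}}$ satisfying the channel-extension identity \eqref{eq:channel-ext-bose} and the Bose-symmetric channel condition \eqref{eq:BSE-bose-condition}.

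The natural candidate for a Bose-symmetric extension of $\mathcal{N}_{S\rightarrow S^{\prime}}(\rho_{S})$ is the state $\mathcal{M}_{RS\rightarrow R^{\prime}S^{\prime}}(\omega_{RS})$. I would first verify the extension property using \eqref{eq:channel-ext-bose}:
\begin{equation}
\operatorname{Tr}_{R^{\prime}}[\mathcal{M}_{RS\rightarrow R^{\prime}S^{\prime}}(\omega_{RS})]=\mathcal{N}_{S\rightarrow S^{\prime}}(\operatorname{Tr}_{R}[\omega_{RS}])=\mathcal{N}_{S\rightarrow S^{\prime}}(\rho_{S}).
\end{equation}
Then I would establish Bose symmetry of this candidate by computing, via the Hilbert--Schmidt adjoint,
\begin{align}
\operatorname{Tr}[\Pi_{R^{\prime}S^{\prime}}^{G}\mathcal{M}_{RS\rightarrow R^{\prime}S^{\prime}}(\omega_{RS})] &=\operatorname{Tr}[(\mathcal{M}_{RS\rightarrow R^{\prime}S^{\prime}})^{\dag}(\Pi_{R^{\prime}S^{\prime}}^{G})\,\omega_{RS}]\\
&\geq\operatorname{Tr}[\Pi_{RS}^{G}\,\omega_{RS}]=1,
\end{align}
where the inequality uses \eqref{eq:BSE-bose-condition} together with $\omega_{RS}\geq0$, and the final equality is the Bose symmetry of $\omega_{RS}$. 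Since $\Pi_{R^{\prime}S^{\prime}}^{G}$ is a projection and $\mathcal{M}_{RS\rightarrow R^{\prime}S^{\prime}}(\omega_{RS})$ is a state, the trace is also at most one, so it equals one. Applying \eqref{eq:Bose-symmetric-equiv-cond} once more then yields $\mathcal{M}_{RS\rightarrow R^{\prime}S^{\prime}}(\omega_{RS})=\Pi_{R^{\prime}S^{\prime}}^{G}\mathcal{M}_{RS\rightarrow R^{\prime}S^{\prime}}(\omega_{RS})\Pi_{R^{\prime}S^{\prime}}^{G}$, so the candidate is Bose symmetric. By Definition~\ref{def:g-bose-sym-ext}, $\mathcal{N}_{S\rightarrow S^{\prime}}(\rho_{S})$ is therefore a $G$-BSE state.

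I do not expect a serious obstacle, since the argument is a short chain of identities and a single operator inequality. The only point requiring care---and where this proof departs from that of Proposition~\ref{prop:golden-rule-G-SE}---is that the free-channel condition \eqref{eq:BSE-bose-condition} is an operator inequality rather than an exact covariance relation, so Bose symmetry of the output extension must be deduced through the scalar characterization \eqref{eq:Bose-symmetric-equiv-cond} (sandwiching $\operatorname{Tr}[\Pi_{RS}^{G}\omega_{RS}]=1$ up to $\operatorname{Tr}[\Pi_{R^{\prime}S^{\prime}}^{G}\mathcal{M}_{RS\rightarrow R^{\prime}S^{\prime}}(\omega_{RS})]=1$) rather than by directly exhibiting invariance under each $V_{R^{\prime}S^{\prime}}(g)$. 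I would also double-check that complete positivity of $\mathcal{M}_{RS\rightarrow R^{\prime}S^{\prime}}$, so that its adjoint preserves the inequality when traced against $\omega_{RS}\geq0$, is invoked correctly.
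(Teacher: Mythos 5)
Your proposal is correct and follows essentially the same route as the paper's proof in Appendix~\ref{app:proof-prop-G-BSE}: take $\mathcal{M}_{RS\rightarrow R^{\prime}S^{\prime}}(\omega_{RS})$ as the candidate extension, verify the marginal condition via \eqref{eq:channel-ext-bose}, and sandwich $\operatorname{Tr}[\Pi_{R^{\prime}S^{\prime}}^{G}\mathcal{M}_{RS\rightarrow R^{\prime}S^{\prime}}(\omega_{RS})]$ between $1$ and $1$ using the adjoint inequality \eqref{eq:BSE-bose-condition} and the characterization \eqref{eq:Bose-symmetric-equiv-cond}. No gaps.
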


As this proof is similar to that of Proposition~\ref{prop:golden-rule-G-SE}, we include it in Appendix~\ref{app:proof-prop-G-BSE}. As a consequence of Proposition~\ref{prop:golden-rule-G-BSE}\ and the data-processing inequality for fidelity, it follows that the maximum $G$-BSE fidelity is a resource monotone.

\begin{corollary}
Let $\rho_{S}$ be a state, and let $\mathcal{N}_{S\rightarrow S^{\prime}}$ be a $G$-BSE channel. Then the maximum $G$-BSE fidelity is a resource monotone in the following sense:
\begin{equation}
\max_{\sigma_{S}\in\operatorname*{BSE}_{G}}F(\rho_{S},\sigma_{S})\leq \max_{\sigma_{S^{\prime}}\in\operatorname*{BSE}_{G}}F(\mathcal{N}_{S\rightarrow S^{\prime}}(\rho_{S}),\sigma_{S^{\prime}}).
\end{equation}

\end{corollary}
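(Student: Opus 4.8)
The plan is to mirror the proof of the immediately preceding corollary for the maximum $G$-symmetric extendible fidelity, substituting Proposition~\ref{prop:golden-rule-G-BSE} for Proposition~\ref{prop:golden-rule-G-SE}. The two ingredients are that the set of $G$-BSE states is closed under the action of a $G$-BSE channel and that the fidelity is nondecreasing under the action of an arbitrary quantum channel (data processing).

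First I would let $\sigma_{S}^{\star}$ denote an optimal $G$-BSE state attaining the left-hand side, so that $F(\rho_{S},\sigma_{S}^{\star})=\max_{\sigma_{S}\in\operatorname*{BSE}_{G}}F(\rho_{S},\sigma_{S})$ with $\sigma_{S}^{\star}\in\operatorname*{BSE}_{G}$. Since $\mathcal{N}_{S\rightarrow S^{\prime}}$ is a $G$-BSE channel and $\sigma_{S}^{\star}$ is a $G$-BSE state, Proposition~\ref{prop:golden-rule-G-BSE} guarantees that $\mathcal{N}_{S\rightarrow S^{\prime}}(\sigma_{S}^{\star})$ is again a $G$-BSE state, hence a feasible point for the optimization on the right-hand side. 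Then I would chain the following inequalities:
\begin{align}
\max_{\sigma_{S}\in\operatorname*{BSE}_{G}}F(\rho_{S},\sigma_{S})
& = F(\rho_{S},\sigma_{S}^{\star}) \\
& \leq F\!\left(\mathcal{N}_{S\rightarrow S^{\prime}}(\rho_{S}),\mathcal{N}_{S\rightarrow S^{\prime}}(\sigma_{S}^{\star})\right) \\
& \leq \max_{\sigma_{S^{\prime}}\in\operatorname*{BSE}_{G}}F\!\left(\mathcal{N}_{S\rightarrow S^{\prime}}(\rho_{S}),\sigma_{S^{\prime}}\right),
\end{align}
where the first inequality is the data-processing inequality for fidelity and the second holds because $\mathcal{N}_{S\rightarrow S^{\prime}}(\sigma_{S}^{\star})$ lies in $\operatorname*{BSE}_{G}$.

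There is essentially no hard step once Proposition~\ref{prop:golden-rule-G-BSE} is in hand: the argument is the generic ``optimizer survives the free operation, followed by data processing'' template used throughout resource theory. The only subtlety worth double-checking is that the feasible sets $\operatorname*{BSE}_{G}$ on the two sides are defined with respect to the representations $\{U_{RS}(g)\}_{g\in G}$ and $\{V_{R^{\prime}S^{\prime}}(g)\}_{g\in G}$ respectively, matching the input and output conventions fixed in Definition~\ref{def:G-BSE-channels}. Proposition~\ref{prop:golden-rule-G-BSE} is stated precisely so as to respect this input/output bookkeeping, so no additional care is needed, and the corollary follows immediately.
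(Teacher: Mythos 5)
Your proposal is correct and matches the paper's own argument: the paper derives this corollary in exactly the same way, by combining Proposition~\ref{prop:golden-rule-G-BSE} (a $G$-BSE channel maps $G$-BSE states to $G$-BSE states) with the data-processing inequality for fidelity. The chain of inequalities you write out, including the choice of an optimal $\sigma_{S}^{\star}$ and the observation that its image remains feasible on the right-hand side, is precisely the intended proof.
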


To the best of our knowledge, the resource theory of $k$-Bose unextendibility has not been proposed in prior work. To define it, we establish the notion of a free channel (i.e., a $k$-Bose extendible bipartite channel) and discuss it in the following example. 
\begin{example}
[$k$-Bose unextendibility]We say that a bipartite channel $\mathcal{N}_{AB\rightarrow A^{\prime}B^{\prime}}$ is $k$-Bose-extendible if there exists an extension channel $\mathcal{M}_{AB_{1}\cdots B_{k}\rightarrow A^{\prime}B_{1}^{\prime} \cdots B_{k}^{\prime}}$ satisfying
\begin{multline}
\operatorname{Tr}_{B_{2}^{\prime}\cdots B_{k}^{\prime}}\circ\mathcal{M}_{AB_{1} \cdots B_{k}\rightarrow A^{\prime}B_{1}^{\prime}\cdots B_{k}^{\prime}}\\
=\mathcal{N}_{AB\rightarrow A^{\prime}B^{\prime}}\circ\operatorname{Tr}_{B_{2}\cdots B_{k}}
\end{multline}
and
\begin{equation}
(\mathcal{M}_{AB_{1}\cdots B_{k}\rightarrow A^{\prime}B_{1}^{\prime}\cdots B_{k}^{\prime}})^{\dag}(\Pi_{B_{1}^{\prime}\cdots B_{k}^{\prime}} ^{\operatorname{Sym}})\geq\Pi_{B_{1}\cdots B_{k}}^{\operatorname{Sym}},
\end{equation}
where $\Pi_{B_{1}^{\prime}\cdots B_{k}^{\prime}}^{\operatorname{Sym}}$ and $\Pi_{B_{1}\cdots B_{k}}^{\operatorname{Sym}}$ are projections onto symmetric subspaces,
\begin{align}
\Pi_{B_{1}\cdots B_{k}}^{\operatorname{Sym}}  &  \coloneqq \frac{1}{k!}\sum_{\pi\in S_{k}}W_{B_{1}\cdots B_{k}}^{\pi},\\
\Pi_{B_{1}^{\prime}\cdots B_{k}^{\prime}}^{\operatorname{Sym}}  &  \coloneqq \frac{1}{k!}\sum_{\pi\in S_{k}}W_{B_{1}^{\prime}\cdots B_{k}^{\prime}}^{\pi},
\end{align}
and $W_{B_{1}\cdots B_{k}}^{\pi}$ and $W_{B_{1}^{\prime}\cdots B_{k}^{\prime}}^{\pi}$ are unitary representations of the permutation $\pi\in S_{k}$. Thus, by setting
\begin{align}
S  &  =AB,\\
R  &  =B_{2}\cdots B_{k},\\
S^{\prime}  &  =A^{\prime}B^{\prime},\\
R^{\prime}  &  =B_{2}^{\prime}\cdots B_{k}^{\prime},\\
U_{RS}(g)  &  =\mathbb{I}_{A}\otimes W_{B_{1}\cdots B_{k}}(\pi),\\
V_{R^{\prime}S^{\prime}}(g)  &  =\mathbb{I}_{A^{\prime}}\otimes W_{B_{1}^{\prime}\cdots B_{k}^{\prime}}(\pi),
\end{align}
we see that a $k$-Bose-extendible channel is a special case of a $G$-Bose symmetric extendible channel.
\end{example}

\section{Conclusion}

\label{sec:conclusion}

In summary, we have proposed various quantum computational tests of symmetry, as well as various notions of symmetry like $G$-symmetric extendibility and $G$-Bose symmetric extendibility, which include previous notions of symmetry from \cite{MS13,MS14,W89a,DPS02,DPS04} as special cases, showing that these these new notions of symmetry provide a generalization with interesting applications. These tests have acceptance probabilities equal to various maximum symmetric fidelities, thus endowing these measures with operational meanings. We have also established resource theories of asymmetry beyond those proposed in \cite{MS13}, which put the maximum symmetric fidelities on firm ground in a resource-theoretic sense. Finally, we evaluated the quantum computational tests on existing quantum computers, by employing a variational algorithm to replace the role of the prover in a quantum interactive proof.

Going forward from here, one could generalize the approach we have taken to any quantum interactive proof by, for instance, replacing the prover with a parameterized circuit. This approach will allow for estimating distinguishability measures like the diamond distance \cite{RW05}. This method is not guaranteed to perform well in general, simply because a variational circuit cannot realize an arbitrarily powerful quantum computation like a quantum prover can. For sufficiently small examples, however, this seemly interesting approach has the potential to go beyond what can be estimated using a classical computer alone. After stating this observation in a preliminary version of this paper \cite{LW21}, this approach was pursued in \cite{RASW21} (see also \cite{KVPYB22}).

We are also interested in generalizing the quantum computational tests proposed here to test for extendibility and symmetry of quantum channels. The algorithm outlined in Section~\ref{sec:cov-sym-ch-test} is an initial finding in this direction, but more generally, we would like to test for $G$-symmetric extendibility and $G$-Bose symmetric extendibility of bipartite and multipartite channels. This would involve testing for the no-signaling constraint in addition to the symmetry constraint of $k$-extendible channels.

\bigskip 

\noindent \textbf{Acknowledgments}.
We dedicate this paper to the memory of Jonathan~P.~Dowling, an outstanding mentor and scientist whose brilliance and friendship will be greatly missed. May he live on through the lives and works of all those he inspired.

We thank Patrick Coles, Zo\"e Holmes, Ludovico Lami, Iman Marvian, Stephan Shipman, and Vishal Singh for discussions. We also acknowledge helpful discussions with ambassadors from Amazon and IBM.
MLL\ acknowledges support from the DoD SMART Scholarship program.
MMW\ and SR\ acknowledge support from NSF\ Grant No.~2315398.

\bigskip 
\noindent \textbf{Data Availability Statement}.
All code and data used to generate these results is available via the GitHub repository located at \url{https://github.com/Soorya-Rethin/Testing-Symmetry}.

\bibliographystyle{quantum}
\bibliography{Ref}

\begin{thebibliography}{10}

\bibitem{FR96}
Ugo Fano and A.~Ravi~P. Rau.
\newblock ``Symmetries in quantum physics''.
\newblock \href{https://dx.doi.org/10.1016/B978-0-12-248455-1.X5000-3}{Academic
  Press}. ~(1996).

\bibitem{Gross96}
David~J. Gross.
\newblock ``The role of symmetry in fundamental physics''.
\newblock \href{https://dx.doi.org/10.1073/pnas.93.25.14256}{Proceedings of the
  National Academy of Sciences {\bf 93}, 14256--14259}~(1996).

\bibitem{Wick1952}
G.~C. Wick, A.~S. Wightman, and E.~P. Wigner.
\newblock ``The intrinsic parity of elementary particles''.
\newblock \href{https://dx.doi.org/10.1103/PhysRev.88.101}{Physical Review {\bf
  88}, 101--105}~(1952).

\bibitem{PhysRev.155.1428}
Yakir Aharonov and Leonard Susskind.
\newblock ``Charge superselection rule''.
\newblock \href{https://dx.doi.org/10.1103/PhysRev.155.1428}{Physical Review
  {\bf 155}, 1428--1431}~(1967).

\bibitem{W89a}
Reinhard~F. Werner.
\newblock ``An application of {Bell's} inequalities to a quantum state
  extension problem''.
\newblock \href{https://dx.doi.org/10.1007/BF00399761}{Letters in Mathematical
  Physics {\bf 17}, 359--363}~(1989).

\bibitem{DPS02}
Andrew~C. Doherty, Pablo~A. Parrilo, and Federico~M. Spedalieri.
\newblock ``Distinguishing separable and entangled states''.
\newblock \href{https://dx.doi.org/10.1103/PhysRevLett.88.187904}{Physical
  Review Letters {\bf 88}, 187904}~(2002).
\newblock
  \href{http://arxiv.org/abs/quant-ph/0112007}{arXiv:quant-ph/0112007}.

\bibitem{DPS04}
Andrew~C. Doherty, Pablo~A. Parrilo, and Federico~M. Spedalieri.
\newblock ``Complete family of separability criteria''.
\newblock \href{https://dx.doi.org/10.1103/PhysRevA.69.022308}{Physical Review
  A {\bf 69}, 022308}~(2004).
\newblock
  \href{http://arxiv.org/abs/quant-ph/0308032}{arXiv:quant-ph/0308032}.

\bibitem{Park1970}
James~L. Park.
\newblock ``The concept of transition in quantum mechanics''.
\newblock \href{https://dx.doi.org/10.1007/BF00708652}{Foundations of Physics
  {\bf 1}, 23--33}~(1970).

\bibitem{D82}
D.~Dieks.
\newblock ``Communication by {EPR} devices''.
\newblock \href{https://dx.doi.org/10.1016/0375-9601(82)90084-6}{Physics
  Letters A {\bf 92}, 271}~(1982).

\bibitem{nat1982}
William~K. Wootters and Wojciech~H. Zurek.
\newblock ``A single quantum cannot be cloned''.
\newblock \href{https://dx.doi.org/10.1038/299802a0}{Nature {\bf 299},
  802--803}~(1982).

\bibitem{T04}
Barbara~M. Terhal.
\newblock ``Is entanglement monogamous?''.
\newblock \href{https://dx.doi.org/10.1147/rd.481.0071}{IBM Journal of Research
  and Development {\bf 48}, 71--78}~(2004).
\newblock
  \href{http://arxiv.org/abs/quant-ph/0307120}{arXiv:quant-ph/0307120}.

\bibitem{BRS07}
Stephen~D. Bartlett, Terry Rudolph, and Robert~W. Spekkens.
\newblock ``Reference frames, superselection rules, and quantum information''.
\newblock \href{https://dx.doi.org/10.1103/RevModPhys.79.555}{Reviews of Modern
  Physics {\bf 79}, 555--609}~(2007).
\newblock
  \href{http://arxiv.org/abs/quant-ph/0610030}{arXiv:quant-ph/0610030}.

\bibitem{MS13}
Iman Marvian and Robert~W. Spekkens.
\newblock ``The theory of manipulations of pure state asymmetry: I. basic
  tools, equivalence classes and single copy transformations''.
\newblock \href{https://dx.doi.org/10.1088/1367-2630/15/3/033001}{New Journal
  of Physics {\bf 15}, 033001}~(2013).
\newblock  \href{http://arxiv.org/abs/1104.0018}{arXiv:1104.0018}.

\bibitem{MS14}
Iman Marvian and Robert~W. Spekkens.
\newblock ``Modes of asymmetry: The application of harmonic analysis to
  symmetric quantum dynamics and quantum reference frames''.
\newblock \href{https://dx.doi.org/10.1103/PhysRevA.90.062110}{Physical Review
  A {\bf 90}, 062110}~(2014).
\newblock  \href{http://arxiv.org/abs/1312.0680}{arXiv:1312.0680}.

\bibitem{KDWW19}
Eneet Kaur, Siddhartha Das, Mark~M. Wilde, and Andreas Winter.
\newblock ``Extendibility limits the performance of quantum processors''.
\newblock \href{https://dx.doi.org/10.1103/PhysRevLett.123.070502}{Physical
  Review Letters {\bf 123}, 070502}~(2019).
\newblock  \href{http://arxiv.org/abs/2108.03137}{arXiv:2108.03137}.

\bibitem{KDWW21}
Eneet Kaur, Siddhartha Das, Mark~M. Wilde, and Andreas Winter.
\newblock ``Resource theory of unextendibility and nonasymptotic quantum
  capacity''.
\newblock \href{https://dx.doi.org/10.1103/PhysRevA.104.022401}{Physical Review
  A {\bf 104}, 022401}~(2021).
\newblock  \href{http://arxiv.org/abs/1803.10710}{arXiv:1803.10710}.

\bibitem{GS08}
Gilad Gour and Robert~W. Spekkens.
\newblock ``The resource theory of quantum reference frames: manipulations and
  monotones''.
\newblock \href{https://dx.doi.org/10.1088/1367-2630/10/3/033023}{New Journal
  of Physics {\bf 10}, 033023}~(2008).
\newblock  \href{http://arxiv.org/abs/0711.0043}{arXiv:0711.0043}.

\bibitem{CG18}
Eric Chitambar and Gilad Gour.
\newblock ``Quantum resource theories''.
\newblock \href{https://dx.doi.org/10.1103/revmodphys.91.025001}{Reviews of
  Modern Physics {\bf 91}, 025001}~(2019).
\newblock  \href{http://arxiv.org/abs/1806.06107}{arXiv:1806.06107}.

\bibitem{W09}
John Watrous.
\newblock ``Quantum computational complexity''.
\newblock \href{https://dx.doi.org/10.1007/978-0-387-30440-3_428}{Encyclopedia
  of Complexity and System Science}~(2009).
\newblock  \href{http://arxiv.org/abs/0804.3401}{arXiv:0804.3401}.

\bibitem{VW15}
Thomas Vidick and John Watrous.
\newblock ``Quantum proofs''.
\newblock \href{https://dx.doi.org/10.1561/0400000068}{Foundations and Trends
  in Theoretical Computer Science {\bf 11}, 1--215}~(2016).
\newblock  \href{http://arxiv.org/abs/1610.01664}{arXiv:1610.01664}.

\bibitem{HMW13}
Patrick Hayden, Kevin Milner, and Mark~M. Wilde.
\newblock ``Two-message quantum interactive proofs and the quantum separability
  problem''.
\newblock In Proceedings of the 28th IEEE Conference on Computational
  Complexity.
\newblock \href{https://dx.doi.org/10.1109/CCC.2013.24}{Pages 156--167}.
\newblock ~(2013).

\bibitem{Hayden:2014:TQI}
Patrick Hayden, Kevin Milner, and Mark~M. Wilde.
\newblock ``Two-message quantum interactive proofs and the quantum separability
  problem''.
\newblock \href{https://dx.doi.org/10.26421/qic14.5-6-2}{Quantum Information
  and Computation {\bf 14}, 384--416}~(2014).
\newblock  \href{http://arxiv.org/abs/1211.6120}{arXiv:1211.6120}.

\bibitem{LW22}
Margarite~L. LaBorde and Mark~M. Wilde.
\newblock ``Quantum algorithms for testing {H}amiltonian symmetry''.
\newblock \href{https://dx.doi.org/10.1103/PhysRevLett.129.160503}{Physical
  Review Letters {\bf 129}, 160503}~(2022).
\newblock  \href{http://arxiv.org/abs/2203.10017}{arXiv:2203.10017}.

\bibitem{Wat13}
John Watrous.
\newblock ``Simpler semidefinite programs for completely bounded norms''.
\newblock \href{https://dx.doi.org/10.4086/cjtcs.2013.008}{Chicago Journal of
  Theoretical Computer Science}~(2013).
\newblock  \href{http://arxiv.org/abs/1207.5726}{arXiv:1207.5726}.

\bibitem{S12}
Benjamin Steinberg.
\newblock ``Representation theory of finite groups: An introductory approach''.
\newblock \href{https://dx.doi.org/10.1007/978-1-4614-0776-8}{Springer}.
  ~(2012).

\bibitem{CABBEFMMYCC20}
M.~Cerezo, Andrew Arrasmith, Ryan Babbush, Simon~C. Benjamin, Suguru Endo,
  Keisuke Fujii, Jarrod~R. McClean, Kosuke Mitarai, Xiao Yuan, Lukasz Cincio,
  and Patrick~J. Coles.
\newblock ``Variational quantum algorithms''.
\newblock \href{https://dx.doi.org/10.1038/s42254-021-00348-9}{Nature Reviews
  Physics {\bf 3}, 625--644}~(2021).
\newblock  \href{http://arxiv.org/abs/2012.09265}{arXiv:2012.09265}.

\bibitem{bharti2021noisy}
Kishor Bharti, Alba Cervera-Lierta, Thi~Ha Kyaw, Tobias Haug, Sumner
  Alperin-Lea, Abhinav Anand, Matthias Degroote, Hermanni Heimonen, Jakob~S.
  Kottmann, Tim Menke, Wai-Keong Mok, Sukin Sim, Leong-Chuan Kwek, and Alan
  Aspuru-Guzik.
\newblock ``Noisy intermediate-scale quantum ({NISQ}) algorithms''.
\newblock \href{https://dx.doi.org/10.1103/RevModPhys.94.015004}{Reviews of
  Modern Physics {\bf 94}, 015004}~(2022).
\newblock  \href{http://arxiv.org/abs/2101.08448}{arXiv:2101.08448}.

\bibitem{RevModPhys.55.725}
E.~Gerjuoy, A.~R.~P. Rau, and Larry Spruch.
\newblock ``A unified formulation of the construction of variational
  principles''.
\newblock \href{https://dx.doi.org/10.1103/RevModPhys.55.725}{Reviews of Modern
  Physics {\bf 55}, 725--774}~(1983).

\bibitem{CSZW20}
Ranyiliu Chen, Zhixin Song, Xuanqiang Zhao, and Xin Wang.
\newblock ``Variational quantum algorithms for trace distance and fidelity
  estimation''.
\newblock \href{https://dx.doi.org/10.1088/2058-9565/ac38ba}{Quantum Science
  and Technology {\bf 7}, 015019}~(2022).
\newblock  \href{http://arxiv.org/abs/2012.05768}{arXiv:2012.05768}.

\bibitem{W02}
John Watrous.
\newblock ``Limits on the power of quantum statistical zero-knowledge''.
\newblock In Proceedings of the 43rd Annual IEEE Symposium on Foundations of
  Computer Science.
\newblock \href{https://dx.doi.org/10.1109/SFCS.2002.1181970}{Pages 459--468}.
\newblock ~(2002).
\newblock
  \href{http://arxiv.org/abs/quant-ph/0202111}{arXiv:quant-ph/0202111}.

\bibitem{osti_1826512}
Marco Cerezo, Akira Sone, Tyler~James Volkoff, Lukasz Cincio, and
  Patrick~Joseph Coles.
\newblock ``Cost function dependent barren plateaus in shallow parametrized
  quantum circuits''.
\newblock \href{https://dx.doi.org/10.1038/s41467-021-21728-w}{Nature
  Communications {\bf 12}, 1791}~(2021).
\newblock  \href{http://arxiv.org/abs/2001.00550}{arXiv:2001.00550}.

\bibitem{marvian2012symmetry}
Iman Marvian.
\newblock ``Symmetry, asymmetry and quantum information''.
\newblock PhD thesis.
\newblock University of Waterloo.
\newblock ~(2012).
\newblock  url:~\url{http://hdl.handle.net/10012/7088}.

\bibitem{EBSWSCH22}
Nic Ezzell, Elliott~M. Ball, Aliza~U. Siddiqui, Mark~M. Wilde, Andrew~T.
  Sornborger, Patrick~J. Coles, and Zoë Holmes.
\newblock ``Quantum mixed state compiling''.
\newblock \href{https://dx.doi.org/10.1088/2058-9565/acc4e3}{Quantum Science
  and Technology {\bf 8}, 035001}~(2023).
\newblock  \href{http://arxiv.org/abs/2209.00528}{arXiv:2209.00528}.

\bibitem{book2000mikeandike}
Michael~A. Nielsen and Isaac~L. Chuang.
\newblock ``Quantum computation and quantum information''.
\newblock \href{https://dx.doi.org/10.1017/CBO9780511976667}{Cambridge
  University Press}. ~(2000).

\bibitem{harrow2005applications}
Aram~W. Harrow.
\newblock ``Applications of coherent classical communication and the {S}chur
  transform to quantum information theory''.
\newblock \href{https://dx.doi.org/10.48550/arXiv.quant-ph/0512255}{PhD
  thesis}.
\newblock Massachusetts Institute of Technology.
\newblock ~(2005).

\bibitem{itit1999winter}
Andreas Winter.
\newblock ``Coding theorem and strong converse for quantum channels''.
\newblock \href{https://dx.doi.org/10.1109/18.796385}{IEEE Transactions on
  Information Theory {\bf 45}, 2481--2485}~(1999).
\newblock  \href{http://arxiv.org/abs/1409.2536}{arXiv:1409.2536}.

\bibitem{ON07}
Tomohiro Ogawa and Hiroshi Nagaoka.
\newblock ``Making good codes for classical-quantum channel coding via quantum
  hypothesis testing''.
\newblock \href{https://dx.doi.org/10.1109/tit.2007.896874}{IEEE Transactions
  on Information Theory {\bf 53}, 2261--2266}~(2007).

\bibitem{Wbook17}
Mark~M. Wilde.
\newblock ``Quantum information theory''.
\newblock \href{https://dx.doi.org/10.1017/9781316809976.001}{Cambridge
  University Press}. ~(2017).
\newblock Second edition.
\newblock  \href{http://arxiv.org/abs/1106.1445}{arXiv:1106.1445}.

\bibitem{U76}
Armin Uhlmann.
\newblock ``The ``transition probability'' in the state space of a *-algebra''.
\newblock \href{https://dx.doi.org/10.1016/0034-4877(76)90060-4}{Reports on
  Mathematical Physics {\bf 9}, 273--279}~(1976).

\bibitem{PhysRevA.94.022310}
Tom Cooney, Christoph Hirche, Ciara Morgan, Jonathan~P. Olson, Kaushik~P.
  Seshadreesan, John Watrous, and Mark~M. Wilde.
\newblock ``Operational meaning of quantum measures of recovery''.
\newblock \href{https://dx.doi.org/10.1103/PhysRevA.94.022310}{Physical Review
  A {\bf 94}, 022310}~(2016).
\newblock  \href{http://arxiv.org/abs/1512.05324}{arXiv:1512.05324}.

\bibitem{BCP14}
Tillman Baumgratz, Marcus Cramer, and Martin~B. Plenio.
\newblock ``Quantifying coherence''.
\newblock \href{https://dx.doi.org/10.1103/PhysRevLett.113.140401}{Physical
  Review Letters {\bf 113}, 140401}~(2014).
\newblock  \href{http://arxiv.org/abs/1311.0275}{arXiv:1311.0275}.

\bibitem{SAP17}
Alexander Streltsov, Gerardo Adesso, and Martin~B. Plenio.
\newblock ``Colloquium: Quantum coherence as a resource''.
\newblock \href{https://dx.doi.org/10.1103/RevModPhys.89.041003}{Reviews of
  Modern Physics {\bf 89}, 041003}~(2017).
\newblock  \href{http://arxiv.org/abs/1609.02439}{arXiv:1609.02439}.

\bibitem{harrow2013church}
Aram~W. Harrow.
\newblock ``The church of the symmetric subspace''~(2013).
\newblock  \href{http://arxiv.org/abs/1308.6595}{arXiv:1308.6595}.

\bibitem{roy2009unitary}
Aidan Roy and A.~J. Scott.
\newblock ``Unitary designs and codes''.
\newblock \href{https://dx.doi.org/10.1007/s10623-009-9290-2}{Designs, Codes
  and Cryptography {\bf 53}, 13--31}~(2009).

\bibitem{scott2008tomography}
A.~J. Scott.
\newblock ``Optimizing quantum process tomography with unitary 2-designs''.
\newblock \href{https://dx.doi.org/10.1088/1751-8113/41/5/055308}{Journal of
  Physics A: Mathematical and Theoretical {\bf 41}, 055308}~(2008).
\newblock  \href{http://arxiv.org/abs/0711.1017}{arXiv:0711.1017}.

\bibitem{gross2007designs}
David Gross, Koenraad Audenaert, and Jens Eisert.
\newblock ``Evenly distributed unitaries: On the structure of unitary
  designs''.
\newblock \href{https://dx.doi.org/10.1063/1.2716992}{Journal of Mathematical
  Physics {\bf 48}, 052104}~(2007).
\newblock
  \href{http://arxiv.org/abs/quant-ph/0611002}{arXiv:quant-ph/0611002}.

\bibitem{GHMW15}
Gus Gutoski, Patrick Hayden, Kevin Milner, and Mark~M. Wilde.
\newblock ``Quantum interactive proofs and the complexity of separability
  testing''.
\newblock \href{https://dx.doi.org/10.4086/toc.2015.v011a003}{Theory of
  Computing {\bf 11}, 59--103}~(2015).
\newblock  \href{http://arxiv.org/abs/1308.5788}{arXiv:1308.5788}.

\bibitem{HM10}
Aram Harrow and Ashley Montanaro.
\newblock ``An efficient test for product states with applications to quantum
  {Merlin-Arthur} games''.
\newblock In Proceedings of the 51st Annual IEEE Symposium on the Foundations
  of Computer Science (FOCS).
\newblock \href{https://dx.doi.org/10.1109/FOCS.2010.66}{Pages 633--642}.
\newblock Las Vegas, Nevada, USA~(2010).
\newblock  \href{http://arxiv.org/abs/1001.0017}{arXiv:1001.0017}.

\bibitem{FKS21}
Steph Foulds, Viv Kendon, and Tim Spiller.
\newblock ``The controlled {SWAP} test for determining quantum entanglement''.
\newblock \href{https://dx.doi.org/10.1088/2058-9565/abe458}{Quantum Science
  and Technology {\bf 6}, 035002}~(2021).
\newblock  \href{http://arxiv.org/abs/2009.07613}{arXiv:2009.07613}.

\bibitem{BGCC21}
Jacob~L. Beckey, N.~Gigena, Patrick~J. Coles, and M.~Cerezo.
\newblock ``Computable and operationally meaningful multipartite entanglement
  measures''.
\newblock \href{https://dx.doi.org/10.1103/PhysRevLett.127.140501}{Physical
  Review Letters {\bf 127}, 140501}~(2021).
\newblock  \href{http://arxiv.org/abs/2104.06923}{arXiv:2104.06923}.

\bibitem{BBD+97}
Adriano Barenco, Andr\'{e} Berthiaume, David Deutsch, Artur Ekert, Richard
  Jozsa, and Chiara Macchiavello.
\newblock ``Stabilization of quantum computations by symmetrization''.
\newblock \href{https://dx.doi.org/10.1137/S0097539796302452}{SIAM Journal on
  Computing {\bf 26}, 1541--1557}~(1997).
\newblock
  \href{http://arxiv.org/abs/quant-ph/9604028}{arXiv:quant-ph/9604028}.

\bibitem{LW21}
Margarite~L. LaBorde and Mark~M. Wilde.
\newblock ``Testing symmetry on quantum computers''~(2021)
  \href{http://arxiv.org/abs/2105.12758v1}{arXiv:2105.12758v1}.

\bibitem{bradshaw2022cycle}
Zachary~P. Bradshaw, Margarite~L. LaBorde, and Mark~M. Wilde.
\newblock ``Cycle index polynomials and generalized quantum separability
  tests''.
\newblock \href{https://dx.doi.org/10.1098/rspa.2022.0733}{Proceedings of the
  Royal Society A: Mathematical, Physical and Engineering Sciences {\bf 479},
  20220733}~(2023).
\newblock  \href{http://arxiv.org/abs/2208.14596}{arXiv:2208.14596}.

\bibitem{HHHH09}
Ryszard Horodecki, Pawe\l{} Horodecki, Micha\l{} Horodecki, and Karol
  Horodecki.
\newblock ``Quantum entanglement''.
\newblock \href{https://dx.doi.org/10.1103/RevModPhys.81.865}{Reviews of Modern
  Physics {\bf 81}, 865--942}~(2009).
\newblock
  \href{http://arxiv.org/abs/quant-ph/0702225}{arXiv:quant-ph/0702225}.

\bibitem{KW20book}
Sumeet Khatri and Mark~M. Wilde.
\newblock ``Principles of quantum communication theory: A modern
  approach''~(2020)
  \href{http://arxiv.org/abs/2011.04672v1}{arXiv:2011.04672v1}.

\bibitem{Marvian13}
Iman Marvian.
\newblock Comment during seminar ``How Hard is it to Decide if a Quantum State
  is Separable or Entangled?''~(2013).

\bibitem{CKMR08}
Matthias Christandl, Robert Koenig, Graeme Mitchison, and Renato Renner.
\newblock ``One-and-a-half quantum de {Finetti} theorems''.
\newblock \href{https://dx.doi.org/10.1007/s00220-007-0189-3}{Communications in
  Mathematical Physics {\bf 273}, 473--498}~(2007).
\newblock
  \href{http://arxiv.org/abs/quant-ph/0602130}{arXiv:quant-ph/0602130}.

\bibitem{BCY11}
Fernando G.~S.~L. Brand{\~{a}}o, Matthias Christandl, and Jon Yard.
\newblock ``Faithful squashed entanglement''.
\newblock \href{https://dx.doi.org/10.1007/s00220-011-1302-1}{Communications in
  Mathematical Physics {\bf 306}, 805--830}~(2011).
\newblock  \href{http://arxiv.org/abs/1010.1750}{arXiv:1010.1750}.

\bibitem{BCY11a}
Fernando G.~S.~L. Brand{\~{a}}o, Matthias Christandl, and Jon Yard.
\newblock ``A quasipolynomial-time algorithm for the quantum separability
  problem''.
\newblock \href{https://dx.doi.org/10.1145/1993636.1993683}{Proceedings of ACM
  Symposium on Theory of ComputationPages 343--351}~(2011).
\newblock  \href{http://arxiv.org/abs/1011.2751}{arXiv:1011.2751}.

\bibitem{BH12}
Fernando G.~S.~L. Brand{\~{a}}o and Aram~W. Harrow.
\newblock ``Quantum de {Finetti} theorems under local measurements with
  applications''.
\newblock In Proceedings of the 45th annual ACM Symposium on the Theory of
  Computing.
\newblock \href{https://dx.doi.org/10.1145/2488608.2488718}{Pages 861--870}.
\newblock Palo Alto, California, USA~(2013).
\newblock  \href{http://arxiv.org/abs/1210.6367}{arXiv:1210.6367}.

\bibitem{DPS05}
Andrew~C. Doherty, Pablo~A. Parrilo, and Federico~M. Spedalieri.
\newblock ``Detecting multipartite entanglement''.
\newblock \href{https://dx.doi.org/10.1103/PhysRevA.71.032333}{Physical Review
  A {\bf 71}, 032333}~(2005).
\newblock
  \href{http://arxiv.org/abs/quant-ph/0407143}{arXiv:quant-ph/0407143}.

\bibitem{Hol02}
Alexander~S. Holevo.
\newblock ``Remarks on the classical capacity of quantum channel''~(2002)
  \href{http://arxiv.org/abs/quant-ph/0212025}{arXiv:quant-ph/0212025}.

\bibitem{CDP09}
Giulio Chiribella, Giacomo~Mauro D'Ariano, and Paolo Perinotti.
\newblock ``Realization schemes for quantum instruments in finite dimensions''.
\newblock \href{https://dx.doi.org/10.1063/1.3105923}{Journal of Mathematical
  Physics {\bf 50}, 042101}~(2009).
\newblock  \href{http://arxiv.org/abs/0810.3211}{arXiv:0810.3211}.

\bibitem{D78}
E.~Davies.
\newblock ``Information and quantum measurement''.
\newblock \href{https://dx.doi.org/10.1109/TIT.1978.1055941}{IEEE Transactions
  on Information Theory {\bf 24}, 596--599}~(1978).

\bibitem{H11book}
Alexander~S. Holevo.
\newblock ``Probabilistic and statistical aspects of quantum theory''.
\newblock \href{https://dx.doi.org/10.1007/978-88-7642-378-9}{Volume~1}.
\newblock Springer Science \& Business Media. ~(2011).

\bibitem{CdVT03}
G.~Cassinelli, E.~De~Vito, and A.~Toigo.
\newblock ``Positive operator valued measures covariant with respect to an
  irreducible representation''.
\newblock \href{https://dx.doi.org/10.1063/1.1598277}{Journal of Mathematical
  Physics {\bf 44}, 4768--4775}~(2003).
\newblock
  \href{http://arxiv.org/abs/quant-ph/0302187}{arXiv:quant-ph/0302187}.

\bibitem{DJR05}
Thomas Decker, Dominik Janzing, and Martin R\"otteler.
\newblock ``Implementation of group-covariant positive operator valued measures
  by orthogonal measurements''.
\newblock \href{https://dx.doi.org/10.1063/1.1827924}{Journal of Mathematical
  Physics {\bf 46}, 012104}~(2005).
\newblock
  \href{http://arxiv.org/abs/quant-ph/0407054}{arXiv:quant-ph/0407054}.

\bibitem{KGDdS15}
Hari Krovi, Saikat Guha, Zachary Dutton, and Marcus~P. da~Silva.
\newblock ``Optimal measurements for symmetric quantum states with applications
  to optical communication''.
\newblock \href{https://dx.doi.org/10.1103/PhysRevA.92.062333}{Physical Review
  A {\bf 92}, 062333}~(2015).
\newblock  \href{http://arxiv.org/abs/1507.04737}{arXiv:1507.04737}.

\bibitem{CD04}
Giulio Chiribella and Giacomo~Mauro D'Ariano.
\newblock ``Extremal covariant positive operator valued measures''.
\newblock \href{https://dx.doi.org/10.1063/1.1806262}{Journal of Mathematical
  Physics {\bf 45}, 4435--4447}~(2004).
\newblock
  \href{http://arxiv.org/abs/quant-ph/0406237}{arXiv:quant-ph/0406237}.

\bibitem{BV04}
Stephen Boyd and Lieven Vandenberghe.
\newblock ``Convex optimization''.
\newblock \href{https://dx.doi.org/10.1017/cbo9780511804441}{Cambridge
  University Press}. The Edinburgh Building, Cambridge, CB2 8RU, UK~(2004).

\bibitem{Wat18}
John Watrous.
\newblock ``The theory of quantum information''.
\newblock \href{https://dx.doi.org/10.1017/9781316848142}{Cambridge University
  Press}. ~(2018).

\bibitem{AL12}
Miguel~F. Anjos and Jean~B. Lasserre, editors.
\newblock ``Handbook on semidefinite, conic and polynomial optimization''.
\newblock \href{https://dx.doi.org/10.1007/978-1-4614-0769-0}{Springer}.
  ~(2012).

\bibitem{FST22}
Omar Fawzi, Ala Shayeghi, and Hoang Ta.
\newblock ``A hierarchy of efficient bounds on quantum capacities exploiting
  symmetry''.
\newblock \href{https://dx.doi.org/10.1109/TIT.2022.3182101}{IEEE Transactions
  on Information Theory {\bf 68}, 7346--7360}~(2022).
\newblock  \href{http://arxiv.org/abs/2203.02127}{arXiv:2203.02127}.

\bibitem{RMB21}
Denis Rosset, Felipe Montealegre-Mora, and Jean-Daniel Bancal.
\newblock ``Replab: A computational/numerical approach to representation
  theory''.
\newblock In M.~B. Paranjape, Richard MacKenzie, Zora Thomova, Pavel
  Winternitz, and William Witczak-Krempa, editors, Quantum Theory and
  Symmetries.
\newblock \href{https://dx.doi.org/10.1007/978-3-030-55777-5_60}{Pages
  643--653}.
\newblock Cham~(2021). Springer International Publishing.
\newblock  \href{http://arxiv.org/abs/1911.09154}{arXiv:1911.09154}.

\bibitem{Cub18}
Toby Cubitt.
\newblock ``Truths about proofs and groups''~(2018).
\newblock \url{https://www.dr-qubit.org/Truths_about_proofs_and_groups.html}.

\bibitem{qvanp}
Lennart Bitte and Martin Kliesch.
\newblock ``Training variational quantum algorithms is {NP}-hard -- even for
  logarithmically many qubits and free fermionic systems''.
\newblock \href{https://dx.doi.org/10.1103/PhysRevLett.127.120502}{Physical
  Review Letters {\bf 127}, 120502}~(2021).
\newblock  \href{http://arxiv.org/abs/2101.07267}{arXiv:2101.07267}.

\bibitem{Sharma_2020}
Kunal Sharma, Sumeet Khatri, Marco Cerezo, and Patrick~J. Coles.
\newblock ``Noise resilience of variational quantum compiling''.
\newblock \href{https://dx.doi.org/10.1088/1367-2630/ab784c}{New Journal of
  Physics {\bf 22}, 043006}~(2020).
\newblock  \href{http://arxiv.org/abs/1908.04416}{arXiv:1908.04416}.

\bibitem{spall1998overview}
James~C. Spall.
\newblock ``An overview of the simultaneous perturbation method for efficient
  optimization''.
\newblock Johns Hopkins {APL} Technical digest {\bf 19}, 482--492~(1998).
\newblock
  url:~\url{https://secwww.jhuapl.edu/techdigest/content/techdigest/pdf/V19-N04/19-04-Spall.pdf}.

\bibitem{kuperberg2005subexponential}
Greg Kuperberg.
\newblock ``A subexponential-time quantum algorithm for the dihedral hidden
  subgroup problem''.
\newblock \href{https://dx.doi.org/10.1137/s0097539703436345}{SIAM Journal on
  Computing {\bf 35}, 170--188}~(2005).
\newblock
  \href{http://arxiv.org/abs/quant-ph/0302112}{arXiv:quant-ph/0302112}.

\bibitem{bennett96mixed}
Charles~H. Bennett, David~P. DiVincenzo, John~A. Smolin, and William~K.
  Wootters.
\newblock ``Mixed-state entanglement and quantum error correction''.
\newblock \href{https://dx.doi.org/10.1103/PhysRevA.54.3824}{Physical Review A
  {\bf 54}, 3824--3851}~(1996).
\newblock
  \href{http://arxiv.org/abs/quant-ph/9604024}{arXiv:quant-ph/9604024}.

\bibitem{Werner89}
Reinhard~F. Werner.
\newblock ``Quantum states with {Einstein-Podolsky-Rosen} correlations
  admitting a hidden-variable model''.
\newblock \href{https://dx.doi.org/10.1103/PhysRevA.40.4277}{Physical Review A
  {\bf 40}, 4277--4281}~(1989).

\bibitem{tomamichel2015quantum}
Marco Tomamichel.
\newblock ``Quantum information processing with finite resources: mathematical
  foundations''.
\newblock \href{https://dx.doi.org/10.1007/978-3-319-21891-5}{Springer}.
  ~(2015).
\newblock  \href{http://arxiv.org/abs/1504.00233}{arXiv:1504.00233}.

\bibitem{KM01}
Phillip Kaye and Michele Mosca.
\newblock ``Quantum networks for concentrating entanglement''.
\newblock \href{https://dx.doi.org/10.1088/0305-4470/34/35/319}{Journal of
  Physics A: Mathematical and General {\bf 34}, 6939}~(2001).
\newblock
  \href{http://arxiv.org/abs/quant-ph/0101009}{arXiv:quant-ph/0101009}.

\bibitem{BGNP01}
David Beckman, Daniel Gottesman, Michael~A. Nielsen, and John Preskill.
\newblock ``Causal and localizable quantum operations''.
\newblock \href{https://dx.doi.org/10.1103/PhysRevA.64.052309}{Physical Review
  A {\bf 64}, 052309}~(2001).
\newblock
  \href{http://arxiv.org/abs/quant-ph/0102043}{arXiv:quant-ph/0102043}.

\bibitem{ESW02}
T.~Eggeling, D.~Schlingemann, and Reinhard~F. Werner.
\newblock ``Semicausal operations are semilocalizable''.
\newblock \href{https://dx.doi.org/10.1209/epl/i2002-00579-4}{Europhysics
  Letters {\bf 57}, 782--788}~(2002).
\newblock
  \href{http://arxiv.org/abs/quant-ph/0104027}{arXiv:quant-ph/0104027}.

\bibitem{PHHH06}
Marco Piani, Michal Horodecki, Pawel Horodecki, and Ryszard Horodecki.
\newblock ``Properties of quantum nonsignaling boxes''.
\newblock \href{https://dx.doi.org/10.1103/PhysRevA.74.012305}{Physical Review
  A {\bf 74}, 012305}~(2006).
\newblock
  \href{http://arxiv.org/abs/quant-ph/0505110}{arXiv:quant-ph/0505110}.

\bibitem{RW05}
Bill Rosgen and John Watrous.
\newblock ``On the hardness of distinguishing mixed-state quantum
  computations''.
\newblock In Proceedings of the 20th IEEE Conference on Computational
  Complexity.
\newblock \href{https://dx.doi.org/10.1109/ccc.2005.21}{Pages 344--354}.
\newblock ~(2005).
\newblock  \href{http://arxiv.org/abs/cs/0407056}{arXiv:cs/0407056}.

\bibitem{RASW21}
Soorya Rethinasamy, Rochisha Agarwal, Kunal Sharma, and Mark~M. Wilde.
\newblock ``Estimating distinguishability measures on quantum computers''.
\newblock \href{https://dx.doi.org/10.1103/PhysRevA.108.012409}{Physical Review
  A {\bf 108}, 012409}~(2023).
\newblock  \href{http://arxiv.org/abs/2108.08406}{arXiv:2108.08406}.

\bibitem{KVPYB22}
A.~S. Kardashin, A.~V. Vlasova, A.~A. Pervishko, D.~Yudin, and J.~D. Biamonte.
\newblock ``Quantum-machine-learning channel discrimination''.
\newblock \href{https://dx.doi.org/10.1103/PhysRevA.106.032409}{Physical Review
  A {\bf 106}, 032409}~(2022).
\newblock  \href{http://arxiv.org/abs/2206.09933}{arXiv:2206.09933}.

\end{thebibliography}

\appendix

\section{Proof of Theorem~\ref{thm:Bose-sym-purify}}

\label{app:Bose-sym-purify}

We give the proof for completeness, and we note here that it is very close to
the proof of \cite[Lemma~II.5]{CKMR08} (see also \cite[Lemma~3.6]{KW20book}).

We begin with the forward implication. Suppose that $\rho_{S}$ is
$G$-symmetric extendible. By definition, this means that there exists a state
$\omega_{RS}$ satisfying \eqref{eq:G-ext-1} and \eqref{eq:G-ext-2}. Suppose
that $\omega_{RS}$ has the following spectral decomposition:
\begin{equation}
\omega_{RS}=\sum_{k}\lambda_{k}\Pi_{RS}^{k},
\end{equation}
where $\lambda_{k}$ is an eigenvalue and $\Pi_{RS}^{k}$ is a spectral
projection. We can write $\Pi_{RS}^{k}$ as
\begin{equation}
\Pi_{RS}^{k}=\sum_{\ell}|\phi_{\ell}^{k}\rangle\!\langle\phi_{\ell}^{k}|_{RS},
\end{equation}
where $\{|\phi_{\ell}^{k}\rangle_{RS}\}_{\ell}$ is an orthonormal basis. Now
define
\begin{align}
|\Gamma^{k}\rangle_{RS\hat{R}\hat{S}}  &  \coloneqq \sum_{\ell}|\phi_{\ell
}^{k}\rangle_{RS}\otimes\overline{|\phi_{\ell}^{k}\rangle}_{\hat{R}\hat{S}},\\
|\psi^{\rho}\rangle_{RS\hat{R}\hat{S}}  &  \coloneqq \sum_{k}\sqrt{\lambda
_{k}}|\Gamma^{k}\rangle_{RS\hat{R}\hat{S}},
\end{align}
where $\overline{|\phi_{\ell}^{k}\rangle}_{\hat{R}\hat{S}}$ is the complex
conjugate of $|\phi_{\ell}^{k}\rangle_{RS}$ with respect to the standard
basis. Observe that $|\psi^{\rho}\rangle\!\langle\psi^{\rho}|_{RS\hat{R}
\hat{S}}$ is a purification of $\omega_{RS}$. Now let us establish
\eqref{eq:g-sym-pur-cond}. Given that $\omega_{RS}$ satisfies
\eqref{eq:G-ext-2}, it follows that
\begin{align}
U_{RS}(g)^{\dag}\omega_{RS}U_{RS}(g)|\phi_{\ell}^{k}\rangle_{RS}  &
=\omega_{RS}|\phi_{\ell}^{k}\rangle_{RS}\\
&  =\lambda_{k}|\phi_{\ell}^{k}\rangle_{RS},
\end{align}
for all $k$, $\ell$, and $g$. Left multiplying by $U_{RS}(g)$ implies that
\begin{equation}
\omega_{RS}U_{RS}(g)|\phi_{\ell}^{k}\rangle_{RS}=\lambda_{k}U_{RS}(g)|\phi_{\ell}^{k}\rangle_{RS},
\end{equation}
so that $U_{RS}(g)|\phi_{\ell}^{k}\rangle_{RS}$ is an eigenvector of $\omega_{RS}$ with eigenvalue $\lambda_{k}$. We conclude that the $k$th eigenspace corresponding to eigenvalue $\lambda_{k}$ is invariant under the action of $U_{RS}(g)$ because $|\phi_{\ell}^{k}\rangle_{RS}$ and $U_{RS}(g)|\phi_{\ell}^{k}\rangle_{RS}$ are eigenvectors of $\omega_{RS}$ with eigenvalue $\lambda_{k}$. This implies that the restriction of $U_{RS}(g)$ to the $k$th eigenspace is equivalent to a unitary $U_{RS}^{k}(g)$. Then it follows that
\begin{align}
&  (U_{RS}(g)\otimes\overline{U}_{\hat{R}\hat{S}}(g))|\Gamma^{k} \rangle_{RS\hat{R}\hat{S}}\nonumber\\
&  =(U_{RS}^{k}(g)\otimes\overline{U}_{\hat{R}\hat{S}}^{k}(g))|\Gamma^{k}\rangle_{RS\hat{R}\hat{S}}\\
&  =|\Gamma^{k}\rangle_{RS\hat{R}\hat{S}},
\end{align}
for all $g\in G$. The first equality follows from the fact stated just above. The second equality follows from the invariance of the maximally entangled vector $|\Gamma^{k}\rangle_{RS\hat{R}\hat{S}}$ under unitaries of the form $V\otimes\overline{V}$. Thus, it follows by linearity that
\begin{equation}
|\psi^{\rho}\rangle_{RS\hat{R}\hat{S}} = (U_{RS}(g)\otimes\overline{U}_{\hat{R}\hat{S}}(g))|\psi^{\rho}\rangle_{RS\hat{R}\hat{S}},
\label{eq:unitary-inv-purified}
\end{equation}
for all $g\in G$, which is the statement of \eqref{eq:g-sym-pur-cond}.

Let us now consider the opposite implication. Suppose that $\psi_{RS\hat{R}\hat{S}}^{\rho}$ is a purification of $\rho_{S}$ and $\psi_{RS\hat{R}\hat{S}}^{\rho}$ satisfies \eqref{eq:g-sym-pur-cond}. Set
\begin{equation}
\omega_{RS}=\operatorname{Tr}_{\hat{R}\hat{S}}[\psi_{RS\hat{R}\hat{S}}^{\rho
}].
\end{equation}
Then $\omega_{RS}$ is an extension of $\rho_{S}$. Furthermore, employing the shorthand $U_{RS}\equiv U_{RS}(g)$ and $\overline{U}_{\hat{R}\hat{S}} \equiv \overline{U}_{\hat{R}\hat{S}}(g)$, we find that $\omega_{RS}=U_{RS}(g)\omega_{RS}U_{RS}(g)^{\dag}$ for all $g\in G$ because
\begin{align}
&  \omega_{RS}\nonumber\\
&  =\operatorname{Tr}_{\hat{R}\hat{S}}[\psi_{RS\hat{R}\hat{S}}^{\rho}]\\
&  =\operatorname{Tr}_{\hat{R}\hat{S}}[(U_{RS}\otimes\overline{U}_{\hat{R} \hat{S}})\psi_{RS\hat{R}\hat{S}}^{\rho}(U_{RS}\otimes\overline{U}_{\hat{R} \hat{S}})^{\dag}]\\
&  =U_{RS}(g)\operatorname{Tr}_{\hat{R}\hat{S}}[\overline{U}_{\hat{R}\hat{S}}(g)\psi_{RS\hat{R}\hat{S}}^{\rho}\overline{U}_{\hat{R}\hat{S}}(g)^{\dag}]U_{RS}(g)^{\dag}\\
&  =U_{RS}(g)\operatorname{Tr}_{\hat{R}\hat{S}}[\overline{U}_{\hat{R}\hat{S}}(g)^{\dag}\overline{U}_{\hat{R}\hat{S}}(g)\psi_{RS\hat{R}\hat{S}}^{\rho}]U_{RS}(g)^{\dag}\\
&  =U_{RS}(g)\operatorname{Tr}_{\hat{R}\hat{S}}[\psi_{RS\hat{R}\hat{S}}^{\rho}]U_{RS}(g)^{\dag}\\
&  =U_{RS}(g)\omega_{RS}U_{RS}(g)^{\dag}.
\end{align}
Thus, it follows that $\rho_{S}$ is $G$-symmetric extendible.

We now justify the equivalence of \eqref{eq:g-sym-pur-cond} and \eqref{eq:g-sym-pur-cond-proj}. Using the result in \eqref{eq:unitary-inv-purified}, observe that
\begin{equation}
|\psi^{\rho}\rangle_{RS\hat{R}\hat{S}}= \frac{1}{|G|}\sum_{g\in G}(U_{RS}(g)\otimes\overline{U}_{\hat{R}\hat{S}}(g))|\psi^{\rho}\rangle_{RS\hat{R}\hat{S}},
\end{equation}
which simplifies to \eqref{eq:g-sym-pur-cond-proj} by substituting in \eqref{eq:projector-ref-unitaries}. Now starting with \eqref{eq:projector-ref-unitaries}, let us apply the property in \eqref{eq:unitaries-and-projs}, and we have that
\begin{equation}
    |\psi^{\rho}\rangle_{RS\hat{R}\hat{S}} = (U_{RS}(g)\otimes\overline{U}_{\hat{R}\hat{S}}(g))\Pi^G_{RS\hat{R}\hat{S}} |\psi^{\rho}\rangle_{RS\hat{R}\hat{S}},
\end{equation}
for all $g \in G$. This reduces to \eqref{eq:g-sym-pur-cond} by applying \eqref{eq:g-sym-pur-cond-proj}.
%Fleshed out explanation
%The equality in \eqref{eq:g-sym-pur-cond-proj}\ follows from \eqref{eq:unitary-inv-purified} and\ \eqref{eq:projector-ref-unitaries}. The equality in \eqref{eq:g-sym-pur-cond} follows from \eqref{eq:g-sym-pur-cond-proj} by applying the property in \eqref{eq:unitaries-and-projs}.

\section{Acceptance probabilities of Algorithms~\ref{alg:simple}--\ref{alg:sym-ext} as maximum symmetric fidelities}

In the subsections of this appendix, we prove that the acceptance probabilities of Algorithms~\ref{alg:simple}--\ref{alg:sym-ext} are given by maximum symmetric fidelities. That is, we prove Theorems~\ref{thm:acc-prob-g-Bose-sym}, \ref{thm:max-acc-prob-g-sym}, \ref{thm:G-BSE-acc-prob}, and \ref{thm:G-SE-acc-prob}.

\subsection{Proof of Theorem~\ref{thm:acc-prob-g-Bose-sym}}

\label{app:acc-prob-g-Bose-sym}

Let $\psi_{RS}$ be an arbitrary purification of $\rho_{S}$, and consider that
\begin{align}
\operatorname{Tr}[\Pi_{S}^{G}\rho_{S}]  & =\operatorname{Tr}[(\mathbb{I}_{R}\otimes
\Pi_{S}^{G})\psi_{RS}]\\
& =\left\Vert \left(  \mathbb{I}_{R}\otimes\Pi_{S}^{G}\right)  |\psi\rangle
_{RS}\right\Vert _{2}^{2}.
\end{align}
Recall the following property of the norm of an arbitrary vector
$|\varphi\rangle$:
\begin{equation}
\left\Vert |\varphi\rangle\right\Vert _{2}^{2}=\max_{|\phi\rangle:\left\Vert
|\phi\rangle\right\Vert _{2}=1}\left\vert \langle\phi|\varphi\rangle
\right\vert ^{2}.
\label{eq:euclidean-norm-opt}
\end{equation}
This follows from the Cauchy--Schwarz inequality and the conditions for
saturating it.
This implies that
\begin{multline}
 \left\Vert \left(  \mathbb{I}_{R}\otimes\Pi_{S}^{G}\right)  |\psi\rangle
_{RS}\right\Vert _{2}^{2}\\
 =\max_{|\phi\rangle:\left\Vert |\phi\rangle\right\Vert _{2}=1}\left\vert
\langle\phi|_{RS}\left(  \mathbb{I}_{R}\otimes\Pi_{S}^{G}\right)  |\psi\rangle
_{RS}\right\vert ^{2}
\end{multline}
Let us also recall Uhlmann's theorem \cite{U76}: For positive semi-definite operators $\omega_{A}$ and $\tau_{A}$ and
corresponding rank-one operators $\psi_{RA}^{\omega}$ and $\psi_{RA}^{\tau}$
satisfying
\begin{align}
\operatorname{Tr}_{R}[\psi_{RA}^{\omega}]  &  =\omega_{A},\label{eq:uhlmann-thm-1}\\
\operatorname{Tr}_{R}[\psi_{RA}^{\tau}]  &  =\tau_{A},
\end{align}
Uhlmann's theorem \cite{U76} states  that
\begin{align}
& F(\omega_A,\tau_A) \notag \\
& = \left\Vert \sqrt{\omega_{A}}\sqrt{\tau_{A}}\right\Vert _{1}^{2}\\
& =\max_{V_{R}
}\left\vert \langle\psi^{\omega}|_{RA}\left(  V_{R}\otimes \mathbb{I}_{A}\right)
|\psi^{\tau}\rangle_{RA}\right\vert ^{2},
\label{eq:uhlmann-thm-last}
\end{align}
where the optimization is over every unitary $V_{R}$ acting on the reference
system $R$. We also implicitly defined fidelity more generally for positive semi-definite operators. Considering that
\begin{equation}
\rho_S  = \operatorname{Tr}_R[\psi_{RS}], \qquad 
 \sigma_S  \coloneqq \operatorname{Tr}_R[  \phi_{RS} ],
\end{equation}
so that 
\begin{equation}
    \Pi^G_S \sigma_S\Pi^G_S = \operatorname{Tr}_R[ \Pi^G_S \phi_{RS} \Pi^G_S],
\end{equation}
we conclude that
\begin{align}
& \max_{|\phi\rangle:\left\Vert |\phi\rangle\right\Vert _{2}=1}\left\vert
\langle\phi|_{RS}\left(  \mathbb{I}_{R}\otimes\Pi_{S}^{G}\right)  |\psi\rangle
_{RS}\right\vert ^{2} \notag \\
& =  \max_{|\phi\rangle:\left\Vert |\phi\rangle\right\Vert _{2}=1}\max_{U_R}\left\vert
\langle\phi|_{RS}\left(  U_{R}\otimes\Pi_{S}^{G}\right)  |\psi\rangle
_{RS}\right\vert ^{2} \\
& =\max_{\sigma_{S}\in\mathcal{D}(\mathcal{H}_{S})}F(\rho_{S},\Pi_{S}
^{G}\sigma_{S}\Pi_{S}^{G}).
\end{align}
where the last equality follows from Uhlmann's theorem with the identifications $\ket{\psi^{\omega}} \leftrightarrow (\mathbb{I} \otimes \Pi^G ) \ket{\phi}$ and $\ket{\psi^{\tau}} \leftrightarrow  \ket{\psi}$. Clearly, we have that
\begin{align}
& \max_{\sigma_{S}\in\mathcal{D}(\mathcal{H}_{S})}F(\rho_{S},\Pi_{S}^{G}
\sigma_{S}\Pi_{S}^{G})  \notag \\
& \geq\max_{\sigma\in\text{B-Sym}_{G}}F(\rho_{S}
,\Pi_{S}^{G}\sigma_{S}\Pi_{S}^{G})\\
& =\max_{\sigma\in\text{B-Sym}_{G}}F(\rho_{S},\sigma_{S}),
\end{align}
because B-Sym$_{G}\subset\mathcal{D}(\mathcal{H})$. Now let us consider
showing the opposite inequality. Let $\sigma\in\mathcal{D}(\mathcal{H})$. If
$\Pi^{G}\sigma\Pi^{G}=0$, then this is a suboptimal choice as it follows that
the objective function $F(\rho_{S},\Pi_{S}^{G}\sigma_{S}\Pi_{S}^{G})=0$ in
this case. So, let us suppose this is not the case. Then define
\begin{align}
\sigma^{\prime}  & \coloneqq \frac{1}{p}\Pi^{G}\sigma\Pi^{G},\\
p  & \coloneqq \operatorname{Tr}[\Pi^{G}\sigma],
\end{align}
and observe that $\sigma_{S}^{\prime}\in$B-Sym$_{G}$. Consider that
\begin{align}
F(\rho_{S},\Pi_{S}^{G}\sigma_{S}\Pi_{S}^{G})  & =pF(\rho_{S},\sigma
_{S}^{\prime})\\
& \leq F(\rho_{S},\sigma_{S}^{\prime})\\
& \leq\max_{\sigma_{S}\in\text{B-Sym}_{G}}F(\rho_{S},\sigma_{S}).
\end{align}
We have thus proved the opposite inequality, concluding the proof.

\subsection{Proof of Theorem~\ref{thm:max-acc-prob-g-sym}}

\label{app:max-acc-prob-g-sym}

The formula in \eqref{eq:euclidean-norm-opt} implies that
\begin{multline}
\max_{V_{S^{\prime}E\rightarrow\hat{S}E^{\prime}}}\left\Vert \Pi_{S\hat{S}
}^{G}V_{S^{\prime}E\rightarrow\hat{S}E^{\prime}}|\psi\rangle_{S^{\prime}
S}|0\rangle_{E}\right\Vert _{2}^{2}\label{eq:max-acc-prob-proof-step-1} = \\
\max_{\substack{V_{S^{\prime}E\rightarrow\hat{S}E^{\prime}},
\\
|\phi\rangle_{S\hat
{S}E^{\prime}}}}\left\vert \langle\phi|_{S\hat{S}E^{\prime}}\Pi_{S\hat{S}}
^{G}V_{S^{\prime}E\rightarrow\hat{S}E^{\prime}}|\psi\rangle_{S^{\prime}
S}|0\rangle_{E}\right\vert ^{2}.
\end{multline}
Applying Uhlmann's theorem (see \eqref{eq:uhlmann-thm-1}--\eqref{eq:uhlmann-thm-last}) to \eqref{eq:max-acc-prob-proof-step-1} with
the identifications $R\leftrightarrow\hat{S}E^{\prime}\simeq S^{\prime}E$ and
$S\leftrightarrow A$ and noting that
\begin{align}
\operatorname{Tr}_{S^{\prime}E}[|\psi\rangle\!\langle\psi|_{S^{\prime}
S}\otimes|0\rangle\!\langle0|_{E}]  &  =\rho_{S},\\
\operatorname{Tr}_{\hat{S}E^{\prime}}[\Pi_{S\hat{S}}^{G}|\phi\rangle
\!\langle\phi|_{S\hat{S}E^{\prime}}\Pi_{S\hat{S}}^{G}]  &  =\operatorname{Tr}
_{\hat{S}}[\Pi_{S\hat{S}}^{G}\sigma_{S\hat{S}^{\prime}}\Pi_{S\hat{S}}^{G}],
\end{align}
where $\sigma_{S\hat{S}^{\prime}}$ is a quantum state satisfying
$\sigma_{S\hat{S}^{\prime}}=\operatorname{Tr}_{E^{\prime}}[|\phi
\rangle\!\langle\phi|_{S\hat{S}E^{\prime}}]$, we conclude that
\begin{multline}
\max_{\substack{V_{S^{\prime}E\rightarrow\hat{S}E^{\prime}},
\\
|\phi\rangle_{S\hat
{S}E^{\prime}}}}\left\vert \langle\phi|_{S\hat{S}E^{\prime}}\Pi_{S\hat{S}}
^{G}V_{S^{\prime}E\rightarrow\hat{S}E^{\prime}}|\psi\rangle_{S^{\prime}
S}|0\rangle_{E}\right\vert ^{2}\label{eq:1st-fid-formula-pf}\\
=\max_{\sigma_{S\hat{S}^{\prime}}}F(\rho_{S},\operatorname{Tr}_{\hat{S}}
[\Pi_{S\hat{S}}^{G}\sigma_{S\hat{S}^{\prime}}\Pi_{S\hat{S}}^{G}]),
\end{multline}
with the optimization in the last line over every quantum state $\sigma
_{S\hat{S}^{\prime}}$.

We finally prove that
\begin{equation}\label{eq:final-eq-steps-1}
\max_{\sigma_{S\hat{S}^{\prime}}}F(\rho_{S},\operatorname{Tr}_{\hat{S}}
[\Pi_{S\hat{S}}^{G}\sigma_{S\hat{S}^{\prime}}\Pi_{S\hat{S}}^{G}])=\max
_{\sigma_{S}\in\operatorname{Sym}_{G}}F(\rho_{S},\sigma_{S}).
\end{equation}
To justify the inequality $\geq$ in \eqref{eq:final-eq-steps-1}, let $\sigma_{S}\in\operatorname{Sym}_{G}$, and pick $\sigma_{S\hat{S}}$ to be the purification $\varphi_{S\hat{S}}$ of
$\sigma_{S}$ from Theorem~\ref{thm:Bose-sym-purify}\ (with trivial reference
systems $R\hat{R}$)\ that satisfies
\begin{equation}
\Pi_{S\hat{S}}^{G}\varphi_{S\hat{S}}\Pi_{S\hat{S}}^{G}=\varphi_{S\hat{S}}.
\end{equation}
Then we find that
\begin{equation}
\operatorname{Tr}_{\hat{S}}[\Pi_{S\hat{S}}^{G}\varphi_{S\hat{S}}\Pi_{S\hat{S}
}^{G}]=\operatorname{Tr}_{\hat{S}}[\varphi_{S\hat{S}}]=\sigma_{S},
\end{equation}
and so, given that $\sigma_{S}\in\operatorname{Sym}_{G}$ is arbitrary, it
follows that
\begin{equation}
\max_{\sigma_{S\hat{S}^{\prime}}}F(\rho_{S},\operatorname{Tr}_{\hat{S}}
[\Pi_{S\hat{S}}^{G}\sigma_{S\hat{S}^{\prime}}\Pi_{S\hat{S}}^{G}])\geq
\max_{\sigma_{S}\in\operatorname{Sym}_{G}}F(\rho_{S},\sigma_{S}).
\end{equation}
To justify the inequality $\leq$ in \eqref{eq:final-eq-steps-1}, let $\sigma_{S\hat{S}}$ be an arbitrary
state. If $\sigma_{S\hat{S}^{\prime}}$ is outside of the subspace onto which
$\Pi_{S\hat{S}}^{G}$ projects, then $\Pi_{S\hat{S}}^{G}\sigma_{S\hat
{S}^{\prime}}\Pi_{S\hat{S}}^{G}=0$ and the fidelity in
\eqref{eq:1st-fid-formula-pf} is equal to zero. Let us then suppose that this
is not the case, and let us define
\begin{align}
\sigma_{S\hat{S}}^{\prime}  &  \coloneqq \frac{1}{p}\Pi_{S\hat{S}}^{G}
\sigma_{S\hat{S}^{\prime}}\Pi_{S\hat{S}}^{G},\\
p  &  \coloneqq \operatorname{Tr}[\Pi_{S\hat{S}}^{G}\sigma_{S\hat{S}^{\prime}
}].
\end{align}
Then we find that
\begin{align}
F(\rho_{S},\operatorname{Tr}_{\hat{S}}[\Pi_{S\hat{S}}^{G}\sigma_{S\hat
{S}^{\prime}}\Pi_{S\hat{S}}^{G}])  &  =pF(\rho_{S},\tau_{S})\\
&  \leq F(\rho_{S},\tau_{S}),
\end{align}
where
\begin{equation}
\tau_{S}\coloneqq \operatorname{Tr}_{\hat{S}}[\sigma_{S\hat{S}}^{\prime}],
\end{equation}
and we used the fact that $p\leq1$. It remains to be proven that $\tau_{S}
\in\operatorname{Sym}_{G}$. To see this, consider that
\begin{align}
\tau_{S}  &  =\operatorname{Tr}_{\hat{S}}[\sigma_{S\hat{S}}^{\prime}]\\
&  =\operatorname{Tr}_{\hat{S}}[\Pi_{S\hat{S}}^{G}\sigma_{S\hat{S}}^{\prime
}\Pi_{S\hat{S}}^{G}]\\
&  =\operatorname{Tr}_{\hat{S}}[\left(  U_{S}\otimes\overline{U}_{\hat{S}
}\right)  \Pi_{S\hat{S}}^{G}\sigma_{S\hat{S}}^{\prime}\Pi_{S\hat{S}}
^{G}\left(  U_{S}\otimes\overline{U}_{\hat{S}}\right)  ^{\dag}]\\
&  =U_{S}\operatorname{Tr}_{\hat{S}}[\overline{U}_{\hat{S}}\Pi_{S\hat{S}}
^{G}\sigma_{S\hat{S}}^{\prime}\Pi_{S\hat{S}}^{G}\overline{U}_{\hat{S}}^{\dag
}]U_{S}^{\dag}\\
&  =U_{S}\operatorname{Tr}_{\hat{S}}[\overline{U}_{\hat{S}}^{\dag}\overline
{U}_{\hat{S}}\Pi_{S\hat{S}}^{G}\sigma_{S\hat{S}}^{\prime}\Pi_{S\hat{S}}
^{G}]U_{S}^{\dag}\\
&  =U_{S}\operatorname{Tr}_{\hat{S}}[\Pi_{S\hat{S}}^{G}\sigma_{S\hat{S}
}^{\prime}\Pi_{S\hat{S}}^{G}]U_{S}^{\dag}\\
&  =U_{S}(g)\operatorname{Tr}_{\hat{S}}[\sigma_{S\hat{S}}^{\prime}]U_{S}
^{\dag}(g)\\
&  =U_{S}(g)\tau_{S}U_{S}^{\dag}(g).
\end{align}
where we have used the shorthand $U_{S}\equiv U_{S}(g)$ and $\overline
{U}_{\hat{S}}\equiv\overline{U}_{\hat{S}}(g)$. Since the equality $\tau
_{S}=U_{S}(g)\tau_{S}U_{S}^{\dag}(g)$ holds for all $g\in G$, it follows that
\begin{equation}\label{eq:final-eq-steps-last}
\max_{\sigma_{S\hat{S}^{\prime}}}F(\rho_{S},\operatorname{Tr}_{\hat{S}}
[\Pi_{S\hat{S}}^{G}\sigma_{S\hat{S}^{\prime}}\Pi_{S\hat{S}}^{G}])\leq
\max_{\tau_{S}\in\operatorname{Sym}_{G}}F(\rho_{S},\sigma_{S}),
\end{equation}
concluding the proof.

\subsection{Proof of Theorem~\ref{thm:G-BSE-acc-prob}}

\label{app:proof-thm-g-bse}Following the same reasoning given in \eqref{eq:max-acc-prob-proof-step-1}--\eqref{eq:1st-fid-formula-pf}, by using Uhlmann's theorem, we conclude that
\begin{multline}
\max_{V_{S^{\prime}E\rightarrow RE^{\prime}}}\left\Vert \Pi_{RS}
^{G}V_{S^{\prime}E\rightarrow RE^{\prime}}|\psi\rangle_{S^{\prime}S}
|0\rangle_{E}\right\Vert _{2}^{2}\\
=\max_{\sigma_{RS}}F(\rho_{S},\operatorname{Tr}_{R}[\Pi_{RS}^{G}\sigma_{RS} 
\Pi_{RS}^{G}]),
\end{multline}
where the optimization is over every state $\sigma_{RS}$ and $\Pi_{RS}^{G}$ is defined in \eqref{eq:Pi_RS-proj-again}. The next part of the proof shows that
\begin{equation}
\max_{\sigma_{RS}}F(\rho_{S},\operatorname{Tr}_{R}[\Pi_{RS}^{G}\sigma_{RS}
\Pi_{RS}^{G}])=\max_{\sigma_{S}\in\operatorname*{BSE}_{G}}F(\rho_{S}
,\sigma_{S})
\end{equation}
and is similar to \eqref{eq:final-eq-steps-1}--\eqref{eq:final-eq-steps-last}.
To justify the inequality$~\geq$, let $\sigma_{S}$ be an arbitrary state in
$\operatorname*{BSE}_{G}$. Then by Definition~\ref{def:g-bose-sym-ext}, this
means that there exists a state $\omega_{RS}$ such that $\operatorname{Tr}
_{R}[\omega_{RS}]=\sigma_{S}$ and $\Pi_{RS}^{G}\omega_{RS}\Pi_{RS}^{G}
=\omega_{RS}$. We find that
\begin{align}
F(\rho_{S},\sigma_{S})  &  =F(\rho_{S},\operatorname{Tr}_{R}[\omega_{RS}])\\
&  =F(\rho_{S},\operatorname{Tr}_{R}[\Pi_{RS}^{G}\omega_{RS}\Pi_{RS}^{G}])\\
&  \leq\max_{\sigma_{RS}}F(\rho_{S},\operatorname{Tr}_{R}[\Pi_{RS}^{G}
\sigma_{RS}\Pi_{RS}^{G}]),
\end{align}
which implies that
\begin{equation}
\max_{\sigma_{RS}}F(\rho_{S},\operatorname{Tr}_{R}[\Pi_{RS}^{G}\sigma_{RS}\Pi_{RS}^{G}])
\geq
\max_{\sigma_{S}\in\operatorname*{BSE}_{G}}F(\rho_{S},\sigma_{S}).
\end{equation}
To justify the inequality~$\leq$, let $\sigma_{RS}$ be an arbitrary state. If $\Pi_{RS}^{G}\sigma_{RS}\Pi_{RS}^{G}=0$, then the desired inequality trivially follows. Supposing then that this is not the case, let us define
\begin{align}
\sigma_{RS}^{\prime}  &  \coloneqq \frac{1}{p}\Pi_{RS}^{G}\sigma_{RS}\Pi
_{RS}^{G},\\
p  &  \coloneqq \operatorname{Tr}[\Pi_{RS}^{G}\sigma_{RS}].
\end{align}
We then find that
\begin{align}
& F(\rho_{S},\operatorname{Tr}_{R}[\Pi_{RS}^{G}\sigma_{RS}\Pi_{RS}^{G}]) \notag \\
&
=pF(\rho_{S},\operatorname{Tr}_{R}[\sigma_{RS}^{\prime}])\\
&  \leq F(\rho_{S},\operatorname{Tr}_{R}[\sigma_{RS}^{\prime}]).
\end{align}
Consider that $\sigma_{S}^{\prime}\coloneqq \operatorname{Tr}_{R}[\sigma
_{RS}^{\prime}]$ is $G$-Bose symmetric extendible because $\sigma_{RS}
^{\prime}$ is an extension of it that satisfies $\Pi_{RS}^{G}\sigma
_{RS}^{\prime}\Pi_{RS}^{G}=\sigma_{RS}^{\prime}$. We conclude that
\begin{equation}
F(\rho_{S},\operatorname{Tr}_{R}[\Pi_{RS}^{G}\sigma_{RS}\Pi_{RS}^{G}])\leq
\max_{\sigma_{S}\in\operatorname*{BSE}_{G}}F(\rho_{S},\sigma_{S}).
\end{equation}
Since this inequality holds for every state $\sigma_{RS}$, we surmise the
desired result
\begin{equation}
\max_{\sigma_{RS}}F(\rho_{S},\operatorname{Tr}_{R}[\Pi_{RS}^{G}\sigma_{RS}
\Pi_{RS}^{G}])\leq\max_{\sigma_{S}\in\operatorname*{BSE}_{G}}F(\rho_{S}
,\sigma_{S}).
\end{equation}

\subsection{Proof of Theorem~\ref{thm:G-SE-acc-prob}}

\label{app:proof-thm-g-se}

Following the same reasoning given in
\eqref{eq:max-acc-prob-proof-step-1}--\eqref{eq:1st-fid-formula-pf}, by using
Uhlmann's theorem, we conclude that
\begin{multline}
\max_{V_{S^{\prime}E\rightarrow R\hat{R}\hat{S}E^{\prime}}}\left\Vert
\Pi_{RS\hat{R}\hat{S}}^{G}V_{S^{\prime}E\rightarrow R\hat{R}\hat{S}E^{\prime}
}|\psi\rangle_{S^{\prime}S}|0\rangle_{E}\right\Vert _{2}^{2}\\
=\max_{\sigma_{R\hat{R}S\hat{S}}}F(\rho_{S},\operatorname{Tr}_{R\hat{R}\hat
{S}}[\Pi_{RS\hat{R}\hat{S}}^{G}\sigma_{R\hat{R}S\hat{S}}\Pi_{RS\hat{R}\hat{S}
}^{G}]),
\end{multline}
where the optimization is over every state $\sigma_{RS\hat{R}\hat{S}}$ and $\Pi_{RS\hat
{R}\hat{S}}^{G}$ is defined in \eqref{eq:projector-ref-unitaries}. The next
part of the proof shows that
\begin{multline}
\max_{\sigma_{R\hat{R}S\hat{S}}}F(\rho_{S},\operatorname{Tr}_{R\hat{R}\hat{S}
}[\Pi_{RS\hat{R}\hat{S}}^{G}\sigma_{R\hat{R}S\hat{S}}\Pi_{RS\hat{R}\hat{S}
}^{G}])\\
=\max_{\sigma_{S}\in\operatorname*{SymExt}_{G}}F(\rho_{S},\sigma_{S})
\end{multline}
and is similar to \eqref{eq:final-eq-steps-1}--\eqref{eq:final-eq-steps-last}.
To justify the inequality $\geq$, let $\sigma_{S}$ be a state in
$\operatorname*{SymExt}_{G}$. Then by Theorem~\ref{thm:Bose-sym-purify}, there
exists a purification $\varphi_{RS\hat{R}\hat{S}}$ of $\sigma_{S}$\ satisfying
$\varphi_{RS\hat{R}\hat{S}}=\Pi_{RS\hat{R}\hat{S}}^{G}\varphi_{RS\hat{R}
\hat{S}}\Pi_{RS\hat{R}\hat{S}}^{G}$. We find that
\begin{align}
&  F(\rho_{S},\sigma_{S})\nonumber\\
&  =F(\rho_{S},\operatorname{Tr}_{R\hat{R}\hat{S}}[\varphi_{RS\hat{R}\hat{S}
}])\\
&  =F(\rho_{S},\operatorname{Tr}_{R\hat{R}\hat{S}}[\Pi_{RS\hat{R}\hat{S}}
^{G}\varphi_{RS\hat{R}\hat{S}}\Pi_{RS\hat{R}\hat{S}}^{G}])\\
&  \leq\max_{\sigma_{R\hat{R}S\hat{S}}}F(\rho_{S},\operatorname{Tr}_{R\hat
{R}\hat{S}}[\Pi_{RS\hat{R}\hat{S}}^{G}\sigma_{R\hat{R}S\hat{S}}\Pi_{RS\hat
{R}\hat{S}}^{G}]).
\end{align}
Since the inequality holds for all $\sigma_{S}\in\operatorname*{SymExt}_{G}$,
we conclude that
\begin{multline}
\max_{\sigma_{S}\in\operatorname*{SymExt}_{G}}F(\rho_{S},\sigma_{S})\\
\leq\max_{\sigma_{R\hat{R}S\hat{S}}}F(\rho_{S},\operatorname{Tr}_{R\hat{R}
\hat{S}}[\Pi_{RS\hat{R}\hat{S}}^{G}\sigma_{R\hat{R}S\hat{S}}\Pi_{RS\hat{R}
\hat{S}}^{G}]).
\end{multline}
To justify the inequality $\leq$, let $\sigma_{R\hat{R}S\hat{S}}$ be an
arbitrary state. If $\Pi_{RS\hat{R}\hat{S}}^{G}\sigma_{R\hat{R}S\hat{S}}
\Pi_{RS\hat{R}\hat{S}}^{G}=0$, then the desired inequality follows trivially.
Supposing this is not the case, then define
\begin{align}
\sigma_{R\hat{R}S\hat{S}}^{\prime}  &  \coloneqq \frac{1}{p}\Pi_{RS\hat{R}
\hat{S}}^{G}\sigma_{R\hat{R}S\hat{S}}\Pi_{RS\hat{R}\hat{S}}^{G},\\
p  &  \coloneqq \operatorname{Tr}[\Pi_{RS\hat{R}\hat{S}}^{G}\sigma_{R\hat
{R}S\hat{S}}].
\end{align}
Then we find that
\begin{align}
&  F(\rho_{S},\operatorname{Tr}_{R\hat{R}\hat{S}}[\Pi_{RS\hat{R}\hat{S}}
^{G}\sigma_{R\hat{R}S\hat{S}}\Pi_{RS\hat{R}\hat{S}}^{G}])\nonumber\\
&  =pF(\rho_{S},\operatorname{Tr}_{R\hat{R}\hat{S}}[\sigma_{R\hat{R}S\hat{S}
}^{\prime}])\\
&  \leq F(\rho_{S},\operatorname{Tr}_{R\hat{R}\hat{S}}[\sigma_{R\hat{R}
S\hat{S}}^{\prime}])\\
&  =F(\rho_{S},\tau_{S}),
\end{align}
where $\tau_{S}\coloneqq \operatorname{Tr}_{R\hat{R}\hat{S}}[\sigma_{R\hat
{R}S\hat{S}}^{\prime}]$. We now aim to show that $\tau_{S}\in
\operatorname*{SymExt}_{G}$. To do so, it suffices to prove that $\sigma
_{RS}^{\prime}=U_{RS}(g)\sigma_{RS}^{\prime}U_{RS}(g)^{\dag}$ for all $g\in
G$. Abbreviating $U\otimes\overline{U}\equiv U_{RS}(g)\otimes\overline
{U}_{\hat{R}\hat{S}}(g)$, consider that
\begin{align}
&  \sigma_{RS}^{\prime}\nonumber\\
&  =\operatorname{Tr}_{\hat{R}\hat{S}}[\sigma_{RS\hat{R}\hat{S}}^{\prime}]\\
&  =\operatorname{Tr}_{\hat{R}\hat{S}}[\Pi_{RS\hat{R}\hat{S}}^{G}
\sigma_{RS\hat{R}\hat{S}}^{\prime}\Pi_{RS\hat{R}\hat{S}}^{G}]\\
&  =\operatorname{Tr}_{\hat{R}\hat{S}}[(U\otimes\overline{U})\Pi_{RS\hat
{R}\hat{S}}^{G}\sigma_{RS\hat{R}\hat{S}}^{\prime}\Pi_{RS\hat{R}\hat{S}}
^{G}(U\otimes\overline{U})^{\dag}]\\
&  =U\operatorname{Tr}_{\hat{R}\hat{S}}[\overline{U}\Pi_{RS\hat{R}\hat{S}}
^{G}\sigma_{RS\hat{R}\hat{S}}^{\prime}\Pi_{RS\hat{R}\hat{S}}^{G}\overline
{U}^{\dag}]U^{\dag}\\
&  =U\operatorname{Tr}_{\hat{R}\hat{S}}[\overline{U}^{\dag}\overline{U}
\Pi_{RS\hat{R}\hat{S}}^{G}\sigma_{RS\hat{R}\hat{S}}^{\prime}\Pi_{RS\hat{R}
\hat{S}}^{G}]U^{\dag}\\
&  =U\operatorname{Tr}_{\hat{R}\hat{S}}[\Pi_{RS\hat{R}\hat{S}}^{G}
\sigma_{RS\hat{R}\hat{S}}^{\prime}\Pi_{RS\hat{R}\hat{S}}^{G}]U^{\dag}\\
&  =U\operatorname{Tr}_{\hat{R}\hat{S}}[\sigma_{RS\hat{R}\hat{S}}^{\prime
}]U^{\dag}\\
&  =U_{RS}(g)\sigma_{RS}^{\prime}U_{RS}(g)^{\dag}.
\end{align}
It follows that $\tau_{S}\in\operatorname*{SymExt}_{G}$, and we conclude
that
\begin{multline}
F(\rho_{S},\operatorname{Tr}_{R\hat{R}\hat{S}}[\Pi_{RS\hat{R}\hat{S}}
^{G}\sigma_{R\hat{R}S\hat{S}}\Pi_{RS\hat{R}\hat{S}}^{G}])\\
\leq\max_{\sigma_{S}\in\operatorname*{SymExt}_{G}}F(\rho_{S},\sigma_{S}).
\end{multline}
Since the inequality holds for every state $\sigma_{R\hat{R}S\hat{S}}$, we
conclude that
\begin{multline}
\max_{\sigma_{R\hat{R}S\hat{S}}}F(\rho_{S},\operatorname{Tr}_{R\hat{R}\hat{S}
}[\Pi_{RS\hat{R}\hat{S}}^{G}\sigma_{R\hat{R}S\hat{S}}\Pi_{RS\hat{R}\hat{S}
}^{G}])\\
\leq\max_{\sigma_{S}\in\operatorname*{SymExt}_{G}}F(\rho_{S},\sigma_{S}).
\end{multline}

\section{Proof of Proposition~\ref{prop:golden-rule-G-BSE}}
\label{app:proof-prop-G-BSE}

The idea of the proof is similar to that for
Proposition~\ref{prop:golden-rule-G-SE}. Since $\rho_{S}$ is a $G$-BSE state,
by Definition~\ref{def:g-bose-sym-ext}, there exists an extension state
$\omega_{RS}$ satisfying the conditions stated there. Since $\mathcal{N}
_{S\rightarrow S^{\prime}}$ is a $G$-BSE channel, by
Definition~\ref{def:G-BSE-channels}, there exists an extension channel
$\mathcal{M}_{RS\rightarrow R^{\prime}S^{\prime}}$ satisfying the conditions
stated there. It follows that $\mathcal{M}_{RS\rightarrow R^{\prime}S^{\prime
}}(\omega_{RS})$ is an extension of $\mathcal{N}_{S\rightarrow S^{\prime}
}(\rho_{S})$ because
\begin{align}
\operatorname{Tr}_{R^{\prime}}[\mathcal{M}_{RS\rightarrow R^{\prime}S^{\prime
}}(\omega_{RS})]  &  =\mathcal{N}_{S\rightarrow S^{\prime}}(\operatorname{Tr}
_{R}[\omega_{RS}])\\
&  =\mathcal{N}_{S\rightarrow S^{\prime}}(\rho_{S}),
\end{align}
where the first equality follows from \eqref{eq:channel-ext-bose}. Also,
consider that the following holds
\begin{align}
1  &  \geq\operatorname{Tr}[\Pi_{R^{\prime}S^{\prime}}^{G}\mathcal{M}
_{RS\rightarrow R^{\prime}S^{\prime}}(\omega_{RS})]\nonumber\\
&  =\operatorname{Tr}[(\mathcal{M}_{RS\rightarrow R^{\prime}S^{\prime}}
)^{\dag}(\Pi_{R^{\prime}S^{\prime}}^{G})\omega_{RS}]\\
&  \geq\operatorname{Tr}[\Pi_{RS}^{G}\omega_{RS}]\\
&  =1.
\end{align}
The first inequality follows because $\mathcal{M}_{RS\rightarrow R^{\prime
}S^{\prime}}(\omega_{RS})$ is a state and $\Pi_{R^{\prime}S^{\prime}}^{G}$ is
projection. The first equality follows from the definition of channel adjoint.
The second inequality follows from \eqref{eq:BSE-bose-condition}. We conclude that $\operatorname{Tr}[\Pi_{R^{\prime
}S^{\prime}}^{G}\mathcal{M}_{RS\rightarrow R^{\prime}S^{\prime}}(\omega
_{RS})]=1$, which by \eqref{eq:Bose-symmetric-equiv-cond}, implies that
$\mathcal{M}_{RS\rightarrow R^{\prime}S^{\prime}}(\omega_{RS})$ is a $G$-Bose
symmetric state. It then follows that $\mathcal{N}_{S\rightarrow S^{\prime}
}(\rho_{S})$ is $G$-Bose symmetric extendible.

\section{Cyclic group \texorpdfstring{$C_3$}{C3}}

\label{app:cyclic-c-3}

\begin{figure*}[t!]
\begin{center}
\includegraphics[width=\linewidth]{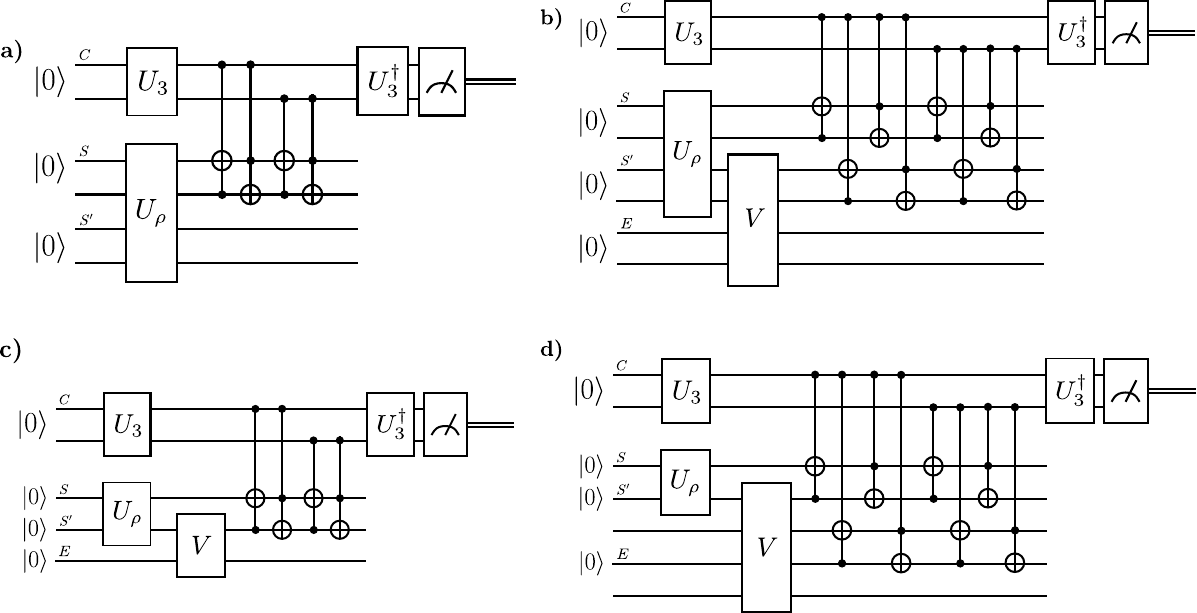}
\end{center}
\caption{Symmetry tests for the $C_3$ group: a) $G$-Bose symmetry, b) $G$-symmetry, c) $G$-Bose symmetric extendibility, and d) $G$-symmetric extendibility. }
\label{fig:CS3_Circuits}
\end{figure*}

Cyclic groups, denoted by $C_n$, are abelian groups formed by cyclic shifts of $n$ elements and always have order $n$. Consider first $C_3$, the cyclic group on three elements. The group table for $C_3$ is given by
\begin{center}
\begin{tabular}{
>{\centering\arraybackslash}p{0.15\textwidth} | >{\centering\arraybackslash}p{0.05\textwidth} | 
>{\centering\arraybackslash}p{0.05\textwidth} | 
>{\centering\arraybackslash}p{0.05\textwidth}}
 \hline
  Group element &$e$& $a$ & $b$ \\ 
  \hline
$e$ & $e$ & $a$ & $b$ \\ 
 $a$ & $a$ & $b$ &$e$\\ 
 $b$ & $b$ & $e$ & $a$ \\ 
 \hline
\end{tabular}
\end{center}

The $C_3$ group has a one-dimensional representation given by the third roots of unity, but here we instead opt for a two-qubit unitary representation corresponding more closely to the standard representation of $C_3$: $\{e \rightarrow \mathbb{I}, a \rightarrow \textrm{SWAP}\ \circ\ \textrm{CNOT}, b \rightarrow \textrm{SWAP}\ \circ\ \textrm{CNOT}\ \circ\ \textrm{SWAP}\ \circ\ \textrm{CNOT}\}$. The $C_3$ group has three elements, and thus, the $\ket{+}_C$ state is a uniform superposition of three elements. We use two qubits and the same unitary~$U_3$ shown in Figure~\ref{fig:U3_Superposition} to generate an equal superposition of three elements:
\begin{equation}
    \label{eq:CS3_Superposition}
    U_3\ket{00} = \frac{1}{\sqrt{3}} (\ket{00} + \ket{01} + \ket{11}).
\end{equation}

The control register states need to be mapped to group elements. We employ the mapping $\{\ket{00} \rightarrow e, \ket{01} \rightarrow a, \ket{11} \rightarrow b\}$ for our circuit constructions. The circuits required for all tests are given in Figure~\ref{fig:CS3_Circuits}.

\subsection{\texorpdfstring{$G$}{G}-Bose symmetry}

Figure~\ref{fig:CS3_Circuits}a) shows the circuit that tests for $G$-Bose symmetry. Table~\ref{tab:CS3_GBS} shows the results for various input states. The true fidelity value is calculated using \eqref{eq:acc-prob-bose-test}, where $\Pi^G_S$ is defined in \eqref{eq:group_proj_GBS}.

\begin{table}[h]
\centering
\vspace{.05in}
\begin{tabular}{
>{\centering\arraybackslash}p{0.12\textwidth} | >{\centering\arraybackslash}p{0.08\textwidth} | 
>{\centering\arraybackslash}p{0.08\textwidth} | 
>{\centering\arraybackslash}p{0.08\textwidth}}
\hline
\textrm{State} & 
\textrm{True Fidelity} &
\textrm{Noiseless} &
\textrm{Noisy}\\
\hline\hline 
$\outerproj{00}$ & 1.0 & 1.0000 & 0.8415 \\
$\ket{\!-\!+}\!\bra{-\!+\!}$ & 0.3333 & 0.3333 & 0.3408\\ 
$\rho$ & 1.0  & 1.0000 & 0.8524\\ 
$\pi^{\otimes 2}$ & 0.5 & 0.5000 & 0.4698\\ 
\hline
\end{tabular}
\caption{Results of $C_3$-Bose symmetry tests. The state $\rho$ is defined as $\outerproj{\psi}$ where $\ket{\psi} = \frac{1}{\sqrt{3}}(\ket{01} + \ket{10} + \ket{11})$.}
\label{tab:CS3_GBS}
\end{table}

\subsection{\texorpdfstring{$G$}{G}-symmetry}
A circuit that tests for $G$-symmetry is shown in Figure~\ref{fig:CS3_Circuits}b). It involves variational parameters, and an example of the training process is shown in Figure~\ref{fig:CS3_GS_Training}. Table~\ref{tab:CS3_GS} shows the final results after training for various input states. The true fidelity is calculated using the semi-definite program given in \eqref{eq:SDP-rootfid-GS}.

\begin{figure}
\begin{center}
\includegraphics[width=\linewidth]{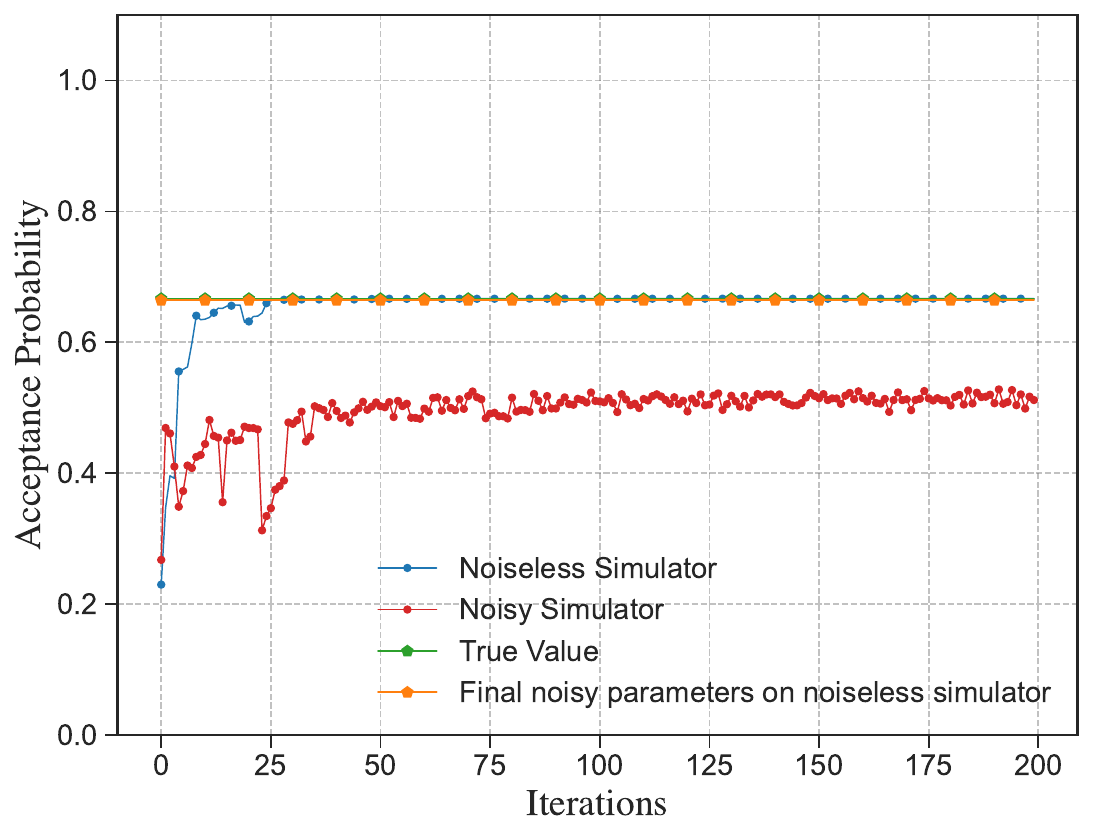}
\end{center}
\caption{Example of the training process for testing $C_3$-symmetry of $\Phi^+$. We see that the training exhibits a noise resilience.}
\label{fig:CS3_GS_Training}
\end{figure}

\begin{table}[h]
\centering
\vspace{.05in}
\begin{tabular}{
>{\centering\arraybackslash}p{0.09\textwidth} | >{\centering\arraybackslash}p{0.07\textwidth} | 
>{\centering\arraybackslash}p{0.07\textwidth} | 
>{\centering\arraybackslash}p{0.05\textwidth} |
>{\centering\arraybackslash}p{0.08\textwidth}}
\hline
\textrm{State} & 
\textrm{True Fidelity} &
\textrm{Noiseless} &
\textrm{Noisy} & 
\textrm{Noise Resilient}
\\
\hline\hline 
$\ket{-,\!+}\!\bra{-,\!+}$& 0.3339 & 0.3333 & 0.3084 & 0.3333 \\
$\Phi^+$ & 0.6666 & 0.6666 & 0.5118 & 0.6639 \\
$\rho$ & 0.7778 & 0.7775 & 0.5694 & 0.7760 \\ 
$\pi^{\otimes 2} $ & 1.0000 & 0.9998 & 0.6756 & 0.9864 \\ 
\hline
\end{tabular}
\caption{Results of $C_3$-symmetry tests. The state $\rho$ is defined as $\outerproj{\psi}$ where $\ket{\psi} = \frac{1}{\sqrt{3}}(\ket{00}+\ket{11}+\ket{10})$.}
\label{tab:CS3_GS}
\end{table}

\subsection{\texorpdfstring{$G$}{G}-Bose symmetric extendibility}

A circuit that tests for $G$-Bose symmetric extendibility is shown in Figure~\ref{fig:CS3_Circuits}c). It involves variational parameters, and an example of the training process is shown in Figure~\ref{fig:CS3_GBSE_Training}. Table~\ref{tab:CS3_GBSE} shows the final results after training for various input states. The true fidelity is calculated using the semi-definite program given in \eqref{eq:SDP-rootfid-GBSE}.

\begin{figure}
\begin{center}
\includegraphics[width=\linewidth]{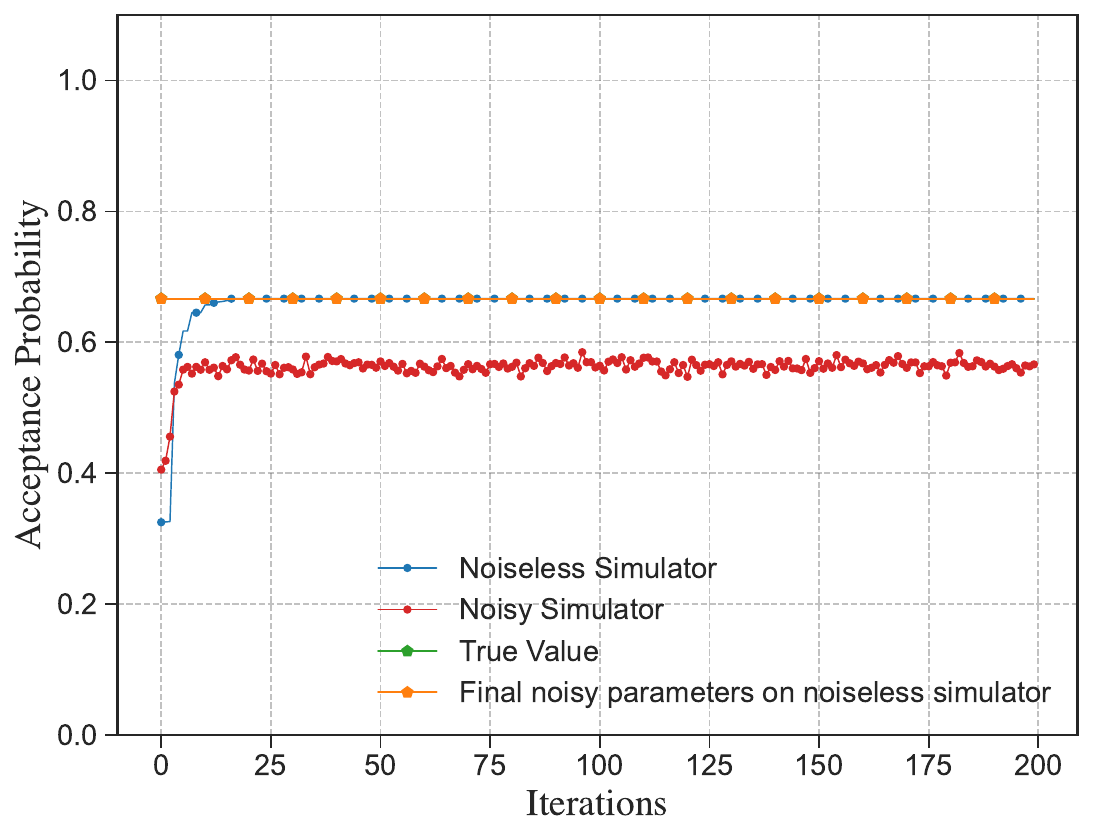}
\end{center}
\caption{Example of the training process for testing $C_3$-Bose symmetric extendibility of $\outerproj{1}$. We see that the training exhibits a noise resilience.}
\label{fig:CS3_GBSE_Training}
\end{figure}

\begin{table}[h]
\centering
\vspace{.05in}
\begin{tabular}{
>{\centering\arraybackslash}p{0.05\textwidth} | >{\centering\arraybackslash}p{0.07\textwidth} | 
>{\centering\arraybackslash}p{0.07\textwidth} | 
>{\centering\arraybackslash}p{0.07\textwidth} |
>{\centering\arraybackslash}p{0.08\textwidth}}
\hline
\textrm{State} & 
\textrm{True Fidelity} &
\textrm{Noiseless} &
\textrm{Noisy} & 
\textrm{Noise Resilient}
\\
\hline\hline 
$\outerproj{0}$ & 0.6670 & 0.6667 & 0.5662 & 0.6665 \\
$\pi$ & 1.0000 & 1.0000 & 0.8066 & 0.9979 \\
$\rho$ & 0.8382 & 0.8380 & 0.7093 & 0.8377 \\
\hline
\end{tabular}
\caption{Results of $C_3$-Bose symmetric extendibility tests. The state $\rho$ is defined as $\outerproj{\psi}$ where $\ket{\psi} = \frac{1}{2}(\sqrt{3}\ket{0} - \ket{1})$.}
\label{tab:CS3_GBSE}
\end{table}

\subsection{\texorpdfstring{$G$}{G}-symmetric extendibility}

A circuit that tests for $G$-symmetric extendibility is shown in Figure~\ref{fig:CS3_Circuits}d). It involves variational parameters, and an example of the training process is shown in Figure~\ref{fig:CS3_GSE_Training}. Table~\ref{tab:CS3_GSE} shows the final results after training for various input states. The true fidelity is calculated using the semi-definite program given in \eqref{eq:SDP-rootfid-GSE}.

\begin{figure}
\begin{center}
\includegraphics[width=\linewidth]{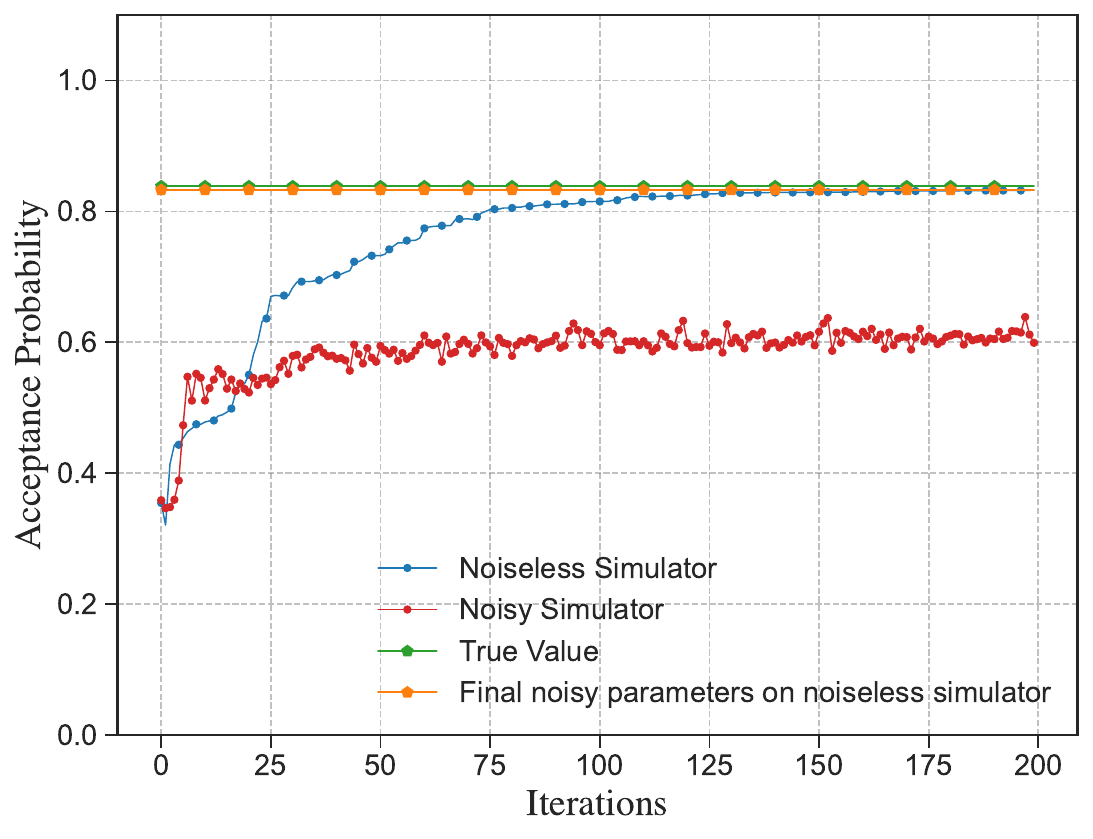}
\end{center}
\caption{Example of the training process for testing $C_3$-symmetric extendibility of $\rho = \outerproj{\psi}$, where $\ket{\psi} = \frac{1}{2}(\sqrt{3}\ket{0} - \ket{1})$. We see that the training exhibits a noise resilience.}
\label{fig:CS3_GSE_Training}
\end{figure}

\begin{table}[h]
\centering
\vspace{.05in}
\begin{tabular}{
>{\centering\arraybackslash}p{0.05\textwidth} | >{\centering\arraybackslash}p{0.07\textwidth} | 
>{\centering\arraybackslash}p{0.07\textwidth} | 
>{\centering\arraybackslash}p{0.07\textwidth} |
>{\centering\arraybackslash}p{0.08\textwidth}}
\hline
\textrm{State} & 
\textrm{True Fidelity} &
\textrm{Noiseless} &
\textrm{Noisy} & 
\textrm{Noise Resilient}
\\
\hline\hline 
$\outerproj{1}$ & 0.6667 & 0.6660 & 0.4809 & 0.6620 \\
$\pi$ & 1.0000 & 0.9942 & 0.6818 & 0.9812 \\
$\rho$ & 0.8383 & 0.8322 & 0.5992 & 0.8327 \\
\hline
\end{tabular}
\caption{Results of $C_3$-symmetric extendibility tests. The state $\rho$ is defined as $\outerproj{\psi}$ where $\ket{\psi} = \frac{1}{2}(\sqrt{3}\ket{0} - \ket{1})$.}
\label{tab:CS3_GSE}
\end{table}

\section{Cyclic group \texorpdfstring{$C_4$}{C4}}
\label{sec:CS4}

In this appendix, we consider $C_4$, the cyclic group on four elements. Again, as an abelian group, there exists a one-dimensional representation that we choose not to employ here. Instead, we consider again a two-qubit representation.

\begin{figure*}[t!]
\begin{center}
\includegraphics[width=\linewidth]{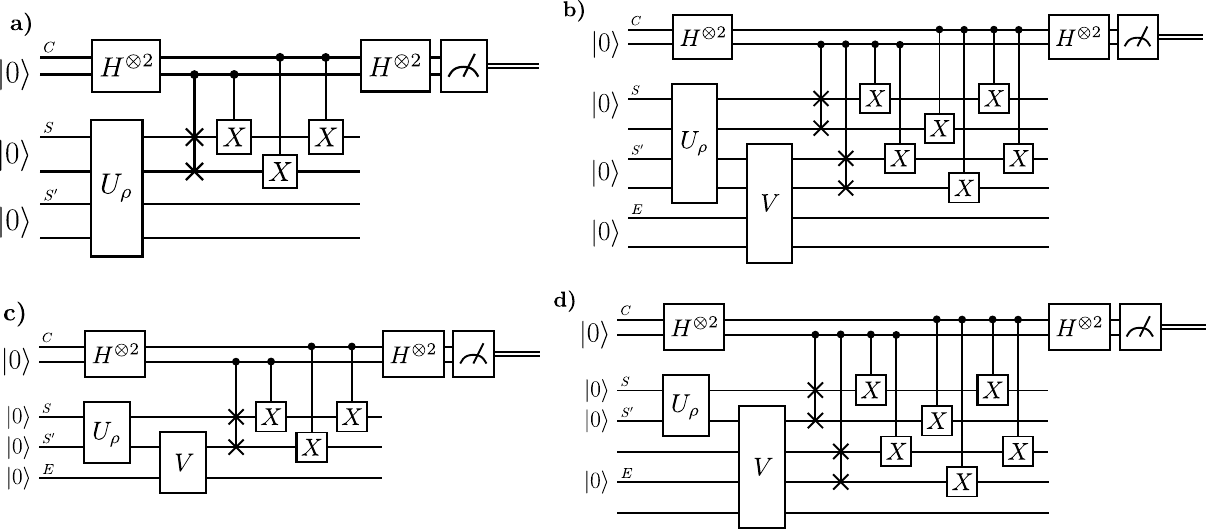}
\end{center}
\caption{Symmetry tests for the $C_4$ group: a) $G$-Bose symmetry, b) $G$-symmetry, c) $G$-Bose symmetric extendibility, and d) $G$-symmetric extendibility. }
\label{fig:CS4_Circuits}
\end{figure*}

The group table for $C_4$ is given by
\begin{center}
\begin{tabular}{
>{\centering\arraybackslash}p{0.14\textwidth} | 
>{\centering\arraybackslash}p{0.03\textwidth} | 
>{\centering\arraybackslash}p{0.03\textwidth} |
>{\centering\arraybackslash}p{0.03\textwidth} |
>{\centering\arraybackslash}p{0.03\textwidth}}
 \hline
  Group element & $e$ & $a$ & $b$ & $c$\\ 
  \hline
 $e$ & $e$ & $a$ & $b$ & $c$\\ 
 $a$ & $a$ & $b$ & $c$ & $e$ \\ 
 $b$ & $b$ & $c$ & $e$ & $a$ \\
 $c$ & $c$ & $e$ & $a$ & $b$ \\
 \hline
\end{tabular}
\end{center}
This group has a two-qubit unitary representation $\{e \rightarrow \mathbb{I}, a \rightarrow X_0 \circ \textrm{SWAP}, b \rightarrow X_0X_1, c \rightarrow X_1 \circ \textrm{SWAP}\}$, where $X_i$ denotes the Pauli $\sigma_x$ operator acting on qubit $i$, for $i \in \{0,1\}$. The $C_4$ group has four elements, and thus, the $\ket{+}_C$ state is a uniform superposition of four elements. We use two qubits and the Hadamard gate to generate the control state, as follows:
\begin{equation}
    H^{\otimes 2}\ket{00} = \frac{1}{2}\left(\ket{00} + \ket{01} + \ket{10} + \ket{11}\right).
\end{equation}
The control register states need to be mapped to group elements. We employ the mapping $\{\ket{00} \rightarrow e, \ket{01} \rightarrow a, \ket{10} \rightarrow b, \ket{11} \rightarrow c\}$ for our circuit constructions.

\subsection{\texorpdfstring{$G$}{G}-Bose symmetry}

Figure~\ref{fig:CS4_Circuits}a) shows a circuit that tests for $G$-Bose symmetry. Table~\ref{tab:CS4_GBS} shows the results for various input states. The true fidelity value is calculated using \eqref{eq:acc-prob-bose-test}, where $\Pi^G_S$ is defined in \eqref{eq:group_proj_GBS}.

\begin{table}[h]
\vspace{.05in}
\centering
\begin{tabular}{
>{\centering\arraybackslash}p{0.10\textwidth} | >{\centering\arraybackslash}p{0.08\textwidth} | 
>{\centering\arraybackslash}p{0.08\textwidth} | 
>{\centering\arraybackslash}p{0.08\textwidth}}
\hline
\textrm{State} & 
\textrm{True Fidelity} &
\textrm{Noiseless} &
\textrm{Noisy}\\
\hline\hline 
$\outerproj{00}$ & 0.25 & 0.2500 & 0.2579 \\
$\ket{\!+\!+}\!\bra{+\!+\!}$ & 1.0   & 1.0000 & 0.9276 \\ 
$\ket{\!+\!0}\!\bra{+0}$ & 0.5 & 0.5000 & 0.5002 \\ 
$\pi^{\otimes 2} $ & 0.25 & 0.2500 & 0.2449\\ 
\hline
\end{tabular}
\caption{Results of $C_4$-Bose symmetry tests.}
\label{tab:CS4_GBS}
\end{table}

\subsection{\texorpdfstring{$G$}{G}-symmetry}
A circuit that tests for $G$-symmetry is shown in Figure~\ref{fig:CS4_Circuits}b). It involves variational parameters, and an example of the training process is shown in Figure~\ref{fig:CS4_GS_Training}. Table~\ref{tab:CS4_GS} shows the final results after training for various input states. The true fidelity is calculated using the semi-definite program given in \eqref{eq:SDP-rootfid-GS}.

\begin{figure}
\begin{center}
\includegraphics[width=\linewidth]{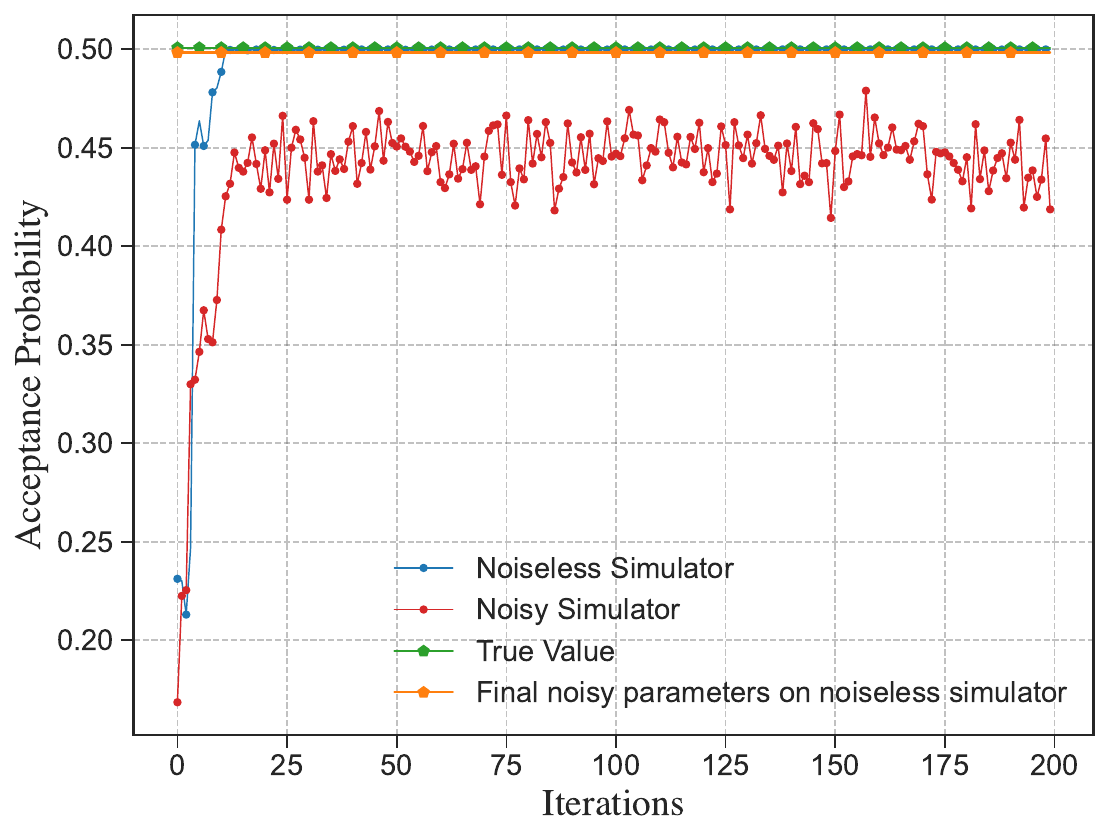}
\end{center}
\caption{Example of the training process for testing $C_4$-symmetry of $\rho = \outerproj{\psi}$, where $\ket{\psi} = \ket{\!+\!-}$. We see that the training exhibits a noise resilience.}
\label{fig:CS4_GS_Training}
\end{figure}

\begin{table}[h]
\vspace{.05in}
\centering
\begin{tabular}{
>{\centering\arraybackslash}p{0.08\textwidth} | >{\centering\arraybackslash}p{0.07\textwidth} | 
>{\centering\arraybackslash}p{0.07\textwidth} | 
>{\centering\arraybackslash}p{0.06\textwidth} |
>{\centering\arraybackslash}p{0.08\textwidth}}
\hline
\textrm{State} & 
\textrm{True Fidelity} &
\textrm{Noiseless} &
\textrm{Noisy} & 
\textrm{Noise Resilient}
\\
\hline\hline 
$\outerproj{00}$ & 0.2502 & 0.2500 & 0.2562 & 0.2500 \\
$\ket{\!+\!-}\!\bra{-\!+\!}$ & 0.5008 & 0.5000 & 0.4187 & 0.4984 \\
$\pi \otimes \outerproj{0}$ & 0.7501 & 0.7498 & 0.6140 & 0.7480 \\
$\pi^{\otimes 2}$ & 1.0000 & 0.9992 & 0.7606 & 0.9912 \\
\hline
\end{tabular}
\caption{Results of $C_4$-symmetry tests.}
\label{tab:CS4_GS}
\end{table}

\subsection{\texorpdfstring{$G$}{G}-Bose symmetric extendibility}

A circuit that tests for $G$-Bose symmetric extendibility is shown in Figure~\ref{fig:CS4_Circuits}c). It involves variational parameters, and an example of the training process is shown in Figure~\ref{fig:CS4_GBSE_Training}. Table~\ref{tab:CS4_GBSE} shows the final results after training for various input states. The true fidelity is calculated using the semi-definite program given in \eqref{eq:SDP-rootfid-GBSE}.

\begin{figure}
\begin{center}
\includegraphics[width=\linewidth]{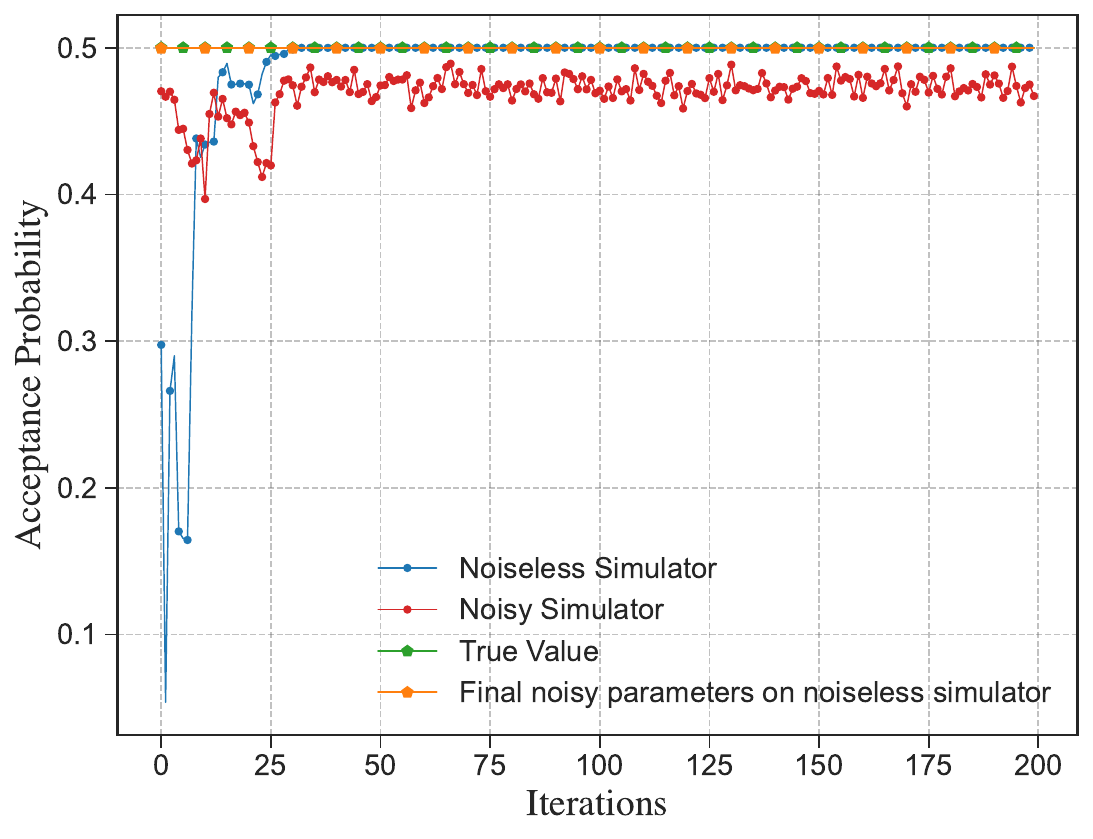}
\end{center}
\caption{Example of the training process for testing $C_4$-Bose symmetric extendibility of $\outerproj{00}$. We see that the training exhibits a noise resilience.}
\label{fig:CS4_GBSE_Training}
\end{figure}

\begin{table}[h]
\centering
\vspace{.05in}
\begin{tabular}{
>{\centering\arraybackslash}p{0.05\textwidth} | >{\centering\arraybackslash}p{0.08\textwidth} | 
>{\centering\arraybackslash}p{0.08\textwidth} | 
>{\centering\arraybackslash}p{0.06\textwidth} |
>{\centering\arraybackslash}p{0.08\textwidth}}
\hline
\textrm{State} & 
\textrm{True Fidelity} &
\textrm{Noiseless} &
\textrm{Noisy} & 
\textrm{Noise Resilient}
\\
\hline\hline 
$\outerproj{0}$ & 0.5000 & 0.5000 & 0.4671 & 0.4995 \\
$\outerproj{+}$ & 1.0000 & 1.0000 & 0.9195 & 1.0000 \\
$\rho$ & 0.9330 & 0.9330 & 0.8689 & 0.9329 \\
\hline
\end{tabular}
\caption{Results of $C_4$-Bose symmetric extendibility tests. The state $\rho$ is defined as $\begin{bmatrix} 0.75 & 0.4330\\ 0.4430 & 0.25 \end{bmatrix}$.}
\label{tab:CS4_GBSE}
\end{table}

\subsection{\texorpdfstring{$G$}{G}-symmetric extendibility}

A circuit that tests for $G$-symmetric extendibility is shown in Figure~\ref{fig:CS4_Circuits}d). It involves variational parameters, and an example of the training process is shown in Figure~\ref{fig:CS4_GSE_Training}. Table~\ref{tab:CS4_GSE} shows the final results after training for various input states. The true fidelity is calculated using the semi-definite program given in \eqref{eq:SDP-rootfid-GSE}.

\begin{figure}
\begin{center}
\includegraphics[width=\linewidth]{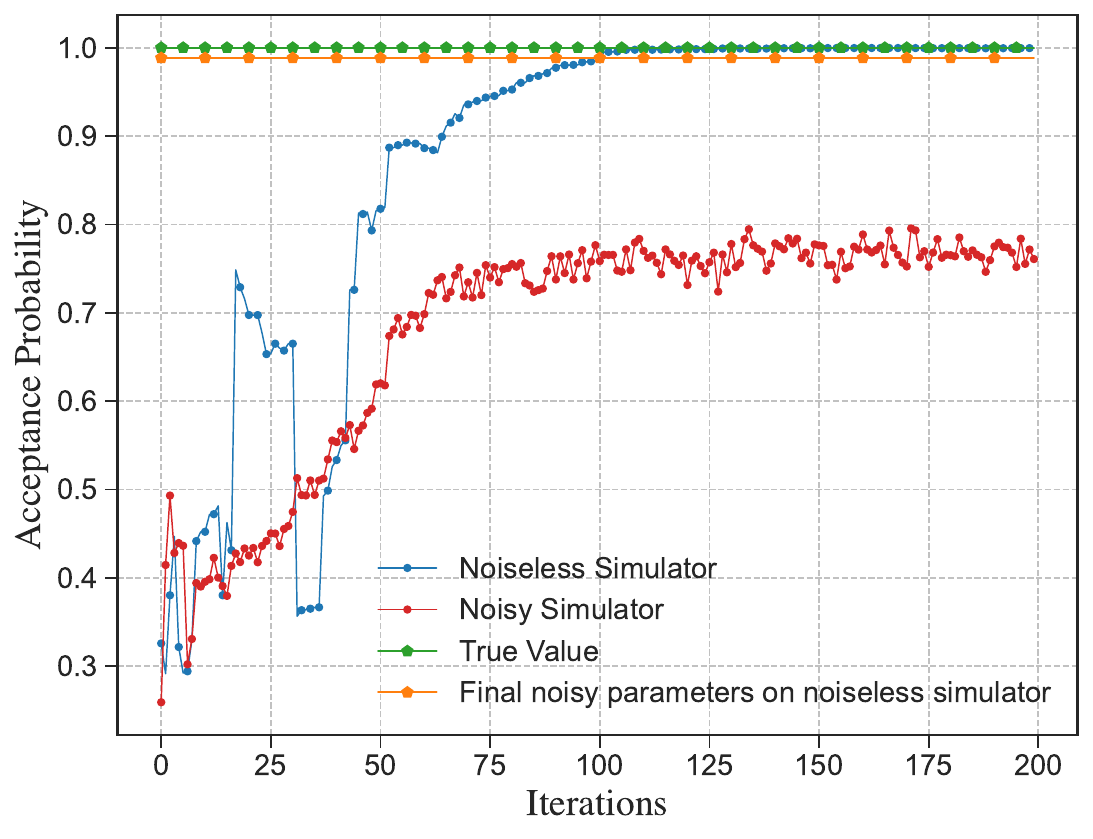}
\end{center}
\caption{Example of the training process for testing $C_4$-symmetry extendibility of $\pi$. We see that the training exhibits a noise resilience.}
\label{fig:CS4_GSE_Training}
\end{figure}

\begin{table}[h]
\centering
\vspace{.05in}
\begin{tabular}{
>{\centering\arraybackslash}p{0.06\textwidth} | >{\centering\arraybackslash}p{0.08\textwidth} | 
>{\centering\arraybackslash}p{0.07\textwidth} | 
>{\centering\arraybackslash}p{0.06\textwidth} |
>{\centering\arraybackslash}p{0.08\textwidth}}
\hline
\textrm{State} & 
\textrm{True Fidelity} &
\textrm{Noiseless} &
\textrm{Noisy} & 
\textrm{Noise Resilient}
\\
\hline\hline 
$\outerproj{0}$ & 0.5000 & 0.4997 & 0.4191 & 0.4982 \\
$\pi$ & 1.0000 & 0.9996 & 0.7608 & 0.9884 \\
$\rho$ & 0.8535 & 0.8533 & 0.6838 & 0.8459 \\
\hline
\end{tabular}
\caption{Results of $C_4$-symmetric extendibility tests. The state $\rho$ is defined as $\begin{bmatrix} 0.854 & 0\\ 0 & 0.146 \end{bmatrix}$.}
\label{tab:CS4_GSE}
\end{table}

\section{Quaternion group \texorpdfstring{$Q_8$}{Q8}}

\label{app:Q8}

\begin{figure*}[t!]
\begin{center}
\includegraphics[width=\linewidth]{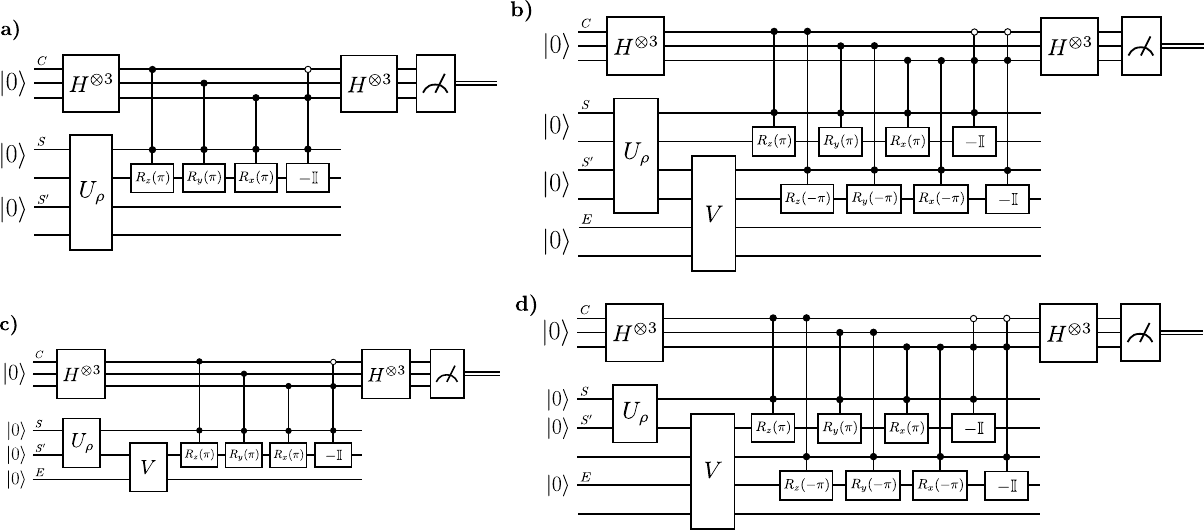}
\end{center}
\caption{Symmetry tests for the $Q_8$ group: a) $G$-Bose symmetry, b) $G$-symmetry, c) $G$-Bose symmetric extendibility, and d) $G$-symmetric extendibility. }
\label{fig:Q8_Circuits}
\end{figure*}

The Quaternion group is defined as
\begin{equation}
    Q_8 = \langle \bar{e}, i, j, k\ \vert\ \bar{e}^2 = e, i^2 = j^2 = k^2 = ijk = \bar{e} \rangle.
\end{equation}
The inverse elements of $e, i, j, k$ are given by $\bar{e}, \bar{i}, \bar{j}, \bar{k}$ respectively. The $Q_8$ group has a two-qubit unitary representation 
\begin{eqnarray}
    e = \begin{bmatrix} \mathbb{I} & 0\\ 0 & \mathbb{I} \end{bmatrix}
    &,\quad \bar{e} = \begin{bmatrix} \mathbb{I} & 0\\ 0 & -\mathbb{I} \end{bmatrix}, \nonumber \\
    i = \begin{bmatrix} \mathbb{I} & 0\\ 0 & -i\sigma_x \end{bmatrix}&,\quad \bar{i} = \begin{bmatrix} \mathbb{I} & 0\\ 0 & i\sigma_x \end{bmatrix}, \nonumber \\
    j = \begin{bmatrix} \mathbb{I} & 0\\ 0 & -i\sigma_y \end{bmatrix}&,\quad \bar{j} = \begin{bmatrix} \mathbb{I} & 0\\ 0 & i\sigma_y \end{bmatrix}, \nonumber \\
    k = \begin{bmatrix} \mathbb{I} & 0\\ 0 & -i\sigma_z \end{bmatrix}&,\quad \bar{k} = \begin{bmatrix} \mathbb{I} & 0\\ 0 & i\sigma_z \end{bmatrix}.
\end{eqnarray}

The $Q_8$ group has eight elements and thus, the $\ket{+}_C$ state is a uniform superposition of eight elements. We use three qubits and the Hadamard gate to generate it as follows:
\begin{multline}
    H^{\otimes 3}\ket{000} = \frac{1}{\sqrt{8}} \Big(\ket{000} + \ket{001} + \ket{010} +\\ \ket{011} + \ket{100} + \ket{101} + \ket{110} + \ket{111}\Big).
\end{multline}
The control register states need to be mapped to group elements. We employ the mapping $\{\ket{000}\rightarrow e, \ket{001} \rightarrow \bar{i}, \ket{010} \rightarrow j, \ket{011} \rightarrow \bar{k},\ket{100} \rightarrow k, \ket{101} \rightarrow \bar{j}, \ket{110} \rightarrow i, \ket{111} \rightarrow \bar{e}\}$ for our circuit constructions.

\subsection{\texorpdfstring{$G$}{G}-Bose symmetry}

Figure~\ref{fig:Q8_Circuits}a) shows the circuit needed to test for $G$-Bose symmetry. Table~\ref{tab:Q8_GBS} shows the results for various input states. The true fidelity value is calculated using \eqref{eq:acc-prob-bose-test}, where $\Pi^G_S$ is defined in \eqref{eq:group_proj_GBS}.

\begin{table}[h]
\centering
\vspace{.05in}
\begin{tabular}{
>{\centering\arraybackslash}p{0.08\textwidth} | >{\centering\arraybackslash}p{0.08\textwidth} | 
>{\centering\arraybackslash}p{0.08\textwidth} | 
>{\centering\arraybackslash}p{0.08\textwidth}}
\hline
\textrm{State} & 
\textrm{True Fidelity} &
\textrm{Noiseless} &
\textrm{Noisy}\\
\hline\hline 
$\outerproj{00}$ & 1.0 & 1.0000 & 0.7416 \\
$\ket{1+}\!\bra{1\!+\!}$ & 0.0 & 0.0000 & 0.0709\\ 
$\ket{\!+\!0}\!\bra{0\!+\!}$ & 0.5 & 0.4999 & 0.3961\\ 
$\pi^{\otimes 2}$ & 0.5 & 0.4999 & 0.3842\\ 
\hline
\end{tabular}
\caption{Results of $Q_8$-Bose symmetry tests.}
\label{tab:Q8_GBS}
\end{table}

\subsection{\texorpdfstring{$G$}{G}-symmetry}
A circuit that tests for $G$-symmetry is shown in Figure~\ref{fig:Q8_Circuits}b). It involves variational parameters, and an example of the training process is shown in Figure~\ref{fig:Q8_GS_Training}. Table~\ref{tab:Q8_GS} shows the final results after training for various input states. The true fidelity is calculated using the semi-definite program given in \eqref{eq:SDP-rootfid-GS}.

\begin{figure}
\begin{center}
\includegraphics[width=\linewidth]{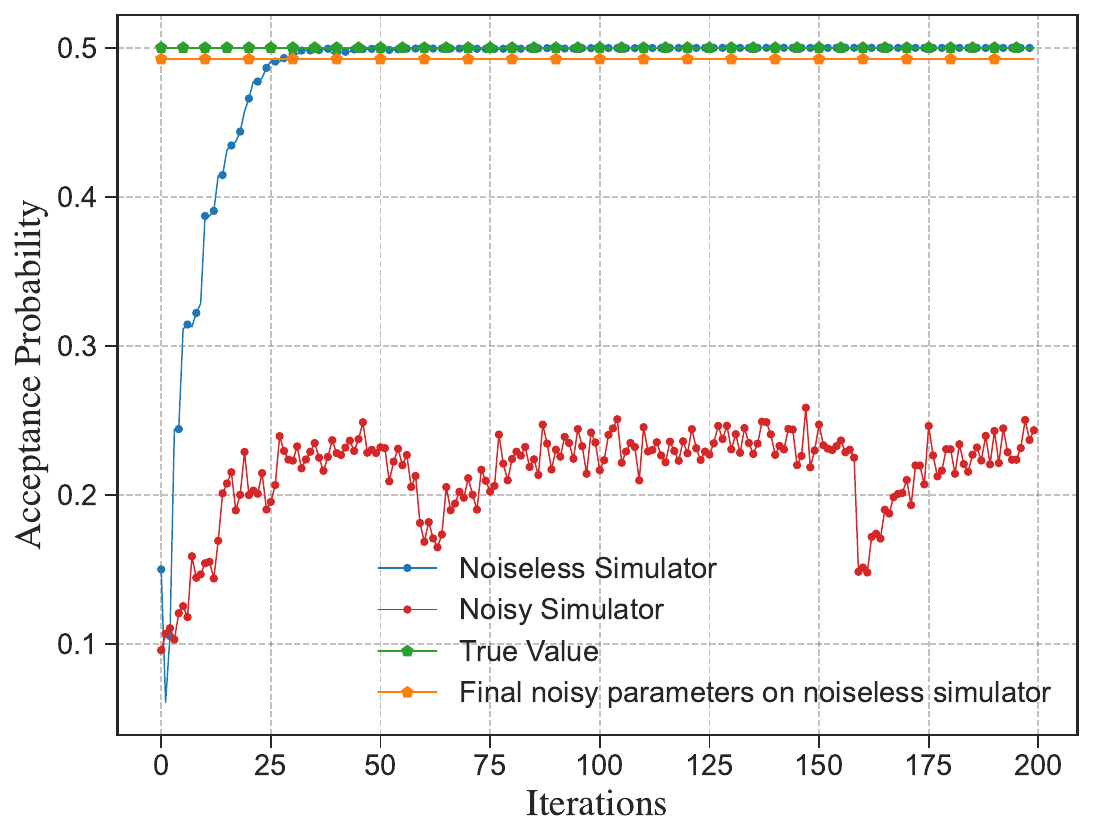}
\end{center}
\caption{Example of the training process for testing $Q_8$-symmetry of $\rho = \outerproj{\psi}$, where $\ket{\psi} = \ket{1+}$. We see that the training exhibits a noise resilience.}
\label{fig:Q8_GS_Training}
\end{figure}

\begin{table}[h]
\vspace{.05in}
\centering
\begin{tabular}{
>{\centering\arraybackslash}p{0.08\textwidth} | >{\centering\arraybackslash}p{0.07\textwidth} | 
>{\centering\arraybackslash}p{0.07\textwidth} | 
>{\centering\arraybackslash}p{0.06\textwidth} |
>{\centering\arraybackslash}p{0.08\textwidth}}
\hline
\textrm{State} & 
\textrm{True Fidelity} &
\textrm{Noiseless} &
\textrm{Noisy} & 
\textrm{Noise Resilient}
\\
\hline\hline 
$\outerproj{00}$ & 1.0000 & 0.9998 & 0.5430 & 0.9960\\
$\ket{1+}\!\bra{1\!+\!}$ & 0.5000 & 0.4999 & 0.2433 & 0.4924 \\
$\rho$ & 0.7500 & 0.7499 & 0.4581 & 0.7447\\
$\pi^{\otimes 2}$ & 1.0000 & 0.9998 & 0.2448 & 0.3774\\
\hline
\end{tabular}
\caption{Results of $Q_8$-symmetry tests. The state $\rho$ is defined as $\outerproj{\psi}$ where $\ket{\psi} = \frac{1}{2}(\sqrt{3}\ket{00} + \ket{11})$.}
\label{tab:Q8_GS}
\end{table}

\subsection{\texorpdfstring{$G$}{G}-Bose symmetric extendibility}

A circuit that tests for $G$-Bose symmetric extendibility is shown in Figure~\ref{fig:Q8_Circuits}c). It involves variational parameters, and an example of the training process is shown in Figure~\ref{fig:Q8_GBSE_Training}. Table~\ref{tab:Q8_GBSE} shows the final results after training for various input states. The true fidelity is calculated using the semi-definite program given in \eqref{eq:SDP-rootfid-GBSE}.

\begin{figure}
\begin{center}
\includegraphics[width=\linewidth]{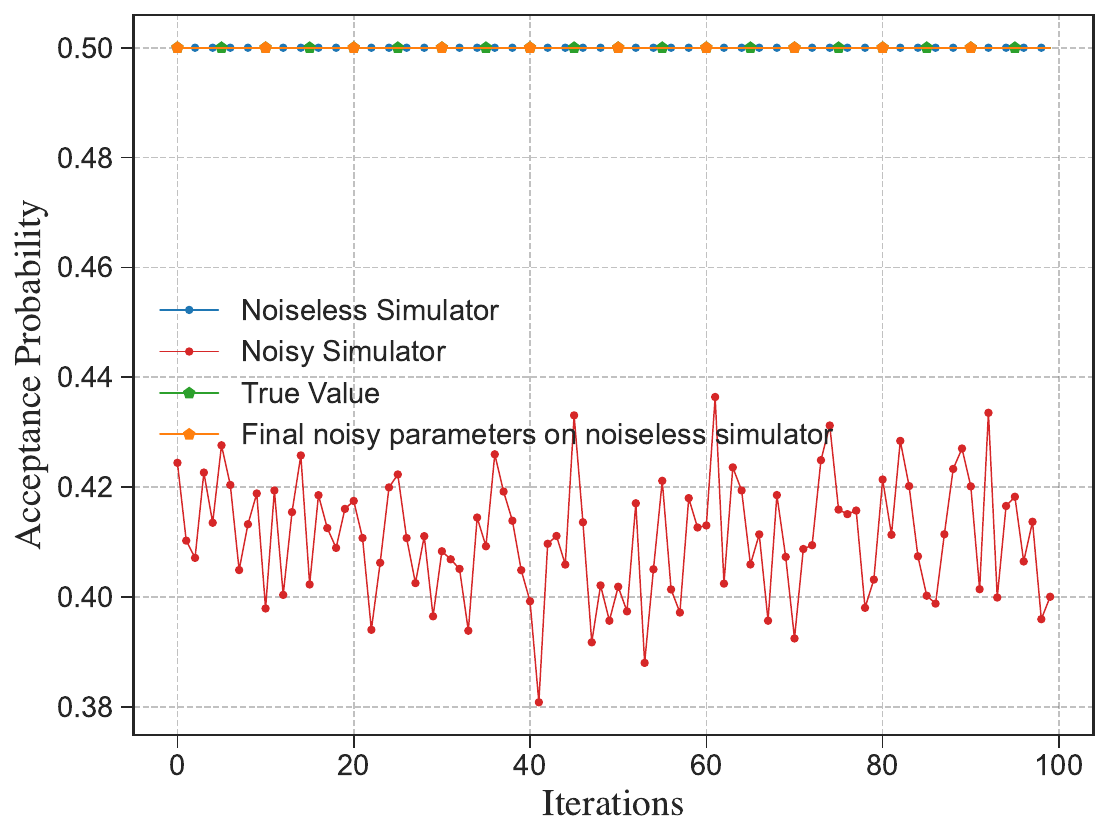}
\end{center}
\caption{Example of the training process for testing $Q_8$-Bose symmetric extendibility of $\outerproj{+}$. We see that the training exhibits a noise resilience.}
\label{fig:Q8_GBSE_Training}
\end{figure}

\begin{table}[h]
\centering
\vspace{.05in}
\begin{tabular}{
>{\centering\arraybackslash}p{0.05\textwidth} | >{\centering\arraybackslash}p{0.08\textwidth} | 
>{\centering\arraybackslash}p{0.08\textwidth} | 
>{\centering\arraybackslash}p{0.06\textwidth} |
>{\centering\arraybackslash}p{0.08\textwidth}}
\hline
\textrm{State} & 
\textrm{True Fidelity} &
\textrm{Noiseless} &
\textrm{Noisy} & 
\textrm{Noise Resilient}
\\
\hline\hline 
$\outerproj{0}$ & 1.0000 & 1.0000 & 0.7161 & 1.0000 \\
$\pi$ & 0.5000 & 0.5000 & 0.4086 & 0.5000 \\
$\rho$ & 0.9330 & 0.9330 & 0.6519 & 0.9330 \\
\hline
\end{tabular}
\caption{Results of $Q_8$-Bose symmetric extendibility tests. The state $\rho$ is defined as $\begin{bmatrix} 0.933 & 0.25\\ 0.25 & 0.067 \end{bmatrix}$.}
\label{tab:Q8_GBSE}
\end{table}

\subsection{\texorpdfstring{$G$}{G}-symmetric extendibility}

A circuit that tests for $G$-symmetric extendibility is shown in Figure~\ref{fig:Q8_Circuits}d). It involves variational parameters, and an example of the training process is shown in Figure~\ref{fig:Q8_GSE_Training}. Table~\ref{tab:Q8_GSE} shows the final results after training for various input states. The true fidelity is calculated using the semi-definite program given in \eqref{eq:SDP-rootfid-GSE}.

\begin{figure}
\begin{center}
\includegraphics[width=\linewidth]{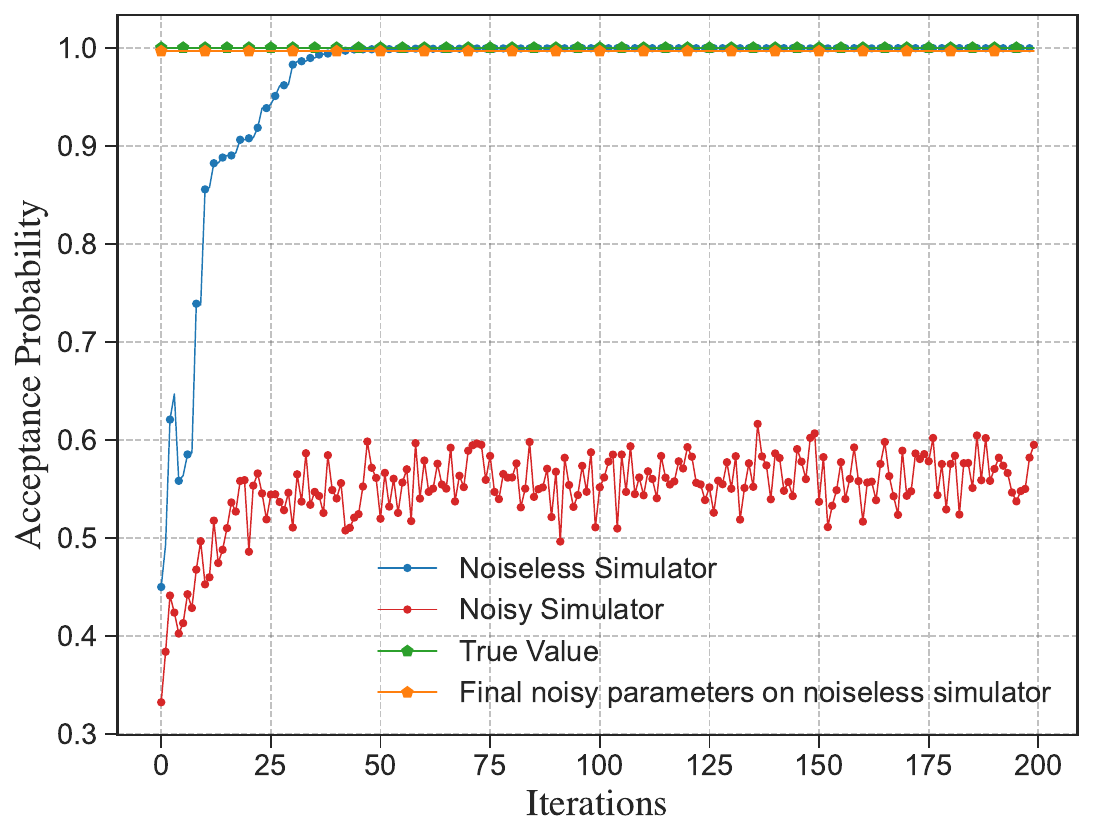}
\end{center}
\caption{Example of the training process for testing $Q_8$-symmetry extendibility of $\outerproj{0}$. We see that the training exhibits a noise resilience.}
\label{fig:Q8_GSE_Training}
\end{figure}

\begin{table}[h]
\centering
\vspace{.05in}
\begin{tabular}{
>{\centering\arraybackslash}p{0.05\textwidth} | >{\centering\arraybackslash}p{0.08\textwidth} | 
>{\centering\arraybackslash}p{0.08\textwidth} | 
>{\centering\arraybackslash}p{0.06\textwidth} |
>{\centering\arraybackslash}p{0.08\textwidth}}
\hline
\textrm{State} & 
\textrm{True Fidelity} &
\textrm{Noiseless} &
\textrm{Noisy} & 
\textrm{Noise Resilient}
\\
\hline\hline 
$\outerproj{0}$ & 1.0000 & 0.9995 & 0.5951 & 0.9964 \\
$\outerproj{+}$ & 0.5000 &  0.5000 & 0.2918 & 0.4974 \\
$\pi$ & 1.0 & 0.9985 & 0.4605 & 0.8778 \\
\hline
\end{tabular}
\caption{Results of $Q_8$-symmetric extendibility tests.}
\label{tab:Q8_GSE}
\end{table}

\end{document}